\tikzset{
        >=stealth, 
        node distance=3.5cm,
        every state/.style={rectangle, thick, fill=gray!10}, 
        initial text=$ $,
        looped/.style={
        decoration={markings,mark=at position 0.999 with {\arrow[scale=2]{>}}},
        postaction={decorate},
        >=stealth
    },
    straight/.style={
        decoration={markings,mark=at position 1 with {\arrow[scale=2]{>}}},
        postaction={decorate},
        >=stealth
    },
    loopedSF/.style={
        decoration={
            markings,
            mark=at position 0.999 with {\arrow[scale=2]{>}},
            mark=at position 0.5 with {\arrow[scale=2]{>}}},
        postaction={decorate},
        >=stealth
    },
    straightSF/.style={
        decoration={
            markings,
            mark=at position 0.999 with {\arrow[scale=2]{>}},
            mark=at position 0.5 with {\arrow[scale=2]{>}}},
        postaction={decorate},
        >=stealth
    },
    triangle/.style = {fill=white, draw=black, regular polygon, regular polygon sides=3 },
    node rotated/.style = {rotate=180},
    border rotated/.style = {shape border rotate=180}
    }
\newtheoremstyle{normaltext}
  {3pt} 
  {3pt} 
  {\normalfont} 
  {} 
  {\bfseries} 
  {.} 
  {.5em} 
  {} 
\theoremstyle{normaltext}
\newtheorem{observation}{Observation}
\newtheorem{definition}{Definition}
\newtheorem{problem}{Problem}
\newtheorem{theorem}{Theorem}
\newcommand{\Eref}[1] {Equation (\ref{#1})} 
\newcommand{\sem}[2]{#1[\hspace{-.2em}|#2\vert\hspace{-.2em}]}
\newcommand{\snd}{\mathit{snd}}
\newcommand{\dlv}{\mathit{rcv}}
\newcommand{\sndr}{\snd_r}
\newcommand{\dlvrr}{\dlv_s} 
\newcommand{\snds}{\snd_s}
\newcommand{\dlvrs}{\dlv_r}
\newcommand{\id}{\mathit{id}}
\newcommand{\sample}{\mbox{\sf S}\xspace}
\newcommand{\srtt}{\mbox{\sf srtt}\xspace}
\newcommand{\rtt}{\mbox{\sf rtt}\xspace}
\newcommand{\rttvar}{\mbox{\sf rttvar}\xspace}
\newcommand{\rto}{\mbox{\sf rto}\xspace}
\newcommand{\acls}{\textsc{ACL2s}\xspace}
\newcommand{\acl}{\textsc{ACL2}\xspace}
\newcommand{\aclr}{\textsc{ACL2(r)}\xspace}
\newcommand{\nums}{\textit{numT}}
\newcommand{\tims}{\textit{time}}
\newcommand{\band}{~\wedge~}
\newcommand{\Secl}[1]{\label{Sec#1}}
\newcommand{\Secr}[1]{Sec.~\ref{Sec#1}}
\newcommand{\Obl}[1]{\label{Obs#1}}
\newcommand{\Obr}[1]{Ob.~\ref{Obs#1}}
\newcommand{\Thl}[1]{\label{Thm#1}}
\newcommand{\Thr}[1]{Thm.~\ref{Thm#1}}
\newcommand{\Algl}[1]{\label{Alg#1}}
\newcommand{\Algr}[1]{Alg.~\ref{Alg#1}}
\newcommand{\Figl}[1]{\label{Fig#1}}
\newcommand{\Figr}[1]{Fig.~\ref{Fig#1}}
\newcommand{\Tabl}[1]{\label{Tab#1}}
\newcommand{\Tabr}[1]{Table~\ref{Tab#1}}
\newcommand{\Subsecl}[1]{\label{Subsec#1}}
\newcommand{\Subsecr}[1]{Subsec.~\ref{Subsec#1}}
\newcommand{\ack}{\textsc{Ack}\xspace}
\newcommand{\acks}{\textsc{Ack}s\xspace}
\newcommand{\nack}{N\ack}
\newcommand{\nacks}{N\textsc{Ack}s\xspace}
\newcommand{\sack}{S\ack}
\newcommand{\act}{\mathit{Act}}
\newcommand{\F}[0]{\mathbf{F}}
\newcommand{\U}[0]{\mathbf{U}}
\newcommand{\G}[0]{\mathbf{G}}
\newcommand{\X}[0]{\mathbf{X}}
\newcommand{\SYN}[0]{\texttt{SYN}\xspace}
\newcommand{\FIN}[0]{\texttt{FIN}\xspace}
\newcommand{\ACK}[0]{\texttt{ACK}\xspace}
\newcommand{\CLOSED}[0]{Closed\xspace}
\newcommand{\SYNSENT}[0]{Syn\_Sent\xspace}
\newcommand{\SYNREC}[0]{Syn\_Received\xspace}
\newcommand{\ESTABLISHED}[0]{Established\xspace}
\newcommand{\FINWAITTWO}[0]{Fin\_Wait\_2\xspace}
\newcommand{\CLOSEWAIT}[0]{Close\_Wait\xspace}
\newcommand{\LASTACK}[0]{Last\_Ack\xspace}
\newcommand{\TIMEWAIT}[0]{Time\_Wait\xspace}
\newcommand{\CLOSEREQ}[0]{Close\_Req\xspace}
\newcommand{\OPEN}[0]{Open\xspace}
\newcommand{\Closed}[0]{Closed\xspace}
\newcommand{\CookieWait}[0]{Cookie\_Wait\xspace}
\newcommand{\CookieEchoed}[0]{Cookie\_Echoed\xspace}
\newcommand{\Established}[0]{Established\xspace}
\newcommand{\ShutdownAckSent}[0]{Shutdown\_Ack\_Sent\xspace}
\newcommand{\ShutdownReceived}[0]{Shutdown\_Received\xspace}
\newcommand{\ShutdownSent}[0]{Shutdown\_Sent\xspace}
\newcommand{\ShutdownPending}[0]{Shutdown\_Pending\xspace}
\newcommand{\UserAssoc}[0]{User\_Assoc\xspace}
\newcommand{\UserShutdown}[0]{User\_Shutdown\xspace}
\newcommand{\UserAbort}[0]{User\_Abort\xspace}
\newcommand{\Init}[0]{\texttt{INIT}\xspace}
\newcommand{\InitAck}[0]{\texttt{INIT\_ACK}\xspace}
\newcommand{\CookieError}[0]{\texttt{COOKIE\_ERROR}\xspace}
\newcommand{\CookieEcho}[0]{\texttt{COOKIE\_ECHO}\xspace}
\newcommand{\Data}[0]{\texttt{DATA}\xspace}
\newcommand{\DataAck}[0]{\texttt{DATA\_ACK}\xspace}
\newcommand{\CookieAck}[0]{\texttt{COOKIE\_ACK}\xspace}
\newcommand{\Shutdown}[0]{\texttt{SHUTDOWN}\xspace}
\newcommand{\ShutdownAck}[0]{\texttt{SHUTDOWN\_ACK}\xspace}
\newcommand{\ShutdownComplete}[0]{\texttt{SHUTDOWN\_COMPLETE}\xspace}
\newcommand{\Abort}[0]{\texttt{ABORT}\xspace}
\newcommand{\spin}[0]{\textsc{SPIN}\xspace}
\newcommand{\amodel}[0]{attacker model\xspace}
\newcommand{\amodels}[0]{attacker models\xspace}
\newcommand{\Amodel}[0]{Attacker Model\xspace}
\newcommand{\cur}[0]{\texttt{curPkt}\xspace}
\newcommand{\ha}[0]{\texttt{hiAck}\xspace}
\newcommand{\hp}[0]{\texttt{hiPkt}\xspace}
\newcommand{\continue}[0]{\continue\xspace}
\newcommand{\rcvd}[0]{\textbf{r}\xspace}
\newcommand{\bcap}[0]{\texttt{bcap}\xspace}
\newcommand{\dcap}[0]{\texttt{dcap}\xspace}
\newcommand{\bucket}[0]{\texttt{bkt}\xspace}
\newcommand{\rat}[0]{\texttt{rt}}
\newcommand{\delay}[0]{\texttt{del}}
\newcommand{\data}[0]{\texttt{dgs}}
\newcommand{\ttl}[0]{\delay}
\newcommand{\promela}[0]{\textsc{Promela}\xspace}
\newcommand{\korg}[0]{\textsc{Korg}\xspace}
\newenvironment{sketch}{%
  \proof
}{\endproof}
\newcommand*{\funcfont}{\fontfamily{lmss}\selectfont}
\newcommand*{\codefont}{\ttfamily\small}
\newcommand{\Chapl}[1]{\label{Chapter#1}}
\newcommand{\Chapr}[1]{Chapter~\ref{Chapter#1}}
\newcommand{\Eql}[1]{\label{Equation#1}}
\newcommand{\Eqr}[1]{\Eref{Equation#1}}
\newcommand{\ffirst}{\mathit{high}}
\newcommand{\sender}[0]{\textbf{s}}
\newcommand{\dg}[0]{\textbf{d}}
\newcommand{\trsender}[0]{\textsf{senderR}}
\newcommand{\evt}[0]{\textbf{e}}
\newcommand{\trreceiver}[0]{\textsf{receiverR}}
\newcommand{\tbf}[0]{\textbf{tbf}}
\newcommand{\trtbf}[0]{\textsf{tbfR}}
\newcommand{\trsys}[0]{\textsf{sysR}}
\newcommand{\sys}[0]{\textbf{sys}}
\newcounter{invc}
\newcommand\inv{\emph{I\arabic{invc}}\addtocounter{invc}{1}}
\DeclarePairedDelimiter{\ceil}{\lceil}{\rceil}
\tikzstyle{thmbox} = [
\DeclareTextFontCommand{\funcfontify}{\funcfont}
\DeclareTextFontCommand{\codefontify}{\codefont}
\newcommand{\codify}[1]{\ensuremath{\mbox{\codefontify{#1}}}}
\definecolor{mycolor}{rgb}{0.5,0.05,0.0}
\definecolor{Goldenrod}{rgb}{0.85, 0.65, 0.13}
\titleformat{\chapter}[display]
  {\normalfont\Large\bfseries\color{black}}
  {\chaptertitlename\ \thechapter}{20pt}{\Large}
\lstdefinelanguage{acl2s}{
  keywords={defdata,definecd,definec,defun,property,create-map*,create-reduce*},
  comment=[l]{;;},
}
\itshape\color{gray},
\author{Max von Hippel}
\title{Verification and Attack Synthesis for Network Protocols}
\begin{document}
\begin{titlepage}
\centering
\begin{minipage}[c]{0.9\textwidth}
        \centering
        {\quad}\vspace{2.2in} \\
        \textsf{\Large Ph.D. Dissertation}
        \rule{\linewidth}{1pt}\\[6pt]{\huge Verification and Attack Synthesis for Network Protocols}\\ \rule{\linewidth}{2pt}\\[0.1in]
\end{minipage}
\vfill 
{\Large{Max von Hippel}\\[0.5em]}
Khoury College of Computer Science\\[1em]
Northeastern University\\[0.5in]
{ \bfseries\sffamily Ph.D. Committee \\[1.5em]}
\begin{table}[H]
\centering
\begin{tabular}{rcl}
        \bfseries\sffamily{Advisor} & Cristina Nita-Rotaru & \\[1.2em]       
        & Pete Manolios  & \\[1.2em]         
        & Guevara Noubir  &\\[1.2em]     
        \bfseries\sffamily{Ext. member} & Joseph Kiniry & \bfseries\sffamily{Galois} \\[2.5em]
\end{tabular} 
\end{table}
\today
\end{titlepage}
\thispagestyle{empty}
\newpage

\quad\vspace{2in}\quad 
\abstract{
Network protocols are programs with inputs and outputs that follow predefined communication patterns to synchronize and exchange information.  There are many protocols and each serves a different purpose, e.g., routing, transport, secure communication, etc.
The functional and performance requirements for a protocol can be expressed using a formal specification,
such as, a set of logical predicates over its traces.
A protocol could be prevented from achieving its requirements due to a bug in its design or implementation,
a component failure (e.g., a crash),
or an attack.
This dissertation shows that formal methods
can feasibly characterize the functionality and performance of network protocols 
under normal conditions as well as when subjected to attacks.

\medskip

We study the formal verification of protocol correctness and performance in the absence of an attack through the lens of three case studies: Karn’s Algorithm, the retransmission timeout (RTO), and Go-Back-$N$.  Karn’s Algorithm has been widely used to sample round-trip times (RTTs) on the Internet since 1987, particularly for congestion control, but until now, it was never formally analyzed.  We formalize it in Ivy and prove novel correctness properties, e.g. that it measures a real and pessimistic RTT.  The RTO is defined in RFC~6298 and computes, as a function of the outputs from Karn’s Algorithm, the time the sender will wait for a new \ack before timing out and retransmitting unacknowledged packets.  If the RTO is too small then the sender will timeout unnecessarily, leading to congestion, but if it is too large then the sender will take too long to respond when congestion does occur.  We model the RTO calculation using \acls and verify bounds on its internal variables, concretely and asymptotically.  Then we illustrate an edge-case where infinitely many timeouts could occur despite stable network conditions.  Finally, also in \acls, we model Go-Back-$N$, which is the basis for TCP’s sliding window mechanism.  Using our model, we formally analyze the performance of Go-Back-$N$ in the presence of losses -- in particular those caused by the queuing mechanism, which we model as a generalized token bucket filter (TBF).  Using bisimulation arguments, we prove that Go-Back-$N$ can theoretically achieve perfect efficiency, and we derive a formula for its efficiency when the sender constantly over-transmits.

\medskip

Then, we turn our attention to the automated discovery of attacks which,
under a given \amodel,
can cause a protocol to malfunction.
Many prior works automatically found attacks using heuristic or randomized techniques, however, our approach is novel and rooted in formal methods.
Specifically, we
explore the under-studied approach of attacker synthesis, which is challenging and different from program synthesis because it takes into account the existing protocol as well as the attacker model.  In contrast to heuristic attack discovery techniques, attacker synthesis is rooted in formal methods and involves automatically generating attacks in a way that is sound and, in the setting we study, complete.  We propose a novel formalization for a general attacker synthesis problem, taking into account the protocol, placement and capabilities of the attacker, requirement that the attack terminates, and correctness definition for the system.  To the best of our knowledge no prior works proposed such a general framework.  The correctness specification is the negation of the attacker goal, formally capturing the intuition that the goals of the system builder and hacker are at odds.  We propose a solution to our problem, based on model-checking, and implement it in an open-source tool called \korg.  We apply \korg to TCP, DCCP, and SCTP, reporting attacks against each.  In SCTP we find two specification ambiguities, each of which, we show, can open the protocol to attack, as confirmed by the chair of the SCTP RFC committee, and we suggest edits to clarify both.  Finally, we prove that \korg is sound and complete, and can thus be used to prove that a patch resolves a vulnerability, which we demonstrate with SCTP.
}
\newpage

\tableofcontents
\newpage
\pagenumbering{arabic}

\chapter{Introduction}\Chapl{intro}
In this chapter, we begin the dissertation by providing an overview of network protocols,
including all the case studies we analyze, as well as the formal methods we use for our analysis.
First we explain why the correctness of these protocols matters (\Secr{intro:motivation}), and
what role each case study plays in the proper functioning of the Internet (\Secr{intro:background:protocols}).
Then in \Secr{intro:formalmethods} we describe the formal methods we use to analyze these protocols,
including theorem proving, model checking, and synthesis.
We describe our contributions in \Secr{intro:contributions} and outline the rest of the dissertation in
\Secr{intro:outline}.

\section{Motivation}\Secl{intro:motivation}

The Internet consists of \emph{protocols}, which
    allow computers to connect and communicate --
    for example, the Transmission Control Protocol (TCP)~\cite{rfc9293_tcp_new}, 
    Datagram Congestion Control Protocol (DCCP)~\cite{rfc4340_dccp}, 
    Stream Control Transmission Protocol (SCTP)~\cite{rfc9260}, and so on.
Each protocol is designed to give slightly different guarantees,
such as, reliable communication, secure communication, or eventual consensus.
Unfortunately, Internet protocols are not typically designed from the ground up in a mathematically
rigorous way that could assure they actually deliver on those promises.
For example, none of the widely used Internet protocols were generated using program synthesis
techniques to provably satisfy a logical specification.
For many protocols, there does not even exist a mathematical specification of what 
it would mean for the protocol to be correct or incorrect, i.e., of the protocol goals,
let alone a proof thereof.
The performance requirements protocols must meet in order to be practically useful
are likewise often left unstated.
This presents a serious problem because the Internet is the backbone of the modern world economy~\cite{manyika2011great} 
and undergirds essential infrastructure such as emergency services~\cite{voip911} 
and power grids~\cite{powerGrid}.
It is therefore essential that Internet protocols work correctly, since malfunction could mean not only 
serious monetary loss but potentially even loss of life.
The situation is made more grave by the fact that the Internet is rife with hackers, 
namely, attackers who try to maliciously trick protocols and other programs into malfunctioning
for economic or sociopolitical gain.

Note that most of the time, the Internet works correctly -- emails are sent and received,
webpages load in fractions of a second, etc.
But, this status quo is sometimes interrupted by malfunctions or attacks.
For example:
\begin{itemize}
    \item In October 1986, the National Science Foundation Network, a predecessor to the World Wide Web, 
    dropped in throughput from 32 Kbps to 40 bps.
    The drop was caused by (random) congestion on the network, which the protocol in use
    was not equipped to deal with (congestion control had not yet been invented).
    The incident inspired the first (and seminal) work on congestion control~\cite{jacobson1988congestion}.
    \item In October 2013, Hurricane Sandy physically damaged network infrastructure leading to a double in the number of Internet outages across the United States over a four-day period~\cite{heidemann2012preliminary}.
    \item In late 2016, the Mirai botnet infected over 600k Internet-of-Things devices, such as routers, DVRs, and cameras, particularly in Brazil, Columbia, and Vietnam.
    The botnet performed denial-of-service attacks on multiple targets, including the popular blog Krebs on Security as well as the telecommunications company Deutsche Telekom~\cite{antonakakis2017understanding}.
    The latter attack caused an Internet outage for around 900k customers~\cite{dutchTelekom}. 
    \item In June 2019, a BGP routing leak in a fiber-optic services provider used by Verizon lead to roughly day-long outages at Reddit, Discord, Google, Amazon, Verizon, and Spectrum~\cite{routingOutage}.
    \item In July 2024, a bug in the Crowdstrike Falcon software caused a global internet outage grounding United, American, Delta, and Allegiant airlines, delaying US/Mexico border crossings, disrupting courts in Massachusetts and New York, and even forcing some hospitals to suspend visitation~\cite{crowdstrikeOutage}.
\end{itemize}
Thus, although the Internet generally functions correctly, it sometimes malfunctions, leading to outages
or decreased performance.
These malfunctions can be caused by flaws or limitations in the protocols in use, physical damage to networking equipment, bugs, or even attacks.
For a detailed analysis of Internet outages and their causes, the reader is referred to~\cite{bogutz2019identifying}.

However rare, malfunctions or attacks like these have clear real-world impacts.
Motivated by these impacts, in this dissertation, we show that
formal methods can feasibly characterize the functionality and performance 
of network protocols under normal conditions as well as when subjected to attacks.

\section{Network Protocols}\Secl{intro:background:protocols}

The Internet was first conceived by J.C.R. Licklider in 1962,
and the first computer network, consisting of just two nodes,
was established between Massachusetts and California in 1965 over a telephone line~\cite{leiner2009brief}.
Today, the ``Internet'' refers to the World Wide Web, which operates according to dozens of protocols
defined in academic papers or
by the Internet Engineering Task Force (IETF) in so-called Request For Comments documents, or RFCs.
Internet applications communicate by implementing the logic outlined in these papers or documents,
allowing them to send and receive messages according to a common set of shared rules.

From its conception, the Internet was built to tolerate unreliability in a layered, best-effort fashion
known as the \emph{end-to-end argument}~\cite{saltzer1984end}.
The idea is that certain functions of a modular, multi-layered system (such as the Internet)
can only be reliably provided at the application layer, that is, from the perspective of a service
which controls all ``end points'' of the system.
This application should assume that faults may have been introduced at any point between those ends,
and do error detection (and potentially, correction) on a best-effort basis.
As an example, \emph{transport protocols} are protocols that provide communication services to applications
running on different hosts~\cite{ross2012computer}.
In a transport protocol, the receiver of a sequence of messages cannot assume that they are un-corrupted, nor can it assume that they were delivered in the same order they were sent.
Rather, it must use application-level mechanisms such as checksums or sequence numbers to gain these assurances.
Such mechanisms allow transport protocols to achieve a number of useful goals, such as reliability (where messages are delivered to the application by the receiver in the same order that they were transmitted to the receiver from the sender) or latency guarantees.

One common feature of transport protocols is the need to deal with \emph{congestion}, where the sender transmits packets more quickly than the network is able to deliver them, leading to losses.
The problem is tricky because messages between networked computers experience at least speed-of-light delay between transmission
and delivery, and the exact delay depends on physical conditions and the network state.  Worse still, data can be reordered or lost in transit.
Hence, no computer in a network can ever know the current, instantaneous state of all
the other computers in the network~\cite{lamport2019time}, which in the context of controlling congestion,
means that the sender cannot directly determine the instantaneous congestive state of the network.
Protocols deal with this epistemic dilemma using various kinds of feedback and measurements.
For example, in many protocols, the receiver of a message provides feedback in the form of a special acknowledgment message, called an \ack.
\acks are essential for building reliable protocols since they let a sender determine when
some data has been successfully delivered, so the sender can send the next chunk of data
in its queue.
A measurement which is used in many protocols is the round-trip-time, or RTT.
This is the time elapsed between when a sender transmits a message and when it first receives
an \ack indicating the message was delivered.
Intuitively, the RTT measures the speed of the network, and is useful for detecting congestion,
where the network becomes overwhelmed and starts dropping messages.

Measuring the RTT is straightforward if every message has a unique identification number
(commonly called a sequence number), and each \ack includes information indicating which specific
sequence numbers are being acknowledged, as is the case in the protocol QUIC~\cite{rfc9000_quic}.\footnote{QUIC initially stood for ``Quick UDP Internet Connections''~\cite{quicDraft}, but today, the IEEE does not consider it to be an acronym~\cite{rfc9000_quic}.}
However, in many protocols, such as TCP, when a message is deemed to be lost and is therefore retransmitted, the retransmission carries the same sequence number as the original.
Then, when a corresponding \ack arrives, it is impossible to tell if the \ack is for the retransmission or the original.
The most popular solution to this dilemma is called Karn's Algorithm~\cite{karn1987},
and the idea is simple: only measure RTTs using unambiguous \acks.

Protocols use measurements, such as the RTT measurements output by Karn's Algorithm,
to make inferences about the likely current state of the network and,
as a consequence of those inferences, concrete decisions about what action to take next.
For example, many protocols utilize the RTT samples output by Karn's Algorithm
to compute a Retransmission TimeOut (RTO) value, which is the amount of time the
sender will wait for any \ack to arrive acknowledging previously unacknowledged data,
before it assumes that the data in-transit must have been lost, and retransmits.
This computation is most commonly done using the RTO formula defined in RFC 6298~\cite{RFC6298}, or some variant thereof.
And yet, despite the widespread use of both Karn's Algorithm to measure RTT samples,
and the RTO computation based on those samples defined in RFC 6298, neither of these
critical protocol components were ever previously studied using formal methods.

Although the RTT and RTO are used in many types of protocols, 
perhaps their most fundamental role is in the implementation of reliable transport protocols such as TCP.
Transport protocols form the \emph{transport} layer of the Internet, facilitating end-to-end
communication between computers.
Reliable transport protocols are ones where packets are delivered to the application by the receiver
 in the same order
that they are transmitted by the sender, without omissions.
These protocols typically use the RTO to detect when messages were lost, and retransmit accordingly.
Such protocols face an inherent trade-off between how quickly they can progress
in the best case (when there are no timeouts) and worst 
case (when the sender is forced to retransmit).

As an example of this trade-off, consider the difference between the toy protocol Stop-and-Wait,
and the protocol Go-Back-$N$.
In Stop-and-Wait, the sender transmits one message at a time,
and will not send the next message in its queue until it has received an \ack
for the prior one.
So, in the best case, when there are no timeouts, the Stop-and-Wait sender progresses
slowly, requiring a new \ack after each transmission and before the next.
But in the worst case, it only needs to retransmit one message, since all the previous
ones were already acknowledged.
In contrast, in Go-Back-$N$, for some positive fixed integer $N$,
the sender may transmit as many as $N$ messages before requiring that any of them
be acknowledged.  In the best case, this means the sender can progress more quickly than
it could in Stop-and-Wait, since it can send the next message in the queue while still awaiting
the \ack for the prior one.  But in the worst case, it could be forced to retransmit all $N$
messages.
Note, Stop-and-Wait is simply Go-Back-1.
Although some prior works analyzed the average performance of Go-Back-$N$,
no prior works formally analyzed its best and worst-case performance,
nor how this trade-off scales as a function of $N$.

In transport protocols, communication does not just happen out of the blue.
Rather, the sender and receiver establish a connection using a communication pattern
known as an \emph{establishment routine}.
Once a connection is established, the sender begins transmitting its internal
message queue to the receiver, who responds with corresponding \acks.
Then at some point, either the sender or the receiver initiates a \emph{tear-down routine},
which is similar to the establishment routine but serves to de-associate,
deleting the connection.
The conjunction of the two routines is commonly referred to as the protocol \emph{handshake}.

There are many transport protocols, and in general each provides a slightly different trade-off between
features (such as reliability, in-order delivery, congestion control features, etc.)
and performance.
TCP is the most fundamental and oldest reliable transport protocol on the Internet,
and guarantees reliable, in-order packet delivery.
It has many variants, 
e.g., TCP Vegas~\cite{brakmo1995tcp}, TCP New Reno~\cite{rfc6582}, etc., 
but all them use the same handshake,
defined in RFC 9293~\cite{rfc9293_tcp_new}.
DCCP is similar to TCP, but does not guarantee in-order message delivery~\cite{rfc4340_dccp}.
SCTP is a comparatively newer transport 
protocol proposed as an alternative to TCP,
offering enhanced performance, security features, 
and greater flexibility.
It is specified in several RFCs, each introducing significant modifications. 
RFC 9260~\cite{rfc9260}, which obsolesced RFC 4960~\cite{rfc4960}, 
made numerous small clarifications and improvements, 
including a critical patch for CVE-2021-3772~\cite{cve},
a denial-of-service attack made possible by an ambiguity in RFC 4860
which the Linux implementation misinterpreted~\cite{linux}.
On the other hand, RFC 4960, which obsolesced the original
specification in RFC 2960~\cite{rfc2960}, 
introduced major structural
changes to the protocol as described 
in the errata RFC 4460~\cite{rfc4460}. 
Although each RFC ostensibly represents an improvement over the prior,
it is not obvious that these improvements do not introduce new bugs or vulnerabilities --
to confirm this, we need some kind of formal verification.
Each of these protocols are crucial to the proper functioning of the Internet, 
and each one uses a different and unique handshake.

The classical way to verify a protocol handshake is to encode its goals as logical properties,
encode the handshake as a state machine,
and then use a model checker to verify that the state machine satisfies those properties.
Unfortunately, the state machine descriptions given in RFC documents are informal
and may have omissions, mistakes, or simplifications.
Moreover, the correctness properties these machines are expected to satisfy
are rarely made explicit.
To assure that commonly used transport protocols like TCP, DCCP, and SCTP
operate correctly, what we need are corresponding mathematical state machine models
and the logical properties those models are expected to satisfy, based on a close
reading of the RFCs (and not just a literal interpretation of the ASCII diagrams they contain).

Finally, once we have rigorously determined that a protocol 
works correctly at all levels,
we still need to show that it is robust against attacks.
This requires formalizing a notion of \emph{attacker model}, taking into consideration
the placement and capabilities of the attacker, and then showing that even under that
attacker model, the protocol still satisfies all of its correctness properties.
If a protocol property can be violated under a realistic attacker model, this implies that
the protocol is not secure against the modeled attack, and therefore must either be 
patched to provide an adequate defense, or restricted in its use to only scenarios where such an attack is impossible.

\section{Formal Methods}\Secl{intro:formalmethods}

In this dissertation we study network protocols using \emph{formal methods}.
These are techniques for analyzing or generating systems, particularly software systems,
using formal mathematics in a computer-aided environment.
At high level, the primary techniques in formal methods include theorem proving,
model checking, synthesis, and lightweight formal methods such as property-based testing and 
grammar-based fuzzing.
We use the first three techniques in this dissertation -- each of which we describe below.
Our thesis is that these methods make it feasible to rigorously analyze the correctness
and performance of network protocols both in isolation and when subjected to attacks.

\subsection{Theorem Proving}

An \emph{interactive theorem prover} is a software system 
in which a computer and a human can collaborate to write a mathematical proof.
In other words, a theorem prover is like an integrated developer environment (IDE)
for mathematical reasoning.
The least powerful kind of theorem prover is one that checks a human-written
proof and confirms that it is devoid of mistakes, that is, that each step in the
proof syntactically follows from the previous steps.
This style of reasoning -- evocative of the ultra-formalism of the Bourbaki group~\cite{dieudonne1970work} --
can be quite onerous, but has the benefit of producing bulletproof arguments.
On the other hand, the most powerful kind of theorem prover is one that 
automates a significant portion of the proof-writing process (in addition to checking
that each proof step follows from the prior ones).
In practice, most provers fall somewhere between those two extremes --
    at times automating proof steps, saving a considerable amount of proof effort,
    but at other times obligating the human to justify intuitively obvious proof steps.
For a nice history and survey of interactive theorem proving, the reader is referred to~\cite{maric2015survey} or~\cite{mackenzie1995automation}.

Unfortunately, it is not possible to outline a single set of mathematical principals 
which suffice to understand all of the interactive theorem provers.  This is because 
different theorem provers accommodate different \emph{logics}, which can 
differ in terms of both their foundations and logical order.
The \emph{foundations} of a logic are the axioms it assumes, while the \emph{order} of a logic
refers to its level of abstraction.
A first-order logic allows predicates over atomic propositions,
while a second-order logic allows predicates over sets of propositions,
a third-order logic allows predicates over sets of sets of propositions, etc.
More philosophically, a first-order logic allows one to reason about all objects
in a universe; a second-order one about all properties of
objects in a universe; a third-order logic about properties
of properties of objects in a universe; and so on.

Although this can all seem quite abstract, these distinctions have a very real impact
on the types of theorems one can prove.
For example, most mathematicians today work within Zermelo–Fraenkel set theory
with the Axiom of Choice (aka ZFC), which is a first-order logic highly 
amenable to set-theoretic reasoning.  However, this logic allows a proof which says that
a single unit sphere can be split into an infinite number of slices, which can be re-assembled (without collision)
into two unit spheres each equal in volume to the original~\cite{stdecomposition,wilson2005continuous}.\footnote{See also~\cite{bapanapally2022complete} for a formal verification of the result in question.}  This proof contradicts our natural
intuition about surface area and volume, drawing into question the 
closeness of ZFC to our lived experience of the universe we reside in.
On the other hand, Homotopy Type Theory (HoTT) is a newer, alternative type-theoretic foundation for mathematics 
in which, loosely speaking, isomorphism and strict
equality are defined to mean the same thing~\cite{program2013homotopy}.
HoTT, in contrast to ZFC, does not include the Axiom of Choice.
Note that some provers can support multiple logics, e.g., it is possible to use either ZFC or HoTT in Rocq\footnote{Formerly known as Coq.}~\cite{werner1996encoding,bauer2017hott}.

In this dissertation, we use two provers.
The first, Ivy~\cite{padon2016ivy}, is a tool for proving inductive invariants of protocols.
It is highly automated, and attempts to split theorems into individual proof obligations
in logics for which it has decision procedures.
Ivy is very flexible and allows the user to design and specify any logical foundations they please.
However, in practice, the tool becomes highly unstable as soon as sufficient axioms are introduced
to leave the decidable fragment, at which point even a very small model change can cause the tool
to be unable to generate a proof or disproof.
For example, the Peano Arithmetic axioms, which are the most commonly used axioms for arithmetic,
suffice to exit the decidable fragment.
In this dissertation we use Ivy with its default theory, which provides useful axioms for reasoning
about lists and list manipulations.

The second prover we use is a Boyer-Moore theorem prover~\cite{boyer1975proving,boyer1979computational} called A Computational Logic for Applicative Common Lisp (\acl)~\cite{kaufmann1997industrial}.
\acl uses an extensible foundation built on top of traditional propositional calculus with equality.
Its exact foundations are fairly elaborate because it accommodates all of Common Lisp,
but informally: it allows the user to express and prove formulas over recursive functions on variables and constants~\cite{kaufmann1997precise}.
These formulas are quantifier-free, meaning, they are implicitly universally quantified.
We also use two variants of \acl.
The first, the \acl Sedan (\acls)~\cite{dillinger2007acl2s}, extends \acl with a data definition framework (DefData)~\cite{chamarthi2014data}, ordinals~\cite{manolios2004integrating}, termination analysis based on context-calling graphs~\cite{manolios2006termination}, and counterexample generation via the cgen library~\cite{chamarthi2011integrating}.
The second, \aclr, uses a slightly different foundation in order to support nonstandard analysis with real numbers~\cite{gamboa1999mechanically,gamboa2001nonstandard}.
However, we do not perform any nonstandard analysis in this dissertation; we only use \aclr to prove a theorem involving irrationals which could not be proven in \acl or \acls (neither of which supports a theory of irrational numbers).

\subsection{Model Checking}\Secl{modelchecking}

In contrast to theorem proving, which is inherently interactive and highly flexible,
model checking is totally automatic but restricted to only problems over very small domains.
In this dissertation, we use the \spin model checker~\cite{holzmann1996nested} to verify Linear Temporal Logic (LTL) properties of finite Kripke Structures, which are finite state transition systems where the states are labeled with atomic propositions.  When a finite Kripke structure $K$ takes a sequence of transitions through its states, we refer to the sequence as a \emph{run}, and to the corresponding sequence of labels on those states as an \emph{execution}.  LTL allows us to write statements about the temporal occurrence of different labels in an execution, using the operators ``until'' and ``next''.  For instance, if the second state in the execution~$\sigma$ of a run~$r$ has the label \texttt{crit}, then the corresponding execution~$\sigma$ satisfies ``next \texttt{crit}'', written $\X \texttt{crit}$,
and we write $\sigma \models \X \texttt{crit}$.
On the other hand, if the trace does not satisfy $\X \texttt{crit}$, then we would write $\sigma \centernot{\models} \X \texttt{crit}$.
Likewise, if the trace induced by the run $r$ satisfies a property $\phi$ then we write $r \models \phi$, else we write $r \centernot{\models} \phi$.
We naturally lift this notation to finite Kripke structures, in the sense that if every execution of $K$ satisfies $\phi$ then we write $K \models \phi$, else if any execution violates $\phi$ then we write $K \centernot{\models} \phi$.

An LTL model checker takes as input a finite Kripke Structure $K$ and an LTL property $\phi$ and return \emph{true} iff $K \models \phi$, else some $r \in \text{runs}(K)$ such that $r \centernot{\models} \phi$.
The decision procedure for LTL model checking was discovered by Vardi and Wolper~\cite{IEEECS86}
and implemented, with some optimizations (e.g.~\cite{holzmann1998analysis,holzmann1995improvement,wolper1993reliable,holzmann1996nested,holzmann2007design,holzmann2008swarm}) 
in the model checker \spin~\cite{Holzmann03}.
The basic premise is as follows.
First, the LTL property $\phi$ is translated to a so-called B{\"u}chi automaton $B(\phi)$,
according to the procedure outlined in~\cite{gerth1995simple}.
The B{\"u}chi automaton can be viewed as a finite Kripke structure with the atomic propositions 
$\text{props}(\phi) \uplus \{ \text{accepting} \}$,
where 
$\uplus$ denotes disjoint union,
$\text{props}(\phi)$ are the atomic propositions which appear in $\phi$,
and the language of the automaton,
denoted $\mathcal{L}(B(\phi))$, is the subset of its traces in which it passes through an accepting state
infinitely many times.\footnote{Since the automaton is finite-state, if it passes through the set of accepting
states infinitely often, then it must also pass through some particular accepting state infinitely often.}
The interesting thing about the B{\"u}chi automaton is that its language is precisely the complement of the language of the property from which it was generated.
That is to say, if $\mathcal{L}(\phi)$ is the set of all possible infinite sequences $\sigma$
 of sets of atomic propositions such that for each $\sigma \in \mathcal{L}(\phi)$, $\sigma \models \phi$, then $\overline{\mathcal{L}(\phi)} = \mathcal{L}(B(\phi))$ (and vice versa).
Thus, the model-checking problem reduces to checking language emptiness on $\mathcal{L}(K) \cap \mathcal{L}(B(\phi))$.
For a tutorial on the topic, the reader is referred to~\cite{buchiNotes},
or for a more comprehensive treatment,~\cite{baier2008principles}.

\subsection{Program Synthesis}

Program synthesis is the task of, given some logical specification, automatically generating a program that meets it.  The concept was first introduced by Church in an unpublished talk at the Institute for Defense Analysis in 1957~\cite{alonzo57}, and has since grown into an expansive field with myriad approaches and sub-problems, e.g., where the specification is written in LTL~\cite{pnueli1989synthesis} or in Computational Tree Logic~\cite{clarke1981design}.

Generally speaking, the synthesizer performs a search over a program-space, which it constrains
(often the constraint is iterative) in order to find a satisfying example.
Unfortunately, the general program synthesis problem is undecidable, since the search involves
checking non-trivial features of Turing Machines.
However, like many problems in formal methods, it can be made tractable
for real-world problems by limiting the specification language
and augmenting the search algorithm with clever tricks, heuristics, and optimizations~\cite{gulwani2017program}.
For example, Flash Fill is a feature of Microsoft Excel that digests 
some example input cells and an output cell --
    for instance, as inputs, ``November'', ``3'', and ``2012'', and as output, ``11/3/12'' --
    and fills in the remainder of the corresponding column according to a pattern it derives
    which maps the example inputs to the example output, in fractions of a second~\cite{gantenbein2013flash}.
Part of what makes the algorithm fast is that it is limited to disallow the Kleene star or the disjunction operator, which allows much of its decision procedure to be reduced to regular expressions.
In addition, it is designed to ask the user for more examples, when necessary~\cite{gulwani2011automating}.
Because program synthesis inherently involves searching a space of possible programs, most techniques involve reducing the search to a common search technique such as integer linear programming or satisfiability modulo theories (for a nice survey of such techniques the reader is referred to~\cite{gulwani2017program}).
However, with the advent of language models, there are now a new class of neurosymbolic techniques which leverage machine learning algorithms trained on vast quantities of human-written computer code to synthesize programs.  For a survey of these (rapidly emerging) techniques, the reader is referred to~\cite{chaudhuri2021neurosymbolic}.
In this dissertation, we define a constrained type of synthesis, where the program being generated
only needs to have at least a single run in which it can induce a particular system to misbehave,
and we reduce the program-search to an LTL model-checking problem.

\section{Thesis Contribution}\Secl{intro:contributions}

In this dissertation we study protocols from the ground up using formal methods.
Our contributions are as follows.

\begin{itemize}
\item \textbf{Models.}
We develop formal models of Karn's Algorithm, the RTO computation, Go-Back-$N$, TCP, DCCP, and SCTP.
To the best of our knowledge, 
    neither Karn's Algorithm nor the RTO computation was ever previously formally modeled,
    and we are the first to model Go-Back-$N$ non-probabilistically in the context of a non-trivial
    rate-limiting channel.
Our channel is, we argue,
more realistic than those used in comparable prior works,
while still being compositional in the sense that the serial composition of two channels
can be simulated by just one single one.
Finally, our TCP, DCCP, and SCTP handshake models are more complete than comparable 
models introduced in prior works.
We provide detailed comparisons to prior works in each chapter.

\item \textbf{Properties.}
In addition to new models, we also introduce formal properties which, we claim, the modeled protocols should satisfy.  We justify our properties based on a close reading of the corresponding academic literature and RFC documents.  In the cases of Karn's Algorithm, the RTO computation, and Go-Back-$N$, the properties we formulate and prove are totally novel.  Some of the properties we prove about the three protocol handshakes are novel, while others serve to replicate prior results, in the context of our more detailed models.

\item \textbf{Proofs.}
We use a blend of formal methods and many proof strategies, including 
    inductive invariants,
    real analysis ($\epsilon$/$\delta$ proofs),
    bisimulation arguments,
    and LTL model checking.
Our multifaceted approach provides a useful case study in the benefits and drawbacks of multiple formal methods.

\item \textbf{Attacker Synthesis.}
To the best of our knowledge, we are the first to introduce a fully formal problem definition and solution for the automated synthesis of attacks against network protocols.  
We create an open-source tool called \korg, in which we implement our approach, and which we apply to TCP, DCCP, and SCTP as case studies.  
For TCP and DCCP we automatically find known attack strategies.
SCTP was recently patched to resolve a security vulnerability caused by an ambiguity in its RFC, and we use \korg to show the highlighted vulnerability could be automatically found before the patch was applied, and the patch resolved the vulnerability.  Then we identify two ambiguities in the RFC and show that either, if misinterpreted, could lead to a vulnerability.  Our analysis resulted in an erratum to the RFC.
\end{itemize}

\medskip

All our models and code are open-source and freely available with the dissertation artifacts.
We also provide open-source scripts with which to automatically reproduce all of our results.

\section{Thesis Outline}\Secl{intro:outline}

The rest of the dissertation is organized as follows.
\begin{enumerate}[{Chapter }1:]
    \setcounter{enumi}{1}
    \item \textbf{Verification of RTT Estimates and Asymptotic Analysis of Timeouts.} We analyze the RTT measurements produced by Karn's Algorithm,
    and the RTO computation based on them defined in RFC 6298~\cite{RFC6298}.
    We use a blend of formal methods to prove hitherto unformalized 
    invariants of Karn's Algorithm, and long-term bounds on the
    variables of the RTO computation.
    
    \medskip
    
    This chapter includes work originally presented in the following publications:

    \medskip

    Max von Hippel, Kenneth L. McMillan, Cristina Nita-Rotaru, and Lenore D. Zuck. \emph{A Formal Analysis of Karn’s Algorithm.} International Conference on Networked Systems, 2023.

    \medskip

    Max von Hippel, Panagiotis Manolios, Kenneth L. McMillan, Cristina Nita-Rotaru, and Lenore Zuck. \emph{A Case Study in Analytic Protocol Analysis in ACL2.} ACL2, 2023.

    \medskip

    All of our code is open-source and available at \url{github.com/rto-karn}.
    The \acls proofs are also made available with the \acl books in \texttt{workshops/2023/vonhippel-etal}.
    \item \textbf{Formal Performance Analysis of Go-Back-$N$.} We formally define best and worst-case scenarios for Go-Back-$N$
    and then prove bounds on the performance
    of the protocol in each, parameterized by~$N$,
    in the context of a realistic channel model which we prove to be compositional.

    \medskip
    
    Our models and proofs are open-source and freely available at \url{https://github.com/maxvonhippel/go-back-n-fm}.

    \medskip

    \item \textbf{Protocol Correctness for Handshakes.}  We formally model the handshakes of TCP, DCCP, and SCTP,
    all of which are important and widely-used transport protocols.
    We define logical properties each handshake should satisfy, based on
    a close reading of the corresponding RFC, which we verify 
    using the LTL model checker \spin.

    \medskip

    This chapter and the next include work originally presented in the following publications:

    \medskip

    Max von Hippel, Cole Vick, Stavros Tripakis, and Cristina Nita-Rotaru. \emph{Automated attacker synthesis for distributed protocols.} Computer Safety, Reliability, and Security, 2020.

    \medskip

    Maria Leonor Pacheco, Max von Hippel, Ben Weintraub, Dan Goldwasser, and Cristina Nita-Rotaru. \emph{Automated attack synthesis by extracting finite state machines from protocol specification documents.} IEEE Symposium on Security and Privacy, 2022.

    \medskip

    Jacob Ginesin, Max von Hippel, Evan Defloor, Cristina Nita-Rotaru, and Michael T{\"u}xen. \emph{A Formal Analysis of SCTP: Attack Synthesis and Patch Verification.} USENIX, 2024.

    \medskip

    All our models and properties are open-source and freely available at \url{https://github.com/maxvonhippel/attackerSynthesis} and \url{https://github.com/sctpfm}.
    \item \textbf{Automated Attacker Synthesis.}  We introduce and formally define the \emph{attacker synthesis} problem
    for network protocols,
    where the goal is, given a protocol which satisfies its LTL specification
    in the absence of an attacker, 
    to generate a non-trivial attacker which can cause the protocol to 
    violate its specification.
    We propose an automated solution based on LTL model-checking, which we prove
    to be both sound and, for the restricted class of attack programs it is 
    designed to generate, complete.
    Then we create an open-source attacker synthesis tool called \korg in which we implement
    our solution.
    We apply \korg to our TCP, DCCP, and SCTP models in the context of several 
    representative attacker models.
    \korg is open-source and freely available at \url{https://github.com/maxvonhippel/attackerSynthesis}.

    \medskip

    \item \textbf{Conclusion.}  We summarize our work and discuss limitations therein and future research directions.
\end{enumerate}

\setcounter{observation}{0}
\setcounter{definition}{0}
\setcounter{problem}{0}
\setcounter{theorem}{0}

\chapter{Verification of RTT Estimates and Asymptotic Analysis of Timeouts}\Chapl{karn-rto}
\textbf{Summary.}
The stability of the Internet relies on timeouts. The timeout value, known as the Retransmission TimeOut (RTO), is constantly updated, based on sampling the Round Trip Time (RTT) of each packet as measured by its sender -- that is, the time between when the sender transmits a packet and receives a corresponding acknowledgement. Many of the Internet protocols compute those samples via the same sampling mechanism, known as Karn's Algorithm. 

\medskip
We present a formal description of the algorithm, and study its properties. We prove the computed samples reflect the RTT of some packets, but it is not always possible to determine which. We then study some of the properties of RTO computations as described in the commonly used RFC~6298, using real analysis in \acls.  We present this as a case study in analytic protocol verification using a theorem prover.  All properties are mechanically verified using \textsc{Ivy} or \acls.

\medskip

This chapter includes work originally presented in the following publications:

\medskip

\noindent~Max von Hippel, Kenneth L. McMillan, Cristina Nita-Rotaru, and Lenore D. Zuck. \emph{A Formal Analysis of Karn’s Algorithm.} In International Conference on Networked Systems, 2023.  
\begin{description}
\item \underline{Contribution:} MvH co-authored the Karn's Algorithm model and proofs and helped write the corresponding text.  MvH solely authored the RTO model and proofs, but followed a proof sketch from KLM for the limit.
\end{description}

\medskip

\noindent~Max von Hippel, Panagiotis Manolios, Kenneth L. McMillan, Cristina Nita-Rotaru, and Lenore Zuck. \emph{A Case Study in Analytic Protocol Analysis in ACL2.} ACL2, 2023.  
\begin{description}
\item \underline{Contribution:} MvH wrote the majority of the proof code and all of the paper.
\end{description}

\section{Karn's Algorithm and the RTO Computation}\Secl{karn:intro}
Protocols leverage RTT information for many purposes, e.g.,
one-way delay estimation~\cite{abdou2015accurate} or
network topology optimization~\cite{tang2005gocast,rfc8305}, but the most common use is for the RTO computation, defined in RFC~6298~\cite{RFC6298}, which states:
\begin{quote}
The Internet, to a considerable degree, relies on the correct
   implementation of the RTO algorithm [\ldots] in order to preserve network stability and avoid congestion
   collapse.
\end{quote}
An RTO that is too low may cause false timeouts by hastily triggering a timeout mechanism that delays the proper functioning of the protocol, and thus, may expose the protocol to denial-of-service attacks.
On the other hand, an RTO that is too high causes overuse of resources~\cite{PM2006} by unnecessarily delaying the invocation of timeout mechanisms when congestion occurs.
A poorly chosen RTO can have disastrous consequences, including \emph{congestion collapse},
wherein the demands put on the network far exceed its capacity,
leading to excessive message dropping and thus excessive retransmission.
Congestion collapse was first observed in October 1986, during which time total Internet traffic dropped by over 1000x~\cite{jacobson1988congestion}.  At the time this kind of network failure was an engineering curiosity, but today it would spell global economic disaster, loss of life, infrastructural damage, etc.

Both Karn's algorithm and the RTO computation are widely used across the Internet, as we detail in \Subsecr{karn:usage}.
Hence, the correctness of these two mechanisms is fundamental for the correctness of the Internet as a whole.
Yet, some theoretical papers analyzing congestion control -- the original motivation for computing the RTO -- 
explicitly ignore the topic of timeouts, and hence implicitly ignore the RTO computation 
(e.g., \cite{mathis1997macroscopic,ccac,DMSS19}).

Computing a good RTO requires a good estimate of the  RTT. The RTO computation depends solely on the estimated RTT and some parameters that are fixed. Thus, understanding the mechanism which estimates RTT is fundamental to understanding any quantitative property of the Internet. The RTT of a packet (or message, datagram, frame, segment, etc.) is precisely the time that elapsed between its transmission and some confirmation of its delivery.  Both events (transmission and receipt of confirmation of delivery) occur at the same endpoint, namely, the one that transmits the packet, which we call the \emph{sender}.  In essence, if the sender transmits a packet at its local time $t$, and first learns of its delivery at time $t + \delta$, it estimates the RTT for this packet as $\delta$.

TCP uses a \emph{cumulative} acknowledgement mechanism where every packet received generates an \ack with the sequence number of the first un-received packet.\footnote{Some implementations of TCP use additional types of \acks, yet, the cumulative ones are common to TCP implementations.} Thus, if packets with sequence numbers $1,\dots,x$ are received and the packet with sequence number $x+1$ is not, the receiver will \ack\ with $x+1$, indicating the first un-received packet in the sequence, even if packets whose sequence numbers exceed $x+1$ were received. 

If the Internet's delivery mechanism were perfect, then packets would be received and acknowledged in order, and the sender would always be able to compute the RTT of each packet.  Unfortunately, the Internet is imperfect. TCP operates on top of IP, whose only guarantee is that every message received was sent. Thus, messages are neither invented nor corrupted, but at least theoretically, may be duplicated, reordered, or lost. In practice duplication is sufficiently rare that it is ignored, and re-ordering is sometimes ignored and sometimes restricted. But losses are never ignored, and are the main focus of all congestion control algorithms. When a loss is suspected, a packet is retransmitted.  If it is later acknowledged, one cannot determine whether the \ack\ is for the initial transmission or for the retransmission.  Karn's algorithm~\cite{karn1987} addresses this ambiguity by only using unambiguous \acks to compute RTT estimates.
RFC~6298~\cite{RFC6298} then computes an estimated RTT as a weighted (decaying) average of the samples output by Karn's algorithm,
and computes an RTO based on this estimate and a measure of the RTT variance.  The RTO is then used to gauge whether a packet is lost, and then, usually, to transition a state where transmission rate is reduced. Thus, the RTT sampling in Karn's algorithm is what ultimately informs the transmission rate of protocols.   And while RFC 6298 pertains to TCP, numerous non-TCP protocols also refer to RFC~6298 for the RTO computation, as we outline in \Subsecr{karn:usage}.

\begin{figure}
\centering  
\includegraphics[width=0.7\textwidth]{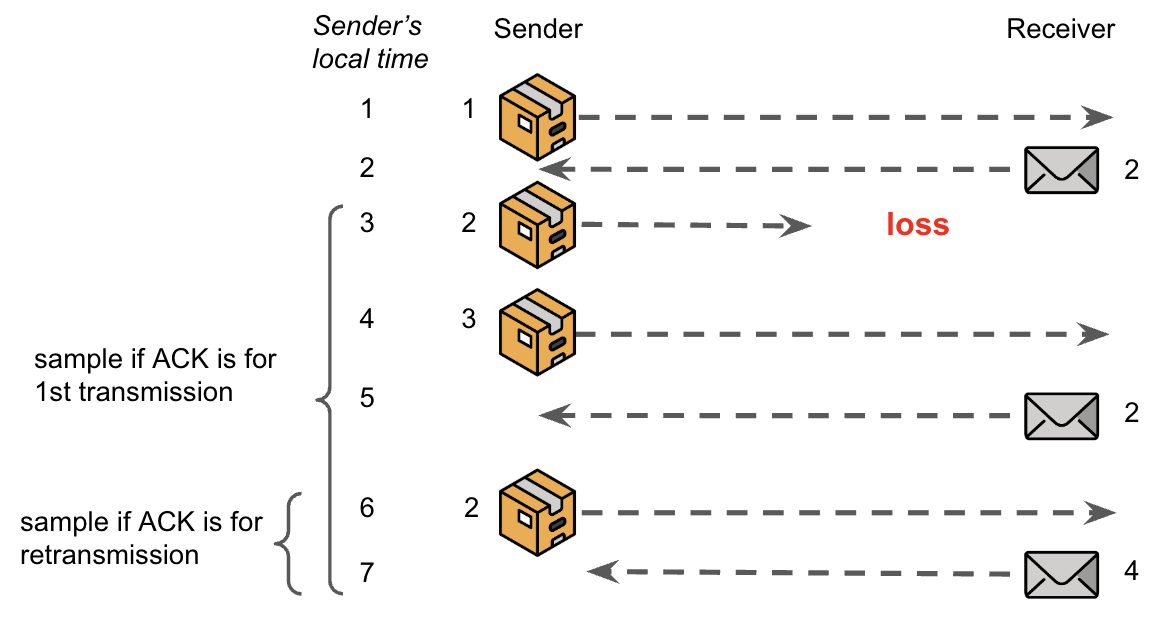}
\caption{Illustration of an ambiguous \ack, with the sender's local clock shown on the left. Sender's packets are illustrated as packets, while receiver's \acks are shown as envelopes.  The first time the sender transmits 2 the packet is lost in-transit. Later, upon receiving a cumulative \ack of 2, the sender determines the receiver had not yet received the 2 packet and thus the packet might be lost in transit. It thus retransmits 2. Ultimately the receiver receives the retransmission and responds with a cumulative \ack of 4. When the sender receives this \ack it cannot determine which 2 packet delivery triggered the ACK transmission and thus, it does not know whether to measure an RTT of 7-3=4 or 7-6=1. Hence, the \ack is ambiguous, so any sampled RTT would be as well.}
\Figl{karn:ambiguity}
\end{figure}

\subsection{Contribution}
Here, we first formalize Karn's algorithm~\cite{RFC6298}, and prove some high-level properties about the relationship between \acks and packets.  In particular, we show that Karn's algorithm computes the ``real" RTT of some packet, but the identity of this packet may be impossible to determine, unless one assumes (as many do) that \acks are delivered in a FIFO ordering. 
Next, we examine the RTO computation defined in RFC~6298~\cite{RFC6298} and its relationship to Karn's algorithm.
For example, we show that when the samples fluctuate within a known interval, the estimated RTT eventually converges to the same interval. This confirms and generalizes prior results.

All our results are automatically checked.  
For the first part, where we study Karn's algorithm, we use Ivy~\cite{ivy}.
Ivy is an interactive prover for inductive invariants, and provides convenient, built-in facilities
for specifying and proving properties about protocols, which makes it ideal for this part of the chapter.
For the second part, we study the RTO computation (and other computations it relies on), defined in RFC~6298.  
These are purely numerical computations and, in isolation, do not involve reasoning about the interleaving of processes
or their communication.  Each computation has rational inputs and outputs, and the theorems we prove bound these computations using exponents and rational multiplication.  We also prove the asymptotic limits of these bounds in steady-state conditions, which we define.  Since Ivy lacks a theory of rational numbers or exponentiation, we turn to \acls~\cite{dillinger2007acl2s,chamarthi2011acl2} for the remainder of the chapter.
We believe this is the first work that formalizes properties of the RTT sampling via Karn's algorithm, as well as properties of the quantities RFC~6298 computes, including the RTO.  Our work provides a useful example of how multiple formal methods approaches can be used to study different angles of a single system.
Finally, the \acls component provides a case study in real analysis using a theorem-prover.

\subsection{Usage of Karn's Algorithm and RFC~6298\Subsecl{karn:usage}}
Many protocols use Karn's Algorithm to sample RTT, e.g.,~\cite{RFC6298,rfc3124,rfc3481,rfc3539,rfc4015,rfc8085}.
Unfortunately, the samples output by Karn's Algorithm could be noisy or outdated.
RFC~6298 addresses this problem 
by using a rolling average 
called the \emph{smoothed RTT}, or \srtt.
Protocols that use the \srtt in conjunction with Karn's Algorithm
(at least optionally) 
include~\cite{rfc7765,rfc4960,rfc6538,rfc6817,rfc8085,rfc8305,rfc8445,rfc8489,rfc8855,rfc8931}. 
RFC~6298 then proposes an RTO computation based on the \srtt and another value called the \rttvar, 
which is intended to capture the variance in the samples.
Note, when referring specifically to the RTO output by RFC~6298, we use the convention \rto.
This is a subtle distinction as the RTO can be implemented in other ways as well (see e.g.,~\cite{kesselman2005optimizing,balandina2013computing}).
These three computations (\srtt, \rttvar, and \rto) are used in TCP
and in many other protocols, e.g.~\cite{rfc6538,rfc6817,rfc6940,rfc8489,rfc8931},
although some such protocols omit explicit mention of RFC~6298 (see \cite{PM2006}).

Not all protocols use retransmission. For example, in QUIC~\cite{rfc9000_quic} every packet has a unique identifier, hence retransmitting a packet assigns it a new unique identifier and the matching \ack\ indicates whether it is for the old or new transmission. Consequently, Karn's algorithm is only used when a real retransmission occurs, which covers most of the protocols designed when one had to be mindful of the length of the transmitted packets and could not afford unique identifiers.   On the other hand, even protocols that do not use Karn's algorithm nevertheless utilize a retransmission timeout that is at least adapted from RFC~6298 -- and in fact, QUIC is one such protocol.

\section{Formal Model of Sender, Channel, and Receiver} \Secl{karn:setup}
We partition messages, or datagrams, 
into \emph{packets} $P$ and \emph{acknowledgments} $A$.
Each packet $p\in P$ is uniquely identified by its id $p.\id \in \mathbb{N}$.
Each \ack $a \in A$ is also uniquely identified by its id $a.\id$. 
Whenever possible, we identify packets and acknowledgments by their $\id$s.

Messages (packets and acknowledgments) typically include additional information such as destination port or sequence number, however, we abstract away such  information in our model.  Also, some protocols distinguish between packets and \emph{segments}, but we abstract away this distinction as well.

The model consists of two endpoints (\emph{sender} and \emph{receiver}) connected over a bi-directional \emph{channel}, shown in \Figr{karn:sender_receiver_channel}. The sender sends packets through the channel to the receiver, and the receiver sends acknowledgements through the channel to the sender.  

\begin{figure}[htb]
\centering
\begin{tikzpicture}
\node[draw, rectangle, rounded corners, minimum width=1.7cm, minimum height=1.5cm, fill=white] (channel) at (3,0) 
    {channel};
\draw[->,>=latex] 
    ([yshift=-0.4cm,xshift=-1.6cm]channel.north west) 
    to node[above] {$\snds$} 
    ([yshift=-0.4cm]channel.north west);
\draw[->,>=latex] 
    ([yshift=0.4cm]channel.south west) 
    to node[below] {$\dlvrr$} 
    ([yshift=0.4cm,xshift=-1.48cm]channel.south west) ;
\draw[->,>=latex] 
    ([yshift=-0.4cm]channel.north east) 
    to node[above] {$\dlvrs$} 
    ([yshift=-0.4cm,xshift=1.48cm]channel.north east) ;
\draw[->,>=latex] 
    ([yshift=0.4cm,xshift=1.5cm]channel.south east) 
    to node[below] {$\sndr$} 
    ([yshift=0.4cm]channel.south east);
\node[draw, circle, minimum height=1.5cm,fill=white] (sender) at (0,0) 
    {sender};
\node[draw, circle, minimum height=1.5cm,fill=white] (receiver) at (6,0) 
    {receiver};
\end{tikzpicture}
\caption{The sender, channel, and receiver. The sender sends packets by $\snds$ actions which are received by $\dlvrs$ actions at the receiver's endpoint, and similarly, the receiver sends \acks by $\sndr$ actions which are received by $\dlvrr$ actions at the receiver's endpoint.}
\Figl{karn:sender_receiver_channel}
\end{figure}
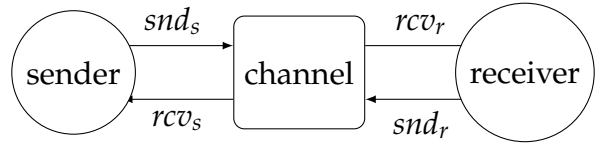

\paragraph{Actions.}
The set of actions, $\act$, is partitioned into four action types:
\begin{enumerate}
    \item $\snds$ that consists of the set of the sender's transmit actions, i.e.: 
    \(\snds = \cup_{p\in P} \{\snds(\id) :\id = p.\id\}\).
    These actions encode when the sender sends a packet.

    \item $\dlvrs$ that consists of the set of the sender's delivery actions, i.e.: 
    \(\dlvrs = \cup_{p\in P} \{\dlvrs(\id) :\id = p.\id\}\).
    These actions encode when the receiver receives a packet.

    \item $\snd_r$ that consists of the set of the receiver's transmit actions, i.e.: 
    \(\sndr = \cup_{a\in A} \{\sndr(\id) :\id = a.\id\}\).
    These actions encode when the receiver sends an \ack.
    
    \item $\dlvrr$ that consists of the set of the receiver's delivery actions, i.e.:
    \(\dlvrr = \cup_{a\in A} \{\dlvrr(\id) :\id = a.\id\}\).
    These actions encode when the sender receives an \ack.
\end{enumerate}

For a finite sequence $\sigma$ over $\act$, we denote the length of $\sigma$ by $\vert\sigma\vert$ and refer to an occurrence of an action in $\sigma$ as an \emph{event}. That is, an event in $\sigma$ consists of an action and its position in $\sigma$. 

The sender's input actions are $\dlvrr$, and its output actions are $\snds$. The receiver's input actions are $\dlvrs$ and its output actions are $\snd_r$. The channel's input actions are $\snds\cup\sndr$ and its output actions are $\dlvrs \cup \dlvrr$.

We assume that the channel is synchronously composed with its two endpoints, the sender and the receiver. 
That is, a $\sndr$ action occurs simultaneously at both the receiver and the channel, 
a $\dlvrr$ action occurs simultaneously at both the sender and the channel, and so on.
The sender and the receiver can be asynchronous. The sender, receiver, and channel are input-enabled in the I/O-automata sense, i.e., each can always receive inputs (messages). In real implementations, the inputs to each component are restricted by buffers, but since the channel is allowed to drop messages (as we see later), restrictions on the input buffer sizes can be modeled using loss. Hence the assumption of input-enabledness does not restrict the model. 

\paragraph{Model Executions.}
Let $\sigma$ be a sequence of actions. We say that $\sigma$ is an \emph{execution} if every delivery event in $\sigma$ is preceded by a matching transmission event, that is, both events carry the same message. (This does not rule out duplication, reordering, or loss -- more on that below.) 
Formally, if $e_i=\dlvrr(x) \in \sigma$, then for some $j<i$, $e_j=\sndr(x) \in \sigma$;
and likewise in the opposite direction.
This requirement rules out corruption and insertion of messages.  
In addition, for TCP-like executions, we may impose additional requirements on the ordering of $\snd$-events of the endpoints.  An example execution is illustrated in the rightmost column of \Figr{karn:execution}.

\paragraph{The Sender.} We adopt the convention that it only transmits a packet after it had transmitted all the preceding ones. Formally, for every $x\in\mathbb{N}$, if $e_i = \snds(x+1)\in \sigma$, then for some $j < i$, $e_j = \snds(x)\in \sigma$.

\paragraph{The Receiver.} We assume here the model of \emph{cumulative \acks}.  That is, the receiver executes a $\snd_r(\id)$ action only if it has been delivered all packets $p$ such that $p.\id < \id$ and it had not been delivered packet $p$ such that $p.\id = \id$. Thus, for example, the receiver can execute $\snd_r(17)$ only after it had been delivered all packets whose id is $<17$ and had not been delivered the packet whose id is 17.  In particular, it may have been delivered packets whose id is $>17$, just not the packet with id 17. 

Many TCP models mandate the receiver transmits exactly one \ack in response to each packet delivered (e.g., \cite{ccac,arun2022starvation,rfc6582,brakmo1995tcp,rfc8985,gerla2001tcp}).  The assumption is common in congestion control algorithms where the sender uses the number of copies of the same acknowledgement it is delivered to estimate how many packets were delivered after a packet was dropped, and thus the number of lost packets. There are however some TCP variants, such as Data Center TCP and TCP Westwood, that allow a \emph{delayed \ack} option wherein the receiver transmits an \ack after every $n^{th}$ packet delivery~\cite{rfc8257,mascolo2001tcp}\footnote{We discuss such \ack strategies further in \Chapr{gbn} as well as \Secr{subsec:delay-ack} in the Appendix.}, or Compound TCP that allows \emph{proactive acknowledgments} where the receiver transmits before having receiving all the acknowledged packets, albeit at a pace that is proportional to the pace of packet deliveries~\cite{ctcp}.  Another mechanism that is sometimes allowed is \nack (for Negative \ack) where the receiver sends, in addition to the cumulative acknowledgement, a list of gaps of missing packets~\cite{rfc3941}.  Since TCP datagrams are restricted in size, the \nacks are partial. Newer protocols (such as QUIC) allow for full (unrestricted) \nacks~\cite{rfc9000_quic}.

Our Ivy model assumes the receiver transmits one \ack per packet delivered.
That is, we assume that in the projection of  $\sigma$ onto the receiver's actions, $\sndr$ and $\dlvrs$ events are alternating.
In fact, the results listed in this paper would still hold even under the slightly weaker assumption
that the receiver transmits an \ack whenever it is delivered a packet that it had not previously
been delivered, but for which it had previously been delivered all lesser ones.
However, the stronger assumption is easier to reason about, and is more commonly used in the literature
(for example it is the default assumption for congestion control algorithms where the pace of delivered
acknowledgments is used to infer the pace of delivered packets).
Consequently, our results apply to traditional congestion control algorithms like TCP Vegas
and TCP New Reno where the receiver transmits one acknowledgement per packet delivered,
however, our results might not apply to atypical 
protocols like Data Center TCP, TCP Westwood, or Compound TCP,
that use alternative \ack schemes.

\paragraph{The Channel.}  So far,  we only required that the channel never deliver messages to one endpoint that were not previously transmitted by the other.  This does not rule out loss, reordering, nor duplication of messages. In the literature, message duplication is assumed to be so uncommon that it can be disregarded. The traditional congestion control protocols (\cite{rfc5681,rfc6582,liu2006tcp,rfc9002,ctcp}) assume bounded reordering, namely, that once a message is delivered, an older one can be delivered only if transmitted no more than $k$ transmissions ago (usually, $k=4$).  Packet losses are always assumed to occur, but the possibility of losing acknowledgements is often ignored. 

It is possible to formalize further constraints on the channel, e.g., by restricting the receiver-to-sender path to be loss- and reordering-free.  For instance, the work in~\cite{afek1994reliable} formalizes a constrained channel by assuming a mapping from delivery to transmission events, and using properties of this mapping to express restrictions. Reordering is ruled out by having this mapping be monotonic, duplication is ruled out by having it be one-to-one, and loss is ruled out by having it be onto. 

Most prior works assume no loss or reordering of \acks~\cite{baccelli2000tcp,hu2013modeling,DMSS19,ccac,arun2022starvation}, or did not model loss or reordering at all~\cite{lockefeer2016formal,von2020automated,cluzel2021layered}. Some prior works assume both loss and reordering but do not study the computation of RTO or other aspects of
congestion control~\cite{afek1994reliable,smith1997formal}.

Since, as we describe in \Secr{karn:related}, some works on RTO assume the channel delivers \acks in perfect order, and since this assumption has implications on the RTT computation (see \Obr{fifo_rtt}), we define executions where the receiver's messages are delivered, without losses, in the order they are transmitted as follows. An execution $\sigma$ is a \emph{FIFO-acknowledgement execution} if 
\( \sigma|_{\dlvrr} ~\preceq \sigma|_{\sndr} \)
is an invariant of sigma,
where $\sigma|_a$ is the projection of $\sigma$ onto the $a$ actions, and $\preceq$ is the prefix relation. That is, in a FIFO-acknowledgement execution, the sequences of \acks delivered to the sender is always a prefix of the sequence of \acks transmitted by the receiver.

The following observation establishes that the sequence of acknowledgements the receiver transmits is monotonically increasing. Its proof follows directly from the fact that the receiver is generating cumulative \acks. (All Observations in this section and the next are established in Ivy.)
\begin{observation}\Obl{r_sends_monotone}
Let $\sigma$ be an execution, and assume $i<j$ such that $e_i = \sndr(a_i),e_j = \sndr(a_j)$ are in $\sigma$. Then $a_i \leq a_j$.
\end{observation}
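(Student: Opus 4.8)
The plan is to prove the statement by relating the value of each \ack the receiver transmits to the set of packets it has been delivered so far, and then exploiting the fact that this set only grows over time.

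First I would fix notation. For a position $k$ in $\sigma$, let $D_k \subseteq \mathbb{N}$ be the set of packet ids $x$ such that the event $\dlvrs(x)$ occurs at some position $\le k$ in $\sigma$; that is, $D_k$ is the set of packets the receiver has been delivered by the time of the $k$-th event. Since the length-$k$ prefix of $\sigma$ contains all the delivery events of any shorter prefix, the family $\{D_k\}$ is monotone non-decreasing: $k \le k'$ implies $D_k \subseteq D_{k'}$. This ``deliveries accumulate'' fact is the heart of the argument; in Ivy it is captured by an inductive invariant that tracks $D_k$ (equivalently, its least missing element) and checks that every action preserves the invariant.

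Next I would extract from the cumulative-\ack model exactly what a receiver transmit action guarantees. By definition of cumulative \acks, the receiver executes $\sndr(a)$ at a position $k$ only if every packet with id $< a$ has already been delivered and the packet with id $a$ has not; in our notation, every packet id below $a$ belongs to $D_k$, while $a \notin D_k$ (and, in particular, a packet with id $a$ exists).

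Finally, the conclusion. Suppose $i < j$ with $e_i = \sndr(a_i)$ and $e_j = \sndr(a_j)$ in $\sigma$. Applying the receiver condition at position $i$, every packet id strictly below $a_i$ lies in $D_i$; by monotonicity $D_i \subseteq D_j$, so every packet id strictly below $a_i$ lies in $D_j$. But applying the receiver condition at position $j$, $a_j$ is a packet id with $a_j \notin D_j$. Hence $a_j$ cannot be one of the packet ids strictly below $a_i$, so $a_j \ge a_i$, i.e. $a_i \le a_j$, as claimed. I expect no genuine obstacle here: the only delicate point is stating the invariant correctly — that $D_k$ is monotone in $k$ and that the \ack value emitted at step $k$ is pinned down by $D_k$ — after which Ivy discharges the remaining reasoning automatically.
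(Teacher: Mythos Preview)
Your argument is correct and is exactly the natural unpacking of the paper's own one-line justification: the paper says the observation ``follows directly from the fact that the receiver is generating cumulative \acks'' and is established in Ivy, and your monotone-delivered-set invariant together with the cumulative-\ack condition is precisely how that direct argument goes through.
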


\paragraph{Sender's Computations.}
So far, we abstracted away from the internals of the sender, receiver, and channel, and focused on the executions their composition allows.  As we pointed out at the beginning of this section, real datagrams can contain information far beyond ids, and there are many mechanisms for their generation, depending on the protocol being implemented and the implementation choices made. Such real implementations have \emph{states}.  All we care about here, however, is the set of observable behaviors they allow, in terms of packet and acknowledgement ids. We thus choose to ignore implementation details, including states, and focus on executions, namely abstract observable behaviors. 

In the next section we study a mechanism that is imposed over executions. In particular, we describe an algorithm for sampling the RTT of packets, namely, Karn's Algorithm. This algorithm, $P$, is (synchronously) composed with the sender's algorithm (on which we only make a single assumption, that is, that a packet is transmitted only after all prior ones were transmitted).  We can view the algorithm as a \emph{non-interfering monitor}, that is, $P$ observes the sender's actions ($\snds$ and $\dlvrr$) and performs some bookkeeping when each occurs. In fact, after initialization of variables, it consists of two parts, one that describes the update to its variables upon a $\snds$ action, and one that describes the updates to its variables after a $\dlvrr$ action. 

Let $V$ be the set of variables $P$ uses. To be non-interfering, $V$ has to be disjoint from the set of variables that the sender uses to determine when to generate $\snds$s and process $\dlvrr$s.   We ignore this latter set of variables since it is of no relevance to our purposes. Let a \emph{sender's state} be a type-consistent assignment of values $V$. For a sender's state $s$ and a variable $v\in V$, let $\sem{s}{v}$ be the value of $v$ at state $s$.  For simplicity's sake (and consistent with the pseudocode we present in the next section) assume that $P$ is deterministic, that is, given a state $s$ and a sender's action $\alpha$, there is a unique sender state $s'$ such that $s'$ is the successor of $s$ given $\alpha$.

Let $\sigma$ be an execution.   Let $\sigma|_s$ be the projection of $\sigma$ onto the sender's events (the $\snds$ and $\dlvrr$ events).  Since $P$ is deterministic, the sequence $\sigma|_s$ uniquely defines a sequence of sender's states $\kappa_\sigma: s_0, \ldots$ such that $s_0$ is the initial state, and every $s_{i+1}$ is a successor of $s_i$ under $P$ according to $\sigma|_s$.  We refer to $\kappa_\sigma$ as the \emph{sender's computation under $P$ and $\sigma$}.

\section{Formal Model of Karn's Algorithm}\Secl{karn:karn}

As discussed in \Secr{karn:intro}, having a good estimate of RTT, the round-trip time of a packet, is essential for determining the value of RTO, which is crucial for many of the Internet's protocols (see \Subsecr{karn:usage} for a listing thereof).  The value of RTT varies over the lifetime of a protocol, and is therefore often sampled. Since the sender knows the time it transmits a packet, and is also the recipient of acknowledgements, it is the sender whose role it is to sample the RTT.  If the channel over which packets and acknowledgements are communicated were a perfect FIFO channel, then RTT would be easy to compute, since then each packet would generate a single acknowledgement, and the time between the transmission of the packets and the delivery of its acknowledgement would be the RTT. However, channels are not perfect. Senders retransmit packets they believe to be lost, and when those are acknowledged the sender cannot disambiguate which of the transmissions to associate with the acknowledgements.  Moreover, transmitted acknowledgments can be lost, or delivered out of order. In \cite{karn1987}, an idea, referred to as Karn's Algorithm, was introduced to address the first issue.  There, sampling of RTT is only performed when the sender receives a new acknowledgement, say $h$, greater than the previously highest received acknowledgement, say $\ell$, where all the packets whose id is in the range $[\ell,h)$ were transmitted only once. It then outputs a new sample whose value is the time that elapsed between the transmission of the packet whose id is $\ell$ and delivery of the acknowledgement $h$.  The reason $\ell$ (as opposed to $h$) is used for the base of calculations is the possibility that the id of the packet whose delivery triggers the new acknowledgement is $\ell$, and the RTT computation has to be cautious in the sense of over-approximating RTT.

\begin{figure}[H]
\begin{minipage}{\linewidth}
\begin{algorithm}[H]
\SetAlgoLined
\DontPrintSemicolon
\SetKwInOut{Input}{input}
\SetKwInOut{Output}{output}
\Input{$\snds(i),\dlvrr(j),~i,j\in \mathbb{N}^+$}
\Output{$\sample \in \mathbb{N}^+$}
$\nums,\tims\colon \mathbb{N}^+ \to \mathbb{N}$  {\bf init} all 0\; 
$\ffirst\colon \mathbb{N}$ {\bf init} $0$ \;
$\tau\colon \mathbb{N}$ {\bf init} $1$ \;
\If{$\snds(i)$ is received}{
    $\nums[i] := \nums[i] +1$\;
    \If {$\tims[i] = 0$}{
        $\tims[i] := \tau$\;   
    }\label{ss_update}
    $\tau := \tau + 1$\;}
\If{$\dlvrr(j)$ is received}{
\If{$j > \ffirst$}{\label{bgn_block}
    \If{\textit{ok-to-sample}($\nums,\ffirst$)\label{ok_to}}{
    $\sample := \tau - \tims[\ffirst]$\label{sample}\;
    {\bf Ouput} $\sample$}\label{end_sample}
    $\ffirst := j$\label{retransmission}}\label{end_block}
    $\tau := \tau + 1$\;}
   \caption{Karn's Algorithm}\Algl{Karn}
\end{algorithm}
\end{minipage}
\end{figure}

The real RTT of a packet may be tricky to define. The only case where it is clear is when packet $i$ is transmitted once, and an \ack $i+1$ is delivered before any other \ack $\ge i+1$ is delivered. We can then define the RTT of packet $i$, $\rtt(i)$, to be the time, on the sender's clock, that elapses between the (first and only) $\snds(i)$ action and the $\dlvrr(i+1)$ action. Since the channel is not FIFO, it's possible that $h > \ell+1$, and then the sample, that is, the time that elapses between $\snds[\ell]$ and $\dlvrr(h)$ is the RTT for some packet $j\in[\ell,h)$, denoted by, $\rtt(j)$, but we may not be able to identify $j$.  Moreover, the sample over-approximates the RTT of all packets in the range. Note that $\rtt$ is a partial function. We show that when the channel delivers the receiver's messages in FIFO ordering, then the computed sample is exactly $\rtt(\ell)$.

We model the sender's sampling of RTT according to Karn's Algorithm (\Algr{Karn}). The sampling is a non-interfering monitor of the sender. Its inputs are the sender's actions, the  $\snds(i)$'s and $\dlvrr(j)$'s. Its output is a (possibly empty) sequence of samples denoted by $\sample$.  To model time,  we use an integer counter ($\tau$) that is initialized to 1 (we reserve 0 for undefined) and is incremented with each step. Upon a $\snds(i)$ input, the algorithm stores, in $\nums[i]$, the number of times packet $i$ is transmitted, and in $\tims[i]$ the time of the first time it is transmitted. The second step is for $\dlvrr$ events, where the sender determines whether a new sample can be computed, and if so, computes it.  An example execution, concluding with the computation of a sample via Karn's Algorithm, is given in \Figr{karn:execution}.

\begin{figure}[h]
\centering
\begin{tikzpicture}[align=center]
\node[draw,rectangle] (sender) {Sender};
\node[draw,rectangle,right=0.7cm of sender] (channel) {Channel};
\node[draw,rectangle,right=0.7cm of channel] (receiver) {Receiver};
\node[right=0.3cm of receiver] {Execution~$\sigma$};
\node[left=0.3cm of sender] {Karn's Algorithm};
\node[below=0.25cm of sender] (s1) {};
\node[below=0.25cm of channel] (c1) {};
\node[below=0.25cm of receiver] (r1) {};
\node[right=0.8cm of r1] (e1) {$e_1=\snds(1)$};
\draw[->,>=latex] (s1) to node[above] {1} (c1);
\node[left=0.6cm of s1] {$\nums[1]=1; \tims[1]=1; \tau=2$};
\node[below=0.25cm of s1] (s2) {};
\node[below=0.25cm of c1] (c2) {};
\node[below=0.25cm of r1] (r2) {};
\node[right=0.8cm of r2] (e2) {$e_2=\snds(2)$};
\draw[->,>=latex] (s2) to node[above] {2} (c2);
\node[left=0.6cm of s2] {$\nums[2]=1; \tims[2]=2; \tau=3$};
\node[below=0.25cm of s2] (s3) {};
\node[below=0.25cm of c2] (c3) {};
\node[below=0.25cm of r2] (r3) {};
\node[right=0.8cm of r3] (e3) {$e_3=\dlvrs(2)$};
\draw[->,>=latex] (c3) to node[above] {2} (r3);
\node[left=0.6cm of s3] {};
\node[below=0.25cm of s3] (s4) {};
\node[below=0.25cm of c3] (c4) {};
\node[below=0.25cm of r3] (r4) {};
\node[right=0.8cm of r4] (e4) {$e_4=\sndr(1)$};
\draw[->,>=latex] (r4) to node[above] {1} (c4);
\node[left=0.6cm of s4] {};
\node[below=0.25cm of s4] (s5) {};
\node[below=0.25cm of c4] (c5) {};
\node[below=0.25cm of r4] (r5) {};
\node[right=0.8cm of r5] (e5) {$e_5=\dlvrr(1)$};
\draw[->,>=latex] (c5) to node[above] {1} (s5);
\node[left=0.6cm of s5] {$\neg\textit{ok-to-sample};\ffirst=1;\tau=4$};
\node[below=0.25cm of s5] (s6) {};
\node[below=0.25cm of c5] (c6) {};
\node[below=0.25cm of r5] (r6) {};
\node[right=0.8cm of r6] (e6) {$e_6=\dlvrs(1)$};
\draw[->,>=latex] (c6) to node[above] {1} (r6);
\node[left=0.6cm of s6] {};
\node[below=0.25cm of s6] (s7) {};
\node[below=0.25cm of c6] (c7) {};
\node[below=0.25cm of r6] (r7) {};
\node[right=0.8cm of r7] (e7) {$e_7=\snds(3)$};
\draw[->,>=latex] (s7) to node[above] {3} (c7);
\node[left=0.6cm of s7] {$\nums[5]=1; \tims[5]=4; \tau=5$};
\node[below=0.25cm of s7] (s8) {};
\node[below=0.25cm of c7] (c8) {};
\node[below=0.25cm of r7] (r8) {};
\node[right=0.8cm of r8] (e8) {$e_8=\sndr(3)$};
\draw[->,>=latex] (r8) to node[above] {3} (c8);
\node[left=0.6cm of s8] {};
\node[below=0.25cm of s8] (s9) {};
\node[below=0.25cm of c8] (c9) {};
\node[below=0.25cm of r8] (r9) {};
\node[right=0.8cm of r9] (e9) {$e_9=\dlvrs(3)$};
\draw[->,>=latex] (c9) to node[above] {3} (r9);
\node[left=0.6cm of s9] {};
\node[below=0.25cm of s9] (s10) {};
\node[below=0.25cm of c9] (c10) {};
\node[below=0.25cm of r9] (r10) {};
\node[right=0.8cm of r10] (e10) {$e_{10}=\sndr(4)$};
\draw[->,>=latex] (r10) to node[above] {4} (c10);
\node[left=0.6cm of s10] {};
\node[below=0.25cm of s10] (s11) {};
\node[below=0.25cm of c10] (c11) {};
\node[below=0.25cm of r10] (r11) {};
\node[right=0.8cm of r11] (e11) {$e_{11}=\dlvrr(4)$};
\draw[->,>=latex] (c11) to node[above] {4} (s11);
\node[left=0.6cm of s11] {$\textit{ok-to-sample};\ffirst=4;\tau=6$};
\end{tikzpicture}
\caption{Message sequence chart illustrating an example execution.  Time progresses from top down.  Instructions executed by \Algr{Karn} are shown on the left, and the sender's execution is on the right.  $\snds$ events are indicated with arrows from sender to channel, $\dlvrs$ events with arrows from channel to receiver, etc.  After the final $\dlvrr$ event, sender executes Line \ref{sample} and outputs the computation $\sample = 6 - 2 = 4$.}
\Figl{karn:execution}
\end{figure}
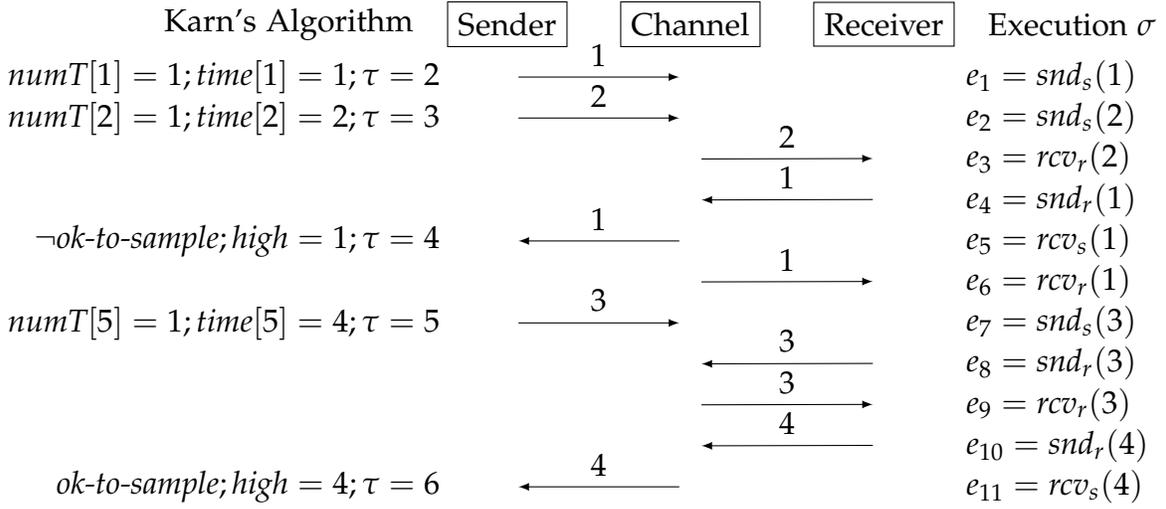

In \Algr{Karn}, $\nums[i]$ stores the number of times a packet whose id is $i$ is transmitted, $\tims[i]$ stores the sender's time where  packet whose id is $i$ is first transmitted, $\ffirst$ records the highest delivered acknowledgement, and when a new sample is computed (in $\sample$) it is recorded as an output. The condition \emph{ok-to-sample}($\nums,\ffirst$) in Line \ref{ok_to} checks whether sampling should occur.  When $\ffirst > 0$, that is, when this is not the first \ack received, then the condition is that all the packets in the range $[\ffirst,j)$ were transmitted once. If, however, $\ffirst = 0$, since ids are positive, the condition is  that all the packets in the range $[1,j)$ were transmitted once. Hence, \emph{ok-to-sample}($\nums,\ffirst$) is:
$$ (\forall k. \ffirst < k < j \to \nums[k] = 1)\band (\ffirst > 0 \rightarrow \nums[\ffirst]=1)$$
If \emph{ok-to-sample}($\nums,\ffirst$), Line \ref{sample} computes a new sample $\sample$ as the time that elapsed since packet $\ffirst$ was transmitted until acknowledgement $j$ is delivered, and outputs it in the next line. Thus, a new sample is \emph{not computed} when a new \ack, that is greater than $\ffirst$, is delivered but some packets whose id is less than the new \ack, yet $\ge \ffirst$ were retransmitted. Whether or not a new sample is computed, when such an \ack is delivered, $\ffirst$ is updated to its value to reflect the currently highest delivered \ack.

\section{Properties of Karn's Algorithm}

We show, through a sequence of observations, that \Algr{Karn} computes the true RTT of \emph{some} packet, whose identity cannot also be uniquely determined.  While much was written about the algorithm, we failed to find a clear statement of what exactly it computes.  In \cite{karn1987}, it is shown that if a small number of consecutive samples are equal then the computed RTT (which is a weighted average of the sampled RTTs) is close to the value of those samples. See the next section for further discussion on this issue. Our focus in this section is what exactly is computed by the algorithm. 

The set of variables in \Algr{Karn} is $V=\{ \tau, \nums,\tims, \ffirst, \sample\}$.  Let $\sigma$ be an execution, and let $\kappa_\sigma$ be the sender's computation under \Algr{Karn} and $\sigma$.
The following observation establishes two invariants over $\kappa_\sigma$. Both follow from the assumption we made on the sender's execution, namely that the sender does not transmit $p$ without first transmitting $1, \ldots, p-1$. The first establishes that if a packet is transmitted (as viewed by $\nums$), all preceding ones were transmitted, and the second 
that the first time a packet is transmitted must be later than the first time every preceding packet was transmitted. 

\begin{samepage}
\begin{observation}\Obl{karn_invs}
The following are invariants over sender's computations: 
\[ \begin{array}{ll}
0 < i < j \band \nums[j] > 0 \longrightarrow \nums[i] > 0 & (\inv) \\
0 < i < j \band \nums[j] > 0 \longrightarrow \tims[i] < \tims[j] & (\inv)
\end{array}\]
\end{observation}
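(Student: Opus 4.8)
The plan is to establish both claims together by induction on the length of the sender's computation $\kappa_\sigma = s_0, s_1, \dots$, after first strengthening the pair of invariants with one auxiliary clock invariant so that the resulting set is closed under the transition relation of \Algr{Karn}. The auxiliary invariant I would carry along is
\[
(\mathit{Aux})\qquad \tau \geq 1 \;\wedge\; \forall i.\ \bigl( \nums[i] > 0 \longrightarrow 0 < \tims[i] < \tau \bigr),
\]
which says that the counter $\tau$ is always positive and strictly increasing and that any transmission time already recorded in $\tims$ is a strictly earlier value of $\tau$ than the current one. This is immediate from the structure of the algorithm: $\tau$ starts at $1$ and is only ever incremented, and a cell $\tims[i]$ is written at most once, namely to the current value of $\tau$ just before $\tau$ is bumped, and is never touched again; so in each post-state of $\kappa_\sigma$ we have $\tims[i] = \tau - 1 < \tau$ for a freshly written cell, and older cells only become further below $\tau$.

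For the base case $s_0$, every entry of $\nums$ is $0$, so the antecedent $\nums[j] > 0$ in each of the two invariants is false and both hold vacuously (and $(\mathit{Aux})$ holds since $\tau = 1$). For the inductive step, consider the transition $s \to s'$ induced by a sender event $\alpha$ in $\sigma|_s$. If $\alpha$ is a delivery event $\dlvrr(\cdot)$, then \Algr{Karn} modifies neither $\nums$ nor $\tims$ (only $\ffirst$, $\sample$, and $\tau$ can change), so both stated invariants are inherited unchanged, and $(\mathit{Aux})$ survives because $\tau$ only grows.

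The substantive case is $\alpha = \snds(k)$, where $\nums[k]$ is incremented, $\tims[k]$ is set to the old value of $\tau$ iff $\nums[k]$ was previously $0$, and then $\tau$ is incremented; nothing else in $\nums$ or $\tims$ moves, so $(\mathit{Aux})$ is again maintained. Fix $0 < i < j$ with $\nums[j] > 0$ in $s'$. The only delicate sub-case is $j = k$ with $k$ being transmitted for the first time: here I must show $\nums[i] > 0$ and $\tims[i] < \tims[k]$ in $s'$ for every $0 < i < k$. This is exactly where the sender's execution assumption is used: since $\snds(k)$ occurs in $\sigma$, repeated application of ``$\snds(x{+}1)$ occurs only after some earlier $\snds(x)$'' shows that $\snds(1),\dots,\snds(k-1)$ all occur strictly before this event, hence already in state $s$ we have $\nums[i] > 0$ for all $0 < i < k$ — which is unchanged by the current step and gives the first invariant — and by $(\mathit{Aux})$ at $s$, $\tims[i] < \tau_{\text{old}} = \sem{s'}{\tims}[k]$ for each such $i$, which gives the second. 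The remaining sub-cases are routine bookkeeping: if $j \neq k$ then $\nums[j]$ is the same in $s$ and $s'$; if moreover $i = k$ then $\nums[i] > 0$ in $s'$ directly, and if $i \neq k$ both invariants follow from the induction hypothesis at $s$ (for the $\tims$ invariant, noting that when $k$ is transmitted for the first time, $\nums[j] > 0$ in $s$ together with $\nums[k] = 0$ in $s$ forces $j < k$ by the sender assumption, so $i < j < k$ and neither $\tims[i]$ nor $\tims[j]$ moves); and if $j = k$ but $\nums[k]$ was already positive in $s$, then $\tims[k]$ is unchanged and the induction hypothesis at $s$ applies verbatim.

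The main obstacle is simply the one step ``$\snds(k)$ present $\Rightarrow \snds(1),\dots,\snds(k-1)$ already present,'' which has to be available at every prefix, not just for the whole execution $\sigma$. In the I/O-automaton model this is a property of the sender's local control and so is a standing invariant one may cite; with it in hand, the argument above shows that the two invariants in the observation, together with $(\mathit{Aux})$, are closed under the transition relation of \Algr{Karn}, which is precisely what Ivy needs.
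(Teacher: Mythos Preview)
Your proposal is correct and matches the paper's approach: the paper simply remarks that both invariants follow from the sender's ordering assumption (that $\snds(x{+}1)$ occurs only after some earlier $\snds(x)$) and delegates the formal inductive-invariant check to Ivy. Your explicit induction with the auxiliary clock invariant $(\mathit{Aux})$ is exactly the kind of strengthening Ivy needs under the hood, and the paper elsewhere notes that it ``guided Ivy by providing supplemental invariants as hints''; your $(\mathit{Aux})$ is precisely such a hint made explicit.

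One small presentational gap: in the sub-case $j \neq k$, $i = k$, you address only the first invariant (``$\nums[i] > 0$ in $s'$ directly'') and do not spell out the $\tims$ invariant. It is fine once observed that $k < j$ and $\nums[j] > 0$ in $s$ force, via the first invariant at $s$, $\nums[k] > 0$ in $s$; hence $k$ cannot be a first transmission here, $\tims[k]$ is unchanged, and the induction hypothesis gives $\tims[k] < \tims[j]$. This is indeed routine, but worth one sentence.
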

\end{samepage}

Assume $\kappa_\sigma: s_0, s_1, \ldots$. We say that a state $s_i\in \kappa_\sigma$ is a \emph{fresh sample} state if the transition leading into it contains an execution of Lines \ref{ok_to}--\ref{end_sample} of \Algr{Karn}. The following observation establishes that in a fresh sample state, the new sample is an upper bound for the RTT of a particular range of packets (whose ids range from the previous $\ffirst$ up to, but excluding, the new $\ffirst$), and is the real RTT of one of them.

\begin{observation}\Obl{sample_upper_bounds}
Let $\sigma$ and $\kappa_\sigma$ be as above and assume that $s_i\in \kappa_\sigma$ is a fresh sample state. Then the following all hold:
\begin{enumerate}
    \item For every packet with id $\ell$, $\sem{s_{i-1}}{\ffirst} \le \ell < \sem{s_i}{\ffirst}$ implies that $\rtt(\ell) \le \sem{s_i}{\sample}$. That is, the fresh sample is an upper bound of the RTT for all packets between the old and the new $\ffirst$.
    \item There exists a packet with id $\ell$, $\sem{s_{i-1}}{\ffirst} \le \ell < \sem{s_i}{\ffirst}$ such that $\rtt(\ell) = \sem{s_i}{\sample}$. That is, the fresh sample is the RTT of some packet 
    between the old and 
    new $\ffirst$.
    \end{enumerate}
\end{observation}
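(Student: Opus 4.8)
The plan is to evaluate the two claims directly at the fresh sample state, reducing everything to a single arithmetic identity and then invoking the monotonicity invariant for $\tims$ from \Obr{karn_invs}. Let $e = \dlvrr(j)$ be the event labeling the transition into $s_i$, let $\ell_0 = \sem{s_{i-1}}{\ffirst}$, and let $T = \sem{s_{i-1}}{\tau}$ be the clock value when $e$ fires. Since $s_i$ is a fresh sample state, the guard on line~\ref{bgn_block} gives $j > \ell_0$ and \emph{ok-to-sample}$(\nums,\ell_0)$ holds; moreover the $\dlvrr$-handler never touches $\tims$, so $\sem{s_i}{\tims} = \sem{s_{i-1}}{\tims}$, $\sem{s_i}{\ffirst} = j$, and, by line~\ref{sample}, $\sem{s_i}{\sample} = T - \sem{s_i}{\tims[\ell_0]}$. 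In particular, \emph{ok-to-sample} guarantees that every packet whose id lies in $[\ell_0,j)$ was transmitted exactly once (for ids strictly above $\ell_0$ by the first conjunct, and for $\ell_0$ itself, when $\ell_0 > 0$, by the second).

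I would first establish two auxiliary invariants of $\kappa_\sigma$, alongside those in \Obr{karn_invs}. (i) For every transmitted id $k$, $\tims[k]$ is the clock value at the (unique) $\snds(k)$ event; this is immediate from line~\ref{ss_update} and the fact that $\tims[k]$ is overwritten at most once. (ii) $\ffirst$ is always the maximum id among all acknowledgements delivered so far; this follows from the update on line~\ref{bgn_block}. Two consequences: since $\sem{s_{i-1}}{\ffirst} = \ell_0$, no $\dlvrr(\cdot)$ event with value exceeding $\ell_0$ occurs strictly before $e$, and because $j > \ell \ge \ell_0$ the event $e$ is therefore the \emph{first} delivery of an acknowledgement of value greater than $\ell$, for every $\ell$ with $\ell_0 \le \ell < j$. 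Hence, for any such $\ell$ whose RTT is defined, the round trip of packet $\ell$ is opened by its $\snds(\ell)$ event at time $\sem{s_i}{\tims[\ell]}$ and closed by $e$ at time $T$, so $\rtt(\ell) = T - \sem{s_i}{\tims[\ell]}$.

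Part~1 then follows because the second invariant of \Obr{karn_invs} makes $\tims$ nondecreasing in the packet id over transmitted packets (and $\tims[0]=0$ covers the boundary case $\ell_0 = 0$): thus $\sem{s_i}{\tims[\ell_0]} \le \sem{s_i}{\tims[\ell]}$, and $\rtt(\ell) = T - \sem{s_i}{\tims[\ell]} \le T - \sem{s_i}{\tims[\ell_0]} = \sem{s_i}{\sample}$. For Part~2 I would take $\ell = \ell_0$: packet $\ell_0$ was transmitted exactly once by the second conjunct of \emph{ok-to-sample}, its round trip is closed precisely by $e$, and therefore $\rtt(\ell_0) = T - \sem{s_i}{\tims[\ell_0]} = \sem{s_i}{\sample}$, with equality.

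I expect the main difficulty to be conceptual rather than computational: the ``real'' RTT $\rtt(\cdot)$ is only cleanly defined when an acknowledgement of value $\ell+1$ is delivered before any larger one, which a reordering channel need not arrange, so the argument really hinges on the reading in which the round trip of packet $\ell$ is closed by the \emph{first} delivered acknowledgement of value greater than $\ell$ --- and then on noticing that \emph{ok-to-sample} together with the ``$\ffirst$ is the running maximum'' invariant force this to be the \emph{same} event $e$ for the whole range $[\ell_0, j)$, so that monotonicity of $\tims$ picks out the \emph{old} $\ffirst$ (not $j-1$, not the new $\ffirst$) as the exact witness. The one residual subtlety is the case split on $\ell_0 = 0$, where the sampled value collapses to $T$ and there is no genuine packet $\ell_0$; this is dispatched by observing that the model does not emit a fresh sample on the very first acknowledgement (cf.\ \Figr{karn:execution}), so a fresh sample state always has $\ell_0 \ge 1$.
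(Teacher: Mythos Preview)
The paper offers no written argument for this observation—it says only that the observations are established mechanically in Ivy—so your pen-and-paper argument is filling a genuine gap, and the overall strategy (unfolding the $\dlvrr$-handler, invoking the monotonicity invariant $I2$ on $\tims$ from \Obr{karn_invs}, and using that $\ffirst$ tracks the running maximum of delivered acknowledgements) is sound.

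That said, your Part~2 argument proves strictly more than the paper claims. You exhibit $\ell_0 = \sem{s_{i-1}}{\ffirst}$ as the exact witness in \emph{every} execution, whereas the paper says explicitly that in the non-FIFO case ``we may not be able to identify~$j$,'' and it devotes a separate \Obr{fifo_rtt} to establishing, under the additional FIFO hypothesis, that the witness is~$\ell_0$. Under your reading of $\rtt(\ell)$ as the time from $\snds(\ell)$ to the first $\dlvrr(a)$ with $a > \ell$, your argument is correct and \Obr{fifo_rtt} collapses to a corollary of your Part~2. So either you have genuinely strengthened the paper's result, or the Ivy formalization uses a narrower notion of $\rtt$—closer to the paper's strict clause requiring specifically $\dlvrr(\ell+1)$—under which the witness can vary. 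You do flag the definitional choice, which is good, but you should also state plainly that under the strict reading your chosen witness $\ell_0$ fails whenever $j > \ell_0 + 1$, and that your broad reading is in fact the only one under which Part~2 can hold for arbitrary executions.

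The $\ell_0 = 0$ dispatch is weaker than you present it. The formal \emph{ok-to-sample} predicate as written in the paper is vacuously true when $\ffirst = 0$ and $j = 1$, so nothing in the \emph{text} of the algorithm prevents a fresh sample there; the figure you cite actually disagrees with the written predicate at that step. Citing the figure over the formal definition is unsafe—better to state $\ell_0 \ge 1$ as an explicit side-condition and note that the Ivy model presumably encodes it.
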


We next show under the (somewhat unrealistic, yet often made) assumption of FIFO-acknowledgement executions,  the packet whose RTT is computed in the second clause of \Obr{sample_upper_bounds} is exactly the packet whose id equals to the prior $\ffirst$. In particular, that if $s_i$ is a fresh sample state, then the packet whose RTT is computed is $p$ such that $p.\id$ equals to the value of $\ffirst$ just before the new fresh state is reached.

\begin{observation}\Obl{fifo_rtt}
Let $\sigma$ be a FIFO-acknowledgement execution $\sigma$, and assume $\kappa_\sigma$ contains a fresh sample state  $s_\ell$.  Then  $\sem{s_\ell}{\sample}= \rtt(\sem{s_{\ell-1}}{\ffirst})$.
\end{observation}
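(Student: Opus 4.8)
The plan is to bootstrap on Observation~\Obr{sample_upper_bounds}, which already tells us that at the fresh sample state $s_\ell$ the value $\sem{s_\ell}{\sample}$ equals $\rtt(k)$ for \emph{some} packet id $k$ with $\sem{s_{\ell-1}}{\ffirst} \le k < \sem{s_\ell}{\ffirst}$, namely for $k$ the packet whose delivery to the receiver triggered the acknowledgement $h$ that caused the new sample. Thus the entire content of the statement reduces to: under the FIFO-acknowledgement hypothesis, this triggering packet is $\ell_0 := \sem{s_{\ell-1}}{\ffirst}$ itself. First I would unpack the fresh sample state. The transition into $s_\ell$ is a $\dlvrr(h)$ event with $h > \ell_0$ for which \textit{ok-to-sample}$(\nums,\ell_0)$ holds, and Line~\ref{sample} of \Algr{Karn} sets $\sem{s_\ell}{\sample} = \tau^- - \tims[\ell_0]$, where $\tau^-$ is the clock reading just before this $\dlvrr$ event. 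Since \textit{ok-to-sample}$(\nums,\ell_0)$ forces $\nums[\ell_0]=1$ when $\ell_0>0$, the guarded assignment at Line~\ref{ss_update} guarantees that $\tims[\ell_0]$ is exactly the clock value at the unique $\snds(\ell_0)$ event; hence $\sem{s_\ell}{\sample}$ is precisely the time elapsed, on the sender's clock, between that single $\snds(\ell_0)$ and this $\dlvrr(h)$.

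The crux is the FIFO argument showing $k=\ell_0$. By Observation~\Obr{r_sends_monotone} the sequence of acks the receiver transmits is non-decreasing, and the FIFO hypothesis $\sigma|_{\dlvrr}\preceq\sigma|_{\sndr}$ makes the sequence delivered to the sender a non-decreasing prefix of it; consequently $\ffirst$ always equals the value of the most recently delivered ack, and $h$ is the first ack value exceeding $\ell_0$ that the receiver ever transmits. I would then establish the key invariant tying this sender-side fact to the receiver's reception history: because $\sem{s_{\ell-1}}{\ffirst}=\ell_0>0$, an ack of value $\ell_0$ was delivered, hence transmitted, hence the receiver at that moment held packets $1,\dots,\ell_0-1$ but not $\ell_0$; its ``next expected packet'' counter is monotone nondecreasing, so it remains equal to $\ell_0$ until packet $\ell_0$ is delivered to it, at which instant it jumps to the first missing id, emits the corresponding ack, and — this being the first ack exceeding $\ell_0$ the receiver sends, hence (by FIFO) the first delivered above $\ell_0$, hence $h$ — that emitted ack \emph{is} $h$. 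So $h$ is the ack transmitted in response to packet $\ell_0$'s delivery, i.e.\ $k=\ell_0$, and moreover $\dlvrr(h)$ is the first delivery of any ack $\ge \ell_0+1$. In the regime where no packet of $(\ell_0,h)$ reaches the receiver before packet $\ell_0$, the counter jumps from $\ell_0$ to $\ell_0+1$, so $h=\ell_0+1$ and this matches the definition of $\rtt(\ell_0)$ verbatim; in general $\rtt(\ell_0)$ is the value supplied by the partial-function convention of Observation~\Obr{sample_upper_bounds}, and it coincides with the sample.

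Putting the pieces together closes the proof: $\ell_0$ is transmitted exactly once with $\snds(\ell_0)$-timestamp $\tims[\ell_0]$, the ack $h$ delivered at the fresh sample is the first ack $\ge \ell_0+1$ delivered and is triggered by packet $\ell_0$'s receipt, so $\tau^- - \tims[\ell_0]$ is exactly $\rtt(\ell_0)=\rtt(\sem{s_{\ell-1}}{\ffirst})$. One caveat I would flag explicitly is the degenerate case $\sem{s_{\ell-1}}{\ffirst}=0$ (the very first fresh sample), where $\tims[0]=0$ and there is no packet $0$; there I would either restrict the statement to fresh samples with $\sem{s_{\ell-1}}{\ffirst}>0$ or invoke the model convention (reflected in \Figr{karn:execution}) that no fresh sample is produced before $\ffirst$ has advanced past $0$.

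I expect the key invariant of the second paragraph to be the main obstacle. It is a \emph{two-endpoint} relation — it constrains the receiver's current packet multiset in terms of a purely sender-side bookkeeping variable ($\ffirst$) and the FIFO ordering of the acknowledgement stream — so mechanizing it in Ivy requires discovering an inductive strengthening that mentions the sender's monitor variables, the receiver's reception set, and $\sigma|_{\dlvrr}\preceq\sigma|_{\sndr}$ simultaneously, and that survives every interleaving of $\snds$, $\dlvrs$, $\sndr$, and $\dlvrr$ events. In particular, ruling out that the receiver's expected-packet counter skips over the value $\ell_0$ (so that ack $\ell_0$ is never emitted) without its being forced by $\ffirst=\ell_0$, and keeping track of out-of-order packet deliveries in $(\ell_0,h)$, is exactly where the bookkeeping is delicate.
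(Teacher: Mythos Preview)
The paper does not give a textual proof of this observation at all: it states that ``all Observations in this section and the next are established in Ivy,'' and in the conclusion mentions only that Ivy was guided ``by providing supplemental invariants as hints, e.g., \emph{if $\dlvrr(a)$ occurs in an execution, then for all $p < a$, $\dlvrs(p)$ occurred previously}.'' So there is no paper proof to line your argument up against---only a mechanized one whose auxiliary invariants are not spelled out.

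That said, your informal argument is on the right track and captures the intended reasoning well. The core of your second paragraph---that under FIFO the first ack exceeding $\ell_0$ to be \emph{delivered} is also the first such ack to be \emph{sent}, and that this ack is precisely the one emitted when the receiver's expected-packet counter first advances past $\ell_0$, i.e.\ upon receipt of packet $\ell_0$---is exactly the semantic content the Ivy proof must be encoding. Your diagnosis that the two-endpoint invariant tying $\ffirst$ (a sender-side monitor variable) to the receiver's reception set is the mechanization bottleneck is also accurate, and is corroborated by the kind of cross-endpoint hint the paper reports feeding Ivy.

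One genuine soft spot: your handling of the case $h>\ell_0+1$ is circular. You write that ``in general $\rtt(\ell_0)$ is the value supplied by the partial-function convention of Observation~\Obr{sample_upper_bounds},'' but that observation \emph{uses} $\rtt$ rather than defining it, and the paper's prose definition of $\rtt(i)$ literally requires that ack $i{+}1$ be delivered before any other ack $\ge i{+}1$---a condition that fails when $h>\ell_0+1$ (which can happen even under FIFO acks, since the \emph{packet} channel is not FIFO). Your argument establishes the substantive fact---that the sample equals the elapsed time from the unique $\snds(\ell_0)$ to the $\dlvrr$ of the ack triggered by packet $\ell_0$'s receipt---but whether that quantity \emph{is} $\rtt(\ell_0)$ depends on the precise Ivy-level definition of $\rtt$, which the paper's prose leaves informal. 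This is a definitional, not a logical, gap; you are right to flag it, but you should not paper over it by appealing back to Observation~\Obr{sample_upper_bounds}.
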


Let $\sigma$ be a (not necessarily FIFO) execution and let $\kappa_\sigma$ be the sender's computation under \Algr{Karn} and $\sigma$ that outputs some samples.  We denote by $\sample_1, \ldots$ the sequence of samples that is the output of $\kappa_\sigma$. That is, $\sample_k$ is the $k^{th}$ sample obtained by \Algr{Karn}  given the execution $\sigma$.

\section{Formal Model of the RTO Computation}\Secl{RFC6298}

 We next analyze the computation of RTOs as described in RFC~6298. 
Each new sample triggers a new RTO computation, that depends on sequences of two other variables ($\srtt$ and $\rttvar$) and three constants ($\alpha$, $\beta$, and $G$).
In this section, we consider the scenario in which the samples produced by Karn's algorithm are consecutively bounded.
We show that in this context, we can compute corresponding bounds on \srtt, as well as an upper bound on \rttvar;
and that these bounds converge to the bounds on the samples and the distance between those bounds, respectively,
as the number of bounded samples grows.
These observations allow us to characterize the asymptotic conditions under which the RTO will generally exceed the RTT values, and by how much.  In other words, these observations allow us to reason about whether timeouts will occur in the long run.
 
 Let $\{ \srtt, \rttvar, \rto, \alpha, \beta, G \}\in\mathbb{Q}^+$ be fresh variables. 
 As mentioned before, $\alpha < 1$, $\beta < 1$, and $G$ are constants. Let $\sigma$ be an execution and $\kappa_\sigma$ be the sender’s computation under \Algr{Karn} and $\sigma$. Assume that $\kappa_\sigma$ outputs some samples $\sample_1,\ldots,\sample_N$.

RFC~6298 defines the RTO and the computations it depends upon as follows:
\[\begin{aligned}
\rto_i & = \srtt_i + \max(G, 4\cdot \rttvar_i) \\
\srtt_i & = \begin{cases} 
    \sample_i & \text{ if } i = 1 \\
    (1-\alpha)\srtt_{i-1} + \alpha \sample_i & \text{ if } i > 1 
    \end{cases} \\
\rttvar_i & = \begin{cases}
\sample_i / 2 & \text{ if } i = 1 \\
(1-\beta) \rttvar_{i-1} + \beta \lvert \srtt_{i-1} - \sample_i \rvert & \text{ if } i > 1
\end{cases}
\end{aligned}
\]
where $G$ is the clock granularity (of $\tau$),  \srtt is referred to in RFC~6298 as the \emph{smoothed RTT}, 
and \rttvar as the \emph{RTT variance}.
The \srtt is a rolling weighted average of the sample values and is meant to give
an RTT estimate that is resilient to noisy samples.
The \rttvar is described as a measure of variance in the sample values, although as we show below,
it is not the usual statistical variance.
The \rto is computed from  \srtt and \rttvar and is the amount of time the sender will wait without receiving an \ack before it determines that congestion has occurred and takes some action such as decreasing its output and retransmitting unacknowledged messages.
We manually compute these variables, and mechanically verify the computations thereof, using \acls. The choice of \acls stems from Ivy's lack of support of the theory of the Rationals, which is necessary for this analysis.

\section{Properties of the RTO Computation}

Intuitively, the \srtt is meant to give an estimate of the (recent) samples, while the \rttvar is meant to provide a measure of the degree to which these samples vary.  However, 
the \rttvar is not actually a variance in the statistical sense.
For example,
 if $\sample_1 = 1$, $\sample_2 = 44$, $\sample_3 = 13$, $\alpha=1/8$, and $ \beta=1/4$, then the statistical variance of the samples is $1477/3$ but $\rttvar_3 = 4977361/65536 \neq 1477/3$.

If the \rttvar does not compute the statistical variance, then what does it compute?  And what does the \srtt compute?
We answer these questions under the (realistic) restriction that the samples fall within some bounds,
which we formalize as follows.
Let $c$ and $r$ be positive rationals and let $i$ and $n$ be positive naturals.
Suppose that $\sample_i, \ldots, \sample_{i+n}$ all fall within the bounded interval $[c-r, c+r]$
with center $c$ and radius $r$.  Then we refer to $\sample_i, \ldots, \sample_{i+n}$ as \emph{$c/r$ steady-state samples}.
In the remainder of this section, 
we study $c/r$ steady-state samples and prove both instantaneous and asymptotic
bounds on the \rttvar and \srtt values they produce.
\Figr{karn:scenarios} illustrates two scenarios with $c/r$ steady-state samples.
In the first, the samples
are randomly drawn from a uniform distribution, 
while in the second, they are pathologically crafted to cause infinitely many
timeouts.
The figure shows for each scenario
the lower and upper bounds on the \srtt which we report below in \Obr{srtt_approaches_s}, 
as well as the upper bound on the \rttvar which we report below in \Obr{rttvar_upper_bound}.
The asymptotic behavior of the reported bounds is also clearly visible.

In~\cite{karn1987}, Karn and Partridge argue that, given $\alpha=1/8$ and $\beta=1/4$, after six consecutive identical samples $\sample$, assuming the initial $\srtt\ge \beta \sample$, the final \srtt approximates $\sample$ within some tolerable~$\epsilon$. We generalize this result 
in the following observation.

\begin{observation}\Obl{srtt_approaches_s}
Suppose $\alpha, c,$ and $r$ are reals, $c$ is positive, $r$ is non-negative, and $\alpha \in (0, 1]$.
Further suppose $i$ and $n$ are positive naturals, and $\sample_i, \ldots, \sample_{i+n}$ are $c/r$ steady-state samples.
Define $L$ and $H$ as follows.
\[\begin{aligned}
    L & = (1-\alpha)^{n+1} \srtt_{i-1} + (1 - (1-\alpha)^{n+1})(c - r) \\
    H & = (1-\alpha)^{n+1} \srtt_{i-1} + (1 - (1-\alpha)^{n+1})(c + r)
\end{aligned}\]
Then $L \leq \srtt_{i+n} \leq H$.  Moreover, $\lim_{n\to\infty} L = c - r$, and $\lim_{n \to \infty} H = c + r$.
\end{observation}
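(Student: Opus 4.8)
The plan is to reduce the two-sided inequality to a bound on a geometric-weighted average of the samples. Recall that RFC~6298 updates the smoothed RTT by $\srtt_k = (1-\alpha)\,\srtt_{k-1} + \alpha\,\sample_k$. First I would unroll this recurrence: by induction on $n \ge 0$,
\[
\srtt_{i+n} \;=\; (1-\alpha)^{\,n+1}\,\srtt_{i-1} \;+\; \alpha\sum_{j=0}^{n}(1-\alpha)^{\,n-j}\,\sample_{i+j}.
\]
The base case $n = 0$ is the recurrence itself; the inductive step applies the recurrence once more to $\srtt_{i+n}$ and shifts the summation index.

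Next I would invoke the steady-state hypothesis. Since $\alpha \in (0,1]$ we have $1-\alpha \in [0,1)$, so every coefficient $\alpha(1-\alpha)^{n-j}$ is non-negative, and since $\sample_{i+j} \in [c-r,\,c+r]$ for $j = 0,\dots,n$, monotonicity of the weighted sum gives
\[
\alpha(c-r)\sum_{j=0}^{n}(1-\alpha)^{n-j} \;\le\; \alpha\sum_{j=0}^{n}(1-\alpha)^{n-j}\sample_{i+j} \;\le\; \alpha(c+r)\sum_{j=0}^{n}(1-\alpha)^{n-j}.
\]
Reindexing $k = n-j$, the geometric sum is $\sum_{k=0}^{n}(1-\alpha)^k = \frac{1-(1-\alpha)^{n+1}}{\alpha}$ (the division is legitimate because $\alpha \ne 0$), so $\alpha\sum_{j=0}^{n}(1-\alpha)^{n-j} = 1-(1-\alpha)^{n+1}$. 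Plugging this in and adding $(1-\alpha)^{n+1}\srtt_{i-1}$ throughout recovers exactly $L \le \srtt_{i+n} \le H$.

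For the asymptotics, I would rewrite $L = (1-\alpha)^{n+1}\bigl(\srtt_{i-1} - (c-r)\bigr) + (c-r)$ and likewise for $H$, and observe that $(1-\alpha)^{n+1} \to 0$ as $n \to \infty$ because $0 \le 1-\alpha < 1$ (for $\alpha = 1$ the sequence is identically $0$); hence $L \to c-r$ and $H \to c+r$.

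The main obstacle is not the mathematics but its mechanization: getting the theorem prover to accept the closed-form unrolling (an induction with an accumulating geometric sum) and the closed form of the geometric series, including discharging the $\alpha \ne 0$ side conditions, and formalizing the geometric-decay limit $(1-\alpha)^{n+1}\to 0$, which is a genuine $\epsilon/\delta$ fact. I would factor out the geometric-sum identity and the decay limit as reusable lemmas, prove the unrolling by induction on $n$, and then assemble the two inequalities and the two limits from these building blocks.
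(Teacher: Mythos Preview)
Your proof is correct and aligns with the paper's approach: the paper likewise reduces the bound to the geometric-sum identity \((1-\alpha)\sum_{i=0}^N (1-\alpha)^i \alpha + \alpha = \sum_{i=0}^{N+1}(1-\alpha)^i \alpha\) together with a monotonicity (``linearity'') lemma for \(\srtt\), and identifies the limit \((1-\alpha)^{n+1}\to 0\) as the real mechanization obstacle. The only cosmetic difference is that the paper packages the non-negativity-of-coefficients step as a separate comparison lemma (if one sample sequence dominates another then so do the resulting \(\srtt\) values) and then specializes to the constant sequences \(c\pm r\), whereas you unroll to a closed form first and bound the sum directly; mathematically these are the same move.
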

As an example, suppose that
$n=5$, $\alpha=1/8$, $\beta=1/4$, $r=0$, and $\srtt_{i-1} = 3 \beta c$.
Then $L = H \approx 0.89 c$, 
hence $\srtt_{i+4}$ differs from $\sample_i, \ldots, \sample_{i+4} = c$
by about 10\% or less.
\Obr{srtt_approaches_s} also generalizes in the sense that as $n$ grows to infinity, $[L, H]$ converges to $[c-r, c+r]$,
meaning the bounds on the $\srtt$ converge to the bounds on the samples, or if $r=0$, to just the (repeated) sample value
$\sample_i = c$.

Next, we turn our attention to bounding the \rttvar. 
The following observation establishes that when the difference between each sample and the previous \srtt is bounded above
by some constant $\Delta$, then each \rttvar is bounded above by a function of this $\Delta$.  Moreover, as the number of consecutive samples grows for which this bound holds, the upper bound on the \rttvar converges to precisely $\Delta$.
Note, in this observation we use the convention $f^{(m)}$ to denote $m$-repeated compositions of $f$, for any function $f$, e.g., $f^{(3)}(x) = f(f(f(x)))$.

\begin{observation}\Obl{rttvar_upper_bound}
Suppose $1 < i$, and
$0 < \Delta \in \mathbb{Q}$ is such that
\(\lvert \sample_j - \srtt_{j-1} \rvert \leq \Delta\)
for all $j \in [i, i+n]$.
Define $B_\Delta(x) = (1-\beta)x + \beta \Delta$.
Then all the following hold.
\begin{itemize}
    \item Each $\rttvar_j$ is bounded above by the function $B_\Delta(\rttvar_{j-1})$.
    \item We can rewrite the (recursive) upper bound on $\rttvar_{i+n}$ as follows:
    \[B_\Delta^{(n+1)}(\rttvar_{i-1}) = (1-\beta)^{n+1} \rttvar_{i-1} + (1 - (1 - \beta)^{n+1}) \Delta\]
    \item Moreover, this bound converges to $\Delta$, i.e.,
    \(\lim_{n \to \infty} B_\Delta^{(n+1)}(\rttvar_{i-1}) = \Delta\).
\end{itemize}
\end{observation}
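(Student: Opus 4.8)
The plan is to dispatch the three bullets in order, leaning on the closed-form and geometric-decay analysis already needed for \Obr{srtt_approaches_s}. Throughout I use the \rttvar update rule, which is the exponentially weighted moving average $\rttvar_j = (1-\beta)\rttvar_{j-1} + \beta\lvert \srtt_{j-1} - \sample_j \rvert$ matching the definition of $B_\Delta$, together with the fact that $\beta \in (0,1)$ (indeed $\beta = 1/4$ in RFC~6298), so $1-\beta \in (0,1)$.

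\emph{Bullet 1 (instantaneous bound).} This is pure arithmetic. For each $j \in [i, i+n]$ the hypothesis gives $\lvert \sample_j - \srtt_{j-1}\rvert \le \Delta$; multiplying by $\beta > 0$ and adding $(1-\beta)\rttvar_{j-1}$ to both sides of the update rule yields $\rttvar_j \le (1-\beta)\rttvar_{j-1} + \beta\Delta = B_\Delta(\rttvar_{j-1})$. In \acls this needs only the definition of $B_\Delta$ and linear arithmetic, with no sign assumption on $\rttvar_{j-1}$.

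\emph{Bullet 2 (closed form).} First I would prove the algebraic identity $B_\Delta^{(m)}(x) = (1-\beta)^m x + (1-(1-\beta)^m)\Delta$ by induction on $m$: the base case is the definition of $B_\Delta$, and the step expands $B_\Delta^{(m+1)}(x) = B_\Delta(B_\Delta^{(m)}(x))$, substitutes the induction hypothesis, and collects terms using $(1-\beta)(1-(1-\beta)^m) + \beta = 1-(1-\beta)^{m+1}$; taking $m = n+1$ gives the stated rewriting. Second, since $1-\beta \ge 0$, $B_\Delta$ is monotone nondecreasing, so iterating Bullet~1 --- from $\rttvar_{i+k} \le B_\Delta(\rttvar_{i+k-1})$ and $\rttvar_{i+k-1} \le B_\Delta^{(k)}(\rttvar_{i-1})$ conclude $\rttvar_{i+k} \le B_\Delta^{(k+1)}(\rttvar_{i-1})$ --- gives, by a short induction on $k$ and then $k = n$, the bound $\rttvar_{i+n} \le (1-\beta)^{n+1}\rttvar_{i-1} + (1-(1-\beta)^{n+1})\Delta$.

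\emph{Bullet 3 (limit) and main obstacle.} From the closed form, $B_\Delta^{(n+1)}(\rttvar_{i-1}) - \Delta = (1-\beta)^{n+1}(\rttvar_{i-1} - \Delta)$, and since $0 \le 1-\beta < 1$ the geometric factor $(1-\beta)^{n+1}$ tends to $0$, so the limit is $\Delta$. This is exactly the convergence phenomenon behind $\lim_{n\to\infty} L = c-r$ and $\lim_{n\to\infty} H = c+r$ in \Obr{srtt_approaches_s}, so I would reuse the same \acls lemma on decay of $(1-\lambda)^{n+1}$ for $\lambda \in (0,1]$ (an explicit $\epsilon$--$N$ argument), specialized to $\lambda = \beta$. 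I expect this geometric-decay lemma to be the only non-routine ingredient; once it is in hand, the algebraic identity of Bullet~2 (which may want a nonlinear-arithmetic hint to manipulate the exponents) and the two small inductions are straightforward bookkeeping.
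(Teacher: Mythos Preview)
Your proposal is correct and matches the paper's approach: the paper reports proving both \Obr{srtt_approaches_s} and \Obr{rttvar_upper_bound} in \acls via an ``unfolding'' lemma for the geometric sum (your Bullet~2 induction) together with a linearity/monotonicity lemma (your monotonicity of $B_\Delta$), and handles the limit by a manual $\epsilon$/$\delta$ argument for $\lim_{n\to\infty}(1-\beta)^n = 0$---exactly the geometric-decay lemma you identify as the non-routine ingredient.
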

Note that if $\sample_i, \ldots, \sample_{i+n}$ are $c/r$ steady-state samples then by \Obr{srtt_approaches_s}:
\[\lvert \sample_n - \srtt_{n-1} \rvert \leq \Delta = (1-\alpha)^{n+1} \srtt_{i-1} + 2r - (1-\alpha)^{n+1}(c + r)\]
Since
\(\lim_{n \to \infty} \Delta = 2r\),
in $c/r$ steady-state conditions, it follows that
the $\rttvar$ asymptotically measures the diameter $2r$ of the sample interval $[c-r, c+r]$.

\paragraph{Implications for the \rto Computation.}
Assume $n$ are $c/r$ consecutive steady-state samples.
As $n \to \infty$, the bounds on $\srtt_n$ approach $[c-r, c+r]$, and the upper bound $\Delta$ on $\rttvar_n$ approaches $2r$.
Thus, 
as $n$ increases, 
assuming $G < 4  \rttvar_n$, 
\(c - r + 4\rttvar_n \leq \rto_n \leq c + 3r\).
With these bounds, if $\rttvar_n$ is always bounded from below by  $r$, then the \rto 
exceeds the (steady) RTT,
hence no timeout will occur.
On the other hand, we can construct a pathological case where the samples are $c/r$ steady-state
but the \rttvar dips below $r$, allowing the \rto to drop below the RTT.
One such case is illustrated in the bottom of \Figr{karn:scenarios}.
In that case, every $100^{\text{th}}$ sample is equal to $c+r = 75$, and the rest are equal to $c-r=60$.  At the spikes (where $\sample_i = 75$) the sampled RTT exceeds the \rto, and so a timeout would occur.
This suffices to show that steady-state conditions alone do not guarantee a steady-state in terms of avoiding timeouts.
Characterizing the minimal, sufficient conditions for avoiding timeouts during a $c/r$ steady-state
is a problem left for future work.

\begin{figure}[h]
\centering
\includegraphics[width=\textwidth]{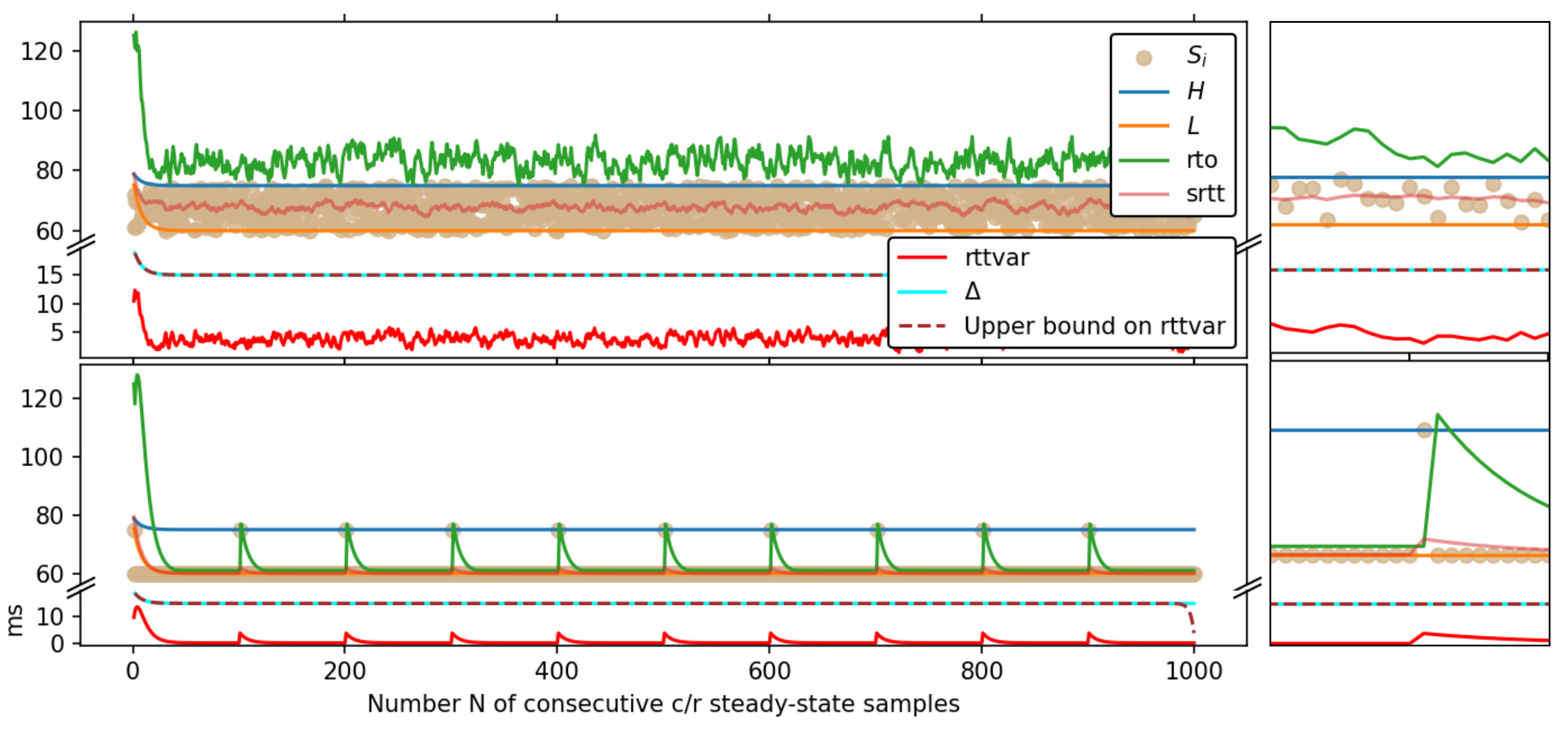}
\caption{On the left are two $67.5/7.5$ steady-state scenarios.  On top the samples are drawn from the uniform distribution over the bounds, and timeouts rarely, if ever, occur.  In the bottom (pathological) scenario, every 100$^{\text{th}}$ sample equals $c+r = 75$ while the rest equal $c-r = 60$, and at each ``spike'', a timeout occurs.  There are infinitely many spikes, and one is shown on the right ($n=[350,450]$).}
\Figl{karn:scenarios}
\end{figure}

\subsection{Real Analysis in \acl, \acls, and \aclr}
In order to prove \Obr{srtt_approaches_s} and \Obr{rttvar_upper_bound} in \acls, we first had to show that 
\(\forall \alpha \in [0, 1) \, :: \, \lim_{n \to \infty} \alpha^n = 0\),
which turned out to be surprisingly challenging.
The most obvious pen-and-paper proof strategy is the following.
\begin{proof}
Let $\epsilon > 0$ and $0 \leq \alpha < 1$ arbitrarily. 
Set $\delta = \log_\alpha(\epsilon)$.
Then \(n > \delta \iff n > \log_\alpha(\epsilon) \iff \alpha^n < \epsilon\).
\end{proof}
However, \acl and \acls do not support irrational numbers, and the logarithm of a rational may be irrational.  Therefore, this proof strategy is not possible in either.  
To address this problem we tried three approaches: 
(1) using \aclr, a variant of \acl designed for non-standard analysis; (2) a direct $\epsilon/\delta$ proof leveraging properties of the ceiling function; and (3) an alternative $\epsilon/\delta$ proof leveraging the bionomial theorem.  We discuss each strategy briefly below.

\subsubsection{\aclr.}~The first strategy was to change tools again and use \aclr, a variant of \acl designed for non-standard analysis.  We formalized the theorem statement using skolemization, like so.
\begin{lstlisting}
(defun-sk lim-0 (a e n)
  (exists (d)
    (=> (^ (realp e) (< 0 e) (< d n)) (< (raise a n) e))))

(defthm lim-a^n->0
  (=> (^ (realp a) (< 0 a) (< a 1) (realp e) (< 0 e) (natp n))
      (lim-0 a e n)) :instructions ...) ;; proof will go here
\end{lstlisting}
Then we defined $\delta$.
\begin{lstlisting}
(defun d (eps a) (/ (acl2-ln eps) (acl2-ln a)))
\end{lstlisting}
After that, we proved some straightforward arithmetic properties, as well as the lemma that $e^{n \ln(\alpha)} = \alpha^n$.  With these challenges surpassed, the remainder of the proof immediately followed:
\begin{sketch}
Let $\epsilon > 0$ and $0 \leq \alpha < 1$ arbitrarily. 
If $\alpha=0$ the result is immediate; suppose $\alpha>0$.
Suppose $\epsilon < 1$, noting that if the theorem
holds for $\epsilon < 1$ then it holds for $\epsilon \geq 1$.
Let $\delta = \ln(\epsilon)/\ln(\alpha)$.
Note that $\ln(\epsilon)$ and $\ln(\alpha)$ are negative.
Let $n$ be some natural number and observe that $n \ln(\alpha) = \ln(\alpha^n)$.
Thus:
\[
\begin{aligned}
n > \delta & \iff n > \ln(\epsilon)/\ln(\alpha) & \text{ by definition of }\delta\\
           & \iff n \ln(\alpha) < \ln(\epsilon) & \text{ multiplying each side by }\ln(a)\\
           & \iff e^{n \ln(\alpha)} < e^{\ln(\epsilon)} & \text{ raising each side above }e\\
           & \iff e^{\ln(\alpha^n)} < \epsilon & \text{ because }e^{\ln(x)}=x\text{ for all }x\text{, and }n \ln(\alpha) = \ln(\alpha^n)\\
           & \iff \alpha^n < \epsilon & \text{ because }e^{\ln(x)}=x\text{ for all }x
\end{aligned}
\] 
\end{sketch}

\subsubsection{Ceiling Proof.}~In contrast to the \aclr proof, this one only uses rationals and therefore could be formalized in \acls.  The skolemized theorem statement with types goes as follows.
\begin{lstlisting}
(defun-sk lim-0 (a e n)
  (declare (xargs :guard (and (posratp a) (< a 1) (posratp e) (natp n))
          :verify-guards t))
  (exists (d) (and (natp d) (implies (< d n) (< (expt a n) e)))))

(property lim-a^n->0 (a e :pos-rational n :nat)
  :hyps (< a 1)
  (lim-0 a e n) :instructions ...) ;; proof will go here
\end{lstlisting}
The proof goes as follows.
\begin{sketch}
Let $0 \leq \alpha < 1$ and $\epsilon > 0$, arbitrarily.
Let $k = \ceil{a/(1-a)}$ and observe that $a \leq k/(k+1)$.
Let $f(n) = k\alpha^k/n$.  
As an intermediary lemma, we claim that for all $n \geq k$, $\alpha^n \leq f(n)$.
\begin{description}
    \item \emph{Base Case}: $n = k$ thus $f(n) = \alpha^k \geq \alpha^n$ and we are done.  
    \item \emph{Inductive Step}: By inductive hypothesis, we have
    \begin{equation}
    a^n \leq k \alpha^k / n
    \label{eqn:ind}
    \end{equation}
    and $k \leq n$.
    This gives us $k/(k+1) \leq n/(n+1)$ and thus:
    \begin{equation}
    \alpha \leq n/(n+1)
    \label{eqn:a-ineq}
    \end{equation}
    Multiplying Eqn.~\ref{eqn:ind} through by $\alpha$, we get
    \(
    \alpha^{n+1} \leq k \alpha^{k+1} / n
    \).
    Combining this with Eqn.~\ref{eqn:a-ineq}:
    \begin{equation}
    \alpha^{n+1} \leq ( k \alpha^k / n ) \frac{n}{n+1} = k \alpha^k / (n+1)
    \end{equation}
    and we are done.
\end{description}
Hence induction: $\forall n \geq k$, $\alpha^n \leq f(n)$.  Now, 
let $\delta = \ceil{k\alpha^k/\epsilon}$.  It follows that 
$\forall n \geq \delta$, $f(n) \leq \epsilon$, and thus by the above result,
$\alpha^n \leq \epsilon$.  We get $\alpha^n < \epsilon$ by repeating this process
for $\epsilon / 2$, and we are done.
\end{sketch}
Although the proof is relatively straightforward on paper,
we found that it required a large number of arithmetic lemmas to pass in \acls, making it cumbersome from a proof-engineering standpoint.

\subsubsection{Binomial Proof}
Finally, we found a direct proof using the binomial theorem.
The proof goes as follows.
\begin{sketch}
Let $\epsilon=x/y > 0$, $\alpha=p/q$, and $b = p/(p+1)$.
First observe that $\alpha \leq b$.
Second, observe that $b^p = p^p/(p+1)^p$.
By the binomial theorem, $(p+1)^p > 2p^p$.
Finally observe that $1/2^y < \epsilon$.
Combining these results, if $\delta = py$ then $n > \delta$
implies $\alpha^n < \alpha^{py} \leq b^{py} \leq 1/2^y < \epsilon$, and we are done.
\end{sketch}
This proof was much simpler than the ceiling proof to implement in \acls, and compared to the proof in \aclr, had the advantage of working in the prover we were already using for our analysis.
We implemented two variants of the Binomial Proof: one which was completely manual,
and another where we leveraged the termination analysis in \acls to find a $\delta$ semi-automatically.
The latter was more elegant as it took greater advantage of the features built into \acls.

\subsubsection{Comparison}
Comparing these proofs leads us to four conclusions.
First, we implemented the ceiling proof in both \acl and \acls, and found it was considerably easier to execute in the latter due to automated termination analysis and contracts checking.
Additionally, the inclusion of types as first-class citizens in \acls made the proof much easier to follow.
Second, \acl (and \acls in particular) 
could benefit better-documented
and more easily searchable library of purely mathematical theorems, relating to the ceiling, floor, exponent, and logarithm, as well as metric spaces and limits.  
Searching for proofs is difficult enough, and \acl does not come with any kind of semantic proof search tool.
And often, even when the desired theorems exist in the \acl books, they are unmentioned in the documentation.
For example, the documentation on ``arithmetic'' does not mention the RTL books, 
and neither does the documentation on ``math''.  Moreover, the rewrite rules from different libraries may conflict,
so even if you find the desired theorems, importing them into a singular environment
may be non-trivial.
Third, \aclr could benefit from the addition of the generic exponent and logarithm.  
This could be done using the lemma outlined in our proof.
Fourth, though \aclr and \acl have incompatible theories, it is nevertheless true that certain kinds of theorems over the reals should hold over the rationals, because the rationals are dense in the reals.  It would useful to have a kind of ``bridge'' between \aclr and \acl, by which the user could justify that a given theorem, if true over the reals, must also hold over the rationals; prove the theorem in \aclr; and then import the theorem, using its ``justification'', into \acl.

\begin{table}[]
\centering
\begin{tabular}{lllllll}
Proof         & LoC & Chars  & Props/Thms & Functions & Books & Cert Time (s) \\\hline
Real          & 161 & 4,224  & 17         & 1         & 5     & 0.58         \\
Ceiling       & 408 & 16,103 & 20         & 3         & 0     & 64.17         \\
Binomial (M)  & 154 & 5,652  & 22         & 1         & 2     & 2.54          \\
Binomial (SA) & 122 & 5,402  & 22         & 2         & 1     & 3.84         
\end{tabular}
\caption{Proof comparison.  (M) refers to ``manual'' while (SA) refers to ``semi-automatic''.  Lines of code and character count are computed without comments or empty lines, however, the proofs are not styled identically.  Props/Thms counts instances of \codify{property} and \codify{defthm}, while Functions counts \codify{definec}s, \codify{definecd}s, and \codify{defun}s.  Certification time is measured on a 16GB M1 Macbook Air.}
\label{tab:comparison}
\end{table}

\section{Related Work}\Secl{karn:related}

To the best of our knowledge, ours is the first work to formally verify properties of Karn's algorithm or the RTO defined in RFC~6298.  However, formal methods have previously been applied to proving protocol correctness~\cite{smith1997formal,lockefeer2016formal,okumura2020formal,cluzel2021layered}, and lightweight formal methods have been used for protocol testing~\cite{bishop2005rigorous,mcmillan2019formal}. 
One such lightweight approach, called \textsc{PacketDrill}, was used to test a new implementation of the RTO computation from RFC~6298~\cite{cardwell2013packetdrill}. The \textsc{PacketDrill} authors performed fourteen successful tests on the new RTO implementation.  After publication, their tool was used externally to find a bug in the tested RTO implementation~\cite{packetDrillRTO}.  In contrast to such lightweight FM, in which an implementation is strategically tested, we took a proof-based approach to the verification of fundamental properties of the protocol design.

Some prior works applied formal methods to congestion control algorithms~\cite{lomuscio2010model,hespanha2001hybrid,konur2011formal,malik2016formal,arun2022starvation,ccac}.  A common theme of these works is that they make strong assumptions about the network model, e.g., assuming the channel never duplicates messages or reorders or loses acknowledgments.  In this vein, we study the case in which acknowledgments are communicated FIFO in \Obr{fifo_rtt}.
Congestion control algorithms were also classically studied using manual mathematics (as opposed to formal methods)~\cite{mathis1997macroscopic,DMSS19,srikant2004mathematics}.  One such approach is called \emph{network calculus}~\cite{le2001network} and has been used to simulate congestion control algorithms~\cite{kim2004network}.  Network calculus has the advantage that it can be used to study realistic network dynamics, in contrast to our Ivy-based approach, which is catered to logical properties of the system.  For example, Kim and Hou~\cite{kim2004network} are able to determine the minimum and maximum throughput of traditional TCP congestion control, but do not prove any properties about what precisely Karn's algorithm measures, or about bounds on the variables used to compute the RTO.

\section{Conclusion}\Secl{karn:conclusion}

In this chapter we applied formal methods to Karn's algorithm, 
as well as the \rto computation 
described by RFC~6298 and used in many of the Internet's protocols.
These two algorithms were previously only studied with manual mathematics or experimentation.
We presented open-source formal models of each, with which we formally verified the following 
important properties.
\begin{enumerate}[{Obs. }1:]
\item Acknowledgements are transmitted in non-decreasing order.
\item Two inductive invariants regarding the internal variables of Karn's algorithm.
\item  Karn's algorithm samples a real RTT, but a pessimistic one.
\item In the case where acknowledgments are neither dropped, duplicated, nor reordered, Karn's algorithm samples the highest \ack received by the sender before the sampled one.
\item For the \rto computation, when the samples are bounded, so is the \srtt.  As the number of bounded samples increases, the bounds on the \srtt converge to the bounds on the samples.
\item For the \rto computation, when the samples are bounded, so is the \rttvar.  As the number of bounded samples increases, the upper bound on the \rttvar converges to the difference between the lower and upper bounds on the samples.
\end{enumerate}
We concluded by discussing the implications of these bounds for the \rto.

In addition to rigorously examining some fundamental building blocks of the Internet, 
we also provide an example
of how multiple provers can be used in harmony to prove more than either could handle alone.
First, we used Ivy to model the underlying system and Karn's algorithm. Ivy offers an easy treatment for concurrency, which was vital for the behavior of the under-specfied models we used for the sender, receiver, and channel. The under-specification renders our results their generality. We guided Ivy by providing supplemental invariants as hints, e.g., \emph{if $\dlvrr(a)$ occurs in an execution, then for all $p < a$, $\dlvrs(p)$ occurred previously}.  Then, since Ivy lacks a theory of rationals, we turned to \acls.
We began by proving two lemmas.  
\begin{itemize}
    \item The $\alpha$-summation ``unfolds'': \((1-\alpha)\sum_{i=0}^N (1-\alpha)^i \alpha + \alpha = \sum_{i=0}^{N+1}(1-\alpha)^i \alpha\).
    \item The \srtt is ``linear'': if $\srtt_{i-1} \leq \srtt_{i-1}'$ and, for all $i \leq j \leq i+n$, $\sample_j \leq \sample_j'$, then $\srtt_{i+n} \leq \srtt_{i+n}'$.
\end{itemize}
Then we steered \acls to prove \Obr{srtt_approaches_s} and \Obr{rttvar_upper_bound} with these lemmas as hints.

Proving the limits of the bounds on \srtt and \rttvar was much trickier, and required manually writing $\epsilon/\delta$ proofs directly in the \acls proof-builder.  We experimented with doing this three different ways, using \acl, \acls, and \aclr, and found that the easiest approach in the context of our pre-existing model was a semi-automated proof in \acls.  These proofs would have been impossible to do in Ivy.  On the other hand, since \acls does not come with built-in facilities for reasoning about interleaved network semantics, we opted to leave the RTT computation proofs in Ivy.  These choices were easier, and yielded cleaner proofs, compared to doing everything using just one of the two tools.

\setcounter{observation}{0}
\setcounter{definition}{0}
\setcounter{problem}{0}
\setcounter{theorem}{0}

\chapter{Formal Performance Analysis of Go-Back-$N$}\Chapl{gbn}
\textbf{Summary.}
In this chapter, we study Go-Back-N, a.k.a. GB$(N)$, a classical automatic repeat request protocol
which was historically used in telecommunications networks and today serves as
the basis for more complex sliding window mechanisms such as the ones found
in TCP Tahoe and New Reno.
We formally model a GB$(N)$ system consisting of a sender and a receiver,
each connected to the other by a token bucket filter (TBF).
The TBF model is meant to capture the behaviors of a real router, or series of routers,
including rate-limiting, reordering, nondeterministic loss, and bounded and unbounded delay --
and we formally verify that, indeed, a single TBF can simulate a series of TBFs in serial composition.
We prove a variety of correctness invariants for our model.
Then, we study the efficiency of GB$(N)$, namely, the fraction of packets received by the
receiver that the receiver delivers to the application.
GB$(N)$ provides reliable FIFO communication, which means that 
the receiver delivers a packet to the application
only once, and only after all packets with lesser sequence numbers were delivered.
Under the simplifying assumption that every packet is the same size, 
we show that 
in the absence of reordering, delay, or nondeterministic loss,
GB$(N)$ can achieve perfect efficiency (efficiency=1).
Citing measurement studies, we argue that a common cause of losses is over-transmission,
where the sender transmits packets faster than the sender-to-receiver TBF can forward them.
We describe a set of constraints under which the GB$(N)$ sender over-transmits, and 
formally characterize the impact the resulting losses have on the efficiency of the protocol
(again, in the absence of other kinds of faults, and assuming packets are equally sized).
Our results are parameterized by the window size~$N$ of the protocol,
transmission rate of the sender, and parameters of the two TBFs; and we formally verify
all our theorems in \acls.

\begin{description}
\item \underline{Contribution:} MvH created the model and proofs, and wrote the chapter.
\end{description}

\section{Overview of Go-Back-$N$}\Secl{gbn-intro}

Automatic repeat request (ARQ) protocols provide reliable FIFO communication over an unreliable bidirectional channel
connecting a sender and a receiver.
In every ARQ protocol, the sender transmits a sequence of packets to a receiver, who provides feedback in the form of \acks.  The sender uses this feedback to decide what packets to send next.  There are multiple ARQ protocols, such as Stop-and-Wait, a.k.a. the Alternating Bit Protocol (ABP); Go-Back-$N$, abbreviated GB$(N)$; Selective Repeat; Hybrid ARQ; etc.  The simplest is ABP, where the sender does not transmit the next packet until it has received confirmation that the prior one was delivered.  GB$(N)$ extends this idea by using a \emph{window} of $N$-many packets the sender can transmit at a time, for some fixed positive integer $N$.

At a high level, GB$(N)$ works as follows.
The sender has a list of datagrams referred to as \emph{packets}
which it intends to transmit.
Each packet has a sequence number.
The sender begins transmitting starting with the packet with the lowest sequence number, which is one.
It transmits the first $N$ packets, ordered by sequence number, then starts a timer\footnote{(namely, the RTO timer studied in \Chapr{karn-rto})}.
If a cumulative \ack for any of the $N$ packets it just sent arrives before the timer goes off, then the sender
cancels the timer and ``slides the window'', beginning the transmission of a new window starting with the new \ack value.  (Note, if the \ack does not acknowledge the entirety of the prior window, then the new window and the prior one will overlap.)
On the other hand, if the timer expires without any new \ack arriving, the sender ``goes back~$N$'', retransmitting the window from its start.

Unfortunately, this high-level description leaves many details unstated,
and to make matters worse, GB$(N)$ does not have a single, canonical definition.
The protocol is 
defined in several networking textbooks (e.g.,~\cite{ross2012computer,peterson2007computer,tanenbaum,bertsekas2021data}),
without citation to any original definition.
Each defines GB$(N)$ slightly differently, or omits key details making it 
unclear what precisely they believe GB$(N)$ does.
Points of contention include:
\begin{enumerate}[(1)]

\item Whether the receiver is expected to \ack every message (as in~\cite{ross2012computer}), or just some (\cite{peterson2007computer} describes both options; while \cite{bertsekas2021data} describes a receiver who sends \acks within some bounded time of each packet receival).
In the latter case, the receiver might send \acks on some temporal schedule (as in \cite{bertsekas2021data}),
or it might \ack every $k^{\text{th}}$ message received, or delivered, for some positive integer $k$.
On one hand, if the receiver \acks every message received, the sender can quickly determine if 
a packet was lost (e.g., using a duplicate \ack heuristic)\footnote{Thank you to Lenore Zuck for pointing this out.}.
On the other hand, unless the \acks are piggy-backed on existing messages in the opposite direction,
acknowledging every message could considerably increase the burden on the network in the
receiver-to-sender direction.
It is also worth noting that a receiver which does not send an \ack until a certain number of packets
were \emph{delivered} -- that is, received in-order -- may not transmit any \acks for a long time if some packets near the bottom of the window are reordered in transit.

\item Whether the receiver ignores out-of-order packets when computing the cumulative \ack (as in~\cite{ross2012computer,tanenbaum,bertsekas2021data}, but not~\cite{peterson2007computer}).
A receiver who ignores out-of-order packets only needs to keep track of the most recent \ack it sent,
whereas one who processes all packets needs to constantly maintain a bit-vector of size $N$ in order to compute its next \ack transmission.
On the other hand, if two packets in a window are reordered, a receiver who ignores out-of-order packets will force the sender to retransmit the window portion beginning with the lesser of the two reordered packets, whereas a receiver who processes all packets will not force a retransmission.
Forcing the retransmission is inefficient, but not necessarily ``wrong''.

\item Whether it is possible for the sender to process an \ack part-way through transmitting its window,
as in~\cite{peterson2007computer,ross2012computer,bertsekas2021data}, 
or, if the sender will wait until all $N$ packets have been transmitted.
This is unspecified in~\cite{tanenbaum}.
To see why this matters, suppose the sender has just begun retransmitting a window when an \ack for the entire window arrives.  Ideally, the sender would process the \ack and forgo unnecessary retransmissions.
Yet, it is unclear whether a sender who ignores the \ack until it has transmitted the entire window is ``wrong'', per se.

\end{enumerate}

Despite it not being well-defined, GB$(N)$ is referenced in many RFCs (\cite{rfc7414,rfc6675,rfc6538,rfc5236,rfc4821,rfc4015,rfc3782,rfc3522,rfc3517,rfc3366,rfc1458,rfc1254,rfc1152}).
Thus, Bertsekas and Gallager refer to it as ``the basis for most error recovery procedures in the transport layer''~\cite{bertsekas2021data}.
The protocols described in these RFCs also differ in the points raised above,
so, we cannot simply define GB$(N)$ to be ``whatever it is in practice''.
For example, in TCP New Reno~\cite{rfc3782} the receiver must consider out-of-order packets
in order for the duplicate-\ack recovery mechanism to work.
Yet, RFC 3366~\cite{rfc3366} says that GB$(N)$ is alternatively known as ``Reject'',
implying it involves a receiver who rejects out-of-order packets.

\subsection{Prior Models}
Several prior works formally model GB$(N)$, yet, do not arrive at any consensus 
regarding the three points of contention outlined above,
nor agree on what assumptions to make about the network.
Two prior works (with intersecting authorship) use a Markov chain model to derive a probability generating function for the delay between when a packet
is transmitted and when a corresponding \ack is first delivered, 
i.e., for the average RTT,
in GB$(N)$~\cite{de2005delay} 
and protocols that extend it~\cite{de2002queueing}.
In their models, the receiver (1) \acks every $N^{\text{th}}$ packet delivered and
(2) ignores out-of-order packets, and (3) the sender does not process
an \ack until the entire window was transmitted.
They assume \acks are forwarded from the receiver to the sender at a fixed temporal schedule, 
and that the sender's transmission rate is constant.
In a related work~\cite{hasan2008performance}, 
Hasan and Tahar formalize ABP, GB$(N)$, and Selective-Repeat using Higher Order Logic,
and compute (and verify) the average RTT.
In their model, the receiver (1) \acks every packet and
(2) ignores out-of-order packets, and (3) the sender buffers \acks as soon as they arrive.
Hasan and Tahar assume the RTO is not greater than the sum of the average time between when the sender
sends a packet and when the receiver receives it, and the average time between when the 
receiver sends an \ack and the sender receives it.
They refer to this sum as the RTT, although as we explain in \Chapr{karn-rto},
it is not the same as the ``RTT'' sampled by Karn's Algorithm.\footnote{In contrast, whenever we refer to the RTT, we are referring to the value that Karn's Algorithm measures.}
All three of these works abstract packet reordering and corruption using randomized errors; ignore
errors in the receiver-to-sender direction; treat lost messages identically to corrupted ones; and assume that the time it takes the channel to transport a message from one endpoint to the other is constant.
It is also worth noting that all three works define ``average'' to mean the expected value,
using probabilities.

There are also several prior works which use formal methods to study the correctness (as opposed to performance) of Go-Back-$N$, meaning, they attempt to verify that the protocol provides reliable, in-order message delivery.
Most of these assume an idealized network without packet or \ack reordering, 
and/or use a specific and unrealistically small window 
size~\cite{bezem1994correctness,kaivola1997using,stahl1999divide}.
However, Chkliaev et.~al. model an improved sliding window protocol based on GB$(N)$,
which they prove provides reliable in-order delivery
    under liveness assumptions and
    restrictions relating the window size and
    maximum sequence number, 
    using a network model with loss, reordering, delay, and 
    even duplication~\cite{chkliaev2002formal,chkliaev2003verification}.
In their model, (1) the receiver's acknowledgment strategy is left nondeterministic,
yet it (2) buffers out-of-order packets, and (3) the sender buffers \acks as soon as they arrive.
El Minouni and Bouhdadi took a different approach, using a refinement argument~\cite{el2015applying}.
In their model, the receiver (1) \acks every delivered packet and (2) does not buffer out-of-order packets,
and (3) the sender buffers an \ack as soon as it arrives.
Much like the probabilistic models described in the previous paragraph,
 Minouni and Bouhdadi's abstracts all network faults identically.
All the mentioned prior GB$(N)$ models are summarized in \Tabr{prior-gbn}.

\begin{table}[h]
\centering
\small
\begin{tabular}{p{1cm}p{5cm}p{1.8cm}p{1.8cm}p{1cm}p{1cm}p{1cm}p{1cm}p{1cm}}
Model & Receiver strategy & OOO packets? & ACKs mid-window? & Reorder & Loss & Dupl & Delay & Prop \\\hline
\cite{de2005delay,de2002queueing}                  & Every $N^{\text{th}}$ packet delivered & No                  & No                       & Abs & Abs & No          & Const & Avg. RTT    \\
\cite{hasan2008performance}                        & Every packet delivered                 & No                  & Yes                      & Abs & Abs & No          & Const & Avg. RTT    \\
\cite{bezem1994correctness}                        & Every packet received                  & N/A                 & Yes                      & No         & Yes        & No          & No       & Corr \\
\cite{kaivola1997using}                            & Every packet received                  & Yes                 & Yes                      & No         & Yes        & No          & No       & Corr \\
\cite{stahl1999divide}                             & ND                       & Yes                 & Yes                      & No         & Yes        & No          & No       & Corr \\
\cite{chkliaev2002formal,chkliaev2003verification} & ND                       & Yes                 & Yes                      & Yes        & Yes        & Yes         & Bnd  & Corr \\
\cite{el2015applying}                              & Every packet delivered                 & No                  & Yes                      & Abs & Abs & Abs  & Bnd  & Corr
\end{tabular}
\caption{Summary of prior models of GB$(N)$ or extensions thereof.  For each model, we summarize the receiver strategy, whether or not the receiver buffers out-of-order (OOO) packets when computing its next \ack transmission, whether or not the sender processes \acks as soon as they arrive (even if it has not yet completed its current window transmission), whether the network captures reordering, loss, duplication, and/or delay, and what property the model was used to study.  We use ND to mean nondeterministic, Abs to mean abstracted, Const to mean constant,  Bnd to mean bounded, and Corr to mean correctness.}
\Tabl{prior-gbn}
\end{table}

There is an emerging body of work which attempts to study congestion control algorithms
using formal methods~\cite{ccac,agarwaltowards,arun2022starvation,DMSS19}, and in that context, it is important for the network model to be 
somewhat realistic so that the algorithm under study is not scrutinized using implausible
traffic flows.
One feature which real networks tend to implement is \emph{rate limiting}, and 
the most common model for a single-direction
rate limiting network is called a Token Bucket Filter, or TBF~\cite{tang1999network}.
The basic idea of the TBF is that it has a counter, called a \emph{bucket}, which it increments 
at a constant rate up to a fixed capacity, and an internal, fixed-byte-capacity queue which holds the 
messages it intends to forward.
The bucket is commonly described as holding ``tokens'', e.g., if the bucket is set to 17, then we say 
the TBF has 17 tokens.  Intuitively, tokens are the currency the TBF needs to forward messages to the receiving endpoint.
The TBF drops any messages sent to it for which there is not sufficient space remaining in the queue, 
and whenever it forwards a message, it reduces the number of tokens in the bucket by the size, in bytes, of the message.
The combination of the fixed-byte-capacity queue and capped bucket suffice to implement rate-limiting.
Although variations on the TBF have been employed for several congestion control
verification tasks~\cite{ccac,agarwaltowards}, to the best of our knowledge, 
no prior works verified aspects of GB$(N)$ in the context of a TBF-based network model.
It is therefore an open question how precisely GB$(N)$ behaves when configured over a rate-limiting network.

\subsection{Our Model and Contribution}
In this chapter our goal is twofold.
First, we want to build an \emph{executable} model that is flexible to many non-probabilistic 
verification tasks and captures relevant details of GB$(N)$ which were ignored in prior works,
most notably, the behavior of GB$(N)$ when the sender and receiver are connected by a rate-limiting
network consisting of a TBF in each direction.
The fact that the model is executable is important because it means that in the future, it can be used for not just verification
tasks (which is how we use it) but also for simulations and attack discovery (as we did in~\cite{kumar2024formal}).
Second, to illustrate the utility of our model, we aim to answer a question which was not directly
studied in prior works, namely, what kind of performance we can expect from GB$(N)$ in the best and worst
case scenarios.
This question relates directly to the network definition because the worst-case scenario that we study for GB$(N)$
arises from the interaction between the GB$(N)$ sender and the sender-to-receiver TBF.

Our definition of the ``best case'' is the scenario where nondeterministic network
failures (such as nondeterministic loss, reordering, and delay) do not occur, i.e.,
the network behaves in an idealized fashion, and the sender and receiver both transmit
at rates $\leq$ the rates at which the buckets of the corresponding TBFs refill,
meaning, neither component over-transmits.
And, we define the ``worst case'' scenario as one where the sender overwhelms the network with packet 
transmissions, leading to congestion.
We formalize both in this chapter, and argue, citing measurement studies, that
the latter of the two is a realistic ``worst case''.
To the best of our knowledge, we are the first to formally define and study this over-transmission
scenario for GB$(N)$.
However, since we focus on best and worst case behaviors, 
we do not need to reason about the most \emph{likely} behavior of 
the protocol, and thus, our model does not capture
the probability of events such as nondeterministic loss or reordering.
For this reason, our model (in its current form) is not appropriate for reasoning about the average case, as was done in~\cite{de2005delay,de2002queueing,hasan2008performance}.

Our model resolves the three points of contention outlined above in the following ways.
\begin{enumerate}[(1)]

\item Rather than explicitly encoding how often the receiver should transmit an \ack, 
we leave it nondeterministic, allowing us to model many different receiver strategies.
This is the same approach which was taken in~\cite{chkliaev2002formal,chkliaev2003verification},
but not~\cite{bertsekas2021data}, which assumed the receiver transmits an \ack within some temporal window
of receiving a packet.
An advantage of our approach is that our model is not restricted to just the particular 
problems we study, yet, we are still able to prove theorems about specific receiver
strategies, by making the receiver's strategy an assumption of the theorem.
For instance, in our best and worst case theorems, we assume the receiver
waits to receive $N$ packets before sending another \ack.\footnote{Note -- $N$ is the number of packets received, not necessarily the number which were delivered.  In other words, all $N$ could potentially be out-of-order.} 
We view this as the worst possible realistic receiver strategy.
If the receiver waited to receive $> N$ packets before sending an \ack, it would cause unnecessary
retransmissions after every window, which seems unrealistic;
and if it counted delivered rather than received packets, it would not be able to provide
any feedback if the first packet was reordered.
But, the longer it waits, the more the system suffers from losses or retransmissions;
thus of the realistic options, waiting for $N$ packets is the worst.

\item We assume the receiver does not buffer out-of-order packets,
even when using the strategy described above.
In other words, the receiver might count the number of packets received,
including out-of-order packets, and send an \ack after, for example, every $N^{\text{th}}$
receive event, but it will not acknowledge any packets which were received out of
order with its next \ack transmission.
The reason we make this assumption is because it was common to most of the GB$(N)$ works we 
surveyed, some of which claimed that buffering out-of-order packets is a feature which 
distinguishes Selective-Repeat ARQ from GB$(N)$ (see e.g.,~\cite{hasan2008performance}).

\item We assume the sender will process any \ack it receives 
immediately, before sending more packets, even if it is not yet
done sending the current window.
This assumption is also common to most of the works we surveyed,
obviously improves the performance of the protocol, 
and in contrast to buffering out-of-order packets,
is not explicitly ruled out for GB$(N)$ by any prior work or textbook we found.

\end{enumerate}
We define our system in the context of a network model which uses a TBF
in each direction.
Our TBF does not just capture deterministic rate limiting, but also 
nondeterministic loss, delay, and reordering, allowing us to simulate numerous
possible network failure conditions.

Despite being in many ways more realistic than prior models,
ours still makes several simplifying assumptions or abstractions.
First, we model the retransmission timer nondeterministically -- it is allowed to fire at any time after the sender has transmitted the last packet in a window and before it has received any \ack which acknowledges any portion of that window.
We make the liveness assumption that the retransmission timer is not enabled forever without firing, so,
it cannot block the system from progressing.
For a more detailed treatment of the retransmission timer, the reader is referred to our work in \Chapr{karn-rto}.
Additionally, we assume that sequence numbers are unbounded.
This assumption drastically simplifies our proofs, 
but it can only be safely assumed if the network satisfies some formal criteria which were
previously verified in~\cite{chkliaev2003verification}.
Third, we assume that \acks count packet sequence numbers rather than bytes.
That is to say, an \ack of 7 acknowledges the packets with sequence numbers 1, 2, 3, 4, 5, and 6, as opposed to the first 6 bytes of the byte-stream encoded by the payloads of the in-order packets.
We discuss the latter two assumptions further in \Secr{gbn:model:setup}.

\begin{figure}[h]
    \centering
    \begin{tikzpicture}
\node[draw, rectangle] (sender) {\small Sender};
\node[draw, rectangle, right=0.9cm of sender] (receiver) {\small Receiver};
\node[below=4cm of sender] (endS) {};
\node[below=4cm of receiver] (endR) {};
\draw[-] (sender) -- (endS);
\draw[-] (receiver) -- (endR);
\draw[fill=gray!20] (0.2,-0.4) rectangle (2.3,-4.2);
\node[] (Channel) at (1.3,-0.6) {\small Channel};
\node[] (GB1) at (1.3,0.8) {\small GB$(1)$=ABP=Stop-and-Wait};
\draw[->] (0,-1) to node[fill=gray!20]{\small $1$} (2.5,-1.5);
\draw[->] (2.5,-1.6) to node[fill=gray!20]{\small $2$} (0,-2.1);
\node[draw,fill=red!20] (loss0) at (2.5,-2.7) {loss!};
\draw[->] (0,-2.2) to node[fill=gray!20]{\small $2$} (loss0);
\draw[->] (0,-3) to node[fill=gray!20]{\small $2$} (2.5,-3.5);
\node[fill=blue!20,draw] at (-0.8,-3) {\small \emph{timeout}};
\draw[->] (2.5,-3.6) to node[fill=gray!20] {\small $3$} (0,-4.1);
\node[draw,rectangle,right=1.3cm of receiver] (sender2) {\small Sender};
\node[draw,rectangle,right=0.9cm of sender2] (receiver2) {\small Receiver};
\node[below=6cm of sender2] (endS2) {};
\node[below=6cm of receiver2] (endR2) {};
\draw[-] (sender2) -- (endS2);
\draw[-] (receiver2) -- (endR2);
\draw[fill=gray!20] (5.65,-0.4) rectangle (7.75,-6.2);
\node[] (Channel) at (6.75,-0.6) {\small Channel};
\node[] (GB1) at (6.75,0.8) {\small GB$(2)$};
\draw[->] (5.45,-1.4) to node[fill=gray!20]{\small $2$} (7.95,-1.9);
\draw[->] (5.45,-1) to node[fill=gray!20]{\small $1$} (7.95,-1.5); 
\draw[->] (7.95,-2) to node[fill=gray!20]{\small $3$} (5.45,-2.5);
\node[draw,fill=red!20] (loss0) at (7.95,-3.2) {\small loss!};
\draw[->] (5.45,-2.6) to node[fill=gray!20]{\small $3$} (loss0);
\draw[->] (5.45,-3.4) to node[fill=gray!20]{\small $4$} (7.95,-3.9);
\node[fill=blue!20,draw] at (4.6,-4.5) {\small \emph{timeout}};
\draw[->] (5.45,-5) to node[fill=gray!20,xshift=-0.5cm,yshift=0.05cm] 
    {\small $4$} (7.95,-5.5); 
\draw[->] (7.95,-5.2) to node[fill=gray!20,yshift=-0.1cm] {\small $4$} (5.45,-5.7);
\draw[->] (5.45,-4.6) to node[fill=gray!20] 
    {\small $3$} (7.95,-5.1);
\end{tikzpicture}
    \caption{Example message sequence charts for GB$(1)$ and GB$(2)$.  In both, the receiver waits to receive $N$ packets (regardless of whether or not these packets are in order) before transmitting an \ack~$a$ cumulatively acknowledging all packets $p < a$.  The sender in both charts successfully transmits an entire window, but then loses the first transmission of the subsequent window.  In GB$(1)$, the entire second window is lost, resulting in a retransmission.  On the other hand, in GB$(2)$, the second window consists of two packets, the second of which (carrying sequence number $4$) is successfully received, but not delivered since it is out of order.  Since the packet with sequence number $3$ did not make it through, the sender is forced to retransmit.}
    \Figl{gbn:seq}
\end{figure}
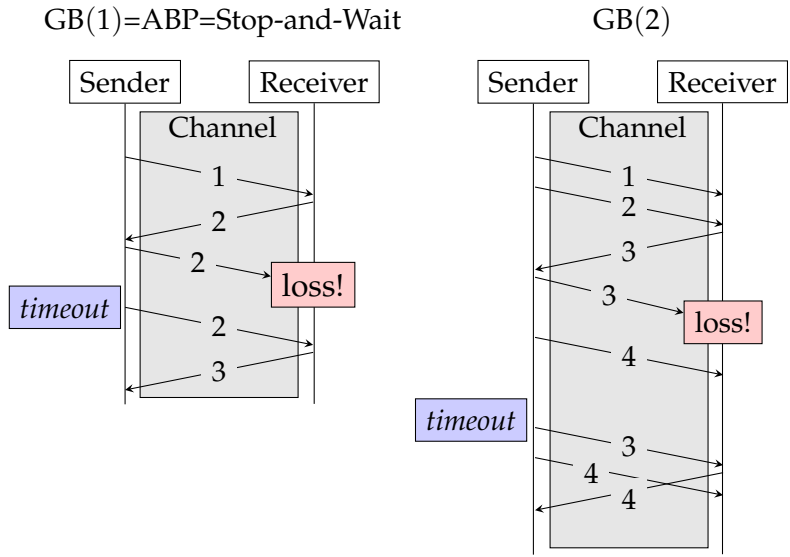

In order to characterize system performance, we study the \emph{efficiency} of GB$(N)$, namely, 
the fraction of packets received by the receiver which the receiver delivers to the application.
Put differently, this is the fraction of received packets which are useful.
Thus, the worst possible efficiency is zero, and perfect efficiency is one.
In ARQ protocols that use a cumulative acknowledgment scheme, 
whenever the receiver transmits a cumulative \ack,
the \ack is equal to one plus the number of useful packets received so far.
For example, if the receiver receives packets with sequence numbers 1, 2, 2, 1, and 3,
then its efficiency is 3/5, since two of the packets were duplicates and therefore not useful.
If it subsequently transmits an \ack, that \ack will equal 4.
Thus, we can measure the long-term efficiency of an ARQ protocol by looking at the average increase
in subsequent \acks transmitted by the receiver, divided by the number of packets the receiver receives
between \ack transmissions.
Using our model, we show that GB$(N)$ can, in the absence of loss, reordering, or 
delay, achieve perfect efficiency.
Then, we compute the efficiency of the system when the sender transmits packets faster than the TBF
can forward them to the receiver, leading to losses.
To the best of our knowledge, we are the first to formally model this problem, which arises from 
the interaction between GB$(N)$ and the sender-to-receiver TBF, and therefore could not have been
studied using the previously mentioned models which did not include a TBF in either direction.

The rest of the chapter is organized as follows.
Using traditional, pen-and-paper mathematics, we
describe the setup of our model in \Secr{gbn:model:setup},
and then describe how we model the sender, receiver, and each TBF,
and the invariants we prove about each component, in \Secr{gbn:model:sender},
\Secr{gbn:model:receiver}, and \Secr{gbn:model:tbf}, respectively.
We tie it all together by explicitly encoding the entire system transition
relation in \Secr{gbn:model:tran}.
Then we analyze the efficiency of GB$(N)$ in the best case,
and in a scenario where the sender over-transmits, in \Secr{gbn:efficiency}.
In \Secr{acl2s}, we explain how we formalize these pen-and-paper models and properties
in \acls, and what our mechanized proofs look like.
The section assumes familiarity with \acls, and therefore, readers unfamiliar with the prover
may find the preceeding five sections more useful for understanding our model and results.
Conversely, readers familiar with \acls may find it easier to skim the pen-and-paper mathematics
 and focus more on how the models and proofs were formalized in the theorem prover.
Finally, we survey related works in \Secr{gbn:related} --
other than those already discussed above --
and conclude in \Secr{gbn:conclusion}.

\section{Setup for Formal Model of Go-Back-$N$}\Secl{gbn:model:setup}

In our model, a \emph{datagram} is a tuple 
\(\dg = (i, x)\)
consisting of a positive integer~$i$,
which we refer to as the id of the datagram,
and a string payload~$x$.
For convenience, we use Dg to denote the set of all datagrams.
In a real network, the payload is a byte array, but we model it as a string so that
the traces which get printed when the model is executed are easier to read (in 
the sense that they have interpretable messages like \texttt{HELLO} or \texttt{ACK}).
We use the length of the payload~$x$, denoted $\emph{sz}(\dg)$, as a proxy for the byte-size of the datagram.
In this convention (and others) we drop redundant parentheses, for example,
writing $\emph{sz}((1, \texttt{EAT})) = \emph{sz}(1, \texttt{EAT}) = 3$.
We assume the existence of a fixed, maximum payload size, but we do not assume this maximum size
is any one particular value.

Datagrams are separated into (data) \emph{packets} (which the sender sends) 
and \emph{acknowledgments}, or \acks (which the receiver sends).
The goal of the sender is to communicate a stream of bytes to the receiver, e.g., 
\texttt{welcome to Alaska}, in order and without omissions.  
The byte-stream is broken into packets,
ordered by consecutive id, starting with 1, e.g., 
$[(1, \texttt{welcome}), (2, \texttt{to}), (3, \texttt{Alaska})]$.\footnote{In practice the byte-stream would probably not be broken up by spaces -- we just present it this way for illustration.}
We refer to the id of a packet as its \emph{sequence number}.
Meanwhile, in our model, every \ack $(j, y)$ has the payload $y=\texttt{ACK}$
and is said to be \emph{cumulative} in the sense that it acknowledges 
all transmitted packets with sequence numbers $< j$, but no packet with sequence number $j$.
So, in our example, 
$(3, \texttt{ACK})$ acknowledges $(1, \texttt{welcome})$ and $(2, \texttt{to})$ 
but not $(3, \texttt{Alaska})$.
This nomenclature is consistent with our definitions in \Chapr{karn-rto},
except that now the datagram type is enriched with a payload.

In our model, we make some simplifying assumptions about both sequence numbers and cumulative acknowledgments.

\emph{Sequence numbers.}
In our model we simplify how sequence numbers are treated in two important ways.
First, in the real world, sequence numbers are bounded, typically by $2^{32}$, and once a sender
has transmitted that many packets, the sequence number of the next packet ``wraps around'' back
to one.  
For example, in a Gbps network, the sequence number can wrap in $\leq$ 34s~\cite{microsoftTCPfeatures}.
This can cause ambiguities where the receiver of a packet is not certain whether it was sent
after the sequence number wrapped or before (in which case it must have been delayed in transit).
Such ambiguities can cause problems, such as
    stale RTT estimates via Karn's Algorithm 
    or the inability to detect false reactions to losses
    in loss-based congestion control algorithms~\cite{pawsLinux}.
The classical solution, called Protection Against Wrapped Sequences (PAWS), is to include a 12 byte timestamp in each datagram, and use it to disambiguate datagram order~\cite{rfc1323_timestamp}.
Although the timestamp also wraps, just like the sequence number, 
it does so at most once every 24 days~\cite{wright1995tcp}.
In practice, ambiguities caused by wrapped sequence numbers are considered sufficiently rare
that many TCP applications do not use PAWS by default~\cite{pawsMicrosoft}
and for those where such ambiguities do occur (and matter)
PAWS is generally considered an adequate solution.
Consequently, all of the related works we survey in \Secr{gbn:related}
except for~\cite{chkliaev2003verification} 
make the simplifying assumption that sequence numbers are unbounded
or that the bound is much larger than the byte-length of the data stream the sender aims to transmit.
(The work in~\cite{chkliaev2003verification} 
explicitly defines and proves the conditions under which
bounded sequence numbers are unambiguous and thus our unbounded simplification is acceptable.)
In this chapter, we assume sequence numbers are unbounded, but when a theorem statement would change
under a model with bounded sequence numbers, we say so and explain how.

\emph{Cumulative ACKs.}
In TCP and similar protocols, an \ack cumulatively acknowledges the bytes delivered so far, but in our model (in both this chapter and the previous), \acks are cumulative over packet sequence numbers, not bytes.
This simplification makes our proofs easier but, in contrast to the bounded/unbounded simplification we just described,
it does not lose any model fidelity,
because we explicitly encode the payload of each packet in the model, and therefore,
we can always convert an \ack from sequence-number form to byte-form.

We model a Go-Back-$N$ system consisting of four components:
    a sender who sends packets and receives acknowledgments,
    a receiver who sends acknowledgments and receives packets,
    and two Token Bucket Filters (one in each direction) which (attempt to)
    move datagrams from endpoint to endpoint (e.g. from sender to receiver).
Our model is illustrated in \Figr{gbn:system}.

\begin{figure}[H]
    \centering
    \begin{tikzpicture}
\node[fill=white, draw,minimum height=1cm,align=center] (sender) {sender};
%
%
\node[fill=white, draw, above right=0.1cm and 1.5cm of sender] (ts) {$\mathcal{F}_s$};
\node[fill=white, draw, below right=0.1cm and 1.5cm of sender] (tr) {$\mathcal{F}_r$};
\node[fill=white, draw, below right=0.1cm and 1.5cm of ts,minimum height=1cm,align=center] (receiver) {receiver};
\draw[->] (sender) to[out=north east,in=west] node[above] {packets} (ts);
\draw[->] (ts) to[out=east,in=north west] (receiver);
\draw[->] (receiver) to[out=south west,in=east] node[below] {\acks} (tr);
\draw[->] (tr) to[out=west,in=south east] (sender);
\end{tikzpicture}
    \caption{The system, consisting of the sender, receiver, and two TBFs ($\mathcal{F}_s$ and $\mathcal{F}_r$).  Each time the sender transmits a packet it flows through $\mathcal{F}_s$ before reaching the receiver (or gets dropped in-transit), and likewise, when the receiver eventually responds with an ACK, it flows through $\mathcal{F}_r$ before reaching the sender (or gets dropped).}
    \Figl{gbn:system}
\end{figure}
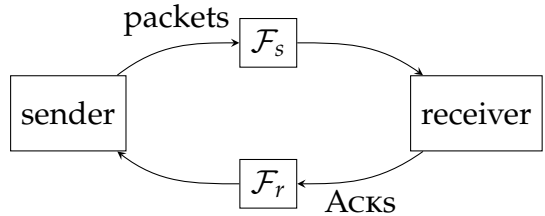

Components in our Go-Back-$N$ system synchronize on the following events:
\begin{itemize}
    \item $\snds(\dg)$ in which the sender sends the packet $\dg$ into $\mathcal{F}_s$.  This is an output event for the sender and an input event for $\mathcal{F}_s$.
    \item $\sndr(\dg)$ in which the receiver sends the \ack $\dg$ into $\mathcal{F}_r$.  This is an output event for the receiver and an input event for $\mathcal{F}_r$.
    \item $\dlvrs(\dg)$ in which the receiver receives the packet $\dg$ from $\mathcal{F}_s$.  This is an output event for $\mathcal{F}_s$ and an input event for the receiver.
    \item Finally, $\dlvrr(\dg)$ in which the sender receives the \ack $\dg$ from $\mathcal{F}_r$.  This is an output event for $\mathcal{F}_r$ and an input event for the sender.
\end{itemize}
When convenient we drop redundant parentheses, e.g.,
writing $\snds(i, x)$ rather than $\snds((i, x))$.

Like in \Chapr{karn-rto}, the sender, receiver, and TBFs are non-blocking 
in the I/O automata sense~\cite{lynch1988introduction},
that is to say, no component of the system can block one of its input events from occurring.
Mathematically, this works as follows.
Each component in the model
evolves according to a set of \emph{state update functions}.
Each state update function $f$ takes as input a component state $s$, 
and a (potentially empty) argument list $\alpha$, 
and outputs an updated component state $s'$, and either an output event $o$
or the special symbol $\perp$, which denotes null.
There are two types of state updates: \emph{internal} updates of the form 
$(s', o) = f(s, x_1, x_2, \ldots)$, where $\alpha = x_1, x_2, \ldots$
is a list of typed variables nondeterministically selected by the acting
component, and \emph{external} updates of the form $(s', \perp) = f(s, \evt)$,
 where $\alpha = \evt$ is an input event of the
component, and $o = \perp$ (i.e., the update does not output an event).
An internal update is allowed to have a \emph{precondition}, namely, some predicate over
the state and arguments which must hold in order for the update to occur.
We do not allow preconditions on external updates as this would enable blocking.
Lastly, an event $\evt$ is an input event to a component $c$ if and only if $c$ has just one (and not more than one)
external state update function which takes $\evt$ as its argument.

The rules of concurrency naturally follow:
each component can execute at most one state update function at a time; and
two (or more) components can update concurrently, provided that if one of the components
outputs an event which is an input to another component, the two synchronize on the given 
event (updating in lockstep).
Naturally, this means a single event cannot be both an input to and an output of the same component.
However, we do not encode these concurrency rules in our \acls code,
because they are not relevent for the theorems we prove.
Rather, the \acls code simply describes each state update function individually,
and the places where components synchronize on events.
We explain the encoding of the system transition relation in the \acls code 
in more detail in \Secr{gbn:model:tran} and \Secr{acl2s}.

We model our system in steps.
First, we define the state update functions for each component.
We define the sender's transition relation in \Secr{gbn:model:sender},
the receiver's in \Secr{gbn:model:receiver},
and the transition relations for the two TBFs in \Secr{gbn:model:tbf}.
Then we use those functions to define the component transition relations.
Finally, we build the composite transition relation for the entire system
out of the individual transition relations of the components, in \Secr{gbn:model:tran}.
The composite relation captures the semantics described above, albeit, 
with the caveat that if two components in the real system update at once,
the corresponding model trace consists of two updates in a row (which commute).
Throughout, we use the 
following conventions: 
$\mathbb{N}$ denotes the naturals (including zero); 
$\setminus$ denotes set subtraction;
$\mathbb{N}_+ = \mathbb{N} \setminus \{ 0 \}$ denotes the positive
naturals;  
for any lists $A$ and $B$, $A;B$ denotes their concatenation;
and Str denotes the set of all strings whose length does not exceed the maximum
payload size.

\section{Formal Model and Correctness of the Go-Back-$N$ Sender}\Secl{gbn:model:sender}

Our GB$(N)$ sender model is quite general, capturing the behaviors of a number of possible
implementations at once.  In this section we first formalize our model, and then explain how
it captures numerous potential implementation choices.

We assume $N$ is a fixed positive constant integer, and
model the sender's state as a tuple of positive integers~$\sender= (\ha, \hp, \cur)$, where
    $\ha$ is the highest \ack id received so far (or one if none were received so far);
    $\hp$ is the highest packet id sent so far; and
    $\cur$ is the id of the next packet the sender plans to transmit (initially one).
We take the convention that $\hp$ is unitialized (i.e. null) until the sender has sent at least one packet.
At any given time, the current \emph{window} is the integer interval $[\ha, \ha + N]$.
The sender updates its state according to the following three functions.

\paragraph{\(\textsf{rcvAck}(\sender, \evt)\):} An external update triggered by $\evt = \dlvrr(a, \texttt{ACK})$.  If $\ha < a \leq \hp + 1$, it ``slides the window'' by setting $\ha := a$ and $\cur := \max(\cur, a)$.  Else, it does nothing.

\paragraph{\(\textsf{advCur}(\sender, x)\):} An internal update with the precondition that $\cur < \ha + N$ and $x$ is a string.
Emits $\snds(\cur, x)$, and
sets $\hp := \max(\hp, \cur)$, and, subsequently, $\cur := \cur + 1$.

\paragraph{\(\textsf{timeout}(\sender)\):} An internal update with the precondition that $\cur = \ha + N$.
Sets $\cur := \ha$ and emits nothing.

\medskip

We encode the sender's behavior using a transition relation $\trsender(\sender, \evt, \sender')$
which describes how a sender in state $\sender = (\ha, \hp, \cur)$
transitions to a state $\sender' = (\ha', \hp', \cur')$
after receiving as input, or outputting, the event $\evt$ (or neither if $\evt = \perp$).
There are three possible cases.
\begin{enumerate}
    \item $\evt = \dlvrr(a, \texttt{ACK})$ and the sender updates using \(\textsf{rcvAck}(\sender, a)\).
    \item $\evt = \snds(\cur, x)$ and is emitted by the sender as it updates using \(\textsf{advCur}(\sender, x)\).
    \item $\evt = \perp$, because the sender updates using \(\textsf{timeout}(\sender)\), which is an internal
    update with no output event.
\end{enumerate}
The transition relation captures all three.

\begin{equation}
\begin{aligned}
\trsender(\sender, \evt, \sender') & := 
    (\exists \, a \in \mathbb{N}_+ \,::\, 
    \evt = \dlvrr(a, \texttt{ACK}) \land (\sender', \perp) = \textsf{rcvAck}(\sender, \evt)) \\
    & \lor 
    (\exists \, x \in \text{Str} \,::\, 
        \evt = \snds(\cur, x) \land 
        \cur < \ha + N \land 
        (\sender', \evt) = \textsf{advCur}(\sender, x)) \\
    & \lor
    (\evt = \perp \land 
        \,\cur = \ha + N \land 
        (\sender', \evt) = \textsf{timeout}(\sender))
\end{aligned}
\Eql{trsender}
\end{equation}

Real GB$(N)$ implementations may differ on how they the prioritize these three functions.
For instance, if the sender's timer expires at the same time that it receives a new \ack,
should it first process the \ack, or process the timeout?
By defining the transition relation the way we do, we are able to capture all possible choices
for which functions to prioritize.

\begin{samepage}
\begin{theorem}
All the following are invariants of the sender's transition relation 
$\trsender$.
\begin{enumerate}[{Inv }1:]
    \item The sender's high ACK (\ha) only acknowledges packets it transmitted: \(\ha \leq \hp + 1\).
    \item The sender's next transmission (\cur) is always either within the current window, or in the first position of the next window: \(\ha \leq \cur \leq \ha + N\).
    \item The sender's high ACK (\ha) and highest transmission (\hp) are both non-decreasing
    with time, according to the sender's local clock.
    That is, if \(\trsender(\sender, \evt, \sender')\),
    then \(\ha \leq \ha'\) and
    \(\hp \leq \hp'\).
\end{enumerate}
\Thl{sender:gbn}
\end{theorem}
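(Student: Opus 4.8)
The plan is to prove all three statements together as a single inductive invariant of the transition system, in the standard way: show the conjunction $(\text{Inv 1} \land \text{Inv 2})$ holds in the initial sender state, and that every transition permitted by $\trsender$ preserves it. Statement Inv 3 is a predicate on \emph{transitions} rather than states, so it carries no base-case obligation and is discharged entirely in the inductive step. For the base case, the initial state has $\ha = 1$, $\cur = 1$, and $\hp$ uninitialized; adopting the natural convention that the uninitialized $\hp$ behaves as $0$ (consistent with $\max(\text{null}, \cur) = \cur$ as used by \textsf{advCur}), we get $\ha = 1 \le 0 + 1 = \hp + 1$ and $\ha = 1 \le \cur = 1 \le 1 + N = \ha + N$, so both invariants hold initially.

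For the inductive step I would case-split on which of the three clauses of \Eqr{trsender} fires. In the \textsf{timeout} case the precondition forces $\cur = \ha + N$ and the update sets $\cur' := \ha$ while leaving $\ha$ and $\hp$ fixed, so Inv 1 is inherited verbatim, Inv 2 becomes $\ha \le \ha \le \ha + N$ (immediate since $N \ge 0$), and Inv 3 holds with equality on both coordinates. In the \textsf{advCur} case the precondition gives $\cur < \ha + N$, and the update does $\hp' := \max(\hp, \cur)$ then $\cur' := \cur + 1$ with $\ha' := \ha$: Inv 1 follows from $\ha \le \hp + 1 \le \max(\hp, \cur) + 1 = \hp' + 1$; Inv 2's lower bound from $\cur' = \cur + 1 > \cur \ge \ha$, and its upper bound from $\cur \le \ha + N - 1$ (integrality), hence $\cur' = \cur + 1 \le \ha + N = \ha' + N$; Inv 3 from $\ha' = \ha$ and $\hp' = \max(\hp, \cur) \ge \hp$.

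The delicate case is \textsf{rcvAck} triggered by $\evt = \dlvrr(a, \texttt{ACK})$. When the guard $\ha < a \le \hp + 1$ fails, the state is unchanged and everything is trivial, so assume the guard holds; then $\ha' := a$, $\cur' := \max(\cur, a)$, and $\hp' := \hp$. Inv 1 is $\ha' = a \le \hp + 1 = \hp' + 1$, directly from the guard; Inv 3 is $\ha' = a > \ha$ together with $\hp' = \hp$, so both coordinates are non-decreasing. Inv 2 is where the real work lies: the lower bound is $\cur' = \max(\cur, a) \ge a = \ha'$, and for the upper bound I would combine the inductive hypothesis $\cur \le \ha + N$ with the \emph{strict} guard $\ha < a$ (so $\ha \le a - 1$) to get $\cur \le a - 1 + N \le a + N$, while $a \le a + N$ holds since $N \ge 0$; hence $\cur' = \max(\cur, a) \le a + N = \ha' + N$.

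I expect this last upper-bound computation, together with fixing the convention for the uninitialized $\hp$ so that Inv 1's base case and the $\max(\hp, \cdot)$ updates are well defined, to be the only subtle points. Note that none of the three invariants is used as a hypothesis in the proof of another, so the ``simultaneous'' induction is really three short, independent case analyses bundled for convenience; the entire argument is exactly the kind of finite case check that \acls discharges mechanically.
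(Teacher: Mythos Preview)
Your proof is correct and follows essentially the same approach the paper takes: the paper states that all three invariants ``go through automatically'' in \acls once the update functions and \texttt{stranr} are enabled, using exactly the base-case/inductive-step structure you spell out (cf.\ \Secr{acl2s}). The only cosmetic difference is the initial state: to sidestep the uninitialized $\hp$, the paper's \acls model starts \emph{after} one packet has been sent ($\ha=\hp=1$, $\cur=2$), whereas you adopt the convention $\hp=0$; both conventions validate the base case, and you correctly flag this as the one place a choice must be made.
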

\end{samepage}

Note, if we adjusted our model to have bounded ids with wrap-around, we would need to modify the invariants in \Thr{sender:gbn} to take the id bound into account.  This could be done either by: 
(1) assuming a connection never lasts more than 34s, or 
(2) adding timestamps to datagrams, implementing PAWS, 
modifying \(\trsender\)
to transmit $\cur \text{ mod } 2^{32}$ and to infer the unbounded value of an \ack
id based on the ordering that PAWS infers, 
and assuming connections never last >24 days.
For a detailed formal treatment of bounded sequence numbers, 
    the reader is referred to~\cite{chkliaev2003verification}.

\section{Formal Model and Correctness of the Go-Back-$N$ Receiver}\Secl{gbn:model:receiver}

We model the GB$(N)$ receiver as having an unbounded internal set of naturals \rcvd.
Whenever it receives the event $\dlvrs(i, x)$, it checks if $i$ is a cumulative \ack
for \rcvd, that is, if $i=\min(\mathbb{N}_+\setminus\rcvd)$, in which case 
the receiver adds $i$ to \rcvd.  Else it does nothing.

Note, we allow $\rcvd$ to be unbounded in order to make our model more general in the sense
that it could be more easily adapted to describe a receiver who buffers out-of-order packets.
In that case, the unbounded nature of $\rcvd$ is still acceptable because it 
abstracts a bit-vector of size $N$, which is bounded.
However, we prove that the receiver we describe, which ignores out-of-order packets,
is equivalent to one where the $\rcvd$ set is replaced with a single integer $p$ which tracks the
next packet sequence number the receiver expects to receive.
Therefore, we are not concerned that the unbounded nature of the set is unrealistic; it is simply
a useful modeling abstraction.

The receiver has two state update functions.

\paragraph{\(\textsf{rcvPkt}(\rcvd, i)\):} An external update triggered by $\dlvrs(i, x)$.  
If $i = \min(\mathbb{N}_+\setminus\rcvd)$, sets $\rcvd := \rcvd \cup \{ i \}$, else does nothing.

\paragraph{\(\textsf{sndAck}(\rcvd)\):} An internal update with no precondition.
Outputs $\sndr(\min(\mathbb{N}_+\setminus\rcvd), \texttt{ACK})$ and leaves $\rcvd$ unchanged.

\medskip

By leaving the receiver's acknowledgment strategy nondeterministic, our model is able
to capture all possible \ack strategies, such as: \ack every packet; \ack every other 
packet; \ack every $N^{\text{th}}$ packet; send an \ack on a temporal schedule; etc.
We can reason about a particular \ack strategy by phrasing it as a predicate over the order of events,
for example, ``precisely $N$ $\dlvrs$ events must occur after each $\sndr$ and before the next''.
We encode the receiver's transition relation~$\trreceiver$ as follows.

\begin{equation}
\begin{aligned}
\trreceiver(\rcvd, \evt, \rcvd') & :=
    (\exists i \in \mathbb{N}_+ \,::\, \evt = \sndr(i, \texttt{ACK}) \land 
        (\rcvd', \evt) = \textsf{sndAck}(\rcvd)) \lor \\
    & (\exists x \in \text{Str}, i \in \mathbb{N}_+ \,::\, \evt = \dlvrs(i, x) \land 
     (\rcvd', \perp) = \textsf{rcvPkt}(\rcvd, \evt))
\end{aligned}
\end{equation}

Researchers interested in ARQ protocols where the receiver does buffer out-of-order packets
can modify our \textsf{rcvPkt} definition to set $\rcvd := \rcvd \cup \{ i \}$ regardless of whether
or not $i = \min(\mathbb{N}_+\setminus\rcvd)$.
We actually prove that the following invariant holds for both versions of the receiver.

\begin{theorem}
Suppose $\trreceiver(\rcvd, \evt, \rcvd')$.
Then $\rcvd \subseteq \rcvd'$.
\Thl{rcvr:invs}
\end{theorem}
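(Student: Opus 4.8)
The plan is to prove the claim by a straightforward case analysis on the disjunction defining $\trreceiver$. By definition, $\trreceiver(\rcvd, \evt, \rcvd')$ holds only if either (i) there is some $i \in \mathbb{N}_+$ with $\evt = \sndr(i, \texttt{ACK})$ and $(\rcvd', \evt) = \textsf{sndAck}(\rcvd)$, or (ii) there are some $x \in \text{Str}$ and $i \in \mathbb{N}_+$ with $\evt = \dlvrs(i, x)$ and $(\rcvd', \perp) = \textsf{rcvPkt}(\rcvd, i)$. These two cases are exhaustive, so it suffices to establish $\rcvd \subseteq \rcvd'$ in each.

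First I would dispatch case (i): by definition $\textsf{sndAck}$ leaves $\rcvd$ unchanged, so $\rcvd' = \rcvd$ and the inclusion is immediate. For case (ii), $\textsf{rcvPkt}(\rcvd, i)$ branches on whether $i = \min(\mathbb{N}_+ \setminus \rcvd)$: if so it sets $\rcvd' = \rcvd \cup \{ i \}$, and otherwise it leaves $\rcvd$ untouched so that $\rcvd' = \rcvd$. In the first branch $\rcvd \subseteq \rcvd \cup \{ i \} = \rcvd'$, and in the second $\rcvd = \rcvd'$; either way $\rcvd \subseteq \rcvd'$, completing the case and the proof.

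There is no substantive obstacle here: the invariant is immediate from the observation that neither receiver state update function ever deletes an element of $\rcvd$ --- $\textsf{sndAck}$ only reads $\rcvd$ in order to emit an \ack, and $\textsf{rcvPkt}$ at most inserts the newly received sequence number. The only care needed in the $\acls$ formalization is that the case split exactly mirrors the disjunctive structure of the encoded relation and that the conditional inside $\textsf{rcvPkt}$ is covered in full; both are routine for the prover. Finally, I would note that the very same argument applies verbatim to the alternative receiver that buffers out-of-order packets, since there $\textsf{rcvPkt}$ unconditionally performs $\rcvd' = \rcvd \cup \{ i \}$, which again only grows the set --- which is why the statement is asserted to hold for both variants of the receiver.
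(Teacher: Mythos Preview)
Your proposal is correct and matches the paper's approach: the paper does not spell out a detailed argument but simply notes that \acls ``easily dispatches the proof that the received set is non-decreasing under the subset relation,'' which amounts to exactly the case split on \textsf{sndAck} versus \textsf{rcvPkt} you describe. Your additional remark that the same argument covers the out-of-order-buffering variant is also in line with the paper's claim that the invariant holds for both receiver versions.
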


\section{Formal Model and Correctness of the Token Bucket Filter}\Secl{gbn:model:tbf}

Recall that our model has two TBFs: $\mathcal{F}_s$ (which connects the sender to the receiver)
and $\mathcal{F}_r$ (which connects the receiver to the sender).
Since both TBFs work the same way, in this section, we describe the single
TBF definition we use in both places.

At a high level, the TBF works as follows.
The TBF has an internal list of datagrams, called~$\data$, which has a fixed byte-capacity $\dcap$.
When the sending endpoint sends a datagram to the TBF, if $\data$ is full (i.e., if the cumulative
size in bytes of the payloads of the datagrams in $\data$ equals $\dcap$) then the datagram is dropped.
Otherwise, it is inserted into the first position in the list.
Note, this means the sending endpoint can never successfully transmit any datagram whose payload
is longer than $\min(\dcap, \bcap)$ -- so there is an effective cap on the size of payloads.
Naturally this means that, in order to avoid unecessary losses, the maximum payload size should be $\leq \min(\dcap, \bcap)$.
Since (for convenience) we assume the payload of every \ack is \texttt{ACK}, 
we therefore assume $\min(\dcap_r, \bcap_r) \geq 3 = \emph{sz}(\texttt{ACK})$.

In addition, the TBF has a counter $\bucket$, called a \emph{bucket}, which is initially zero, 
and capped above by~$\bcap$.
The TBF has an internal clock which ticks, and with each tick, the 
$\bucket$ increases by a fixed rate $\rat$, up to $\bcap$. 
The bucket is commonly described as holding ``tokens'', for example, if $\bucket=4$ then we say
the TBF has 4 tokens in its bucket,
and for the TBF to forward a datagram from its list~$\data$ to the receiving endpoint, it must
remove a number of tokens equal to the size of the datagram's payload in bytes from the bucket. 
Finally, we assume the TBF is configured with some maximum delay value $\ttl$, which is either a positive integer or infinity, and any datagram which persists in $\data$ for that many ticks is dropped. 
The way this is implemented in the model is by tracking for each datagram in the TBF 
how many ticks the datagram has survived, and after each tick,
dropping any datagram which has reached its expiration.

Now we formalize that high-level description.  Our formal model
also includes token decay,
reordering, and datagram loss.
Token decay is modeled as a nondeterministic function which, when executed,
decrements the number of tokens in the bucket.
The purpose of token decay is to capture \emph{wastage}.
Reordering is captured implicitly, by allowing the TBF to forward any datagram,
not just the oldest one. 
That is to say, whereas a real TBF would \texttt{pop()} a datagram and then forward it,
ours chooses some value of $i$ less than the length of $\data$,
removes the $i^{\text{th}}$ datagram, and delivers that one.
Nondeterministic loss is modeled the same way as forwarding, except that any datagram
in $\data$ can be nondeterministically lost at any time.

Consider a TBF $\mathcal{F}$ which connects endpoint $a$ to endpoint $b$,
configured with positive integer caps $\bcap$ and $\dcap$, 
positive integer bucket rate $\rat \leq \bcap$, and ordinal maximum delay $\ttl$. 
Note, an ordinal is a type that includes
the naturals 0, 1, 2, etc.,
    the value $\omega$ which is greater than all naturals,
    as well as the addition of any pair of ordinals.
    In other words, the ordinals include both the natural numbers
    and (infinitely many, increasing flavors of) infinity~\cite{manolios2004integrating}.
    We use natural \(\ttl\) values to model bounded maximum delay,
    and infinite ones to model unbounded maximum delay 
    (where a datagram could theoretically stay in \(\data\) forever).
We use Ord to denote the set of all ordinals.
Next, we formalize the state update functions of $\mathcal{F}$,
which we then use to build its transition relation.

We model the state of $\mathcal{F}$ using the tuple 
\(\tbf = (\bucket, \data)\)
where 
\(\bucket\) is a mutable natural (initially set to zero), and
\(\data\) is a mutable list of tuples $(t, \dg)$, where 
\(\dg\) is a datagram and
the ordinal $t$ 
is the number of clock ticks that $\dg$ 
can remain in $\data$ before it must be dropped (initially, $\ttl$).
We use $\emph{len}(\data)$ to denote the number of datagrams in $\data$, 
e.g., $\emph{len}([(4,(2,\texttt{MANGO}))])=1$,
and $\emph{sz}(\data)$ to denote the cumulative size of the payloads of its entries,
e.g., $\emph{sz}([(\omega, (1, \texttt{EAT})), (\omega, (2, \texttt{BANANA}))]) = \emph{sz}(1, \texttt{EAT}) + \emph{sz}(2, \texttt{BANANA}) = 3 + 6 = 9$. 
We summarize all the variables and parameters of $\mathcal{F}$ in \Tabr{tbf-vars-params}.

\begin{table}[h]
\centering
\begin{tabular}{lp{3cm}p{2.5cm}p{2cm}p{6cm}}
Name   & Type & (V)ariable or (P)arameter & Initial Value & Description \\\hline
$\bucket$ & $\mathbb{N}$ & V & 0 & Number of tokens \\
$\bcap$ & $\mathbb{N}_+$ & P & N/A & Maximum value of $\bucket$ \\  
$\rat$ & $\mathbb{N}_+$ & P & N/A & Rate at which the bucket refills, up to $\bcap$ \\   
$\ttl$ & Ord & P & $\omega$ & Maximum datagram delay \\
$\data$ & List of (Ord, Dg) & V &  $[]$ & Datagrams to be forwarded \\
$\dcap$ & $\mathbb{N}_+$ & P & N/A & Maximum value of $\emph{sz}(\data)$ \\    
\end{tabular}
\caption{Variables and parameters of the TBF.}
\Tabl{tbf-vars-params}
\end{table}

The TBF updates its state using the following state update functions. 
The first function, \textsf{tick}, encodes a cycle of the TBF's internal clock, which increases its bucket until the bucket reaches its cap.

\paragraph{$\textsf{tick}(\tbf)$:} Internal update which 
decrements the remaining delay value for each datagram in $\data$,
removing any which has persisted for $\ttl$ \textsf{tick}s,
and sets $\bucket := \min(\bucket + \rat, \bcap)$. Outputs nothing.

\medskip
The second function, \textsf{decay}, captures wastage behaviors where a TBF loses tokens.
Such behaviors are included in some, but not all, TBF definitions in the literature.

\paragraph{$\textsf{decay}(\tbf)$:} Internal update which
sets $\bucket := \max(\bucket - 1, 0)$ and outputs nothing.

\medskip
The third function, \textsf{process}, captures the event where the sending endpoint sends a datagram into the TBF (which may be lost because the TBF does not have enough space for the datagram, or enqueued in $\data$).

\paragraph{$\textsf{process}(\tbf, \evt)$:} External update triggered by $\evt = \snd_a(\dg)$.  
If $\emph{sz}(\data) + \emph{sz}(\dg) \leq \dcap$, pushes $\dg$ into $\data$.
Otherwise the function does nothing, meaning, the datagram is dropped. 

\medskip
The fourth function, \textsf{drop}, describes nondeterministic loss.

\paragraph{$\textsf{drop}(\tbf, i)$:} Internal update with the precondition $i < \emph{len}(\data)$. Removes the $(i+1)^{\text{th}}$ element of $\data$ and outputs nothing.

\medskip
Finally, the fifth function, \textsf{forward}, captures the event where the TBF forwards a datagram to the receiving endpoint.

\paragraph{$\textsf{forward}(\tbf, i)$:} Internal update with the precondition $i < \emph{len}(\data)$ and $\emph{sz}(\data[i]) \leq \bucket$.  Sets $\bucket := \bucket - \emph{sz}(\data[i])$, removes the $(i+1)^{\text{th}}$ element of $\data$ from $\data$, and outputs $\dlv_b(\data[i])$.

\medskip

Using the functions outlined above, a single TBF $\mathcal{F}$ may progress through a long series
of consecutive states $\tbf_0, \tbf_1, \tbf_2, \ldots, $ etc.
It does so according to the transition relation $\trtbf$, defined in \Eqr{trtbf}.
We define $\trtbf$ piece-wise through three sub-relations.
The first, \textsf{tbfIntR}, describes the internal events of the TBF, namely, 
\textsf{tick}, \textsf{decay}, and \textsf{drop}.

\begin{equation}
\begin{aligned}
\textsf{tbfIntR}(\tbf, \evt, \tbf') & := \evt = \perp \land \big(
    (\tbf', \evt) \in \{ \textsf{tick}(\tbf), \textsf{decay}(\tbf) \} \\
    & \lor 
    \exists i \in \mathbb{N} \,::\, i < \emph{len}(\data) \land
        (\tbf', \evt) = \textsf{drop}(\tbf, i)
    \big)
\end{aligned}
\end{equation}
Next, we define \textsf{tbfProcR}, the sub-relation which captures how the TBF responds when the 
sending endpoint (endpoint a) transmits a datagram into it.
\begin{equation}
\textsf{tbfProcR}(\tbf, \evt, \tbf') := \exists \dg \in \text{Dg} \,::\, \evt = \snd_a(\dg) \land \tbf' = \textsf{process}(\tbf, \evt)
\end{equation}
Finally, we define \textsf{tbfFwdR}, which captures the step where the TBF forwards a datagram from $\data$ into the receiving endpoint~(b).
\begin{equation}
\begin{aligned}
\textsf{tbfFwdR}(\tbf, \evt, \tbf') := \exists i \in \mathbb{N} \,::\, &
i < \emph{len}(\data) \land \emph{sz}(\data[i]) \leq \bucket \\ & \land
\evt = \dlv_b(\data[i]) \land \tbf' = \textsf{forward}(\tbf, i)
\end{aligned}
\end{equation}
Taking the disjunction of these three relations yields the transition relation for the TBF.
\begin{equation}
\trtbf(\tbf, \evt, \tbf') := 
    \textsf{tbfIntR}(\tbf, \evt, \tbf') \lor 
    \textsf{tbfProcR}(\tbf, \evt, \tbf') \lor 
    \textsf{tbfFwdR}(\tbf, \evt, \tbf')
\Eql{trtbf}
\end{equation}

The most obvious property of the TBF, which we verify, is that the cumulative size of the payloads it
forwards between ticks is bounded by the number of tokens in its bucket.
\begin{theorem}
Suppose the TBF forwards datagrams $\dg_1, \dg_2, \ldots, \dg_j$ between two \textsf{tick}s.
Let $\bucket$ be the number of tokens the TBF has after the first tick 
and before it begins forwarding datagrams.
Then $\emph{sz}(\dg_1) + \emph{sz}(\dg_2) + \ldots + \emph{sz}(\dg_j) \leq \bucket$.
\end{theorem}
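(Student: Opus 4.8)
The statement is an invariant of the TBF's transition relation over the segment of its run lying strictly between the two \textsf{tick}s. By the definition of $\trtbf$ in \Eqr{trtbf}, every transition of $\mathcal{F}$ other than a \textsf{tick} is one of \textsf{decay}, \textsf{drop}, \textsf{process}, or \textsf{forward}, and $\mathcal{F}$'s state is modified only through $\trtbf$ (its outputs to the receiving endpoint are still produced by \textsf{forward}). So the states $\mathcal{F}$ visits between the two ticks form a sequence $\tbf_0, \tbf_1, \dots, \tbf_m$ in which every step $\tbf_k \to \tbf_{k+1}$ is one of those four updates. If $j = 0$ the left-hand side of the inequality is the empty sum and the claim holds because $\bucket \in \mathbb{N}$; so assume $j \ge 1$ and let $\tbf_0$ be the state at which $\bucket$ is read off (after the first tick and before the first \textsf{forward}). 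Write $b_k$ for the value of $\bucket$ in $\tbf_k$, so $b_0 = \bucket$, and write $F_k$ for the cumulative payload size of the datagrams \textsf{forward}ed during the first $k$ steps, so $F_m = \emph{sz}(\dg_1) + \cdots + \emph{sz}(\dg_j)$.

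\textbf{Key steps.} I would prove by induction on $k$ the invariant $F_k + b_k \le \bucket$. The base case $k = 0$ is $0 + \bucket \le \bucket$. For the inductive step, case on the update taken in $\tbf_k \to \tbf_{k+1}$. For \textsf{decay}, $b_{k+1} = \max(b_k - 1, 0) \le b_k$ and $F_{k+1} = F_k$; for \textsf{drop} and for \textsf{process}, $b_{k+1} = b_k$ and $F_{k+1} = F_k$; in all three cases $F_{k+1} + b_{k+1} \le F_k + b_k \le \bucket$ by the induction hypothesis. The only substantive case is $\textsf{forward}(\tbf_k, i)$: its precondition forces $\emph{sz}(\data[i]) \le b_k$ (with $\data$ the datagram list in $\tbf_k$), so $b_{k+1} = b_k - \emph{sz}(\data[i]) \ge 0$ and $F_{k+1} = F_k + \emph{sz}(\data[i])$, hence $F_{k+1} + b_{k+1} = F_k + b_k \le \bucket$. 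Instantiating at $k = m$ yields $\emph{sz}(\dg_1) + \cdots + \emph{sz}(\dg_j) = F_m \le \bucket - b_m \le \bucket$, which is the claim.

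\textbf{Main obstacle.} Conceptually there is essentially nothing to overcome: off of ticks the bucket is monotone non-increasing, and each \textsf{forward} debits it by exactly the size of the datagram it emits while its precondition forbids overdrawing. The real work is bookkeeping --- making ``between two \textsf{tick}s'' precise as a contiguous, tick-free sub-run of $\mathcal{F}$; correctly matching each $\dg_\ell$ of the statement with the step that \textsf{forward}ed it and checking this pairing is a bijection onto those steps; and observing that the truncation at zero in \textsf{decay} is harmless to the inequality. In the \acls formalization this becomes an invariant of $\trtbf$ lifted to lists of events, discharged by the same inductive pattern already used for \Thr{sender:gbn} and \Thr{rcvr:invs}.
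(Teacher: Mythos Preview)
Your proposal is correct and matches the paper's approach: the paper treats this as ``the most obvious property of the TBF'' and verifies it in \acls by encoding, as a postcondition on \texttt{fwd}, exactly the fact you isolate in your inductive step --- that each forward debits \texttt{bkt} by the size of the emitted datagram while the precondition forbids overdraft. Your invariant $F_k + b_k \le \bucket$ is precisely what this postcondition buys when lifted over a tick-free sub-run, and the paper does not spell out any further detail beyond that.
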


Next, we identify an important property of the TBF, namely, that it is compositional.
The purpose of this property is to show that we can reason about connections over multiple sequential
TBF links by reasoning about just a single TBF.
In order to formalize the property, we first need to introduce some useful definitions.

First, we define the serial composition of two TBFs as follows.
\begin{definition}[Serial TBF Composition]
Let $\mathcal{F}_i$ be the TBF with input event $\snd_i$ and output event $\dlv_{i+1}$
for each $i = 1, 2$, and let $\trtbf_i$ denote the transition relation of $\mathcal{F}_i$.
Then the \emph{serial composition of $\mathcal{F}_1$ with $\mathcal{F}_2$}, denoted $\mathcal{F}_1 \triangleright \mathcal{F}_2$, is the system in which the output event 
$\dlv_{i+1}$ of $\mathcal{F}_1$ 
is considered equal to the input event $\snd_{i+1}$ of $\mathcal{F}_2$.
That is to say, when $\mathcal{F}_1$ forwards a datagram to $\mathcal{F}_2$, the datagram gets
processed immediately by $\mathcal{F}_2$.
\end{definition}

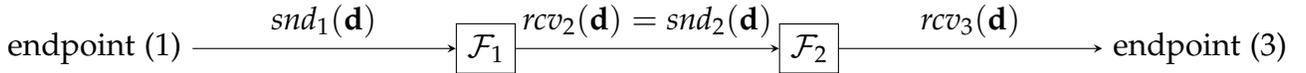
\begin{figure}[H]
    \centering
    \begin{tikzpicture}
\node[] (one) {endpoint (1)};
\node[draw, rectangle, right=of one] (dgs1) {$\mathcal{F}_1$};
\draw[->] (one) to node[above] {$\snd_1(\dg)$} (dgs1);
\node[draw, rectangle, right=of dgs1] (dgs2) {$\mathcal{F}_2$};
\draw[->] (dgs1) to node[above] {$\dlv_2(\dg)=\snd_2(\dg)$} (dgs2);
\node[right=of dgs2] (three) {endpoint (3)};
\draw[->] (dgs2) to node[above] {$\dlv_3(\dg)$} (three);
\end{tikzpicture}
    \caption{The serial composition of TBF $\mathcal{F}_1$ with TBF $\mathcal{F}_2$, denoted $\mathcal{F}_1 \triangleright \mathcal{F}_2$.}
    \Figl{gbn:system}
\end{figure}

Next, we introduce a composition operator $\oplus$ for TBFs and their states.
The idea here is, given two TBFs, to generate a third which can simulate their serial composition.
\begin{definition}[Abstract TBF Composition]
Let $\tbf_i$ be TBF states for $i = 1, 2$, of the TBFs $\mathcal{F}_i$.
Then the \emph{abstract composition} of $\tbf_1$ and $\tbf_2$, 
denoted $\tbf_1 \oplus \tbf_2$, is 
the state $(\bucket_2, \data_1';\data_2)$ where 
$\data_1' = [(t + \ttl_2, \dg) \text{ for } (t, \dg) \text{ in } \data_1]$.
The \emph{abstract composition of $\mathcal{F}_1$ and $\mathcal{F}_2$}, 
denoted $\mathcal{F}_1 \oplus \mathcal{F}_2$,
is the TBF with parameters $\bcap_2, \dcap_1 + \dcap_2, \rat_2$, and $\ttl_1 + \ttl_2$;
and $\tbf_1 \oplus \tbf_2$ is a state of $\mathcal{F}_1 \oplus \mathcal{F}_2$.
\end{definition}

Intuitively, the abstract TBF composition is meant to produce a single TBF which can simulate
the serial composition of two TBFs.
This intuition drives our choices of parameters and variables.
First, it is important to note that we do not assume the two TBFs are synchronized, i.e., 
we do not assume they \textsf{tick} at the same time.
For this reason, the number of tokens in the second TBF is the limiting factor for whether
or not a datagram can be forwarded, and so, we set the bucket in $\tbf_1 \oplus \tbf_2$
to $\bucket_2$, its cap to $\bcap_2$, and its refill rate to $\rat_2$.
Next, the single TBF must contain every datagram from the two individual TBFs, but critically,
we need to simulate the fact that the datagrams in the first TBF might go through some sequence of 
\textsf{tick}s before arriving at the second.  This is why we set $\data$ in $\tbf_1 \oplus \tbf_2$
to $\data_1';\data_2$.
It naturally follows that the cap on $\data$ should be $\dcap_1 + \dcap_2$ and the 
maximum delay should be $\ttl_1 + \ttl_2$.

Finally, we need a notion of reachability (which applies to not just TBFs but also any other system component).
\begin{definition}[Reachability]
Let $C$ be a component with transition relation 
$\textsf{trancR}$.
Let $E = \evt_0, \evt_1, \ldots, \evt_k$ 
be a finite sequence such that each $\evt_i$ is either null $(\perp)$
or an input or output event of $C$.
Let $c_0, c_1, \ldots, c_k$ be a sequence of states of $C$ 
such that $\bigwedge_{0 \leq i < k} \textsf{trancR}(c_i, \evt_i, c_{i+1})$.
Then we say $C$ can \emph{reach} $c_k$ from $c_0$ by following the
event sequence $E$.
\end{definition}

Naturally, we can extend this vocabulary to also reason about
composite systems, for example, the sequential composition of two
TBFs.  That is, suppose $E$ is a sequence of input or output events of $\mathcal{F}_1$ or $\mathcal{F}_2$,
and let 
\[(\tbf_0^1, \tbf_0^2), (\tbf_1^1, \tbf_1^2), \ldots, (\tbf_k^1, \tbf_k^2)\]
be a sequence of 
states of $\mathcal{F}_1 \triangleright \mathcal{F}_2$.
For each $i = 1, 2$, let $E|_{i}$ denote the projection of $E$ onto the input and output events of 
$\mathcal{F}_i$.
Suppose that $\mathcal{F}_1$ can reach $\tbf_k^1$ from $\tbf_0^1$ by following $E|_1$,
$\mathcal{F}_2$ can reach $\tbf_k^2$ from $\tbf_0^2$ by following $E|_2$, 
and for all $0 \leq i < k$,
if $E[i] = \dlv_{i+1}(\dg)$, then $E[i+1] = \snd_{i+1}(\dg)$.
Then we say $\mathcal{F}_1 \triangleright \mathcal{F}_2$ can reach $(\tbf_k^1, \tbf_k^2)$ from $(\tbf_0^1, \tbf_0^2)$ by following $E$.

Lastly, we need a notion of equivalence of TBFs and their states.  
This equivalence notion is what we will use to argue that the abstract composition of two TBFs can simulate
the serial composition thereof.
\begin{definition}[TBF Equivalence]
We will say the state $\tbf$ of $\mathcal{F}$ 
is equivalent to the state $\tbf'$ of $\mathcal{F}'$,
and write $\tbf \approx \tbf'$, if $\bcap_1 = \bcap_2, \dcap_1 = \dcap_2, \rat_1 = \rat_2, \ttl_1 = \ttl_2,$
and the ids in $\data_1$ form a permutation of the ids in $\data_2$.
\end{definition}
The intuition behind our equivalence notion is to capture the closest thing to strict equality possible.
The only reason we do not use equality is because the datagrams might get permuted depending on the order
in which they are forwarded from $\data_1$ to $\data_2$.
There is one other subtlety to note here, which is that we ignore the payloads in the permutation condition.
This is because, in an association, we assume an endpoint never sends two 
datagrams with the same id but different payloads.
However, this would no longer hold if we were going to model bounded ids with wrap-around, as discussed previously, in which case we would need to require that (all of) $\data_1$ is a permutation of (all of) $\data_2$.

With these definitions in mind, we can now formalize our property.

\begin{theorem}
Let $\mathcal{F}_i$ be TBFs for $i = 1, 2$ and $E$ a sequence of events, each of which is either
null, or an input or output event of at least one of the two TBFs.
Suppose $\mathcal{F}_1 \triangleright \mathcal{F}_2$ can reach $(\tbf_1', \tbf_2')$
from $(\tbf_1, \tbf_2)$ by following $E$.
Then the TBF $\mathcal{F}$ with initial state $\tbf_1 \oplus \tbf_2$ can reach a state 
$\tbf_3$ by following $E$, such that, $\tbf_3 \approx \tbf_1' \oplus \tbf_2'$.
\end{theorem}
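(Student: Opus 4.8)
The plan is to exhibit a forward simulation of \(\mathcal{F}_1 \triangleright \mathcal{F}_2\) by \(\mathcal{F} = \mathcal{F}_1 \oplus \mathcal{F}_2\), and then argue by induction on the length of the run of \(\mathcal{F}_1 \triangleright \mathcal{F}_2\) witnessing the hypothesis. Write this witness as \((\tbf_1,\tbf_2) = (\tbf_1^{0},\tbf_2^{0}) \to \cdots \to (\tbf_1^{k},\tbf_2^{k}) = (\tbf_1',\tbf_2')\) along \(E = \evt_0,\ldots,\evt_{k-1}\). Since the alphabet of \(\mathcal{F}\) is only \(\{\snd_1(\cdot),\dlv_3(\cdot),\perp\}\), whereas \(E\) also mentions the ``middle'' events \(\dlv_2(\dg) = \snd_2(\dg)\) and many internal (\(\perp\)) events of the two components, I would read ``\(\mathcal{F}\) follows \(E\)'' in the standard observational sense: the run of \(\mathcal{F}\) emits the subsequence of \(E\) lying in \(\mathcal{F}\)'s alphabet, in order, possibly interspersed with extra internal steps. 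This reading is essentially forced — \(\mathcal{F}\) has no transition matching a \(\dlv_2\)/\(\snd_2\) event, and replaying a single \(\mathcal{F}_1\)-tick that expires several datagrams at once needs several \(\textsf{drop}\) steps of \(\mathcal{F}\). The invariant I would carry is: after step \(j\), \(\mathcal{F}\) can reach a state \(\tbf_3^{j}\) such that \(\tbf_3^{j}.\data\) and \((\tbf_1^{j}\oplus\tbf_2^{j}).\data\) hold the same multiset of datagrams, and the bucket/delay information of \(\tbf_3^{j}\) is ``at least as permissive'' as that of \(\tbf_1^{j}\oplus\tbf_2^{j}\) in a sense strong enough to survive the \(\textsf{forward}\) case below. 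The first conjunct already delivers \(\tbf_3^{k}\approx\tbf_1'\oplus\tbf_2'\), since \(\approx\) compares only the fixed parameters (which are \(\mathcal{F}\)'s by construction) and the ids in the two queues.

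For the inductive step I would case on the transition \((\tbf_1^{j},\tbf_2^{j})\xrightarrow{\evt_j}(\tbf_1^{j+1},\tbf_2^{j+1})\). A \(\textsf{tick}\), \(\textsf{decay}\), or \(\textsf{drop}\) of \(\mathcal{F}_1\) changes \(\tbf_1^{j}\oplus\tbf_2^{j}\) only by deleting the copies of whatever datagrams \(\mathcal{F}_1\) expired or dropped — the change to \(\bucket_1\) is invisible to the abstract state — and \(\mathcal{F}\) matches this by the corresponding \(\textsf{drop}\)s alone, deferring its own token bookkeeping. A \(\textsf{tick}\), \(\textsf{decay}\), or \(\textsf{drop}\) of \(\mathcal{F}_2\) is matched similarly, the point to check being that the \(\textsf{drop}\)s \(\mathcal{F}\) uses to mimic \(\mathcal{F}_2\)'s expirations target exactly the \(\data_2\)-copies (whose counters coincide with \(\mathcal{F}_2\)'s) and not the \(\data_1\)-copies, which holds because the latter carry the extra \(\ttl_2\) of delay built into \(\tbf_1\oplus\tbf_2\). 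A \(\textsf{process}(\snd_1(\dg))\) is matched by \(\mathcal{F}\)'s \(\textsf{process}\): if \(\mathcal{F}_1\) had room then so does \(\mathcal{F}\), whose queue cap is \(\dcap_1+\dcap_2\) while \(\mathcal{F}_2\)'s queue stays within \(\dcap_2\); and if \(\mathcal{F}_1\) dropped \(\dg\) but \(\mathcal{F}\) accepted it, \(\mathcal{F}\) immediately \(\textsf{drop}\)s \(\dg\). The middle event \(\dlv_2(\dg)=\snd_2(\dg)\) moves \(\dg\) from \(\data_1\) to \(\data_2\) (or drops it when \(\mathcal{F}_2\)'s queue is full); in the abstract view the id-multiset of the merged queue is unchanged (resp. loses \(\dg\)), so \(\mathcal{F}\), for which this is not an event, does nothing (resp. \(\textsf{drop}\)s \(\dg\)).

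The one genuinely delicate case — and where I expect the real effort to go — is a \(\textsf{forward}\) of \(\mathcal{F}_2\) emitting \(\dlv_3(\dg)\). Here \(\mathcal{F}\) must itself emit \(\dlv_3(\dg)\) via \(\textsf{forward}\), which needs \(\dg\in\tbf_3^{j}.\data\) (immediate from the multiset invariant, using that within one association an id determines its payload) and \(\emph{sz}(\dg)\le\bucket\). The serial composition gives only \(\emph{sz}(\dg)\le\bucket_2^{j}\le\bcap_2\), so in general \(\mathcal{F}\) must first perform some \(\textsf{tick}\)s to afford the forward — and this is the tension, because those \(\textsf{tick}\)s decrement every delay counter in \(\tbf_3^{j}.\data\), the \(\data_1\)-copies included, and must not cause a spurious expiry that breaks the multiset invariant. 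The resolution, and the reason the abstract composition is defined exactly as it is, is an accounting argument: all the traffic \(\mathcal{F}_1\) has ever forwarded into \(\data_2\) was paid for from \(\mathcal{F}_1\)'s tokens, which grow by only \(\rat_1\) per \(\mathcal{F}_1\)-tick, and a \(\data_1\)-datagram tolerates at most \(\ttl_1\) of \(\mathcal{F}_1\)'s ticks; together these bound how much \(\mathcal{F}_2\) can have forwarded while a given \(\data_1\)-copy is still live, and the \(\ttl_2\) bump — combined with \(\mathcal{F}\) inheriting the bottleneck parameters \(\bcap_2,\rat_2\) and the summed delay bound \(\ttl_1+\ttl_2\) — is exactly what covers it. Making this bound precise, and fixing \(\mathcal{F}\)'s schedule (tick lazily, only when a forward demands it, and pack forwards against the bucket cap) so that the invariant holds literally, is the main obstacle; the rest is bookkeeping, and at \(j=k\) the invariant collapses to the desired \(\tbf_3=\tbf_3^{k}\approx\tbf_1'\oplus\tbf_2'\).
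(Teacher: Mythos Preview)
Your inductive structure and most of the case analysis match the paper's proof sketch, but you mishandle one pair of cases and this creates an artificial obstacle that you then fail to resolve.

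The key point you missed is the asymmetry between the two TBFs in the definition of $\oplus$: the abstract composition inherits its bucket, bucket cap, and rate from $\mathcal{F}_2$, not $\mathcal{F}_1$. This is exactly so that you can keep the abstract TBF's bucket equal to $\bucket_2$ throughout the simulation. The paper therefore treats $\textsf{tick}(\mathcal{F}_2)$ as a $\textsf{tick}$ of the abstract TBF and $\textsf{decay}(\mathcal{F}_2)$ as a $\textsf{decay}$ of the abstract TBF, while $\textsf{tick}(\mathcal{F}_1)$ and $\textsf{decay}(\mathcal{F}_1)$ are noops (plus drops for any expiries). You instead treat both sides ``similarly'' (drops only, deferring token bookkeeping), which immediately decouples the abstract bucket from $\bucket_2$. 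Under the paper's scheme, $\textsf{forward}(\mathcal{F}_2, i)$ is the easy case, not the hard one: the abstract TBF forwards the same datagram at the corresponding index, and it has precisely the tokens needed because its bucket equals $\bucket_2$.

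Your proposed fix --- lazy catch-up ticks plus an accounting argument bounding how many are needed --- does not work. The number of catch-up ticks the abstract TBF owes is governed by how many times $\mathcal{F}_2$ has ticked since the last synchronization, and since the two TBFs tick asynchronously, this is unbounded. The $\ttl_2$ bump on the $\data_1$-copies is a fixed quantity and cannot absorb arbitrarily many deferred abstract ticks. The argument you sketch about $\mathcal{F}_1$'s token budget and $\rat_1$ is beside the point: those parameters do not appear in the abstract TBF at all, so they cannot constrain how many abstract ticks are required. The correct resolution is simply not to defer: tick and decay the abstract TBF in lockstep with $\mathcal{F}_2$, and the forward case collapses to the one-line ``$\textsf{forward}(\mathcal{F}_2, i)$ is matched by $\textsf{forward}(\mathcal{F}_1 \oplus \mathcal{F}_2, j)$ for some $j$'' that the paper gives.
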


\begin{sketch}
We break the problem down into cases, following the transition relation of $\mathcal{F}_1 \triangleright \mathcal{F}_2$.
\begin{description}
    \item $\textsf{tick}(\mathcal{F}_1)$: Equivalent to a noop in $\mathcal{F}_1 \oplus \mathcal{F}_2$, provided no datagrams expire in $\mathcal{F}_1$.
    If something does age out, we can simulate its erasure using a \textsf{drop}.

    \item $\textsf{tick}(\mathcal{F}_2)$: Equivalent to a \textsf{tick} in 
    $\mathcal{F}_1 \oplus \mathcal{F}_2$.

    \item $\textsf{decay}(\mathcal{F}_1)$: Equivalent to a noop in $\mathcal{F}_1 \oplus \mathcal{F}_2$.
    \item $\textsf{decay}(\mathcal{F}_2)$: Equivalent to $\textsf{decay}(\mathcal{F}_1 \oplus \mathcal{F}_2)$.
    \item $\textsf{process}(\mathcal{F}_1, \dlv_1(\dg))$: Equivalent to $\textsf{process}(\mathcal{F}_1 \oplus \mathcal{F}_2, \dlv_{1 \oplus 2}(\dg))$.
    \item $\textsf{process}(\mathcal{F}_2, \dlv_2(\dg))$: Can only happen in conjunction with $\textsf{forward}(\mathcal{F}_1, i)$ where $\data_1[i] = \dg$ (which emits $\dlv_2(\dg)$).  Equivalent to a noop in $\mathcal{F}_1 \oplus \mathcal{F}_2$.
    \item $\textsf{drop}(\mathcal{F}_1, i)$ or $\textsf{drop}(\mathcal{F}_2, i)$: Equivalent to $\textsf{drop}(\mathcal{F}_1 \oplus \mathcal{F}_2, j)$, for some value of $j$.
    \item $\textsf{forward}(\mathcal{F}_1, i)$: Can only happen in conjunction with $\textsf{process}(\mathcal{F}_2, \dlv_2(\dg))$ where $\dg = \data_1[i]$; case covered above.
    \item $\textsf{forward}(\mathcal{F}_2, i)$: Equivalent to $\textsf{forward}(\mathcal{F}_1 \oplus \mathcal{F}_2, j)$ for some value of $j$.
\end{description}
\end{sketch}

A limitation of this result is that we use \textsf{drop}, i.e., nondeterministic loss, to emulate the case where a datagram expires in the first TBF before it can be forwarded to the second.
Although nondeterministic loss is assumed in some models,
such as the model we used in \Chapr{karn-rto}, it may be too expressive in others.
For instance, later, in \Thr{gbn:worst-case}, we examine exclusively losses caused by a sender who transmits
datagrams into a TBF faster than the TBF can deliver them.  In this case, we do not want to include nondeterministic losses, since our goal is to measure just the losses caused by over-transmission.
If we removed nondeterministic loss from our model, we could still prove composition, 
but we would need to assume the two TBFs synchronize in the sense that they \textsf{tick} and \textsf{decay}
at the same time, 
and also, that 
datagrams are not lost due to throttling at the interface between the first and second TBF, 
i.e., \(\rat_1 \leq \rat_2 \land \bcap_1 \leq \bcap_2\).
It is not yet known whether the result can be proven with weaker assumptions.
This problem was first identified by Arun et. al.~\cite{ccac}, in the context of their ``path model'', a.k.a, CCAC.

\section{Formal Definition of the Composite Transition Relation of Go-Back-$N$}\Secl{gbn:model:tran}

Having defined the sender, receiver, and TBF, and their respective transition relations,
we now define the composite transition relation~$\trsys$ for the entire system, and then briefly
discuss its semantics.  We use $\trtbf_s$ to denote the transition relation of $\mathcal{F}_s$
and $\trtbf_r$ to denote the transition relation of $\mathcal{F}_r$.
We encode the state of the entire system using the tuple $\sys = (\sender, \tbf_s, \tbf_r, \rcvd)$,
and as before, we take the convention $\sys' = (\sender', \tbf_s', \tbf_r', \rcvd')$.
Note that $\evt$ could be any event in the model, or $\perp$, 
and we use the convention $\max(\emptyset)=0$.
We build the transition relation piece-by-piece.
Our transition relation explicitly encodes the intuition that two components synchronize on an event which is an input to one and an output of another, but internal events occur asynchronously.

First we define the steps where the sender transmits 
the next packet in its window (with id=$\cur$), or the receiver transmits a cumulative \ack.
The intuition here is that the transmitting component takes a step on its output event,
and the TBF it transmits to reacts synchronously, but the rest of the system stays still.
\begin{equation}
\begin{aligned}
\textsf{senderSnd}(\sys, \evt, \sys') := & \, \exists x \in \text{Str} \,::\, \evt = \snds(\cur, x)
    \land \trsender(\sender, \evt, \sender') \\
    & \land \trtbf_s(\tbf_s, \evt, \tbf_s')
    \land \rcvd = \rcvd' 
    \land \tbf_r = \tbf_r' \\
\textsf{receiverSnd}(\sys, \evt, \sys') := & \, \evt = \sndr(\min(\mathbb{N}_+ \setminus \rcvd), \texttt{ACK})
    \land \sender = \sender'
    \land \tbf_s = \tbf_s'\\
    & \land \trreceiver(\rcvd, \evt, \rcvd')
    \land \trtbf_r(\tbf_r, \evt, \tbf_r') \\
\end{aligned}
\end{equation}
Notice how so long as the receiver ignores out-of-order packets, $\min(\mathbb{N}_+ \setminus \rcvd) = \max(\rcvd) + 1$, under the convention that $\max(\emptyset) = 0$.
Next, we define the step where the sender performs an internal update.  The only internal update of the sender is the timeout, so, this is when the sender ``goes back~$N$''.
\begin{equation}
\begin{aligned}
\textsf{senderInt}(\sys, \evt, \sys') := & \, \evt = \perp \land\, \trsender(\sender, \evt, \sender') \land \tbf_s = \tbf_s' \land \rcvd = \rcvd' \land \tbf_r = \tbf_r' \\
\end{aligned}
\end{equation}
Likewise, we define the internal steps for the two TBFs (where they \textsf{tick}, \textsf{decay}, or \textsf{drop}).
\begin{equation}
\begin{aligned}
\textsf{tbfSint}(\sys, \evt, \sys') := & \, \evt = \perp \land\, \sender = \sender' \land \tbf_s = \tbf_s' \land \rcvd = \rcvd' \land \trtbf_r(\tbf_r, \evt, \tbf_r') \\
\textsf{tbfRint}(\sys, \evt, \sys') := & \, \evt = \perp \land\, \sender = \sender' \land \trtbf_s(\tbf_s, \evt, \tbf_s') \land \rcvd = \rcvd' \land \tbf_r = \tbf_r' \\
\end{aligned}
\end{equation}
Finally, we define the steps where the sender receives an \ack or the receiver receives a packet.
In these, the receiving component and the forwarding TBF both transition, while everything else stays still.
\begin{equation}
\begin{aligned}
\textsf{senderRcv}(\sys, \evt, \sys') := & \, \exists i \in \mathbb{N}_+ \,::\, 
    \evt = \dlvrr(i, \texttt{ACK}) 
    \land \trsender(\sender, \evt, \sender') \\
    & \land \tbf_s = \tbf_s'
    \land \rcvd = \rcvd' 
    \land \trtbf_r(\tbf_r, \evt, \tbf_r') \\
\textsf{receiverRcv}(\sys, \evt, \sys') := & \, \exists i \in \mathbb{N}_+, x \in \text{Str} \,::\, \evt = \dlvrs(i, x)
    \land \sender = \sender' \\
    & \land \trtbf_s(\tbf_s, \evt, \tbf_s')
    \land \trreceiver(\rcvd, \evt, \rcvd') 
    \land  \tbf_r = \tbf_r'\\
\end{aligned}
\end{equation}

Combining all these steps, we get the entire transition relation for the composite system.
\begin{equation}
\begin{aligned}
\trsys(\sys, \evt, \sys') := &
    \, \textsf{senderSnd}(\sys, \evt, \sys')
    \lor 
    \textsf{receiverSnd}(\sys, \evt, \sys') \\
    & \lor
    \textsf{senderInt}(\sys, \evt, \sys')
    \lor
    \textsf{tbfSint}(\sys, \evt, \sys')
    \lor
    \textsf{tbfRint}(\sys, \evt, \sys') \\
    & \lor 
    \textsf{senderRcv}(\sys, \evt, \sys')
    \lor 
    \textsf{receiverRcv}(\sys, \evt, \sys') 
\end{aligned}
\Eql{gbn:composite:tranr}
\end{equation}

The transition system in \Eqr{gbn:composite:tranr} relates a system state $\sys$
to the resulting $\sys'$ after a single component has taken an internal step ($\evt = \perp$) or two components have synchronized on an event (e.g., $\evt = \sndr(i, x)$).
Of course, in a real GB$(N)$ system such events may occur concurrently, e.g., 
if the sender transmits one packet ($\sndr(i, x)$) at the same time that the receiver receives another ($\dlvrs(j, y)$).
Although we do not explicitly model concurrency, 
the semantics of concurrency for the
system we define are clear.
As previously described in \Secr{gbn:model:setup}:
each component (sender, receiver, $\mathcal{F}_s$, and $\mathcal{F}_r$) can execute
at most one state update function at a time; and
two or more components can update at once provided that, if one of the updates outputs an event $e$,
which is an input to another component, the latter must execute its corresponding (external) update
at the same time.

The way we would capture this in our model is with a skipping refinement~\cite{jain2015skipping}.
Essentially, the refinement would map the abstract sequence from the prior example
\begin{equation}
\textsf{senderSnd}(\sys_1, \snds(i, x), \sys_2) \land
\textsf{receiverRcv}(\sys_2, \dlvrs(j, y), \sys_3)
\end{equation}
to the concrete transition $(\sys_1, \{\sndr(i, x),\dlvrs(j, y)\}, \sys_3)$, ``skipping'' 
the intermediate state $\sys_2$.  Note that a necessary but insufficient condition
for these events to be potentially concurrent is that they commute, that is, that the following holds
for some $\sys_2'$.
\begin{equation}
\textsf{receiverRcv}(\sys_1, \dlvrs(j, y), \sys_2')
\land
\textsf{senderSnd}(\sys_2', \snds(i, x), \sys_3)
\end{equation}
However, the properties we prove in this chapter do not relate to the nuances of concurrency, 
so, we leave this refinement to future work.

\section{Formal Efficiency Analysis of Go-Back-$N$}\Secl{gbn:efficiency}

Next, we formally analyze the performance of GB$(N)$.
In this case, what we mean by performance is the efficiency of the system,
that is, the fraction of packets received by the receiver which are
considered useful.
In the context of GB$(N)$, a packet is considered useful if it is (a) not a duplicate
and (b) cumulatively acknowledged.
Thus, in the long run, the efficiency of the system is precisely
\(\max(\rcvd)\)
divided by the number of packets received by the receiver.\footnote{Where, as before, $\rcvd$ is the set of packets delivered to the receiver.  Note, if the receiver is redefined to also buffer out-of-order packets, as discussed in \Secr{gbn:model:receiver}, then the efficiency is \(\min(\mathbb{N} \setminus \rcvd)\) divided by the number of packets received.}
Under the simplifying assumption that every packet is the same size, we prove two results.
First, it is possible for GB$(N)$ to achieve perfect efficiency.
And second, we compute the efficiency of GB$(N)$ in the absence of nondeterministic losses,
token decay, or reordering, under the assumption that the sender transmits at a constant
rate which exceeds the rate at which the sender's TBF $\mathcal{F}_s$ can deliver (leading to losses).
We argue that this second scenario is realistic and explain how it can be avoided by 
carefully configuring the sender relative to the parameters of the TBF.

\begin{theorem}
GB$(N)$ can achieve perfect efficiency of one.
\end{theorem}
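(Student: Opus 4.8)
The plan is to prove this existentially, by exhibiting one particular run of the composite system $\trsys$ (\Eqr{gbn:composite:tranr}) in which the receiver receives every packet exactly once and in order, so that at every finite prefix the quantity $\max(\rcvd)$ equals the number of packets the receiver has received, forcing the efficiency ratio to be identically $1$. First I would fix the TBF parameters so that both channels behave idealistically: for each of $\mathcal{F}_s$ and $\mathcal{F}_r$ take $\ttl = \omega$, take $\dcap$ at least the maximum payload size, and take $\bcap = \rat$ equal to the maximum payload size (so $\rat \le \bcap$, and a single $\textsf{tick}$ replenishes enough tokens to forward any one datagram). The run will use only the deterministic TBF transitions $\textsf{tick}$, $\textsf{process}$, and $\textsf{forward}$, and never $\textsf{decay}$ or $\textsf{drop}$, so there is no token wastage, nondeterministic loss, reordering, or delay.

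Second, I would build the run as a repetition of a single ``round'' that mimics Stop-and-Wait inside the $N$-window; this is legitimate because the receiver's acknowledgment strategy is left nondeterministic in \Secr{gbn:model:receiver}, so it is permitted to $\textsf{sndAck}$ after each individual delivery. Maintaining the loop invariant that at the start of round $k$ (for $k = 1, 2, \ldots$) we have $\ha = \cur = k$ and $\rcvd = \{1, \ldots, k-1\}$, round $k$ proceeds in seven steps: (i) $\textsf{tick}(\mathcal{F}_s)$; (ii) the sender fires $\textsf{advCur}$ with a short payload $x_k$, which is enabled since $\cur = k < k + N = \ha + N$, emitting $\snds(k, x_k)$, which $\mathcal{F}_s$ absorbs via $\textsf{process}$ (possible since the list of $\mathcal{F}_s$ is empty and $\emph{sz}(k, x_k) \le \dcap$), leaving $\hp = k$ and $\cur = k+1$; (iii) $\mathcal{F}_s$ does $\textsf{forward}$ (its bucket holds enough tokens to forward $(k, x_k)$ after step (i)), emitting $\dlvrs(k, x_k)$, and the receiver's $\textsf{rcvPkt}$ adds $k$ to $\rcvd$ because $k = \min(\mathbb{N}_+ \setminus \rcvd)$; (iv) the receiver fires $\textsf{sndAck}$, now emitting $\sndr(k+1, \texttt{ACK})$ since $\min(\mathbb{N}_+ \setminus \rcvd) = k+1$, which $\mathcal{F}_r$ absorbs; (v) $\textsf{tick}(\mathcal{F}_r)$; (vi) $\mathcal{F}_r$ does $\textsf{forward}$, emitting $\dlvrr(k+1, \texttt{ACK})$, and the sender's $\textsf{rcvAck}$ slides the window, since $\ha = k < k+1 \le \hp + 1 = k+1$, setting $\ha := k+1$ and $\cur := \max(\cur, k+1) = k+1$. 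At the end of the round the invariant holds again with $k$ replaced by $k+1$, the receiver has received exactly $k$ packets, all useful, and $\max(\rcvd) = k$.

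Third, I would verify that each of the seven steps instantiates one of the disjuncts of $\trsys$: steps (i) and (v) are the TBF-internal disjuncts $\textsf{tbfSint}$ and $\textsf{tbfRint}$, step (ii) is $\textsf{senderSnd}$, step (iii) is $\textsf{receiverRcv}$, step (iv) is $\textsf{receiverSnd}$, and step (vi) is $\textsf{senderRcv}$, with every other component's state held fixed exactly as those disjuncts require. Hence the run is a genuine trace of the system, and after $k$ rounds the efficiency is $\max(\rcvd)/(\text{packets received}) = k/k = 1$; since this holds for every prefix, GB$(N)$ achieves perfect efficiency.

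The main obstacle is purely the bookkeeping: discharging, at each of the seven steps, the enabling preconditions — that the TBF bucket carries enough tokens whenever $\textsf{forward}$ is invoked, that $\ha < a \le \hp + 1$ on every ACK receipt so $\textsf{rcvAck}$ actually slides the window, and that the receiver always sees its next expected sequence number — all of which reduce to the single loop invariant above. I would also note that the same construction works with the ``wait for $N$ packets'' receiver strategy by batching $N$ packet-delivery rounds before each $\textsf{sndAck}$, but the Stop-and-Wait pattern is the most economical witness and suffices here.
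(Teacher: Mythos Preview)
Your proof is correct and constitutes a valid existential witness, but it takes a different route from the paper's. The paper has the sender transmit a \emph{full window} of $N$ one-byte packets in sequence (each send followed by a $\perp$ step and a delivery), after which the receiver emits a single cumulative ACK of $N{+}1$; thus its witness uses the ``wait for $N$ packets'' receiver strategy that the paper also assumes in its worst-case theorem (\Thr{gbn:worst-case}). You instead exploit the nondeterministic receiver to run Stop-and-Wait inside GB$(N)$, acknowledging after every individual delivery. Both constructions achieve efficiency $1$, and both are legitimate since the model leaves the acknowledgment schedule unconstrained. Your witness is more economical (a one-packet loop invariant rather than a batch of $N$), while the paper's has the virtue of demonstrating perfect efficiency under the \emph{same} receiver strategy used in the over-transmission analysis, which makes the best-case/worst-case comparison in \Secr{gbn:efficiency} cleaner and lets both results be discharged through a common simplified model in the \acls formalization.

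Two cosmetic points: you announce seven steps but enumerate six; and the mapping of $\textsf{tbfSint}$ versus $\textsf{tbfRint}$ is swapped relative to the paper's definitions (there $\textsf{tbfRint}$ fires $\trtbf_s$ and $\textsf{tbfSint}$ fires $\trtbf_r$), so your step (i) should cite $\textsf{tbfRint}$ and step (v) $\textsf{tbfSint}$. Neither affects the substance.
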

\begin{sketch}
Suppose $\ttl_s, \ttl_r > 1$, 
$(x_i)_{i=1}^{N}$ is a sequence of strings,
such that for all $1 \leq i \leq N$, $\text{sz}(x_i) = 1$,
and let $E$ be the following event sequence.
\[
\begin{aligned}
E = \, & \snds(1, x_1), \perp, \dlvrs(1, x_1), \\
    & \snds(2, x_2), \perp, \dlvrs(2, x_2), \\
    & \ldots, \\
    & \snds(N, x_N), \perp, \dlvrs(N, x_N), \\
    & \sndr(N+1, \texttt{ACK}), \perp, \dlvrr(N+1, \texttt{ACK})
\end{aligned}
\]
Let $\sys_0$ be the initial state where 
$\sender = (1,1,1)$, 
$\rcvd = []$, and
$\tbf_a = (0,[])$ for each $a \in \{ s, r \}$.
Let $\sys_N$ be the state which is identical to $\sys_0$ except that
$\sender = (N+1,N,N+1)$, and
$\rcvd = [1,N]$.
Then $\sys_N$ is reachable from $\sys_0$ by following the event sequence $E$.
Moreover, the efficiency of the system between $\sys_0$ and $\sys_N$ is 1,
since $N$ packets were received by the receiver,
and in the end, $\max(\rcvd) = N$.
The general case follows by an induction on this argument.
\end{sketch}

In the real world, datagrams can be lost for a number of reasons.
Datagrams on wireless networks get corrupted due to radio interference (collision)
or weak signals, and are thus automatically dropped~\cite{rayanchu2008diagnosing}.
Another possibility is buggy code, e.g.,
Hoque~et.~al.~\cite{hoque2013adversarial} found a bug in 
    AODV~\cite{ahoddvr} where the first packet in each window transmitted along a previously
    untraveled route was lost by the router due to a mis-ordering of notification events.
More exotically, a compromised router could deliberately drop packets 
    in a targeted fashion to stealthily sabotage communication between some victim 
    computers~\cite{mzrak2008detecting,rayanchu2008diagnosing}.
But in traditional wired networks, 
according to measurement studies, 
the most common kind of loss can be attributed to the
queuing mechanism on the router (e.g. the TBF), 
which drops datagrams as part of its effort to rate-limit~\cite{tang1999network,borella1998internet,borella2000measurement}.\footnote{
In the context of the TBF, the resulting losses are typically geometrically distributed~\cite{yajnik1999measurement}.  Consequently, a geometric loss pattern is assumed in some works that study GB$(N)$ probabilistically, e.g.,~\cite{hasan2008performance}.}
This motivates us to analyze the scenario in which the sender-to-receiver 
TBF ($\mathcal{F}_s$) is overwhelmed with packets, leading to deterministic losses.
In order to understand just the impact of over-transmission on performance,
we assume everything else about the system is ideal, i.e.,
packets are never reordered, \acks are received immediately after being sent, 
$\mathcal{F}_s$ has unbounded delay, etc.

For the over-transmission scenario, 
suppose the sender transmits at some positive integer 
rate $R$, such that
$\rat_s < R < \dcap_s < N$.
Intuitively, this means the sender transmits $R$ packets for every one \textsf{tick}
of $\mathcal{F}_s$ (the sender-to-receiver TBF).
Since the bucket of $\mathcal{F}_s$ refills slower than the sender sends,
we get over-transmission, where the sender is sending into a full TBF and the extra
packets are deterministically lost.
We assume $\rat_s = \bucket_s = \bcap_s$ and $\ttl_s = \omega$,
meaning the bucket refills as quickly as possible and packets do not expire.
We further assume that
while $\cur < \ha + N$, the system progresses through the following pattern:
the sender transmits $R$ packets, all of equal (constant) size, then
$\mathcal{F}_s$ ticks and forwards $\rat_s$ packets to the receiver, then the cycle repeats.
Clearly $\data_s$ fills at a net rate $R - \rat_s$, until it reaches $\dcap_s$.
However, the story is more complicated once the channel fills.
Let $w = (\dcap_s - R)/(R - \rat_s)$,
so, after $w - 1$ bursts of $R$ packets each, $\dcap_s - \emph{sz}(\data_s) = R$.
Then in the next step, $\data_s$ becomes full,
i.e., $\emph{sz}(\data_s) = \dcap_s$, and then $\rat_s$ packets are delivered.
And in the step after that, the first $\rat_s$ packets enter $\data_s$ before
losses begin to occur, after which any subsequent packets that enter $\data_s$ are
out-of-order and therefore ignored by the receiver upon being received.
We assume that before the sender times out, $\mathcal{F}_s$ is able to deliver every packet in $\data_s$.

From this analysis we can draw two conclusions.
First, over-transmission will occur if 
$(w + 1) R + \rat_s < N$, where $w$ is defined as in the previous paragraph.
And second, if over-transmission occurs, the number of packets delivered to the 
receiver will be 
\[R w + R + \rat_s = R(\dcap_s - R)/(R - \rat_s) + R + \rat_s\]
which means the efficiency of the entire system is:
\[(R(\dcap_s - R)/(R - \rat_s) + R + \rat_s)/N\]
Note, strictly speaking we formally verify the theorem with $\emph{sz}(\dg)=1$ for all packets,
but the result clearly scales for any constant datagram size less than the maximum,
by just multiplying $\dcap_s, \rat_s, \bucket_s,$ and $\bcap_s$ by the constant packet size.

\begin{theorem}
Suppose \(\rat_s < R < \dcap_s < N\) such that $R - \rat_s$ divides $\dcap_s - R$.
Further suppose
        \(\ttl_s \notin \mathbb{N}\), 
        \(\ttl_r > 1\), and 
        \(R(\dcap_s - R)/(R - \rat_s) + R + \rat_s < N\).
Let $\sys_0$ be the initial state as before and suppose $E$ is an event sequence of length $k$ such that,
 when the system follows $E$ from $\sys_0$ to $\sys_k$, it does so 
according to the following pattern, repeated an arbitrary number of times.
\begin{enumerate}
    \item The sender transmits $R$ one-byte packets.
    Then $\mathcal{F}_s$ ticks, refilling its bucket, and forwards $\bucket_s$ packets to the receiver, FIFO.
    This repeats until the sender has transmitted its entire window.
    \item $\mathcal{F}_s$ ticks and forwards $\bucket_s$ packets to the receiver, FIFO.  
    This repeats until $\emph{sz}(\data_s) = 0$.
    \item If the receiver has received $N$ packets (FIFO or otherwise, including duplicates)
    since it last transmitted an \ack, it transmits an \ack, $\mathcal{F}_r$ ticks, then $\mathcal{F}_r$ forwards the \ack to the sender.  Otherwise, the sender has a timeout and ``goes back~$N$'', and the process repeats from (1).
\end{enumerate}
The efficiency of the system between $\sys_0$ and $\sys_k$ is
\((R(\dcap_s - R)/(R - \rat_s) + R + \rat_s)/N\).
\Thl{gbn:worst-case}
\end{theorem}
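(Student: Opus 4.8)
The strategy is to follow the system through one \emph{acknowledgment cycle} --- the stretch of the run between two consecutive $\sndr$ events --- and show that during any such cycle the receiver receives exactly $N$ packets while $\max(\rcvd)$ increases by exactly $M := R(\dcap_s - R)/(R-\rat_s) + R + \rat_s$. Taking $\sys_k$ to lie at the end of a whole number $q$ of such cycles (permitted by the clause ``repeated an arbitrary number of times''), the efficiency between $\sys_0$ and $\sys_k$ --- which by definition is $\max(\rcvd)$ at $\sys_k$ divided by the total number of packets the receiver has received --- is then $qM/(qN) = M/N$, the claimed value. The hypothesis $M < N$ is exactly what guarantees that over-transmission genuinely occurs (the useful prefix of each window is a strict prefix), so this is not vacuously the perfect-efficiency case.

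The core of the argument is a characterization of the contents of $\data_s$ during a single transmission of the current window. Put $w = (\dcap_s - R)/(R-\rat_s)$, a positive integer by the divisibility hypothesis, and note $w(R-\rat_s) = \dcap_s - R$ and $(w+1)R + \rat_s = M$. During a window transmission the sender pushes the $N$ window packets in order, $R$ at a time, and after each burst of $R$ a $\textsf{tick}$ forwards $\rat_s$ packets FIFO; the hypotheses $\ttl_s \notin \mathbb{N}$, $\bucket_s = \bcap_s = \rat_s$, and the exclusion of $\textsf{decay}$ and nondeterministic $\textsf{drop}$ mean the only losses are the deterministic ones in $\textsf{process}$ when $\data_s$ is full. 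I would prove, by induction on the burst index, the invariant that $\data_s$ does not overflow during bursts $1,\dots,w+1$ --- before burst $j \le w+1$ there is room $\dcap_s - (j-1)(R-\rat_s) \ge R$ --- so those $(w+1)R$ consecutive ids are all admitted; and that from burst $w+2$ onward $\data_s$ is saturated, so each tick vacates exactly $\rat_s$ bytes which the next burst's first $\rat_s$ packets refill while its remaining $R-\rat_s$ packets are dropped. The first saturated burst extends the admitted prefix contiguously to length $(w+1)R + \rat_s = M$; every later admitted block is separated from the prefix by a gap of $R - \rat_s$ undelivered ids, so $\textsf{rcvPkt}$ discards it (its id exceeds $\min(\mathbb{N}_+ \setminus \rcvd) = M+1$). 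Hence exactly the $M$ lowest-id packets of the window reach the receiver in order, and all others that reach it arrive out of order.

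Next I would close the loop. Since the receiver ignores out-of-order packets and every retransmission after a $\textsf{timeout}$ re-pushes the same window (with $\data_s$ empty at the start of each window transmission, by the drain step (2)), retransmissions add nothing to $\rcvd$; it grows only on the first transmission of each window, by exactly $M$. Once the receiver's since-last-ACK count reaches $N$, step (3) fires: it emits $\sndr(\max(\rcvd)+1, \texttt{ACK})$, which survives the single $\textsf{tick}$ of $\mathcal{F}_r$ because $\ttl_r > 1$ and reaches the sender, which applies $\textsf{rcvAck}$ and slides the window to begin at $\max(\rcvd)+1$, advancing $\ha$ (and $\cur$) by exactly $M$. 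By $M < N$ this is a strict advance of less than a window length, so the next window starts with an empty $\data_s$ in a configuration isomorphic to the previous one, and the cycle repeats. An explicit event sequence $E$ realizing one cycle, together with the concrete initial state $\sys_0$, is spelled out exactly as in the proof of the preceding (perfect-efficiency) theorem, and the general statement follows by induction on $q$.

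I expect the main obstacle to be the combinatorial bookkeeping of the second step: stating and maintaining the precise loop invariant on the list $\data_s$ (which is LIFO under $\textsf{process}$ and FIFO under $\textsf{forward}$), getting the burst-index arithmetic right so the contiguous prefix lands at exactly $M = Rw + R + \rat_s$ rather than an off-by-one neighbor, and --- a subtler point --- verifying that each acknowledgment cycle ends on a clean window boundary so the efficiency ratio is exactly $M/N$ rather than merely tending to it. In \acls this is where the bulk of the proof effort and supporting lemmas would go.
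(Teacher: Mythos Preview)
Your proof is correct and follows essentially the same conceptual route as the paper's: a burst-by-burst analysis showing that after $w=(\dcap_s-R)/(R-\rat_s)$ rounds the channel fills, the next burst saturates it, the burst after that admits only $\rat_s$ contiguous ids before the first gap appears, and hence exactly $M=Rw+R+\rat_s$ packets are delivered in order per window while subsequent (and retransmitted) packets are out of order and ignored. The paper's sketch proceeds identically, phrased as ``after each repetition of step (1), $\text{len}(\data_s)$ increases by $R-\rat_s$,'' then counting to $M$.

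The one difference worth noting is in the mechanization strategy. You plan to carry the loop invariant on $\data_s$ directly in the full four-component system model. The paper instead first defines a \emph{simplified model} (tracking just the list of in-transit ids, $\dcap$, $\texttt{ack}$, $\cur$, $\ha$, $N$), proves that under the all-size-one, all-infinite-delay assumptions the simplification map commutes with the relevant step functions (\textsf{prc}, \textsf{fwd}, \textsf{tick}, and their $R$- and $b$-fold iterates), and then runs the entire burst-counting argument in the simplified model where the channel is just a list of ids. This buys them a cleaner invariant---they prove the channel is literally a ``top-down'' descending interval of length $(j)(R-\rat_s)$ after $j$ steps---and isolates the arithmetic from the record-type bookkeeping of the full \textsf{tbf} structure. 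Your direct approach would work but, as you anticipate in your final paragraph, the invariant on the full $\data_s$ (with its $(t,\dg)$ pairs, LIFO-push/FIFO-pop asymmetry, and payload lengths) is where the formal effort balloons; the paper's bisimulation layer is precisely the device that sidesteps this.
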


\begin{sketch}
Suppose $\sys$ and $R$ are as described in \Thl{gbn:worst-case}.
First, we prove that after each repetition of step (1), $\text{len}(\data_s)$
increases by $R - \rat_s$ packets.
We thus derive that $\dcap_s / (R - \rat_s)$ repetitions of step (1)
suffice to fill $\data_s$ to $R$ less than its capacity, after which, 
the next burst brings $\emph{sz}(\data_s)$ to $\dcap_s - \rat_s$.
In the next burst, the last $R - \rat_s$ transmissions are lost,
meaning all subsequent packet transmissions before the timeout are out-of-order
and therefore, even if they are received by the receiver, the receiver ignores them.
It follows that the total number of delivered packets before the timeout is 
$R(\dcap_s - R)/(R - \rat_s) + R + \rat_s$.
After the next timeout, the process repeats from the start, deterministically, 
over and over, until 
the receiver has received $N$ packets,
at which point it sends an \ack.
Thus, the actual efficiency is \((R(\dcap_s - R)/(R - \rat_s) + R + \rat_s)/N\).
\end{sketch}


To get a sense of how bad performance can be in an over-transmitting scenario,
suppose $\rat_s = R/10$, $\dcap_s = N/10$, and $R = N/20$.  Then the over-transmitting system would
have an efficiency of 199/1800 $\approx 0.11$.  

As explained earlier, this problem can be avoided entirely by configuring
the sender such that \(R(\dcap_s - R)/(R - \rat_s) + R + \rat_s \geq N\) or \(R \leq \rat_s \leq \dcap_s\), in which case,
the over-transmission scenario we describe is impossible.
However, this could be difficult in protocols where the window size or transmission rate
evolves with time, or where the TBF is allowed to change the rate at which it refills its bucket.
In such cases, the system may require a tight coupling of the evolution of the window size with
feedback about the state of the TBF in order to avoid over-transmitting.

If the receiver is modified to also buffer out-of-order packets, then the equality
in \Thr{gbn:worst-case} becomes an inequality, that is, the system achieves an efficiency
$\geq (R(\dcap_s - R)/(R - \rat_s) + R + \rat_s)/N$.
The reason it might be greater is that some out-of-order packets received in a prior window
might fill the gaps in the current one, allowing the cumulative \ack to increase by more
than just the number of in-order packets received in the current window.
However, in our \acls model, we do not formalize the over-transmission scenario 
for such a receiver who buffers out-of-order packets.

\section{Formalization in \acls}\Secl{acl2s}

Our model consists of four components: the sender, receiver, and two TBFs.
In this section, we describe how we model each component in \acls,
the theorems we prove about each and about the overall system,
and the proof strategies we use.  We begin with the sender.

\subsection{Formalization of the Sender in \acls}

Our model relies heavily on the DefData framework for type definitions~\cite{defdata},
which allows us to easily define new types for both data and states.
For example, we define the record type \texttt{sstate} to 
encode the sender's variables and parameters.
\begin{lstlisting}
(defdata sstate ;; window size, high ack, high pkt, next transmission
  (record (N . pos) (hiA . pos) (hiP . pos) (cur . pos)))
\end{lstlisting}
When we enter a record type into the proof state, \acls generates accessor
functions allowing us to read the record's entries.
For example, \texttt{sstate-hiA} is a function which maps an \texttt{sstate}
to its \texttt{hiA} value.
Conversely, given an \texttt{sstate}, we can set one of its values using \texttt{mset}
or multiple values at once with \texttt{msets}.
All three concepts are illustrated in the code snippet below.
Note that \texttt{mset} requires the record as its final argument while \texttt{msets}
requires that the record comes first.
\begin{lstlisting}
(= (sstate-hiA (mset :hiA 3 ss)) 3)
(= (sstate-hiP (msets ss :hiA 3 :N 5)) (sstate-hiP ss))
\end{lstlisting}

The sender evolves according to three update functions: \texttt{rcvAck}
in which it receives an \ack, \texttt{advCur} in which it transmits a packet in the
window (and advances to the next), and \texttt{timeout} in which, after transmitting
an entire window and waiting for an \ack, it times out, and ``goes back~$N$''.
Each function is defined using a \texttt{definecd} block,
which takes the form 
\begin{lstlisting}
(definecd f (arg0 :argT0 arg1 :argT1 ...) :retT :ic (icond) :oc (ocond) (body))
\end{lstlisting}
denoting the function named \texttt{f}
takes as input arguments \texttt{arg0} of type \texttt{argT0},
\texttt{arg1} of type \texttt{argT1}, etc.,
satisfying the precondition \texttt{icond},
and then executes the (terminating) code in \texttt{body},
returning a result of type \texttt{retT} which satisfies the postcondition \texttt{ocond}.
If \acls is unable to prove the postcondition automatically, it can be prompted to the solution using
hints.  The three functions are defined as follows.
\begin{lstlisting}
;; The sender receives an ack, and potentially slides the window.
(definecd rcvAck (ss :sstate ack :pos) :sstate
  (if (<= ack (1+ (sstate-hiP ss))) 
      (b* ((hiA (max (sstate-hiA ss) ack))
           (cur (max (sstate ss) hiA)))
    (msets ss :hiA hiA :cur cur))
    ss))
;; The sender sends and then increments "cur", until the entire window is sent.
(definecd advCur (ss :sstate) :sstate
  :ic (< (sstate-cur ss) (+ (sstate-N ss) (sstate-hiA ss)))
  (let* ((cur (sstate-cur ss))
         (hiP (max (sstate-hiP ss) cur)))
    (msets ss :cur (1+ cur) :hiP hiP)))
;; The sender times out, and "goes back N".
(definecd timeout (ss :sstate) :sstate
  :ic (= (sstate-cur ss) (+ (sstate-N ss) (sstate-hiA ss)))
  (mset :cur (sstate-hiA ss) ss))
\end{lstlisting}

These update functions and, when applicable, their preconditions, naturally
give rise to the transition relation for the sender, \texttt{stranr}.
Note how we can safely assume the \texttt{ack} in \texttt{rcvAck} is \texttt{(sstate-hiA ss1)}
since the resulting \texttt{sstate} is the same regardless.
\begin{lstlisting}
(definecd stranr (ss0 ss1 :sstate) :bool
  (v (== (rcvAck ss0 (sstate-hiA ss1)) ss1)
     (^ (< (sstate-cur ss0) (+ (sstate-N ss0) (sstate-hiA ss0)))
        (== (advCur ss0) ss1))
     (^ (= (sstate-cur ss0) (+ (sstate-N ss0) (sstate-hiA ss0)))
        (== (timeout ss0) ss1))))
\end{lstlisting}

Defining the initial state for the sender is slightly tricky, since we want its variables
to be positive integers, but if \texttt{hiP}>0 then surely the sender has sent a packet.
So, we assume that the sender has already sent one packet, and define the initial state to be
the one where \texttt{hiA}=\texttt{hiP}=1 and \texttt{cur}=2.
\begin{lstlisting}
(defconst *initial-ss-10* (sstate 10 1 1 2))
(definecd initial-ss (N :pos) :sstate (mset :N N *initial-ss-10*))
\end{lstlisting}
When we prove an invariant about the sender, we first prove that the invariant
holds initially, and then show that if it holds in \texttt{ss0}, and \texttt{(stranr0 ss0 ss1)},
then it also holds in \texttt{ss1}.
We prove three non-obvious invariants: (1) \texttt{hiA} $\leq$ \texttt{hiP} + 1,
(2) \texttt{hiA} $\leq$ \texttt{cur} $\leq$ \texttt{hiA} + $N$, and
(3) \texttt{hiA} and \texttt{hiP} are non-decreasing with \texttt{stranr}.
All three go through automatically after the definitions for the update functions and 
\texttt{stranr} are enabled.
As an example, here is the statement of invariant (1).
\begin{lstlisting}
(property (N :pos) 
  (<= (sstate-hiA (initial-ss N)) (1+ (sstate-hiP (initial-ss N)))))

(property (ss0 ss1 :sstate)
    :h (^ (stranr ss0 ss1) (<= (sstate-hiA ss0) (1+ (sstate-hiP ss0))))
    (<= (sstate-hiA ss1) (1+ (sstate-hiP ss1))))
\end{lstlisting}

\subsection{Formalization of the Receiver in \acls}
Next, we formalize the receiver.  This is much simpler than the sender since the only variable the
receiver needs to keep track of is the set of packets delivered to far.  We model this set as a list of positive integers,
and define a function to recognize when an \texttt{ack} is cumulative with respect to the received set.
\begin{lstlisting}
(defdata poss (listof pos))
;; Does rcvd have everything in the range [1, p]?
(definecd has-all (p :pos rcvd :poss) :bool
  (^ (in p rcvd) (v (= 1 p) (has-all (1- p) rcvd))))
;; Is ack a cumulative acknowledgment for the received set ps?
(definecd cumackp (ack :pos rcvd :poss) :bool
  (^ (! (in ack rcvd)) (v (= 1 ack) (has-all (1- ack) rcvd))))
\end{lstlisting}
As a sanity check, we prove that the cumulative \ack is unique, in the sense that if \texttt{(cumackp ack0 rcvd)} and \texttt{(cumackp ack1 rcvd)} then \texttt{(= ack0 ack1)}.  This proof requires two hints: one saying that if \texttt{ack1} were cumulative, this would imply that \texttt{ack0} $\in$ \texttt{rcvd}; and a second saying that, based on the first hint, if both \acks are cumulative then therefore \texttt{ack0} $\centernot{<}$ \texttt{ack1}.  With these, the proof goes through automatically.  

The receiver has two state update functions: one where it sends an \ack and one where it receives a packet.
Only the latter updates the received set.
It is therefore unsurprising that \acls easily dispatches the proof that the received set is non-decreasing under the subset relation.

\subsection{Formalization of the TBF in \acls}
In order to formalize the TBF we first need to define two important data types.
The first, \texttt{nat-ord}, describes the ordinals, namely, the naturals 0, 1, 2, 3, ..., as well as 
infinitely many flavors of infinity~\cite{manolios2004integrating}.
\begin{lstlisting}
(defun nth-ord (n) (if (== n 0) (omega) (1+ n)))
(register-type nat-ord :predicate o-p :enumerator nth-ord)
\end{lstlisting}
The second type we define is the timed datagram, namely, a record containing the contents
of a datagram (a positive integer id and a string payload) as well as an ordinal denoting the 
maximum possible remaining delay before the datagram must be either dropped or
forwarded to its destination.
\begin{lstlisting}
(defdata tdg (record (id . pos) (del . nat-ord) (pld . string)))
(defdata tdgs (listof tdg)) ;; Convenient type for lists of timed datagrams
\end{lstlisting}

With these type definitions out of the way, we next define the state of the TBF.
Like with the sender, we include both constants and variables in the same record.
\begin{lstlisting}
(defdata tbf
  (record (b-cap . pos) ;; bucket capacity (how large can bkt be?)
    (d-cap . pos) ;; link capacity (how many bytes can be in data?)
    (bkt . nat) ;; bucket, which must always be <= b-cap
    (rat . pos) ;; rate at which the bucket refills
    (del . nat-ord) ;; maximum delay of a datagram in data
    (data . tdgs))) ;; data in-transit, must satisfy sz(D) <= d-cap
\end{lstlisting}
The TBF has five update functions: \texttt{tick} which decrements the \texttt{del} on each \texttt{tdg} in \texttt{data},
removing any with \texttt{del}=0, and sets \texttt{bkt} to $\min(\texttt{bkt} + \texttt{rat}, \texttt{b-cap})$;
\texttt{decay} which sets \texttt{bkt} to $\max(0, \texttt{bkt} - 1)$; \texttt{prc} which takes as input the contents of a datagram, and either does nothing if the size of the datagram exceeds the remaining space in \texttt{data}, else, enqueues it in \texttt{data} with \texttt{del} set to the default delay; \texttt{drop} which takes as input some $i <$ the length of \texttt{data}, and removes the corresponding element from \texttt{data}; and \texttt{fwd} which takes the same input, but requires as a precondition that the $i^{\text{th}}$ element of \texttt{data} is not greater in size than \texttt{bkt}, and decrements \texttt{bkt} by the size of the datagram upon removal.  As an example, here is the code for \texttt{fwd}.  Note how we prove using a postcondition that the TBF is limited in how much it can deliver by the value of its \texttt{bkt}, which decrements with the delivery.
\begin{lstlisting}
;; The sz of a (timed) datagram is the length of the payload.
(definecd sz (tdgs :tdgs) :nat
  (match tdgs (() 0) ((tdg . rst) (+ (length (tdg-pld tdg)) (sz rst)))))

(definecd fwd (tbf :tbf i :nat) :tbf
  :ic (^ (< i (len (tbf-data tbf)))
   (<= (length (tdg-pld (nth i (tbf-data tbf)))) (tbf-bkt tbf)))
  ;; Theorem: TBF can only fwd bkt many bytes, and after forwarding, its bkt
  ;; is decremented by the sz of the forwarded datagram.
  :oc (^ (<= (- (sz (tbf-data (fwd tbf i))) (sz (tbf-data tbf))) (tbf-bkt tbf))
         (= (- (sz (tbf-data tbf)) (sz (tbf-data (fwd tbf i))))
            (length (tdg-pld (nth i (tbf-data tbf))))))
  (msets tbf :bkt (- (tbf-bkt tbf)
         (length (tdg-pld (nth i (tbf-data tbf)))))
   :data (remove-ith (tbf-data tbf) i))
  :function-contract-hints (("Goal" :use (:instance remove-ith-decreases-sz
                (tdgs (tbf-data tbf))))))
\end{lstlisting}

In order to prove that the serial composition of two TBFs can be simulated by a single (third) TBF,
we need four ingredients: a function to compute the third TBF,
which we refer to as the \emph{abstract composition} of the original two; a function to determine if
two TBFs are ``equivalent''; and for each function of each TBF in the serial composition,
a theorem equating (under the equivalence definition)
the serial composition after the function is applied, to some operation on the abstract composition.
To begin, we define a type \texttt{(defdata two-tbf (list tbf tbf))} to encode the internal state
of two TBFs serially composed, and an operator \texttt{[+]} to compute the corresponding abstract composition.
The intuition behind the abstract composition definition is explained above, in \Secr{gbn:model:tbf}.
Here, \texttt{(incr-del tdgs del)} adds \texttt{del} to the maximum delay of each timed datagram in \texttt{tdgs}.
\begin{lstlisting}
(definecd [+] (ttbf :two-tbf) :tbf
  (tbf
   (tbf-b-cap (cadr ttbf)) ;; bkt capacity = bkt capacity of the second TBF
   (+ (tbf-d-cap (car ttbf)) (tbf-d-cap (cadr ttbf))) ;; link capacity = sum
   (tbf-bkt (cadr ttbf)) ;; bkt = bkt of the second TBF
   (tbf-rat (cadr ttbf)) ;; rate = rate of the second TBF
   (o+ (tbf-del (car ttbf)) (tbf-del (cadr ttbf))) ;; max delay = sum
   ;; incr the delays on the first data and prepend the result to the second
   (append (incr-del (tbf-data (car ttbf)) (tbf-del (cadr ttbf)))
           (tbf-data (cadr ttbf)))))
\end{lstlisting}

Then we define our equivalence notion, which is that two TBFs are equivalent if they have equal caps and variables,
except for the \texttt{data}s, for which we require that the ids in the former are a permutation of the ids in the latter.
Given two \texttt{poss}s, say, \texttt{ids0} and \texttt{ids1}, the way we show one is a permutation of the other is by
proving that for all \texttt{x} $\in$ \texttt{pos}, the \texttt{count} of \texttt{x} in \texttt{ids0} equals the \texttt{count} of \texttt{x} in \texttt{ids1}.  We refer to this kind of equivalence as \verb|~|\texttt{=}.
Using this notion of equivalence, we dispatch the theorems relating steps of the serial composition to steps of the abstract one with either hints to the automated prover, or a manual proof.  For example, here is the theorem which states that when the first TBF in the serial composition processes a datagram, the result is equivalent to when the abstract composition processes a datagram.
This theorem goes through with 18 proof instructions.
\begin{lstlisting}
(defthm transmission-rule
  (=> (^ (two-tbfp ttbf) (posp p) (stringp pld)
         (<= (+ (sz (tbf-data (car ttbf))) (length pld)) (tbf-d-cap (car ttbf)))
         (<= (sz (tbf-data (cadr ttbf))) (tbf-d-cap (cadr ttbf))))
      (~= ([+] (list (prc (car ttbf) p pld) (cadr ttbf))) 
          (prc ([+] ttbf) p pld))))
\end{lstlisting}

The most tricky is the theorem which says that when in the serial composition a datagram is forwarded from the first TBF
to the second, the result is equivalent to a noop in the second, provided that the datagram is not lost in the process.
The crux of this theorem is the following lemma, which says that when we move an item from one list to another,
the concatenation of the original two lists is a permutation of the concatenation of the latter two.
\acls proves this theorem automatically, after being provided seven hints (two instantiations each of three lemmas, plus a case-split).
\begin{lstlisting}
(property mv-is-a-permutation (ps0 ps1 :tl i :nat p :all)
  :h (< i (len ps0))
  (= (count p (append (remove-ith ps0 i) (cons (nth i ps0) ps1)))
     (count p (append ps0 ps1))))
\end{lstlisting}
After a number of additional (smaller) lemmas, we are able to lift this result to an equivalence theorem on the serial and abstract compositions.

\subsection{Formalization of Efficiency Analysis in \acls}

Originally we proved each efficiency result (best and worst case) separately,
but then when revising the proofs, we realized that the ``worst case'' proof strategy
could be modified to dispatch the best-case result as well.
The key idea is to define a simplified model which is easier to reason about, 
and prove that this simplified model adequately simulates the real system.
\begin{lstlisting}
;; The real system under study.
(defdata system
  (record (sender . sstate) (receiver . poss) (s2r . tbf) (r2s . tbf)))
;; The simplified model.  The channel contains only ids.
(defdata simplified-system
  (record (chan . poss) (d-cap . nat) (ack . pos) 
          (cur . pos) (hiA . pos) (N . pos)))
\end{lstlisting}

To show that we can reason about the \texttt{system} by reasoning about its
simplification, we first show that the map from the former to the latter is preserved
when the sender transmits a packet ...
\begin{lstlisting}
(== (simplify (prc-1 sys x)) (prc-1-simplified (simplify sys)))
\end{lstlisting}
 ... or when the TBF forwards a packet to the receiver ...
\begin{lstlisting}
(== (simplify (fwd-1 sys)) (fwd-1-simplified (simplify sys)))
\end{lstlisting}
... under the appropriate preconditions for each, and with the assumptions that 
every packet has size one (\texttt{(all-1 (tdgs->poss (tbf-data (system-s2r sys))))})
and an unbounded delay value (\texttt{(all-inf (tdgs->poss (tbf-data (system-s2r sys))))}),
and the \texttt{s2r} TBF has unbounded delay (\texttt{(! (natp (tbf-del (system-s2r sys))))}).

Next, we repeat this step for the repetition of each function.  That is, we define a function \texttt{prc-R} which repeats \texttt{prc-1} $R$ times, for some $R \leq \texttt{hiA} + \texttt{N} - \texttt{cur}$, sending a default packet ``p'' each time.
(The choice of char for the payload of the packet does not matter; we use ``p'' arbitrarily.)
We define another function \texttt{fwd-b} which repeats \texttt{fwd-1} $b$ times, for some $b \leq \texttt{b-cap}$; and we define a simplified version of each function.
Then we connect the simplifications to the originals in the same way we did for \texttt{prc-1} and \texttt{fwd-1},
under the assumption that $\bcap_s = \rat_s \leq \dcap_s$.
After this, we define a function \texttt{single-step} which applies \texttt{prc-R}, then makes \texttt{s2r} tick,
before finally applying \texttt{dlv-b}; and we show that so long as \texttt{s2r} has an infinite (non natural) delay,
and the packets in transit satisfy \texttt{all-1} and \texttt{all-inf}, then the simplification of \texttt{single-step} equals \texttt{single-step-simplified} applied to the simplification of the system.
We are then able to prove the best-case efficiency by analyzing \texttt{single-step-simplified}.
\begin{lstlisting}
(property best-case-efficiency (sm :simplified-model R :pos)
  :h (^ (endp (simplified-model-chan sm))
        (<= (+ (simplified-model-cur sm) R)
            (+ (simplified-model-hiA sm) (simplified-model-N sm)))
        (<= R (simplified-model-d-cap sm))
        (= (simplified-model-cur sm) (simplified-model-ack sm))
        (< 1 (simplified-model-cur sm)))
  (^ ;; Preserve input contracts
   (endp (simplified-model-chan (single-step-simplified sm R R)))
   (= (simplified-model-cur (single-step-simplified sm R R))
      (simplified-model-ack (single-step-simplified sm R R)))
   ;; Actual efficiency theorem
   (= (/ R (- (simplified-model-ack (single-step-simplified sm R R))
        (simplified-model-ack sm)))
      1))
\end{lstlisting}

For the worst-case result, we need to reason about multiple steps -- first a series of steps which
fill the channel in the sender-to-receiver direction, then one or more steps that occur in which the channel
overflows and losses occur.
To do this, we lift \texttt{single-step-simplified} to a function \texttt{multi-step-simplified},
which simply repeats \texttt{single-step-simplified} a given number of times.
\begin{lstlisting}
(definecd many-steps-simplified (sm :simplified-model R b steps :pos) :simplified-model
   :ic (^ (<= (+ (simplified-model-cur sm) (* R steps))
              (+ (simplified-model-hiA sm) (simplified-model-N sm)))
          (<= b (min R (simplified-model-d-cap sm)))
          (<= (len (simplified-model-chan sm)) (simplified-model-d-cap sm)))
   (if (= steps 1)
       (single-step-simplified sm R b)
     (many-steps-simplified (single-step-simplified sm R b) R b (1- steps))))
\end{lstlisting}
We define a function to compute the number of repetitions of \texttt{single-step-simplified}
that will be needed to fill the channel to $R$ less than its capacity.
\begin{lstlisting}
(definecd steps-to-fill (R b d-cap :pos) :pos
  :ic (^ (< b R) ;; overtransmission
         (< R d-cap)
         ;; simplifying assumption that R - b divides d-cap - R
         (natp (/ (- d-cap R) (- R b))))
  (/ (- d-cap R) (- R b)))
\end{lstlisting}
We prove that after $(\dcap_s - R)/(R - b)$ \texttt{single-step-simplified}s, 
all the following hold:
\begin{enumerate}[i.]
  \item The \texttt{channel} (which, recall, contains the ids of the packets in \texttt{s2r})
  equals the descending list
  \(
[\texttt{cur}_0 + R (\dcap_s - R)/(R - b) - 1, 
  \texttt{cur}_0 + R (\dcap_s - R)/(R - b) - 2, 
  \ldots]
  \)
  of length $\dcap_s - R$, where $\texttt{cur}_0$ was the \texttt{cur} value before the 
  $(\dcap_s - R)/(R - b)$ steps were taken.
  \item The \texttt{ack} value (i.e., the cumulative acknowledgment the receiver would send next,
  were it to send one) has increased by $\rat_s * (\dcap_s - R) / (R - b)$.
  \item The \texttt{cur} value has increased by $R * (\dcap_s - R) / (R - b)$.
\end{enumerate}
In \acls, this looks like the following.
\begin{lstlisting}
(let* ((warmup-period (steps-to-fill R b (simplified-model-d-cap sm)))
       (many-steps-later (many-steps-simplified sm R b warmup-period)))
       (^ (== (simplified-model-chan many-steps-later)
              (top-dn (+ (simplified-model-cur sm) (* R warmup-period) -1)
                      (* (- R b) warmup-period)))
          (= (simplified-model-ack many-steps-later)
             (+ (* b warmup-period) (simplified-model-ack sm)))
          (= (simplified-model-cur many-steps-later)
            (+ (* R warmup-period) (simplified-model-cur sm)))))
\end{lstlisting}

We then prove two additional theorems, characterizing what happens to the \texttt{channel}, \texttt{cur}, and \texttt{ack} after each of the next
two \texttt{single-step-simplified}s.
In the first, the channel becomes full, and then $\rat_s$ packets are delivered.
In the second, the first $\rat_s$ packets make it into the channel FIFO before losses occur.
Since the \texttt{cur} value increases until a timeout occurs, we are able to show that
no subsequent packet transmissions will be delivered by proving a gap between \texttt{cur}
and the most recently processed value in the channel.

Combining these facts, if the sender transmits $R$ packets into \texttt{s2r},
and $b$ are delivered, then we know the length of \texttt{s2R} increased by $R - b$ up to $\texttt{d-cap}$,
at which point, the invariant that the channel is of top-down form is no longer satisfied.
Thus, $R(\dcap_s - R)/(R - b) + R + b$ total packets make it from the sender to the receiver
FIFO, before losses begin occurring, after which the packets are not FIFO and therefore 
do not get delivered after being received by the receiver.
Since we assume the receiver does not send an \ack until it has received $N$ packets,
it follows that when $R > b$ the efficiency is $(R(\dcap_s - R)/(R - b) + R + b) / N$.
Plugging in $b = \bcap_s = \rat_s$ yields the worst case result.

\section{Related Work}\Secl{gbn:related}

Several prior works analyzed the performance of other ARQ protocols using pen-and-paper 
mathematics~\cite{fayolle1978analytic,easton1980batch,bruneel1994analytic}.
In that vein, Lockefeer et. al. used pen-and-paper mathematics to prove that the selective
acknowledgment (\sack) feature could improve the performance 
of the sliding window mechanism in TCP~\cite{IEEEACMTON02}.
They modeled \sack using the I/O automata formalism of Lynch and Tuttle~\cite{lynch1988introduction},
which is equivalent to our formalism.
Using a refinement argument, they showed that the traces of TCP with \sack
are equivalent to a subset of the traces of a generic specification for an end-to-end reliable message service.
Then, they extended their model to include a notion of time, and showed that in certain worst-case scenarios,
\sack can decrease packet latency by an amount proportional 
to the product of the RTT and the number of packet losses.
This second result had at least two major limitations.
(1) Because they made stronger assumptions than we did, they report that the true worst
case performance of the system could be much worse then what they computed,
if the RTO exceeds the RTT.
As we showed in \Chapr{karn-rto}, even when the RTTs are bounded in the infinite time horizon, 
the RTO may exceed the RTT by as much as the difference in the bounds, which could be considerable.
(2) They only showed that it is \emph{possible} for \sack to improve performance relative to a standard 
cumulative \ack scheme -- they did not show that the performance of \sack is always no worse than that of
the standard scheme.
It is also unclear how precisely they defined the RTT.
As we show in \Chapr{karn-rto},
one cannot simply assume that the ``true'' RTT is identical to the value sampled by Karn's Algorithm,
since in the presence of retransmissions, Karn's Algorithm cannot sample at all.
Moreover, the value sampled by Karn's Algorithm is not necessarily identical to the sum of the average time it takes
for a packet to travel from sender to receiver plus the average time it takes for an \ack to travel
from receiver to sender (a misconception common to several of the prior works we referenced in \Secr{gbn-intro}).
Unfortunately, the authors do not include their timed model for us to check.

The refinement map Lockefeer et. al. used
connected the \emph{send}, \emph{retransmission}, and \emph{receive} buffers to a single queue which abstracted reliable communication~\cite{IEEEACMTON02}.
Our over-transmission proof actually does something similar.
Since we know that in the scenario we analyze, all packet losses occur at transmission time,
given the event sequence we assume the worst-case system follows,
clearly every packet which enters $\mathcal{F}_s$ eventually reaches the receiver.
Therefore, we prove the worst-case performance bounds by defining an invariant which says that the \rcvd set contains $1, \ldots, \ha - 1$ and a postfix of the packets in transit are precisely $\ha, \ldots, \cur - 1$ (where $\ha \leq \cur - 1$),
and then proving that if there are initially zero packets in transit then the invariant holds for $\dcap_s / (R - \rat_s)$ bursts of $R$ packets each.
This proof strategy can be seen as connecting the packets in transit ($\data_s$) to the cumulatively received packets (\rcvd).
An interesting direction for future work is to see if our over-transmission analysis can be simplified using an explicit refinement argument.
However, doing this in \acls may be more challenging than making an analogous argument with pen-and-paper, as Lockefeer et. al. did, because \acls requires the argument to be fully formal.

Works which apply formal methods to congestion control algorithms are also closely related
because these algorithms, for the most part, build on GB$(N)$ by modifying the window size~$N$
(referred to as the congestion window, or \emph{cwnd}) on the fly.
In~\cite{zarchy2017axiomatic}, Zarchy et. al. defined ``axioms'' for congestion control algorithms
characterizing certain fundamental guarantees the algorithms might want to satisfy,
and then showed that some axioms were incompatible with others.
They did not use a formal methods software, but their approach was logically grounded
and fully formal in practice.
Since then, Venkat, Agarwal, and colleagues have published a number of works applying formal methods
to congestion control algorithms: 
proposing a unified formal framework for congestion control algorithm verification~\cite{ccac},
defining and proving the possibility of \emph{starvation} in certain algorithms~\cite{arun2022starvation},
and most recently, automatically synthesizing congestion control algorithms to meet 
formally specified performance guarantees~\cite{agarwaltowards}.
Their formal framework~\cite{ccac} included a TBF in the sender-to-receiver direction, 
albeit, with slightly different features from ours (e.g., no nondeterministic loss).
They proved a composition theorem for their TBF definition but reported that they were unable to handle
the case with unbounded delay (in our model, unbounded $\ttl$).
We were able to dispatch both the bounded and unbounded cases at once, by modeling the $\ttl$
as an ordinal.

To the best of our knowledge, ours is the first work to formally analyze the efficiency of GB$(N)$.
A limitation of our work is that we do not characterize how long sequences of events can take in the real world.
Probably the best way to solve this is by applying something similar to the \emph{symbolic latency} approach proposed by Zhang, Sharma, and Kapritsos~\cite{zhang2023performal}.
The basic idea is to define a happens-before relation on events in the system, which then yields a symbolic calculus for how long a trace could potentially take to execute depending on the distributions of durations of particular events when measured in the wild, and the different ways those events might overlap.
In a related work,
Arashloo, Beckett, and Agarwal suggest an approach to distributed systems testing
where the tests are concrete workloads generated by a synthesizer in response to abstract
queries about possible system performance~\cite{arashloo2023formal}.
We could do something similar by implementing our concurrency model as a happens-before relation and then
defining probability distributions for the durations of time required for different events in the model.
A benefit of doing this in \acls would be the ability to generate workloads ``for free'',
using enumerators~\cite{walter2022enumerative}.

\section{Conclusion}\Secl{gbn:conclusion}

In this chapter we formally modeled the GB$(N)$ protocol over a network with a 
Token Bucket Filter in each direction.  
Since there is no singular, canonical definition of GB$(N)$, we wrote our model in a way
that could capture many plausible variations of the protocol at once.
Using our model, we proved the following theorems.
\begin{enumerate}[{Thm. }1:]
\item Three inductive invariants confirming that the sender updates its internal variables correctly.
\item That the set of packets the receiver has cumulatively received, stored in the receiver's local variable \rcvd, is non-decreasing under the subset relation.
\item The TBF cannot forward more bytes of data than it has tokens to spend.
\item The serial composition of two TBFs can be simulated by a single (larger) TBF.
\item It is possible for GB$(N)$ to achieve perfect efficiency.  
\item A formula for the efficiency of GB$(N)$ when the sender constantly over-transmits,
leading to deterministic losses.
\end{enumerate}

These results provide a first step toward characterizing the performance of
more complex protocols including the sliding window logic in modern TCP implementations like New Reno,
where the window size evolves with time.
In particular, our over-transmission analysis provides insight into how a sender
should be configured, relative to the TBF it transmits into, in order to avoid
deterministic losses.

\setcounter{observation}{0}
\setcounter{definition}{0}
\setcounter{problem}{0}
\setcounter{theorem}{0}

\chapter{Protocol Correctness for Handshakes}\Chapl{handshakes}
\textbf{Summary.}
An important component of every transport protocol is its handshake, i.e., the mechanisms by which it forms and deletes associations.  We explain how handshakes work at a high level and give some examples.  Then, we describe a formal modeling language which allows us to describe protocol handshakes as finite Kripke structures.  We model and write LTL correctness properties for three protocol handshakes: TCP, DCCP, and SCTP.  Our models and properties are carefuly justified based on the corresponding protocol RFC documents.  Using the \spin model checker, we prove that all three handshakes satisfy the correctness properties we write for them, in the absence of an attacker.
These properties have to do with the interactions between the \emph{active} peer, 
who initiates an exchange, and the second peer, who either \emph{passively} responds,
or simultaneously initiates.

Our major results are as follows.
The TCP handshake avoids half-open connections and deadlocks, and its active/passive establishment routine
works as expected.
The DCCP handshake avoids infinite looping behaviors, and supports neither active/active nor passive/passive teardown.
And finally, the SCTP handshake avoids multiple unsafe states which are explicitly precluded in the RFC, responds appropriately to messages, uses its timers when needed, and satisfies numerous additional safety and liveness properties implied by its RFC.

\medskip

This chapter includes work originally presented in the following publications:

\medskip

\noindent~Max von Hippel, Cole Vick, Stavros Tripakis, and Cristina Nita-Rotaru. \emph{Automated attacker synthesis for distributed protocols.} Computer Safety, Reliability, and Security, 2020.
\begin{description}
\item \underline{Contribution:} MvH formalized the problem with help from ST, invented the solution, wrote the proofs, wrote most of the code for the implementation and TCP case study, and wrote most of the paper.
\end{description}

\medskip

\noindent~Maria Leonor Pacheco, Max von Hippel, Ben Weintraub, Dan Goldwasser, and Cristina Nita-Rotaru. \emph{Automated attack synthesis by extracting finite state machines from protocol specification documents.} IEEE Symposium on Security and Privacy, 2022.
\begin{description}
\item \underline{Contribution:} MvH wrote the models and properties, as well as the FSM extraction algorithm (not included in this dissertation).
\end{description}

\medskip

\noindent~Jacob Ginesin, Max von Hippel, Evan Defloor, Cristina Nita-Rotaru, and Michael T{\"u}xen. \emph{A Formal Analysis of SCTP: Attack Synthesis and Patch Verification.} USENIX, 2024.
\begin{description}
\item \underline{Contribution:} MvH co-authored the models and properties and wrote more than half of the paper.
\end{description}

\section{Transport Protocol Handshakes}\Secl{handshakes:intro}

Transport protocols represent the fundamental communication backbone for much of the Internet.  
In the prior two chapters, we showed how provers can be used to verify both inductive
invariants as well as performance bounds of protocols.
Now, we focus on a different aspect of correctness: modeling and 
proving temporal properties of transport protocol \emph{handshakes}.

Each transport protocol has a handshake mechanism, namely, some procedure
by which a sender and a receiver can establish an association before exchanging data,
and tear down the association upon concluding the exchange.
During establishment, a peer who attempts to initiate a handshake is called \emph{active}.
If both peers attempt to initiate the same handshake at once, then they are both called active;
otherwise the responding peer is referred to as \emph{passive}.
Likewise, during teardown, a peer who initiates teardown is referred to as active,
while a peer who responds to a request to tear down an existing association is called passive.
Thus, a handshake might have both active/active and active/passive establishment routines,
as well as potentially both active/active and active/passive teardown routines.
However, passive/passive routines are impossible by definition.
As an example, we illustrate active/passive establishment and teardown for SCTP in
\Figr{handshakes:sctp:activePassive}.

\setlength{\levelheight}{0.8cm}

\begin{figure}[h]
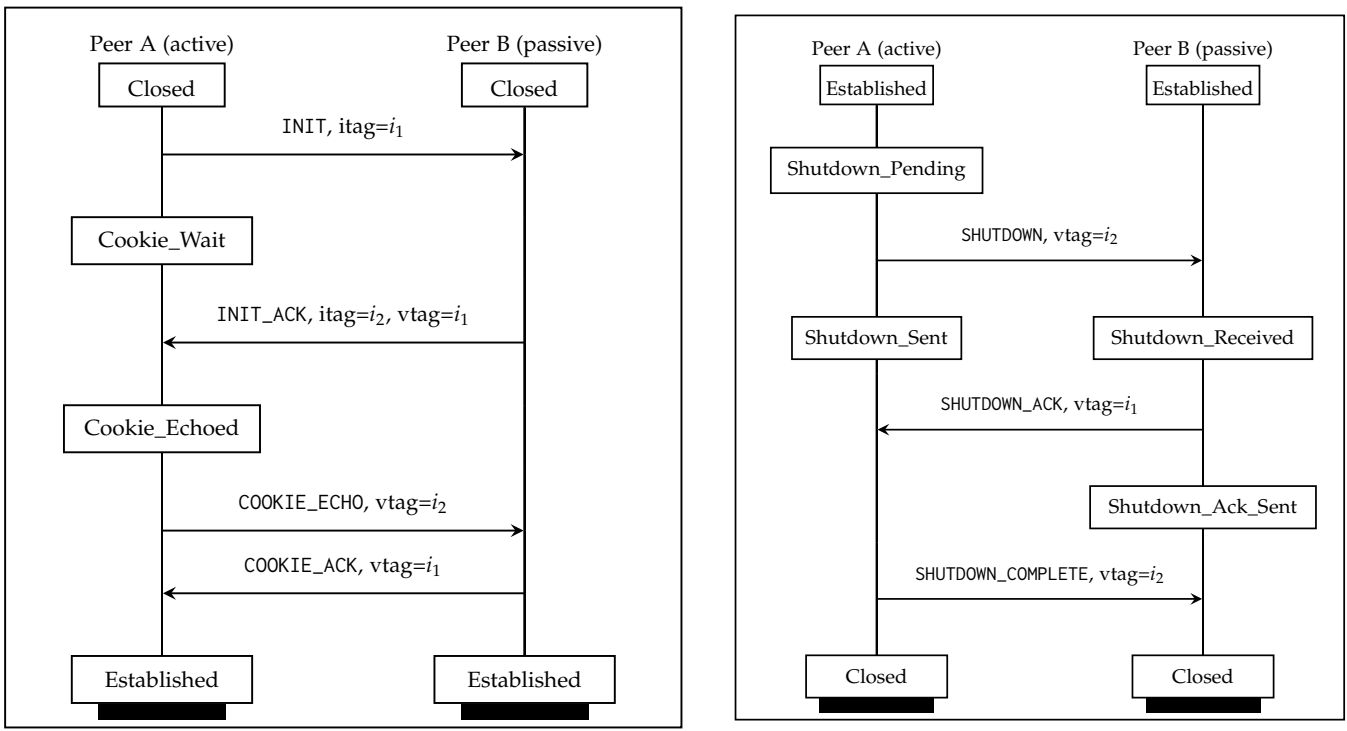

\centering
\begin{minipage}{0.48\textwidth}
\begin{adjustbox}{width=\textwidth,center}
\begin{msc}[head top distance=0.7cm,msc keyword=,left environment distance=2cm,right environment distance=2cm,foot distance=0.1cm, instance distance=3cm, action width=3cm]{}
    \declinst{A}{\scriptsize Peer A (active)}{\scriptsize \Closed}
    \declinst{B}{\scriptsize Peer B (passive)}{\scriptsize \Closed}
    
    \mess{\scriptsize \Init, itag=$i_1$}{A}{B}
    
    \nextlevel
    \action[action width=2.3cm]{\scriptsize \CookieWait}{A}
    \nextlevel
    \nextlevel

    \mess{\scriptsize \InitAck, itag=$i_2$, vtag=$i_1$}{B}{A}
    
    \nextlevel
    \action[action width=2.5cm]{\scriptsize \CookieEchoed}{A}
    \nextlevel
    \nextlevel

    \mess{\scriptsize \CookieEcho, vtag=$i_2$}{A}{B}
    \nextlevel
    \mess{\scriptsize \CookieAck, vtag=$i_1$}{B}{A}

    \nextlevel
    \action[action width=2.3cm]{\scriptsize \Established}{A}
    \action[action width=2.3cm]{\scriptsize \Established}{B}
\end{msc}
\end{adjustbox}
\end{minipage}
\begin{minipage}{0.48\textwidth}
\begin{adjustbox}{width=0.9\textwidth,center}
\begin{msc}[head top distance=0.7cm,msc keyword=,left environment distance=2cm,right environment distance=2cm,foot distance=0.1cm, instance distance=3cm, action width=3cm]{}
    \declinst{A}{\scriptsize Peer A (active)}{\scriptsize \Established}
    \declinst{B}{\scriptsize Peer B (passive)}{\scriptsize \Established}

    \action[action width=3cm]{\scriptsize \ShutdownPending}{A}

    \nextlevel
    \nextlevel

    \mess{\scriptsize \Shutdown, vtag=$i_2$}{A}{B}
    
    \nextlevel
    \action[action width=2.4cm]{\scriptsize \ShutdownSent}{A}
    \action[action width=3.1cm]{\scriptsize \ShutdownReceived}{B}
    \nextlevel
    \nextlevel

    \mess{\scriptsize \ShutdownAck, vtag=$i_1$}{B}{A}
    
    \nextlevel
    \action[action width=3.2cm]{\scriptsize \ShutdownAckSent}{B}
    \nextlevel
    \nextlevel

    \mess{\scriptsize \ShutdownComplete, vtag=$i_2$}{A}{B}
    \nextlevel

    \action[action width=2cm]{\scriptsize \Closed}{A}
    \action[action width=2cm]{\scriptsize \Closed}{B}
\end{msc}
\end{adjustbox}
\end{minipage}
\caption{Message sequence charts illustrating SCTP active/passive association establishment routine (left) and active/passive teardown (right).  Arrows indicate communication direction and time flows from the top down.  We discuss the message components further in \Secr{handshakes:sctp}, but briefly: each message consists of a control message (e.g., $\ShutdownAck$), and optionally a verification or initiate tag (vtag or itag).  The itag is a random integer, and sets the corresponding vtag for the rest of the handshake.}
\Figl{handshakes:sctp:activePassive}
\end{figure}

Typically, RFC documents describe handshakes using message sequence charts (such as \Figr{handshakes:sctp:activePassive}),
as well as finite state machine diagrams (like our \Figr{sctp:fsm}).  
But the way that these illustrations are provided in the RFC documents is often 
vague or imprecise.  Moreover, RFCs rarely explicitly state protocol goals
as logical properties, rather, the goals are left implicit in the high-level protocol
description and use-cases it was ostensibly developed for, or scattered in off-hand comments throughout
the document (which must be manually coalesced to form a cohesive specification).
This status quo creates a situation in which much of the web relies on handshake mechanisms
with vague or unclear requirements and no formal assurance that those requirements,
should they exist, are always met.

Transport protocol handshakes are finite-state in the sense that there are only two participants in a
handshake, each participant moves through a pre-defined finite set of states according
to a common procedure, and the messages the participants send and receive are drawn
from a finite set of control messages.
Because handshakes are finite-state, we can forego theorem proving and instead
analyze them automatically using a model checker.
In this chapter we do exactly that.
We formally model the handshakes of three commonly used transport protocols as finite-state
processes based off a close reading of the respective RFC documents.
Then, we logically formulate temporal properties those protocols should satisfy, again
reading between the lines of the RFCs.
Finally, we use a model checker to prove that the modeled handshakes satisfy the 
transcribed properties, for the system consisting of two protocol peers connecting over
a FIFO channel with a size-1 buffer in each direction (illustrated in \Figr{common:network:model}).
Note that we use the model checker in its exhaustive mode, which is only possible because our
models are relatively small.

The rest of this chapter is organized as follows.
We give an overview of TCP, DCCP, and SCTP in \Secr{intro-tcp-dccp-sctp}.
We formally define the semantics of LTL over finite Kripke structures in \Secr{ks-ltl}.
In \Secr{handshakes:math}, we provide formal definitions of processes and process composition,
allowing us to reduce a handshake involving two participants and a bidirectional channel
to a single finite Kripke structure (which can then be model checked).
Put differently, \Secr{ks-ltl} explains the basics of LTL model checking,
while \Secr{handshakes:math} shows how we can use this framework to analyze a handshake
involving two communicating protocol participants.
Next, we look at three important case studies: TCP (\Secr{handshakes:tcp}),
DCCP (\Secr{handshakes:dccp}), and SCTP (\Secr{handshakes:sctp}).
In each, we give a brief overview of the protocol handshake being studied,
provide a fully formal model and LTL properties the model should satisfy,
and justify our model and properties based off a close reading of the corresponding RFC.
We find that all three models are correct, in the sense that they satisfy all of the
correctness properties we found.
We conclude in \Secr{handshakes:conclusion}.

\section{Overview of TCP, DCCP, and SCTP}\Secl{intro-tcp-dccp-sctp}

TCP was first proposed by Cerf and Kahn in 1974~\cite{vint1974protocol},
    as the singular transport protocol for the Internet, 
    providing reliable, in-order packet delivery -- a contribution for which 
    they were awarded the ACM Turing Award in 2004~\cite{turing}.
Only after researchers began investigating voice-over-IP in the 1970s did it
become clear that this guarantee came with a performance trade-off~\cite{leiner1997past},
    ultimately leading to the split of TCP and Internet Protocol (IP) into separate protocols,
    and the development of the User Datagram protocol (UDP)~\cite{rfc768_udp},
    an unreliable transport protocol designed for time-sensitive applications.
Early applications of TCP included email~\cite{rfc4321}, file transfer~\cite{rfc913}, 
and remote login~\cite{rfc4253}, all of which are still used today.
There are many TCP variants, such as TCP Vegas~\cite{brakmo1995tcp} or Westwood~\cite{mascolo2001tcp},
but all of them use the common handshake described in RFC 9293~\cite{rfc9293_tcp_new}.
In this handshake at least one peer must take an active role during the establishment routine,
and likewise for the teardown routine;
however either peer could switch roles between routines so long as at least one is active.
This is an unusual characteristic not shared by DCCP or SCTP (which we discuss next).

DCCP is canonically specified in RFC 4340~\cite{rfc4340_dccp}.
It is similar to TCP, but does not guarantee in-order message delivery,
and does not support active/active establishment.
On the other hand, it is faster than TCP, and thus appropriate for applications like
telephony or media streaming where speed is more important than reliability.
In contrast to UDP, DCCP provides built-in congestion control features,
without needing to implement them in the application layer.
Note, we do not model congestion control algorithms in this dissertation.
Also in contrast to TCP, the active and passive peers have fixed roles for the lifetime of the
association.

SCTP is a transport protocol
offering features such as multi-homing, 
multi-streaming, and message-oriented delivery.
Among other use-cases, it is the data channel for WebRTC~\cite{webrtcSCTP}, 
which is used by such applications as Facebook Messenger~\cite{fbWebRTC},
Microsoft Teams~\cite{teamsWebRTC},
and Discord~\cite{discordWebRTC}.
The design of SCTP
is described in RFC 9260~\cite{rfc9260}, and implemented in Linux~\cite{linux} and FreeBSD~\cite{freebsd}.
Much like DCCP, SCTP only supports active/passive establishment\footnote{SCTP also supports an initialization routine where both peers are active, called ``initialization collision''.  However, this routine is described in the RFC as an edge-case, rather than an intended use-case.  
},
but unlike DCCP, the active peer during establishment does not need to be active during teardown.
For teardown there are two options: graceful or graceless.
During graceful tear-down, one peer can act actively and the other passively, or they can both take an active role.
Graceless teardown happens in a single step.

\section{Finite Kripke Structures and Linear Temporal Logic}\Secl{ks-ltl}

Next, we provide the semantics of LTL for finite Kripke structures.
Note, we use $2^X$ to denote the power-set of $X$, and
$\omega$-exponentiation to denote infinite repetition, e.g., $a^\omega=a a a \cdots$.

\begin{sloppypar}
\begin{definition}[Finite Kripke Structure]
A \emph{finite Kripke structure} is a tuple 
$K = \langle \text{AP}, S, s_0, T, L \rangle$ with set of \emph{atomic propositions} AP, 
set of \emph{states} $S$, 
\emph{initial state} $s_0 \in S$, \emph{transition relation} $T \subseteq S \times S$, 
and (total) \emph{labeling function} $L : S \to 2^{\text{AP}}$, such that
$\text{AP}$ and $S$ are finite.
\end{definition}
\end{sloppypar}

A \emph{run} of a finite Kripke structure $K$ is any sequence of 
transitions $t_0, t_1, \ldots \in T$ such that
states $s_0, s_1, \ldots$ such that 
$T(s_i, s_{i+1})$ for each $i$.
In other words, a run is a behavior of the structure.
A \emph{trace} of $K$ is the sequence $L(s_0), L(s_1), \ldots$ where $s_0, s_1, \ldots$ is a run.
A trace is an observable behavior of the system.
When reasoning about runs or traces, we use the following (Pythonic) indexing notation.
Given a (zero-indexed) sequence $\nu$, we let $\nu[i]$ denote the $i^{\text{th}}$ element of $\nu$; $\nu[i:j]$, where $i \leq j$, denote the finite infix $(\nu[t])_{t = i}^{j}$; and $\nu[i:]$ denote the infinite postfix $(\nu[t])_{t = i}^{\infty}$; we will use this notation for runs and computations.

\emph{LTL}~\cite{IEEEASFCS77}  is a temporal logic for reasoning about traces of finite Kripke Structures.
The syntax of LTL is defined by the following grammar, where $\U$ means ``until'' and $\X$ means ``next'':
\begin{equation}
\phi ::= \underbrace{p \mid q \mid ...}_{\in \text{AP}} \mid \phi_1 \land \phi_2 \mid \neg \phi_1 \mid \X \phi_1 \mid \phi_1 \U \phi_2
\end{equation}
... where $p, q, ... \in \text{AP}$ can be any atomic propositions, and $\phi_1, \phi_2$ can be any LTL formulae.  Let $\sigma$ be a computation of a finite Kripke structure $K$.  If an LTL formula $\phi$ is true about $\sigma$, we write $\sigma \models \phi$.  On the other hand, if $\neg ( \sigma \models \phi )$, then we write $\sigma \centernot{\models} \phi$.  The semantics of LTL with respect to $\sigma$ are as follows.
\begin{equation}
\begin{array}{lcl}
\sigma \models p & \text{ iff } & p \in \sigma[0] \\
\sigma \models \phi_1 \land \phi_2 & \text{ iff } & \sigma \models \phi_1 \text{ and } \sigma \models \phi_2 \\
\sigma \models \neg \phi_1 & \text{ iff } & \sigma \centernot{\models} \phi_1 \\
\sigma \models \X \phi_1 & \text{ iff } & \sigma[1:] \models \phi_1 \\
\sigma \models \phi_1 \U \phi_2 & \text{ iff } & \big( \exists \, \kappa \geq 0 \, : \, \sigma[\kappa:] \models \phi_2 \text{, and }  \\
                                &              & \, \, \, \forall \, 0 \leq j < \kappa \, : \, \sigma[j:] \models \phi_1 \big) \\
\end{array}
\end{equation}

Essentially, $p$ holds iff it holds at the first step of the computation; the conjunction of two formulae holds if both formulae hold; the negation of a formula holds if the formula does not hold; $\X \phi_1$ holds if $\phi_1$ holds in the next step of the computation; and $\phi_1 \U \phi_2$ holds if $\phi_2$ holds at some future step of the computation, and until then, $\phi_1$ holds.  Standard syntactic sugar include $\lor$, \textbf{true}, \textbf{false}, $\F$ (``eventually''), $\G$ (``globally''), and $\to$ (``implies'').  For all LTL formulae $\phi_1, \phi_2$ and atomic propositions~$p \in \text{AP}$: $\phi_1 \lor \phi_2 \equiv \neg (\neg \phi_1 \land \neg \phi_2)$; $\textbf{true} \equiv p \lor \neg p$; $\textbf{false} \equiv \neg \textbf{true}$; $\F \phi_1 \equiv \textbf{true} \U \phi_1$; $\G \phi_1\equiv \neg \F \neg \phi_1$; and $\phi_1 \to \phi_2 \equiv (\neg \phi_1) \lor (\phi_1 \land \phi_2)$.  We provide some example formulae in \Secr{ltl:examples} in the Appendix.

An LTL formula $\phi$ is called a \emph{safety property} iff it can be violated by a finite prefix of a computation, or a \emph{liveness property} iff it can only be violated by an infinite computation \cite{BaierKatoenBook}.  Every LTL formula is the intersection of a safety property and a liveness property~\cite{alpern1985defining}, and moreover, the decomposition can be done entirely within LTL~\cite{maretic2014ltl}.  For a finite Kripke structure $K$ and LTL formula $\phi$, we write $K \models \phi$ iff, for every computation $\sigma$ of $K$, $\sigma \models \phi$.  For convenience, we naturally elevate our notation for satisfaction on computations to satisfaction on runs, that is, for a run $r$ of a process $K$ inducing a computation $\sigma$, we write $r \models \phi$ and say ``$r$ satisfies $\phi$" iff $\sigma \models \phi$, or write $r \centernot{\models} \phi$ and say ``$r$ violates $\phi$" iff $\sigma \centernot{\models} \phi$.

\section{Formal Setup for Transport Protocol Handshake Models}\Secl{handshakes:math}

We model protocols as interacting \emph{processes}, in the spirit of~\cite{SIGACT17}.
A process is just a Kripke Structure with {\em inputs} and {\em outputs}.
The composition of these processes can be projected onto a finite Kripke structure
amenable to model checking, as we explain shortly.

\begin{sloppypar}
\begin{definition}[Process]
A \emph{process} is a tuple 
$P = \langle \text{AP}, I, O, S, s_0, T, L \rangle$ 
such that $\langle \text{AP}, S, s_0, \{ (s, s') \mid \exists x \in I \cup O \,::\, (s, x, s') \in T \}, L \rangle$
is a finite Kripke structure,
$T \subseteq S \times (I \cup O) \times S$,
and $I \cap O = \emptyset$.
\end{definition}
\end{sloppypar}

The state $s$ is called \emph{reachable} if either it is the initial state or there exists a sequence of transitions \[\big( (s_i, x_i, s_{i+1}) \big)_{i = 0}^m \subseteq T\]
starting at the initial state $s_0$ and ending at $s_{m + 1} = s$.  Otherwise, $s$ is called \emph{unreachable}.

The composition of two processes $P_1$ and $P_2$ is another process denoted $P_1 \parallel P_2$, capturing both the individual behaviors of $P_1$ and $P_2$ as well as their interactions with one another (e.g. \Figr{exampleComposition}).  We define the asynchronous parallel composition operator $\parallel$ with rendezvous communication as in \cite{SIGACT17}.
\begin{definition}[Process Composition]
Let \(P_i = \langle \text{AP}_i, I_i, O_i, S_i, s_0^i, T_i, L_i \rangle\) be processes,
for $i = 1, 2$.  For the composition of $P_1$ and $P_2$ 
(denoted $P_1 \parallel P_2$) to be well-defined,
the processes must have no common outputs,
    i.e., $O_1 \cap O_2 = \emptyset$, 
    and no common atomic propositions,
    i.e., $\text{AP}_1 \cap \text{AP}_2 = \emptyset$.
Then $P_1 \parallel P_2$ is defined below:
\begin{equation}
P_1 \parallel P_2 
= \langle 
\text{AP}_1 \cup \text{AP}_2, 
(I_1 \cup I_2) \setminus (O_1 \cup O_2), 
O_1 \cup O_2, 
S_1 \times S_2, 
(s_0^1, s_0^2), 
T, 
L \rangle
\end{equation}
... where the transition relation $T$ is precisely the set of transitions $(s_1, s_2) \xrightarrow[]{x} (s_1', s_2')$ such that, for $i = 1, 2$, if the label $x \in I_i \cup O_i$ is a label of $P_i$, then $s_i \xrightarrow[]{x} s_i' \in T_i$, else $s_i = s_i'$.  $L : S_1 \times S_2 \to 2^{\text{AP}_1 \cup \text{AP}_2}$ is the function defined as $L(s_1, s_2) = L_1(s_1) \cup L_2(s_2)$.
\end{definition}

Intuitively, we define process composition to capture two primary ideas: (1) \emph{rendezvous communication}, meaning that a message is sent at the same time that it is received, and (2) \emph{multi-casting}, meaning that a single message could be sent to multiple parties at once. We can use so-called \emph{channel} processes to build asynchronous communication out of rendezvous communication (as we do in the next three sections), and we can easily preclude multi-casting by manipulating process interfaces.  Our definition therefore allows for a variety of communication models, making it flexible for diverse research problems.  However, as we explain shortly, in the context of handshakes, we look at one model setup which is common to transport protocols.

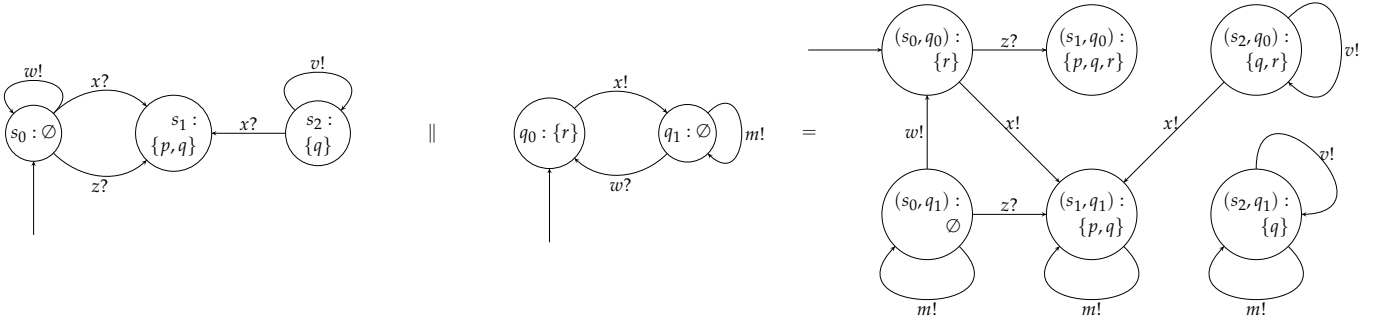
\begin{figure}[h]
\centering
\begin{adjustbox}{max totalsize={1.0\textwidth}{.5\textheight}}
\begin{tikzpicture}
\node[] (empty) {};
\node[draw,circle] (s0) [above=of empty] {\Huge $s_0 : \emptyset$};
\node[draw,circle] (s1) [right=of s0] {\Huge $\begin{aligned}s_1:\\\{ p, q \}\end{aligned}$};
\node[draw,circle] (s2) [right=of s1] {\Huge $\begin{aligned}s_2:\\\{ q \}\end{aligned}$};
\draw[straight] (empty) to (s0);
\draw[straight] (s2) to[above] node {\Huge $x?$} (s1);
\draw[looped] (s0) to[above,out=north east,in=north west,looseness=1] node {\Huge $x?$} (s1);
\draw[looped] (s0) to[below,out=south east,in=south west,looseness=1] node {\Huge $z?$} (s1);
\draw[looped] (s2) to[above,out=north west,in=north east,looseness=4] node {\Huge $v!$} (s2);
\draw[looped] (s0) to[above,out=north east,in=north west,looseness=4] node {\Huge $w!$} (s0);

\node[] (comp) [right=of s2] {\Huge $\parallel$};

\node[draw,circle] (q0) [right=of comp] {\Huge $q_0 : \{ r \}$};
\node[] (empty2) [below=of q0] {};
\node[draw,circle] (q1) [right=of q0] {\Huge $q_1 : \emptyset$};
\draw[straight] (empty2) to (q0);
\draw[looped] (q0) to[out=north east,in=north west,above] node {\Huge $x!$} (q1);
\draw[looped] (q1) to[out=south west,in=south east,below] node {\Huge $w?$} (q0);
\draw[looped] (q1) to[out=north east,in=south east,right,looseness=4] node {\Huge $m!$} (q1);

\node[] (equals) [right=of q1] {\Huge \, $=$};

\node[] (emptyFinal) [above=of equals] {};

\node[draw,circle] (s0q0) [right=of emptyFinal] 
    {\Huge $\begin{aligned}(s_0,q_0):\\\{ r \}\end{aligned}$};
\node[draw,circle] (s1q0) [right=of s0q0]
    {\Huge $\begin{aligned}(s_1,q_0):\\\{ p, q, r \}\end{aligned}$};
\node[draw,circle] (s2q0) [right=of s1q0]
    {\Huge $\begin{aligned}(s_2,q_0):\\\{ q, r \}\end{aligned}$};
\node[draw,circle] (s0q1) [below=of s0q0]
    {\Huge $\begin{aligned}(s_0,q_1):\\\emptyset\end{aligned}$};
\node[draw,circle] (s1q1) [right=of s0q1]
    {\Huge $\begin{aligned}(s_1,q_1):\\\{ p, q \}\end{aligned}$};
\node[draw,circle] (s2q1) [right=of s1q1]
    {\Huge $\begin{aligned}(s_2,q_1):\\\{ q \}\end{aligned}$};

\draw[straight] (emptyFinal) to (s0q0);

\draw[straight] (s0q0) to[above] node {\Huge $z?$} (s1q0);
\draw[straight] (s0q0) to[above] node {\Huge $\, \, x!$} (s1q1);
\draw[looped] (s2q0) to[out=north east,in=south east,looseness=4,right] node {\Huge $v!$} (s2q0);
\draw[straight] (s0q1) to[left] node {\Huge $w!$} (s0q0);
\draw[looped] (s2q1) to[out=north,in=east,looseness=4,right] node {\Huge $v!$} (s2q1);
\draw[straight] (s2q0) to[above] node {\Huge $x! \, \,$} (s1q1);
\draw[straight] (s0q1) to[above] node {\Huge $z?$} (s1q1);

\draw[looped] (s0q1) to[out=south east,in=south west,below,looseness=4] node {\Huge $m!$} (s0q1);
\draw[looped] (s1q1) to[out=south east,in=south west,below,looseness=4] node {\Huge $m!$} (s1q1);
\draw[looped] (s2q1) to[out=south east,in=south west,below,looseness=4] node {\Huge $m!$} (s2q1);
\end{tikzpicture}
\end{adjustbox}
\caption{Left is a process $P$ with atomic propositions $\text{AP} = \{ p, q \}$, inputs $I = \{ x, z \},$ outputs $O = \{ v, w \},$ states $S = \{ s_0, s_1, s_2 \},$ transition relation $T = \{ (s_0, w, s_0), (s_0, x, s_1), (s_0, z, s_1), (s_2, x, s_1), (s_2, v, s_2) \},$ and labeling function $L$ where $L(s_0) = \emptyset$, $L(s_1) = \{ p, q \},$ and $L(s_2) = \{ q \}$.  
Center is a process $Q = \langle \{ r \}, \{ w \}, \{ x, m \}, \{ q_0, q_1 \}, q_0, \{ (q_0, x, q_1), (q_1, m, q_1), (q_1, w, q_0) \}, L_Q \rangle$ where $L_Q(q_0) = \{ r \}$ and $L_Q(q_1) = \emptyset$.  Processes $P$ and $Q$ have neither common atomic propositions ($\{ p, q \} \cap \{ r \} = \emptyset$), nor common outputs ($\{ w, v \} \cap \{ x, m \} = \emptyset$), so the composition $P \parallel Q$ is well-defined.  Right is the process $P \parallel Q$.  Although $P \parallel Q$ is rather complicated, its only reachable states are $(s_0, q_0), (s_1, q_0),$ and $(s_1, q_1)$, and its only run is $r = \big( (s_0, q_0), x, (s_1, q_1) \big), \big( (s_1, q_1), m, (s_1, q_1) \big)^{\omega}$.  Non-obviously, the only computation of $P \parallel Q$ is $\sigma = \{ r \}, \{ p, q \}^{\omega}$.}
\Figl{exampleComposition}
\end{figure}

A state of the composite process $P_1 \parallel P_2$ is a pair $(s_1, s_2)$ consisting of a state $s_1 \in S_1$ of $P_1$ and a state $s_2 \in S_2$ of $P_2$.  The initial state of $P_1 \parallel P_2$ is a pair $(s_0^1, s_0^2)$ consisting of the initial state $s_0^1$ of $P_1$ and the initial state $s_0^2$ of $P_2$.  The inputs of the composite process are all the inputs of $P_1$ that are not outputs of $P_2$, and all the inputs of $P_2$ that are not outputs of $P_1$.  The outputs of the composite process are the outputs of the individual processes.
$P_1 \parallel P_2$ has three kinds of transitions $(s_1, s_2) \xrightarrow[]{z} (s_1', s_2')$.  In the first case, $P_1$ may issue an output $z$.  If this output $z$ is an input of $P_2$, then $P_1$ and $P_2$ move simultaneously and $P_1 \parallel P_2$ outputs $z$.  Otherwise, $P_1$ moves, outputting $z$, but $P_2$ stays still (so $s_2 = s_2'$).  The second case is symmetric to the first, except that $P_2$ issues the output.  In the third case, $z$ is neither an output for $P_1$ nor for $P_2$.  If $z$ is an input for both, then they synchronize.  Otherwise, whichever process has $z$ as an input moves, while the other stays still.

Note that sometimes rendezvous composition is defined to match $s_1 \xrightarrow[]{z?} s_1'$ with $s_2 \xrightarrow[]{z!} s_2'$ to form a {\em silent} transition $(s_1, s_2) \xrightarrow[]{} (s_1', s_2')$, but with our definition the output is preserved, so the composite transition would be $(s_1, s_2) \xrightarrow[]{z!} (s_1', s_2')$.  This allows for \emph{multi-casting}, where an output event of one process can synchronize with multiple input events from multiple other processes.  It also means there are no silent transitions.
A major benefit of multi-casting is that the composition operator can be commutative (up to isomorphism) and associative.

The labeling function $L$ is total as $L_1$ and $L_2$ are total. Since we required the processes $P_1, P_2$ to have disjoint sets of atomic propositions, $L$ does not change the logic of the two processes under composition.
Additionally, $\parallel$ is commutative and associative \cite{SIGACT17}.

Naturally, we can project a process onto a Kripke Structure by removing its inputs and outputs.
That is to say, the projection of a process $P = \langle \text{AP}, I, O, S, s_0, T, L \rangle$ 
is precisely the finite Kripke Structure $K_P = \langle \text{AP}, S, s_0, \{ (s, s') \mid \exists x \in I \cup O \,::\, (s, x, s') \in T \}, L \rangle$.  This is useful because it means we can model a system consisting of multiple interacting components using processes, then compute the composition thereof, project it onto a finite Kripke structure, and model check the result.  In this chapter, we do exactly that for the TCP, DCCP, and SCTP protocol handshakes.

We use a common model setup throughout, which we illustrate in \Figr{common:network:model}.
The setup consists of two protocol peers with isomorphic process logic, each connected to the other
by a unidirectional channel.
The channel has a size one buffer, meaning, it receives a message, and then waits to deliver it.
When we describe a protocol peer we do so generically, for example, saying that it ``sends SYN'' or ``receives ACK'', but
on paper, each output of each peer encodes the identity of the peer who sent it, e.g., peer A could send $\text{SYN}_A$.
This is how we are able to define two peers which apparently have the same inputs and outputs without 
violating our composition definition.

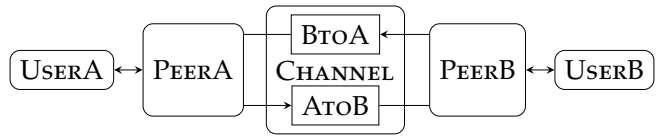
\begin{figure}[h]
\begin{adjustbox}{width=0.45\textwidth,center}
\begin{tikzpicture}
\node[draw, rectangle, rounded corners, minimum width=1.8cm, minimum height=1.8cm, fill=white] (channel) at (3,0) 
    {\small \textsc{Channel}};
\node[draw,rectangle] (AtoB) at (3,-0.5) {\small \textsc{AtoB}};
\node[draw,rectangle] (BtoA) at (3,0.5) {\small \textsc{BtoA}};
\draw[->] (0.5,-0.5) to (AtoB);
\draw[->] (AtoB) to (4.49,-0.5);
\draw[->] (5,0.5) to (BtoA);
\draw[->] (BtoA) to (1.49,0.5);
\node[draw, rectangle, rounded corners, minimum height=1.3cm,fill=white] (sender) at (1,0) 
    {\small \textsc{PeerA}};
\node[draw, rectangle, rounded corners, minimum height=1.3cm,fill=white] (receiver) at (5,0) 
    {\small \textsc{PeerB}};
\node[draw,rectangle,rounded corners,left=0.4cm of sender,align=center] (userA) {\small \textsc{UserA}};
\node[draw,rectangle,rounded corners,right=0.4cm of receiver,align=center] (userB) {\small \textsc{UserB}};
\draw[<->] (userA) to (sender);
\draw[<->] (userB) to (receiver);
\end{tikzpicture}
\end{adjustbox}
\caption{The system $\textsc{UserA} \parallel \textsc{PeerA} \parallel \textsc{Channel} \parallel \textsc{PeerB} \parallel \textsc{UserB}$. Processes are shown in rectangles, and arrows indicate communication direction, i.e., an arrow $A \xrightarrow[]{} B$ indicates that an output of $A$ is an input of $B$.  \textsc{Channel} contains a size-1 FIFO buffer in each direction (AtoB and BtoA, respectively).  The internal buffer is used to model delay.  The user processes are nondeterministic and simply transmit user commands to the peers.  Each of the peers runs the protocol handshake state machine, which takes as input user commands and messages from the other peer.}
\Figl{common:network:model}
\end{figure}

For our analyses of TCP, DCCP, and SCTP, we write properties which relate the current state of each peer
to its prior state (where record-keeping happens after each transition).
On paper, the way this is done is by transforming the process 
\(\langle \text{AP}, I, O, S, s_0, T, L \rangle\)
into 
\(\langle \text{AP} \uplus S^2, I, O, S^2, s_0, T', L' \rangle\)
where
\(T'((s_a, s_b), (s_c, s_d))\) holds iff $s_b = s_c$ and $T(s_b, s_d)$,
and
\(L'((s_a, s_b)) = L(s_b) \cup \{ (s_a, s_b) \}\).
But in \promela, the manipulation is much easier: we simply define the variables
\begin{lstlisting}
int state[2];
int before_state[2];
\end{lstlisting}
and then update them after each transition, e.g., for Peer A, upon transitioning into the state \texttt{DCCP\_Request}:
\begin{lstlisting}
REQUEST:
    before_state[0] = state[0];
    state[0] = RequestState;
\end{lstlisting}

In our DCCP model, we include a boolean state variable \texttt{active} encoding the role of the peer in the current association.  The formal state-space of one peer in the model is the Cartesian product of the list of DCCP state names and the possible values of \texttt{active} (true or false).

We also use \promela's \texttt{timeout} feature in our TCP and DCCP models.
This is a special transition type which allows a transition to occur only when,
if the transition did not exist, the system would deadlock.
The transition is implemented by adding an additional proposition to the global labeling function~$L$,
encoding whether or not the system can progress from its current state without the \texttt{timeout} transition,
and then predicating the transition on the negation of this proposition~\cite{timeouts}.

Another syntactic sugar we use in our diagrams is the notion of implicit states.
Essentially, if a protocol peer first sends message A, then receives message B, before transitioning to a new state,
the formal process needs to transition after A and before B to an implicit state (awaiting B).
When we show protocol models diagrammatically, we elide these states, instead just stating the sequence of send 
and receive operations that must occur in order for the peer to enter its ultimate destination.

Finally, we use so-called $\epsilon$-transitions in all of our models.  These are transitions without inputs or outputs.
On paper, an $\epsilon$-transition can be encoded as a transition which outputs a special symbol which is not
an input to any process in the system (say, $\epsilon$).  We typically leave $\epsilon$-transitions unlabeled when we portray processes diagrammatically.

With these mathematical details out of the way, we next define and
model check three concrete systems: TCP, DCCP, and SCTP.
We describe each model in detail as well as the properties we verify.

\section{Formal Model of the Transmission Control Protocol Handshake}\Secl{handshakes:tcp}

Recall that at least one peer must take an active role in the TCP establishment and teardown routines,
however, the peer which is active during establishment does not need to be the active one during teardown.
The active participant is the one who initiates the routine, by sending a \SYN in the case
of establishment, or a \FIN in the case of teardown.
The full TCP packet type grammar is 
\(\textit{msg} ::= \texttt{SYN} \mid \texttt{ACK} \mid \texttt{FIN}\).
Note, for simplicity, we model the message \texttt{SYN\_ACK} as the pair of messages $\texttt{SYN}, \texttt{ACK}$ and handle both possible orderings.
So, in our model, each message consists of just its type (and nothing else).

Our formal model is illustrated in \Figr{tcp}.
The user and user commands in this model are completely abstracted.
The model has eleven states, described below.
\begin{itemize}
  \item CLOSED -- This is described in the TCP RFC as a ``fictional state'' in which no association exists~\cite{rfc9293_tcp_new}.
  \item LISTEN -- The peer decided to take a passive role during establishment and is waiting to receive a \texttt{SYN} from the active participant.
  \item SYN\_SENT -- The peer decided to take an active role during establishment, sent a \texttt{SYN} to the other participant, and is waiting for either an \texttt{ACK} (indicating the other peer is taking a passive role) or a \texttt{SYN} (indicating the other peer also decided to be active).
  \item SYN\_RECEIVED -- The peer transitioned here from LISTEN or SYN\_SENT after receiving a \texttt{SYN}.
  It expects to receive an \texttt{ACK}, before it transitions to ESTABLISHED.
  \item ESTABLISHED -- The peer has established an association and can communicate.
  \item FIN\_WAIT\_1 -- The peer has begun the active role in the teardown routine.
  \item CLOSE\_WAIT -- The peer has begun the passive role in the teardown routine.
  \item CLOSING -- The peer is half-way through active/active teardown.
  \item FIN\_WAIT\_2 -- The peer is halfway through the active role in active/passive teardown.
  \item TIME\_WAIT -- The peer has completed active teardown and is giving the other participant time to complete its teardown.
  \item LAST\_ACK -- The peer is waiting to receive one last \texttt{ACK} in order to conclude passive teardown.
\end{itemize}

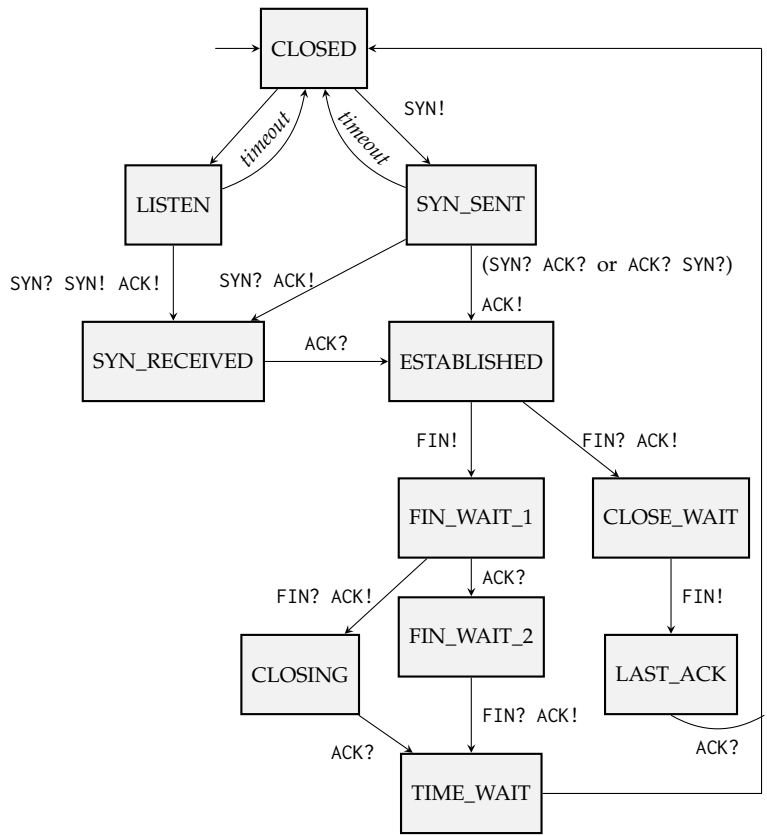
\begin{figure}[H]
\centering
    \begin{tikzpicture}
        \node[state, initial] (CLOSED) {\scriptsize CLOSED};
        \node[state, below right=1cm and 0.5cm of CLOSED] (SYN_SENT) {\scriptsize SYN\_SENT};
        \node[state, below left=1cm and 0.5cm of CLOSED] (LISTEN) {\scriptsize LISTEN};
        \node[state, below=1cm and 0.5cm of LISTEN] (SYN_RECEIVED) {\scriptsize SYN\_RECEIVED};
        \node[state, below=1cm and 0.5cm of SYN_SENT] (ESTABLISHED) {\scriptsize ESTABLISHED};
        \node[state, below=1cm and 0.5cm of ESTABLISHED] (FIN_WAIT_1) {\scriptsize FIN\_WAIT\_1};
        \node[state, below right=1cm and 0.5cm of ESTABLISHED] (CLOSE_WAIT) {\scriptsize CLOSE\_WAIT};
        \node[state, below left=1cm and 0.5cm of FIN_WAIT_1] (CLOSING) {\scriptsize CLOSING};
        \node[state, below=0.5cm of FIN_WAIT_1] (FIN_WAIT_2) {\scriptsize FIN\_WAIT\_2};
        \node[state, below=1cm of CLOSE_WAIT] (LAST_ACK) {\scriptsize LAST\_ACK};
        \node[state, below=1cm of FIN_WAIT_2] (TIME_WAIT) {\scriptsize TIME\_WAIT};

        \draw[->] (CLOSED) edge[left]  (LISTEN);
        \draw[->] (CLOSED) edge[above right, text width=2.2cm] 
                  node{\scriptsize \texttt{SYN!}} (SYN_SENT);
        \draw[->] (LISTEN) edge[left] node{\scriptsize
            \texttt{SYN? SYN! ACK!}} (SYN_RECEIVED);
        \draw[->] (LISTEN) edge[above, bend right] node[rotate=48]{\scriptsize \textit{timeout}} (CLOSED);
        \draw[->] (SYN_SENT) edge[right, text width=3.5cm] node{\scriptsize (\texttt{SYN? ACK?} or \texttt{ACK? SYN?}) \texttt{ACK!}} (ESTABLISHED);
        \draw[->] (SYN_SENT) edge[left] node{\scriptsize \texttt{SYN? ACK!}} (SYN_RECEIVED);
        \draw[->] (SYN_SENT) edge[above, bend left] node[rotate=-48]{\scriptsize \textit{timeout}} (CLOSED);
        \draw[->] (SYN_RECEIVED) edge[above] node{\scriptsize \texttt{ACK?}} (ESTABLISHED);
        \draw[->] (ESTABLISHED) edge[left] node{\scriptsize \texttt{FIN!}} (FIN_WAIT_1);
        \draw[->] (ESTABLISHED) edge[right] node{\scriptsize \texttt{FIN? ACK!}} (CLOSE_WAIT);
        \draw[->] (FIN_WAIT_1) edge[left] node{\scriptsize \texttt{FIN? ACK!}} (CLOSING);
        \draw[->] (FIN_WAIT_1) edge[right] node{\scriptsize \texttt{ACK?}} (FIN_WAIT_2);
        \draw[->] (CLOSE_WAIT) edge[right] node{\scriptsize \texttt{FIN!}} (LAST_ACK);
        \draw[->] (FIN_WAIT_2) edge[right] node{\scriptsize \texttt{FIN? ACK!}} (TIME_WAIT);
        \draw[->] (CLOSING) edge[below left] node{\scriptsize \texttt{ACK?}} (TIME_WAIT);
        \draw (LAST_ACK.south) edge[below,bend right] node[below]{\scriptsize \texttt{ACK?}} 
              ([xshift=1.24cm]LAST_ACK.south);
        \draw[->] (5.95,0) -- (CLOSED);
        \draw (TIME_WAIT) -| (5.95,0);
    \end{tikzpicture}
    \caption{\textbf{TCP Model}.  States are shown in boxes; the initial state is $\CLOSED$ and has an incoming arrow to indicate it is initial.  Transitions are shown in labeled edges between states.  Technically, any transition with more than one event on it actually amounts to multiple transitions in the process, with some implicit states in-between them.  For instance, the transition from \SYNSENT to \ESTABLISHED with label $\SYN? \ACK? \ACK!$ is actually encoded as the sequence of transitions $\SYNSENT \xrightarrow[]{\SYN?} s_a \xrightarrow[]{\ACK?} s_b \xrightarrow[]{\ACK!} \ESTABLISHED$ where $s_a$ and $s_b$ are implicit.}
    \Figl{tcp}
\end{figure}

\section{Properties of the Transmission Control Protocol Handshake}

We derived the following formal correctness properties from RFC 9293~\cite{rfc9293_tcp_new}.
Using \spin, we verified that the system consisting of two TCP participants satisfies all of these properties.

\begin{description}

\item \textbf{$\phi_1$: No half-open connections.}  According to $\mathsection$3.5.1. of the RFC, half-open connections, in which one peer is in \ESTABLISHED while the other is in \CLOSED, are considered anomalous and expected to only occur in the context of crashes (and crash recovery).  Since we do not model crashes, it follows that such scenarios should be impossible in our model.

\item \textbf{$\phi_2$: Passive/active establishment eventually succeeds.}  The RFC describes TCP as enabling peers to reliably exchange information.  But if the peers cannot establish a connection, then this is impossible.  Passive/active establishment is the default establishment mode, so in order for TCP to ``work'' property by default, it should eventually succeed.

\item \textbf{$\phi_3$: Peers don't get stuck.}  The RFC explicitly states that the TCP handshake was designed to avoid deadlocks, in $\mathsection$3.8.6.2.1, 3.9.1.2, and 3.9.1.3.  More generally, a deadlock in the handshake would constitute some kind of crash or DoS.

\item \textbf{$\phi_4$: \SYNREC is eventually followed by \ESTABLISHED.}  Follows from the establishment routines described in 3.5 as well as the Reset Processing logic outlined in 3.5.3.  Intuitively, the property says that the passive peer in active/passive establishment progresses through active/passive establishment.

\end{description}

\section{Formal Model of the Datagram Congestion Control Protocol Handshake}\Secl{handshakes:dccp}

Our formal model is illustrated in \Figr{dccp}.
Note, in DCCP, unexpected messages are automatically dropped.
We implemented this detail in our model but elide it in \Figr{dccp}
to avoid clutter.
The full DCCP packet type grammar is given in Eqn.~\ref{eqn:dccp-msgs}.

\begin{equation}
\begin{aligned}
\emph{msg} ::= & \texttt{DCCP\_REQUEST} \mid \texttt{DCCP\_RESPONSE} \mid 
\texttt{DCCP\_RESET} \mid \texttt{DCCP\_SYNC} \mid \texttt{DCCP\_ACK} \\
& \mid
\texttt{DCCP\_DATA} \mid \texttt{DCCP\_DATAACK} \mid \texttt{DCCP\_CLOSE} \mid
\texttt{DCCP\_CLOSEREQ}
\end{aligned}
\label{eqn:dccp-msgs}
\end{equation}

The model has nine states, described below.
\begin{itemize}
  \item CLOSED -- Much like in TCP, this is described as representing ``nonexistent connections''~\cite{rfc4340_dccp}.
  \item LISTEN -- The beginning of the passive establishment routine.
  \item REQUEST -- The beginning of the active establishment routine.
  \item RESPOND -- Step two of the passive establishment routine.
  \item PARTOPEN -- Step two of the active establishment routine.
  \item OPEN -- Equivalent to ESTABLISHED in TCP, represents the state where an association exists and the peer can communicate data.
  \item CLOSING -- The beginning  of the passive teardown routine.
  Also possible for an active peer if they request immediate teardown.
  \item CLOSEREQ -- The only state in the active teardown routine.
  \item TIMEWAIT -- Similar to the identically named state in the TCP machine.  Represents the final step in the passive teardown routine.
\end{itemize}

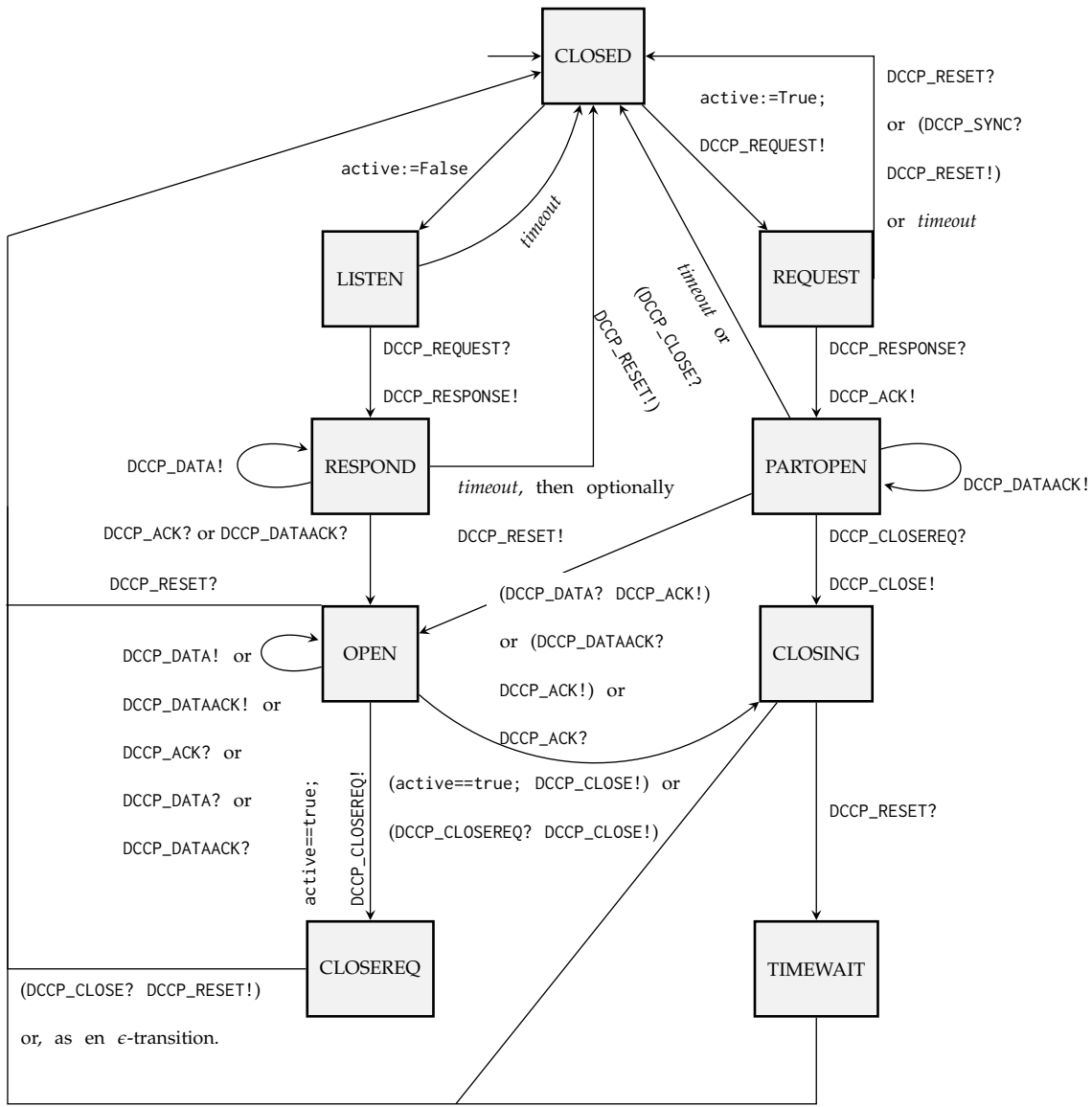
\begin{figure}[H]
\centering
\begin{adjustbox}{width=0.8\textwidth}
\centering
    \begin{tikzpicture}
        \node[state, initial] (CLOSED) {\tiny CLOSED};
        \node[state, below left of=CLOSED] (LISTEN) {\tiny LISTEN};
        \node[state, below right of=CLOSED] (REQUEST) {\tiny REQUEST};
        \node[state, below=1cm and 0.5cm of LISTEN] (RESPOND) {\tiny RESPOND};
        \node[state, below=1cm and 0.5cm of REQUEST] (PARTOPEN) {\tiny PARTOPEN};
        \node[state, below=1cm and 0.5cm of RESPOND] (OPEN) {\tiny OPEN};
        \node[state, below=1cm and 0.5cm of PARTOPEN] (CLOSING) {\tiny CLOSING};
        \node[state, below of=OPEN] (CLOSEREQ) {\tiny CLOSEREQ};
        \node[state, below of=CLOSING] (TIMEWAIT) {\tiny TIMEWAIT};

        \draw[->] (CLOSED) edge[left] node{\tiny \texttt{active:=False}} (LISTEN);
        \draw[->] (CLOSED) edge[above right, text width=1.5cm] 
                       node[xshift=-0.2cm]{\tiny \texttt{active:=True; DCCP\_REQUEST!}} 
              (REQUEST);
        \draw[->] (LISTEN) edge[right,text width=1.5cm] 
                       node{\tiny \texttt{DCCP\_REQUEST? DCCP\_RESPONSE!}} 
              (RESPOND);
        \draw[->] (LISTEN) edge[below, bend right] node[rotate=55]{\tiny \textit{timeout}} 
              (CLOSED);
        \draw[->] (REQUEST) edge[right,text width=1.5cm]
                         node{\tiny \texttt{DCCP\_RESPONSE? DCCP\_ACK!}}
              (PARTOPEN);
        \draw[->] (REQUEST.east) |- node[below right, text width=1.5cm]
                             {\tiny \texttt{DCCP\_RESET?} or
                              (\texttt{DCCP\_SYNC? DCCP\_RESET!}) or
                              \textit{timeout}}
              (CLOSED.east);
        \draw[->] (RESPOND) edge[left=3cm]
                        node[xshift=-0.1cm,yshift=0.3cm]{\tiny \texttt{DCCP\_ACK?} or \texttt{DCCP\_DATAACK?}}
              (OPEN);
        \draw[->] (RESPOND) -| node[below, text width=3cm]
                           {\tiny \textit{timeout}, then optionally
                           \texttt{DCCP\_RESET!}}
              (CLOSED);
        \draw[->] (RESPOND) edge[loop left] node{\tiny \texttt{DCCP\_DATA!}} (RESPOND);
        \draw[->] (PARTOPEN) edge[below,text width=5cm] 
                         node[text width=2.5cm,xshift=0.3cm,yshift=-0.1cm,fill=white]
                         {\tiny (\texttt{DCCP\_DATA? DCCP\_ACK!}) or (\texttt{DCCP\_DATAACK? DCCP\_ACK!}) or \texttt{DCCP\_ACK?}} 
              (OPEN); 
        \draw[->] (PARTOPEN) edge[loop right] node[xshift=-0.1cm,yshift=-0.2cm]
        {\tiny \texttt{DCCP\_DATAACK!}} (PARTOPEN);
        \draw[->] (PARTOPEN) edge[below] node[text width=2cm,rotate=-60,xshift=0.9cm]
                         {\tiny \textit{timeout} or
                          (\texttt{DCCP\_CLOSE? DCCP\_RESET!})} 
              (CLOSED);
        \draw[->] (PARTOPEN) edge[right, text width=1.8cm] 
                         node{\tiny \texttt{DCCP\_CLOSEREQ? DCCP\_CLOSE!}} 
              (CLOSING);

        \draw[->] (OPEN) edge[loop left,text width=1.9cm,below left] 
                     node[xshift=0.5cm,yshift=0.2cm]{\tiny \texttt{DCCP\_DATA!}~or
                                      \texttt{DCCP\_DATAACK!}~or
                                      \texttt{DCCP\_ACK?}~or
                                      \texttt{DCCP\_DATA?}~or
                                      \texttt{DCCP\_DATAACK?}}
              (OPEN);
        \draw[->] (OPEN) edge[left,text width=1.9cm,left]
                     node[rotate=90,xshift=1cm,yshift=0.4cm]{\tiny \texttt{active==true;}
                                      \texttt{DCCP\_CLOSEREQ!}}
              (CLOSEREQ);
        \draw[-] (OPEN.north west) to node[above] {\tiny \texttt{DCCP\_RESET?}} ([xshift=-3.5cm]OPEN.north west);

        \draw[->] (OPEN) edge[bend right=40] node[below,text width=4cm,xshift=-0.2cm]
                                    {\tiny (\texttt{active==true; DCCP\_CLOSE!})~or
                                                 (\texttt{DCCP\_CLOSEREQ?} \texttt{DCCP\_CLOSE!})}
                  (CLOSING);
        \draw (CLOSEREQ.west) -| node[below right,text width=3cm]
                                  {\tiny (\texttt{DCCP\_CLOSE? DCCP\_RESET!})\\ or,
                                       as en $\epsilon$-transition.}
                  (-6.5,-5);
        \draw[->] (CLOSING) to node[right] {\tiny \texttt{DCCP\_RESET?}} (TIMEWAIT);
        \draw (TIMEWAIT) to 
              ([yshift=-1.5cm]TIMEWAIT)
                         -| 
              (-6.5,-2);
        \draw ([xshift=0.2cm]CLOSING.south west) to ([shift=({-4cm,-1.5cm})]TIMEWAIT);
        \draw[->] (-6.5,-2) to (CLOSED);
    \end{tikzpicture}
    \end{adjustbox}
    \caption{\textbf{DCCP Model}.  Note the variable \texttt{active}, used to encode whether the peer is playing the active role or the passive role in the connection.  In practice, this variable doubles the state-space, since the true set of states in the model is the produce of the list of state names and the set $\{ \emph{true}, \emph{false} \}$ of possible assignments of \texttt{active}.}
    \Figl{dccp}
\end{figure}

\section{Properties of the Datagram Congestion Control Protocol Handshake}

We verify all of the following properties of DCCP using \spin.

\begin{description}
\item \textbf{$\theta_1$: The peers don't both loop into being stuck or infinitely looping.}  This is implied by the fact that ``DCCP peers progress through different states during the course of a connection'' ($\mathsection$4.3).  Essentially, a DCCP peer should never transition out of a state and then back into it -- let alone get stuck doing so forever.

\item \textbf{$\theta_2$: The peers are never both in \TIMEWAIT.}  According to $\mathsection$4.3, ``Only one of the endpoints has to enter \TIMEWAIT state (the other can enter \CLOSED state immediately)''.  Moreover, the message sequence charts in the RFC do not show any situation in which both enter \TIMEWAIT at once.  Rather, \TIMEWAIT is described as a mechanism that one peer uses to make sure the other has gracefully closed.

\item \textbf{$\theta_3$: The first peer doesn't loop into being stuck or infinitely looping.}  This is a slightly weaker version of $\theta_1$.

\item \textbf{$\theta_4$: The peers are never both in \CLOSEREQ.}  $\mathsection$4.3 says that a server enters this state from \OPEN.  Since DCCP has no active/active routine, the property logically follows.  That is to say: \CLOSEREQ is only used in active teardown, and only one peer can take the active role during teardown, clearly it cannot be the case that both peers are simultaneously in \CLOSEREQ.

\end{description}

\section{Formal Model of the Stream Control Transmission Protocol Handshake}\Secl{handshakes:sctp}

Our formal model includes timers, out-of-the-blue packet handling, unexpected packet handling, and initiation and verification tags, in addition to the standard handshake state machine logic.

\paragraph{Timers.}
The SCTP connection routines use three timers: Init, Cookie, and Shutdown.
The goal of the Init Timer is to stop the active peer in an establishment routine from getting stuck
waiting forever for the passive peer to respond to its \Init with an \InitAck.
The goal of the Cookie Timer is similar: it stops that same active peer from getting stuck waiting forever for the passive peer to respond to its \CookieEcho.  
The Shutdown Timer plays a similar role but in the teardown routine, stopping the active peer in teardown from getting stuck waiting for a \ShutdownAck.
In our model, each timer is modeled using a boolean variable which is set to \emph{true} iff the timer is enabled.
Whenever a timer is set to \emph{true}, the corresponding nondeterministic ``timeout'' transition is enabled.

\paragraph{Out-of-the-Blue Packet Handling.}
In SCTP a message is considered \emph{out-of-the-blue} (OOTB) if the recipient cannot determine to which association the message belongs, i.e., if it has an incorrect vtag, or is an \Init with a zero-valued itag.
Specifically, an OOTB message will be discarded if:
1) it was not sent from a unicast IP, 
2) it is an \Abort with an incorrect vtag, 
3) it is an \Init with a zero itag or incorrect vtag\footnote{Per RFC 4960, respond with an \Abort having the vtag of the current association.  But per RFC 9260, discard it.},
4) it is a \CookieEcho, \ShutdownComplete, or \CookieError, 
    and is either unexpected in the current state or has an incorrect vtag, or
5) it has a zero itag or incorrect vtag.
We model each of these checks on every \emph{receive} event.
Therefore, in \Figr{sctp:fsm}, the notation $X?$ is shorthand for $X? \land \neg \text{OOTB}(X)$.

\paragraph{Unexpected Packet Handling.}
A message is \emph{unexpected} if it is not OOTB, but nevertheless, the recipient does not expect it.
SCTP handles unexpected packets as described in Algr.~\ref{alg:pktHandler}.

\begin{algorithm}
\caption{Unexpected Packet Handling}
\label{alg:pktHandler}
\begin{algorithmic}
\REQUIRE Unexpected \textit{msg}
\IF{\textit{msg.chunk} = \Init}
    \IF{\textit{state} = \CookieWait \OR \textit{msg} does not indicate new addresses added}
        \STATE Send \InitAck with \textit{vtag} = \textit{msg.itag}
    \ELSE
        \STATE Discard \textit{msg} and send \Abort with \textit{vtag} = \textit{msg.itag}
    \ENDIF
\ELSIF{\textit{msg.chunk} = \CookieEcho}
    \IF{\textit{msg.timestamp} is expired}
        \STATE Send \CookieError
    \ELSIF{\textit{msg} has fresh parameters}
        \STATE Form a new association
    \ELSE 
        \STATE Set \textit{vtag} = \textit{msg.vtag} {\color{gray}{\texttt{// initialization collision}}}
        \STATE \textbf{goto} Established
    \ENDIF
\ELSIF{\textit{msg.chunk} = \ShutdownAck}
    \STATE Send \ShutdownComplete with \textit{vtag} = \textit{msg.vtag}
\ELSE
    \STATE Discard \textit{msg}
\ENDIF
\end{algorithmic}
\end{algorithm}

\paragraph{Packet Verification and Invalid Packet Defenses.}
We model each SCTP message as consisting of a message chunk, 
a vtag, and an itag.
Each of these components are modeled using enums, which in \textsc{Promela} are called \texttt{mtype}s.
The message chunk denotes the meaning of the message, e.g., a message with an \Init chunk is called an \emph{initiate message} and is used to initiate a connection establishment routine. 
The itag and vtag are used to verify the authenticity of the sender of the message.  In our model there are three kinds of tags: expected (\texttt{E}), unexpected (\texttt{U}), or none (\texttt{N}).  A tag is expected if (1) it is a non-zero itag on an \Init or \InitAck chunk, or (2) it is the other peer's vtag in the existing association.  Otherwise, it is unexpected.  The none type is reserved for packets that do not carry the given tag type -- e.g., only \Init and \InitAck chunks carry an itag, so in the other types of messages, the itag is \texttt{N}. The BNF grammar for messages in our model is given below.
\begin{equation}
\begin{aligned}
\textit{msg} & ::= \Init,\texttt{N},\textit{ex} \mid
                 \InitAck,\textit{ex},\textit{ex} \mid
                 \textit{ach},\textit{ex},\texttt{N} \\
\textit{ach} & ::= \Abort \mid \Shutdown \mid \ShutdownComplete \\
             & \mid \CookieEcho \mid \CookieAck \mid \ShutdownAck \\
             & \mid \CookieError \mid \Data \mid \DataAck \\
\textit{ex} & ::= \texttt{E} \mid \texttt{U}
\end{aligned}
\end{equation}
We also support an option where the \textit{msg} can be extended with a TSN.

Upon receiving a message, our model checks that the tags are set as expected,
depending on the message and state.
If a message has an unexpected tag then
the model employs the defenses specified in the RFC, e.g., silently discarding the message or responding with an \Abort. 

\paragraph{State Machine.}
After implementing the timers, OOTB logic, and unexpected packet handling, our SCTP model can be described
by the state machine illustrated in \Figr{sctp:fsm}.
Our SCTP model implements active/passive establishment and teardown, as well as active/active teardown, but not active/active establishment (a.k.a. ``\Init collision''), precisely as described previously and illustrated in \Figr{handshakes:sctp:activePassive}, with the caveat that the itag and vtag are abstracted (as described above).
We also capture the TSN proposal and use throughout an association, although this feature can be
turned off in our model to reduce the state-space
for more efficient verification.
Our model has eight states, described below.
\begin{itemize}
    \item \Closed -- Same as in TCP or DCCP, represents the state where no association exists.
    However, unlike in TCP or DCCP, in SCTP the \Closed state has a self-loop which acknowledges an \Init
    message with an \InitAck.  This allows SCTP to achieve passive establishment in a single transition from \Closed to \Established.
    \item \CookieWait -- Step one in the active establishment routine.
    \item \CookieEchoed -- Halfway through the active establishment routine.
    \item \Established -- An association exists and the peer and communicate data.
    \item \ShutdownReceived -- First step in passive teardown.
    \item \ShutdownPending -- First step in active teardown.
    \item \ShutdownAckSent -- Halfway through passive teardown, or an active role in active/active teardown.
    \item \ShutdownSent -- Halfway through active teardown.
\end{itemize}

\begin{figure}[H]
\begin{adjustbox}{width=0.9\textwidth,center}
\begin{tikzpicture}
\begin{scope}[every node/.style={rectangle,draw,minimum width=2em,minimum height=2em}]
    \node (CLOSED) at (1.75,5.36) {\tiny \Closed}; 
    \node (COOKIEWAIT) at (-1.5,3.3) {\tiny \CookieWait};
    \node (COOKIEECHOED) at (-1.5,1) {\tiny \CookieEchoed};
    \node (ESTABLISHED) at (1.75,-0.4) {\tiny \Established};
    \node (SHUTDOWNRCVD) at (6,1) {\tiny \ShutdownReceived};
    \node (SHUTDOWNPENDING) at (9,1) {\tiny \ShutdownPending};
    \node (SHUTDOWNSENT) at (9,3.3) {\tiny \ShutdownSent};
    \node (SHUTDOWNACKSENT) at (6,3.3) {\tiny \ShutdownAckSent};

\begin{scope}[>={stealth[black]},
              every node/.style={fill=white,rectangle},
              every edge/.style={draw=black,thick}]
    \path [->] (CLOSED) edge [loop above] 
              node[yshift=0.05cm] {\tiny \Init,\texttt{N},\texttt{E}? \InitAck,\texttt{E},\texttt{E}!} 
              (CLOSED);
    \path [->] (-0.5,5.36) edge 
              (CLOSED);
    \path [->] (CLOSED) edge
              node[left,xshift=-0.4cm] {\tiny \UserAssoc? \Init,\texttt{N},\texttt{E}!}
              (COOKIEWAIT);
    \path [->] (CLOSED) edge
              node[text width=1cm,yshift=-0.3cm,xshift=0.7cm,fill=none] 
              {\tiny \CookieEcho,\texttt{E},\texttt{N}?\\\CookieAck,\texttt{E},\texttt{N}!}
              (ESTABLISHED);
    \path [->] (COOKIEWAIT) edge 
              node[below right,text width=2cm,yshift=0.4cm,fill=none] 
              {\tiny \InitAck,\texttt{E},\texttt{E}?\\\CookieEcho,\texttt{E},\texttt{N}!} 
              (COOKIEECHOED);
    \path [->] (COOKIEECHOED) edge 
              node[yshift=0.1cm,xshift=-0.5cm,text width=1.1cm] {\tiny \CookieAck,\texttt{E},\texttt{N}?} 
              (ESTABLISHED);
    \path [->] (COOKIEECHOED) edge [loop left]
               node[below,yshift=-0.35cm,xshift=0.1cm] 
               {\tiny \CookieError,\texttt{E},\texttt{N}?}
               node[below,yshift=-0.7cm,xshift=0.1cm] 
               {\tiny \textit{then optionally,} \Init,\texttt{N},\texttt{E}!}
               (COOKIEECHOED);
    \path [->] (COOKIEECHOED.north west) edge [bend left]
               node[above left,xshift=-0.1cm] {\tiny\CookieError,\texttt{E},\texttt{N}?}
               node[below left,xshift=-0.1cm] {\tiny \Init,\texttt{N},\texttt{E}!}
               (COOKIEWAIT.south west);
    \path [->] (ESTABLISHED) edge [out=east,in=south,looseness=0.5] 
              node[xshift=-1cm,yshift=0.2cm,fill=none] {\tiny \UserShutdown?} 
              (SHUTDOWNPENDING);
    \path [->] (ESTABLISHED) edge 
              node[xshift=1.1cm,fill=none] {\tiny \Shutdown,\texttt{E},\texttt{N}?} 
              (SHUTDOWNRCVD.south west);
    \path [->] (SHUTDOWNPENDING) edge 
              node[yshift=-0.2cm,xshift=1cm] {\tiny \Shutdown,\texttt{E},\texttt{N}!} 
              (SHUTDOWNSENT);
    \path [->] (SHUTDOWNSENT) edge [bend right] 
              node[xshift=2.5cm,text width=3cm,fill=none] 
              {\tiny \ShutdownAck,\texttt{E},\texttt{N}?\\\ShutdownComplete,\texttt{E},\texttt{N}!}
              (CLOSED);
    \path [->] (SHUTDOWNSENT) edge 
              node[below,xshift=0.1cm,yshift=-0.4cm] {\tiny \Shutdown,\texttt{E},\texttt{N}?} 
              node[below,yshift=-0.7cm] {\tiny \ShutdownAck,\texttt{E},\texttt{N}!} 
              (SHUTDOWNACKSENT);
    \path [->] (SHUTDOWNRCVD) edge 
              node[below] {\tiny \ShutdownAck,\texttt{E},\texttt{N}!} 
              (SHUTDOWNACKSENT);
    \path [->] (SHUTDOWNACKSENT) edge [bend right] 
              node[above,xshift=-0.9cm,yshift=-1.5cm,text width=2cm,fill=none] 
              {\tiny \texttt{SHUTDOWN\_}\\\texttt{COMPLETE},\texttt{E},\texttt{N}?\\\textit{or} (\ShutdownAck,\texttt{E},\texttt{N}?\\\ShutdownComplete,\texttt{E},\texttt{N}!)} 
              (CLOSED);
\end{scope}
\end{scope}

\begin{scope}[on background layer]
    \fill[white] (0,0) rectangle (5,5);
  \end{scope}
\end{tikzpicture}
\end{adjustbox}
\caption{SCTP Model.  $x,v,i?$ (or $x,v,i!$) denotes receive (or send) chunk~$x$ with vtag~$v$ and itag~$i$.  Events in multi-event transitions occur in the order they are listed.  Logic for OOTB packets, \Abort messages or \UserAbort commands, unexpected user commands, and data exchange are ommitted but faithfully implemented in the model and described in this chapter.}
\Figl{sctp:fsm}
\end{figure}
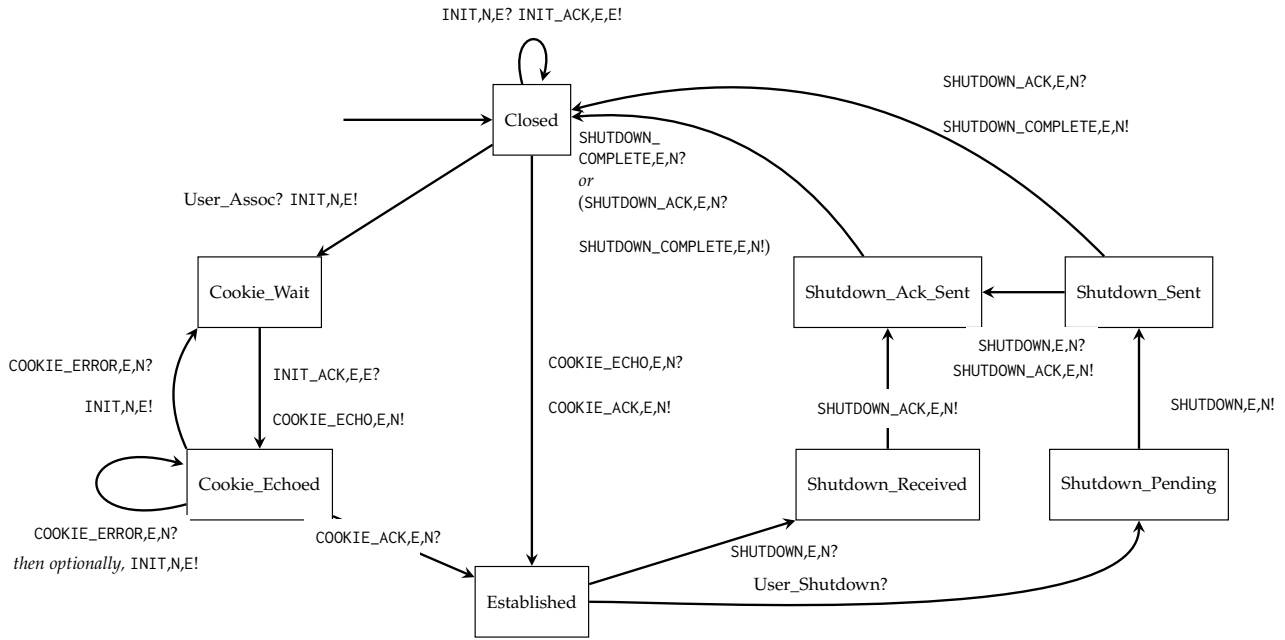

\section{Properties of the Stream Control Transmission Protocol Handshake}
We verify all ten of the following properties for our SCTP model.
\begin{description}

\item \textbf{$\gamma_1$: A peer in \Closed either stays still or transitions to \Established or \CookieWait.}
This is based on the routine described in $\mathsection$5.1, as well as the Association State Diagram in $\mathsection$4.  If a closed peer could transition to any state other than \Established or \CookieWait, it could de-synchronize with the other peer, breaking the four-way handshake and potentially leading to a deadlock, livelock, or other problem.
\item \textbf{$\gamma_2$: One of the following always eventually happens: the peers are both in \Closed, the peers are both in \Established, or one of the peers changes state.}
The property we want to capture here, ``no half-open connections'', is stated in $\mathsection$1.5.1,
was verified in the related work by Saini and Fehnker~\cite{saini2017evaluating},
and was studied for TCP in two prior works~\cite{von2020automated,pacheco2022automated}.
But we have to formalize it subtly, because 
in the case of an in-transit \Abort,
it is possible for one peer to temporarily be in \Established while the other is in \Closed; so we write it as a liveness property, saying half-open states eventually end.
\item \textbf{$\gamma_3$: If a peer transitions out of \ShutdownAckSent then it must transition into \Closed.}
We derived this from the Association State Diagram in $\mathsection$4.  Every transition out of \ShutdownAckSent described in the RFC ends up in either \Closed or \ShutdownAckSent.  If this property fails, it would imply a flaw in the graceful teardown routine, and could cause a deadlock, livelock, or other problem.
\item \textbf{$\gamma_4$: If a peer is in \CookieEchoed then its cookie timer is actively ticking.}
Per $\mathsection$5.1 C), the peer starts the cookie timer upon entering \CookieEchoed.  Per $\mathsection$4 step 3), when the timer expires it is reset, up to a fixed number of times, at which point the peer returns to \Closed.  
If the property fails, then the active peer in an establishment could get stuck in \CookieEchoed forever,
opening a new opportunity for DoS.
\item \textbf{$\gamma_5$: The peers are never both in \ShutdownReceived.}
This property follows from inspection of the Association State Diagram in $\mathsection$4.
From a security perspective, if both peers were in \ShutdownReceived, this would indicate that neither initiated the shutdown (yet both are shutting down); the only logical explanation for which is some kind of DoS.
\item \textbf{$\gamma_6$: If a peer transitions out of \ShutdownReceived then it must transition into either \ShutdownAckSent or \Closed.}
The transition to \ShutdownAckSent is shown in the Association State Diagram in $\mathsection$4.
The transition to \Closed can occur upon receiving either a \UserAbort from the user or an \Abort from the other peer.  No other transitions out of \ShutdownReceived are given in the RFC.  If this property fails, it could de-synchronize the teardown handshake, potentially leading to an unsafe behavior.  For example, if a peer transitioned from \ShutdownReceived to \Established, it would end up in a half-open connection.
\item \textbf{$\gamma_7$: If Peer~A is in \CookieEchoed then B must not be in \ShutdownReceived.}
We derived this from the Association Diagram in $\mathsection$4, which shows A must receive an \InitAck while in the \CookieWait and then send a \CookieEcho in order to transition into \CookieEchoed. B must have been in \Closed to send an \InitAck in the first place, hence B cannot be in \ShutdownReceived.  This property relates to the synchronization between the peers: if one is establishing a connection while the other is tearing down, then they are de-synchronized, and the protocol has failed.
\item \textbf{$\gamma_8$: Suppose that in the last time-step, Peer~A was in \Closed and Peer~B was in \Established.  Suppose neither user issued a \UserAbort, and neither peer had a timer time out.  Then if Peer A changed state, it must have changed to either \Established, or the implicit, intermediary state in \CookieWait in which it received \InitAck but did not yet transmit \CookieEcho.}
The transitions from \Closed to \Established and the described intermediary state are implicit in the Association State Diagram in $\mathsection$4.  The timer caveat is described in $\mathsection$4 step 2, and the aborting caveat is  in $\mathsection$9.1.  If the property fails, the four-way handshake ended, yet was not completed successfully, did not time out, and was not aborted, so somehow, the protocol failed.
\item \textbf{$\gamma_9$: The same as $\gamma_8$ but the roles are reversed.}
The property is: \emph{Suppose that in the last time-step, Peer~B was in \Closed and Peer~A was in \Established ...}

\item \textbf{$\gamma_{10}$: Once connection termination initiates, both peers eventually reach \Closed}. This follows from the description of connection termination in $\mathsection$9. Once connection termination is initiated, there is no way to recover the association.  In other words, termination is final.

\end{description}

\section{Related Work}

\paragraph{TCP} was previously formally studied using a process language called \textsc{SPEX} \cite{schwabe1981formal}, Petri nets \cite{han2005termination}, the \textsc{HOL} proof assistant \cite{bishop2018engineering}, and various other algebras (see Table~2.2 in \cite{Thesis2016}).  Our model is neither the most detailed nor the most comprehensive, but it captures the entire TCP handshake, including every possible establishment or teardown flow.

\paragraph{DCCP} was initially designed in an ad-hoc manner, however,
over the course of its maturation, its designers performed some analysis using
a semi-formal exhaustive state search tool as well as a Colored Petri Net model~\cite{kohler2006designing}.
These analyses revealed some bucks, e.g., a deadlock in connection establishment,
which the authors fixed before publishing RFC 4340,
and the Petri Net analysis resulted in 
multiple publications~\cite{vanit2004initial,vanit2005discovering,gallasch2005sweep,vanit2016validating}.
To the best of our knowledge, ours is the first process model of DCCP amenable to LTL model-checking.

\paragraph{SCTP} is implemented in Linux~\cite{linux} and FreeBSD~\cite{freebsd}.
Both implementations were tested with \textsc{PacketDrill}~\cite{packetDrillSCTP} and fuzz-tests, suggesting they are crash-free and follow the RFCs.
But this does not necessarily imply the \emph{design} 
outlined in the RFCs achieves its intended goals.
Several prior works formally modeled SCTP, however, their models were not as comprehensive and up-to-date as ours.
We summarize the differences between prior models and our own in Table~\ref{tab:comparison} in the Appendix.

Of the prior works that applied formal methods to the security of SCTP, 
only the Uppaal analysis by Saini and Fehnker~\cite{saini2017evaluating} used a technology 
(model-checking) that can verify arbitrary properties.
They reported two properties in their paper; the first is similar to our $\gamma_2$.
The second says an adversary only capable of sending \Init packets cannot cause a victim peer to change state.  This property is trivial for us because we use an FSM model where the peer states are precisely the model states.
And in our model, the only transition out of \Closed that happens upon receiving an \Init is a self-loop that sends an \InitAck and returns to \Closed.  In contrast, in Saini and Fehnker's model the peer state is a variable in memory, while the model states are totally different (e.g., \textsf{LC1}, \textsf{LC2}).  Thus, the property merits verification in their model but not ours.

Another line of inquiry aims to model the performance of SCTP,
e.g., using numerical analyses and simulations~\cite{chukarin2006performance}.
For example, Fu and Atiquzzaman built an analytical model of SCTP congestion control,
including \emph{multihoming}, an SCTP feature not available in TCP.
They compared their model to simulations and found it to be accurate in estimating
steady-state throughput of multihomed paths~\cite{fu2005performance}.
Such models are also used to evaluate new features, e.g., as in~\cite{zou2006throughput}. 
LTL model checking is, generally speaking, a sub-optimal approach for performance evaluation,
thus in our performance analyses (in \Chapr{karn-rto} and \Chapr{gbn}) we rely on interactive provers.

\section{Conclusion}\Secl{handshakes:conclusion}

In \Chapr{karn-rto} and \Chapr{gbn}, we verified properties of infinite-state
systems, using provers (Ivy and \acls).
These provers are extremely powerful, but only semi-automated.
In contrast, model checkers like \spin are fundamentally limited to not just 
finite-state systems and decidable logics, but moreover, to systems and properties
which are ``small'' (i.e., which avoid state-space explosion).
However, they have the advantage of being fully automatic.
This chapter provides a useful case-study in that trade-off.
By making careful modeling decisions (e.g., around how to represent the \emph{itag}
and \emph{vtag} in SCTP), we are able to compress our models enough that they can be 
verified using an LTL model checker in a matter of seconds (\spin).
Thus in contrast to the prior two chapters, in this case, we get our proofs entirely ``for free''.
These proofs include all of the following results:
\begin{itemize}
    \item The TCP handshake avoids half-open connections and deadlocks.  Moreover, its active/passive routine eventually works, and does so in a way which reflects the message sequence chart descriptions in the RFC.
    \item The DCCP handshake avoids infinite looping in any state.  Moreover, it does not support active/active or passive/passive teardown.
    \item The SCTP handshake avoids multiple unsafe states, responds appropriately to messages, uses its timers when needed, and satisfies basic safety and liveness properties reflected in the message sequence charts in the RFC.
\end{itemize}

Our proofs show that the verified handshakes are correct in the sense that they 
satisfy protocol goals outlined in the corresponding RFC documents, which we enumerate.
Also, they set the stage for our study of protocol attacks in the next chapter.
That is: having proven these handshakes work correctly in the absence of an attacker,
once we add an attacker to these systems, if they then behave incorrectly,
we can safely assign blame to the attacker process.

\setcounter{observation}{0}
\setcounter{definition}{0}
\setcounter{problem}{0}
\setcounter{theorem}{0}

\chapter{Automated Attacker Synthesis}\Chapl{korg}
\textbf{Summary.}
Transport protocol handshakes have predefined inputs and outputs, and follow predefined communication patterns 
to synchronize and exchange information.
Such protocols should be robust to both inherent malfunction 
(deadlock or livelock due to unexpected orderings of events) 
and attacks (e.g., message replay).
In the previous chapter, we used LTL model checking to prove that the TCP, DCCP, and SCTP handshakes are correct
in isolation.
Now, we look at their behavior in the context of an attack.
We propose a novel formalism for attacker models, capturing the placement and capabilities of the attacker.
Using this formalism we define two attacker models: one for attackers who sometimes succeed and one for 
those who always succeed.
We argue that the former is more realistic, and derive an automated solution to it, based on LTL model checking.
We prove our solution is sound and complete for a certain class of attacks, and we apply it to TCP, DCCP, and SCTP,
reporting attacks against each.
In the case of SCTP, we find two ambiguities in the RFC, each of which, we show, can enable a novel attack.
We proposed two errata to the RFC, one of which the RFC committee accepted.

\medskip

This chapter includes work originally presented in the following publications:

\medskip

\noindent~Max von Hippel, Cole Vick, Stavros Tripakis, and Cristina Nita-Rotaru. \emph{Automated attacker synthesis for distributed protocols.} Computer Safety, Reliability, and Security, 2020.
\begin{description}
\item \underline{Contribution:} MvH formalized the problem with help from ST, invented the solution, wrote the proofs, wrote most of the code for the implementation and TCP case study, and wrote most of the paper.
\end{description}

\medskip

\noindent~Maria Leonor Pacheco, Max von Hippel, Ben Weintraub, Dan Goldwasser, and Cristina Nita-Rotaru. \emph{Automated attack synthesis by extracting finite state machines from protocol specification documents.} IEEE Symposium on Security and Privacy, 2022.
\begin{description}
\item \underline{Contribution:} MvH wrote the models and properties, as well as the FSM extraction algorithm (not included in this dissertation).
\end{description}

\medskip

\noindent~Jacob Ginesin, Max von Hippel, Evan Defloor, Cristina Nita-Rotaru, and Michael T{\"u}xen. \emph{A Formal Analysis of SCTP: Attack Synthesis and Patch Verification.} USENIX, 2024.
\begin{description}
\item \underline{Contribution:} MvH co-authored the models and properties and wrote more than half of the paper.
\end{description}

\section{Formal Definition of Automated Attacker Synthesis}

We want to synthesize attackers automatically. 
Intuitively, an attacker is a process that, when composed with the system, violates some property. 
To formalize this concept we first introduce a formal notion of \emph{attacker model}, in the context of which
we next introduce a formal definition of an \emph{attacker}.
But first, we need to introduce some mathematical vocabulary which will show up in those definitions.

\subsection{Mathematical Preliminaries}

Let $P = \langle \text{AP}, I, O, S, s_0, T, L \rangle$ be a process.  For each state~$s \in S$,~$L(s)$ is a subset of AP containing the atomic propositions that are true at state~$s$. 
Consider a transition~$(s,x,s')$ starting at state~$s$ and ending at state~$s'$ with label~$x$.
If the label $x$ is an input, then the transition is called an \emph{input transition} and denoted $s \xrightarrow[]{x?} s'$.  Otherwise,~$x$ is an output, and the transition is called an \emph{output transition} and denoted $s \xrightarrow[]{x!} s'$.  A transition $(s, x, s')$ is called \emph{outgoing} from state $s$ and \emph{incoming} to state~$s'$.

A state $s \in S$ is called a \emph{deadlock} iff it has no outgoing transitions.
The state $s$ is called \emph{input-enabled} iff, for all inputs $x \in I$, there exists some state $s' \in S$ such that there exists a transition $(s, x, s') \in T$. 
We call $s$ an \emph{input} state (or \emph{output} sate) if all its outgoing transitions are input transitions (or output transitions, respectively).
States with both outgoing input transitions and outgoing output transitions are neither input nor output states, while states with no outgoing transitions (i.e., deadlocks) are (vacuously) both input and output states.

Various definitions of process determinism exist; ours is a variation on that of \cite{SIGACT17}.
A process $P$ is called \emph{deterministic} iff all of the following hold: (i) its transition relation $T$ can be expressed as a (possibly partial) function $S \times (I \cup O) \to S$; (ii) every non-deadlock state in $S$ is either an input state or an output state, but not both; (iii) input states are input-enabled; and (iv) each output state has only one outgoing transition. 
Determinism guarantees that: each state is a deadlock, an input state, or an output state; when a process outputs, its output is uniquely determined by its state; and when a process inputs, the input and state uniquely determine where the process transitions. 

A \emph{run} of a process $P$ is just a run of its projection, and likewise, a trace of $P$ is just a trace of its projection.

Finally, given two processes \(P_i = \langle \text{AP}_i, I_i, O_i, S_i, s_0^i, T_i, L_i \rangle\) for $i = 1, 2$, we say that $P_1$ is a {\em subprocess} of $P_2$, denoted $P_1 \subseteq P_2$, if $\text{AP}_1 \subseteq \text{AP}_2, I_1 \subseteq I_2, O_1 \subseteq O_2, S_1 \subseteq S_2, T_1 \subseteq T_2,$ and, for all $s \in S_1$, $L_1(s) \subseteq L_2(s)$. 

\subsection{Formal Attacker and Attacker Model Definitions}

An \emph{attacker model} or \emph{threat model} prosaically captures the goals and capabilities of an attacker with respect to some victim and environment. 
Algebraically, it is difficult to capture the attacker goals and capabilities without also capturing the victim and the environment, so our abstract attacker model includes all of the above.
Our attacker model captures: how many attacker components there are; how they communicate with each other and with the rest of the system (what messages they can intercept, transmit, etc.); and the attacker goals.
We formalize the concept of an attacker model next.

\begin{definition}[Input-Output Interface]
    An \emph{input-output interface} is a tuple $(I, O)$ such that $I \cap O = \emptyset$ and $I \cup O \neq \emptyset$.  The \emph{class} of an input-output interface $(I, O)$, denoted $\mathcal{C}(I, O)$, is the set of processes with inputs $I$ and outputs $O$.  Likewise, $\mathcal{C}(P)$ denotes the interface the process $P$ belongs to.
\end{definition}

\begin{sloppypar}
\begin{definition}[Attacker Model]
An \emph{attacker model} is a tuple $(P, (Q_i)_{i = 0}^m, \phi)$ where $P,Q_0,...,Q_m$ are processes,
each process $Q_i$ has no atomic propositions (its set of atomic propositions is empty),
and $\phi$ is an LTL formula such that $P \parallel Q_0 \parallel ... \parallel Q_m \models \phi$.
We also require the system $P \parallel Q_0 \parallel ... \parallel Q_m$ satisfies the formula $\phi$ in a non-trivial manner, that is, that $P \parallel Q_0 \parallel ... \parallel Q_m$ has at least one run.
\end{definition}
\end{sloppypar}

In an attacker model, the process $P$ is called the \emph{invulnerable process},
and the processes~$Q_i$ are called \emph{vulnerable processes}.
The goal of the adversary is to modify the vulnerable processes~$Q_i$ so that composition with the
invulnerable process~$P$ violates the specification~$\phi$.

Having formalized the concept of an attacker model, we next need to say what precisely constitutes an \emph{attacker}.
In most real-world systems, infinite attacks are impossible, implausible, or just uninteresting.
To avoid such attacks, we define an attacker that produces finite-length sequences of adversarial
behavior, after which it behaves like the vulnerable process it replaced (see \Figr{acyclicAttacker}).
In other words, the ``attack'' is merely a malicious piece of code injected as a prefix in an otherwise
reliable system component (or components).

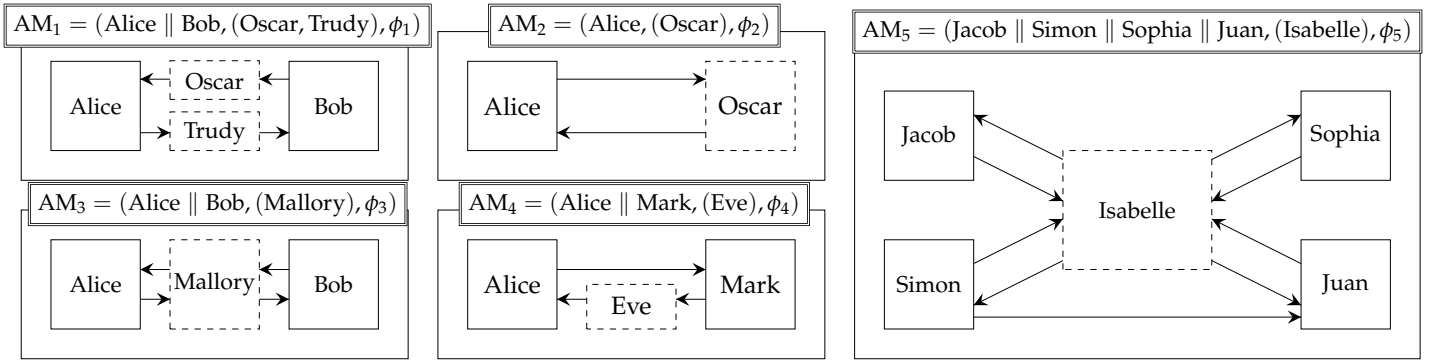
\begin{figure}[h]
\centering
\begin{adjustbox}{max totalsize={.99\textwidth}{.99\textheight},center}
\begin{tikzpicture}
\draw[draw=black] (-0.5,-0.5) rectangle ++(6.5, 2.5);
\draw[draw=black] (0, 0) rectangle ++(1.5,1.5);
\node[] (AliceBL) at (0.75,0.75) {\small Alice};
\draw[draw=black,dashed] (2, 0) rectangle ++(1.5,1.5);
\node[] (MalloryBL) at (2.75,0.75) {\small Mallory};
\draw[draw=black] (4, 0) rectangle ++(1.5,1.5);
\node[] (BobBL) at (4.75,0.75) {\small Bob};
\draw[straight] (1.5,0.5) to (2,0.5);
\draw[straight] (2,1)     to (1.5,1);
\draw[straight] (3.5,0.5) to (4,0.5);
\draw[straight] (4,1)     to (3.5,1);
\node[draw,rectangle,fill=white,double] (BLlabel) at (2.8,2.1) {\small $\text{AM}_3 = (\text{Alice} \parallel \text{Bob}, (\text{Mallory}), \phi_3)$};
\draw[draw=black] (6.5,-0.5) rectangle ++(6.5, 2.5);
\draw[draw=black] (7, 0) rectangle ++(1.5,1.5);
\node[] (AliceBR) at (7.75,0.75) {Alice};
\draw[draw=black,dashed] (9, 0) rectangle ++(1.5,0.75);
\node[] (EveBR) at (9.75,0.4) {Eve};
\draw[draw=black] (11, 0) rectangle ++(1.5,1.5);
\node[] (MarkBR) at (11.75,0.75) {Mark};
\draw[straight] (8.5,1)  to (11,1);
\draw[straight] (11,0.5) to (10.5,0.5);
\draw[straight] (9,0.5) to (8.5,0.5);
\node[draw,rectangle,fill=white,double] (BRlabel) at (9.8,2.1) {\small $\text{AM}_4 = (\text{Alice} \parallel \text{Mark}, (\text{Eve}), \phi_4)$};
\draw[draw=black] (6.5,2.5) rectangle ++(6.5, 2.5);
\draw[draw=black] (7, 3) rectangle ++(1.5,1.5);
\node[] (AliceTR) at (7.75,3.75) {Alice};
\draw[draw=black,dashed] (11, 3) rectangle ++(1.5,1.5);
\node[] (BobTR) at (11.75,3.75) {Oscar};
\draw[straight] (8.5,4.2) to (11,4.2);
\draw[straight] (11,3.3) to (8.5,3.3);  
\node[draw,rectangle,fill=white,double] (TRlabel) at (9.8,5.1) {\small $\text{AM}_2 = (\text{Alice}, (\text{Oscar}), \phi_2)$};
\draw[draw=black] (-0.5,2.5) rectangle ++(6.5, 2.5);
\draw[draw=black] (0, 3) rectangle ++(1.5,1.5);
\node[] (AliceTL) at (0.75,3.75) {\small Alice};
\draw[draw=black,dashed] (2, 3) rectangle ++(1.5,0.65);
\node[] (TrudyTL) at (2.75,3.3) {\small Trudy};
\draw[draw=black,dashed] (2, 3.86) rectangle ++(1.5,0.65);
\node[] (OscarTL) at (2.75,4.16) {\small Oscar};
\draw[draw=black] (4, 3) rectangle ++(1.5,1.5);
\node[] (BobTL) at (4.75,3.75) {\small Bob};
\draw[straight] (1.5,3.3) to (2,3.3);
\draw[straight] (3.5,3.3) to (4,3.3);
\draw[straight] (2,4.2)   to (1.5,4.2);
\draw[straight] (4,4.2)   to (3.5,4.2);
\node[draw,rectangle,fill=white,double] (TLlabel) at (2.8,5.1) {\small $\text{AM}_1 = (\text{Alice} \parallel \text{Bob}, (\text{Oscar}, \text{Trudy}), \phi_1)$};

\draw[draw=black] (-0.5 + 14,-6 + 5.5) rectangle ++(9.5, 5.5);

\draw[draw=black] (0 + 14, -5.5 + 5.5) rectangle ++(1.5,1.5);
\node[] (Simon5) at (0+0.75 + 14, -5.5+0.75 + 5.5) {\small Simon};

\draw[draw=black] (0 + 14, -3 + 5.5) rectangle ++(1.5,1.5);
\node[] (Jacob5) at (0+0.75 + 14,-3+0.75 + 5.5) {\small Jacob};

\draw[draw=black] (11-4 + 14, -5.5 + 5.5) rectangle ++(1.5,1.5);
\node[] (Juan5) at (11+0.75-4 + 14,-5.5+0.75 + 5.5) {\small Juan};

\draw[draw=black] (11-4 + 14, -3 + 5.5) rectangle ++(1.5,1.5);
\node[] (Sophia5) at (11+0.75-4 + 14,-3+0.75 + 5.5) {\small Sophia};

\draw[draw=black,dashed] (5-2 + 14, -4.5 + 5.5) rectangle ++(2.5, 2);
\node[] (Isabelle) at (6.25-2 + 14, -3.5 + 5.5) {\small Isabelle};

\draw[straight] (1.5 + 14, -5.5+0.2 + 5.5) to (11-4 + 14, -5.5+0.2 + 5.5);
\draw[straight] (5-2 + 14, -4.5 + 0.15 + 5.5) to (1.5 + 14, -5.5 + 0.4 + 5.5);
\draw[straight] (1.5 + 14, -5.5 + 1.5 - 0.4 + 5.5) to (5-2 + 14, -4.5 + 0.15 + 0.7 + 5.5);

\draw[straight] (5-2 + 14, -4.5 + 2 - 0.15 + 5.5) to (1.5 + 14, -3 + 1.5 - 0.4 + 5.5);
\draw[straight] (1.5 + 14, -3 + 0.4 + 5.5) to (5-2 + 14, -4.5 + 2 - 0.15 - 0.7 + 5.5);

\draw[straight] (5 + 2.5 - 2 + 14, -4.5 + 0.15 + 5.5) to (11 - 4 + 14, -5.5 + 0.4 + 5.5);
\draw[straight] (11 - 4 + 14, -5.5 + 1.5 - 0.4 + 5.5) to (5 + 2.5 - 2 + 14, -4.5 + 0.15 + 0.7 + 5.5);

\draw[straight] (5 + 2.5 - 2 + 14, -4.5 + 2 - 0.15 + 5.5) to (11 - 4 + 14, -3 + 1.5 - 0.4 + 5.5);
\draw[straight] (11 - 4 + 14, -3 + 0.4 + 5.5) to (5 + 2.5 - 2 + 14, -4.5 + 2 - 0.15 - 0.7 + 5.5);

\node[draw,rectangle,fill=white,double] (TL5Label) at (4 + 14.25,-1 + 6)
    {\small $\text{AM}_5 = (\text{Jacob} \parallel \text{Simon} \parallel \text{Sophia} \parallel \text{Juan}, (\text{Isabelle}), \phi_5)$};
\end{tikzpicture}
\end{adjustbox}
\caption{Example Attacker Models.  The properties $\phi_i$ are not shown.  Solid and dashed boxes are processes; we only assume the adversary can exploit the processes in the dashed boxes.  $\text{AM}_1$ describes a distributed on-path attacker scenario, $\text{AM}_2$ describes an off-path attacker, $\text{AM}_3$ is a classical man-in-the-middle scenario, and $\text{AM}_4$ describes a one-directional man-in-the middle, or, depending on the problem formulation, an eavesdropper.  $\text{AM}_5$ is an attacker model with a distributed victim where the attacker cannot affect or read messages from Simon to Juan.  Note that a directed edge in a network topology from Node 1 to Node 2 is logically equivalent to the statement that a portion of the outputs of Node 1 are also inputs to Node 2.  In cases where the same packet might be sent to multiple recipients, the sender and recipient can be encoded in a message subscript.  Therefore, the entire network topology is {\em implicit} in the interfaces of the processes in the attacker model according to the composition definition.}
\label{exampleThreatModels}
\end{figure}

\begin{definition}[Attacker]
Let $\textsc{AM} = (P, (Q_i)_{i = 0}^m, \phi)$ be an attacker model.  
Suppose that $\vec{A} = (A_i)_{i = 0}^m$ is a list of processes such that, for all $0 \leq i \leq m$,
$A_i$ is a deterministic process in $\mathcal{C}(Q_i)$ consisting of a directed acyclic graph (DAG) with no atomic propositions, ending in the initial state of the vulnerable process $Q_i$, followed by all of the vulnerable process $Q_i$.
Suppose further that $P \parallel A_0 \parallel ... \parallel A_m$ has some run $r$ such that $r \centernot{\models} \phi$ and each $A_i$ eventually reaches $q_0^i$ at some point in $r$.
Then we say that $\vec{A}$ is an \emph{$\textsc{AM}$-attacker}.
\end{definition}
The DAG criteria is illustrated in \Figr{acyclicAttacker}.

\begin{figure}[h]
\centering
\begin{adjustbox}{max totalsize={.9\textwidth}{.9\textheight},center}
\begin{tikzpicture}
\node[] (Ai) at (5,2.9) {\tiny $A_i$};
\draw[draw=gray,dashed] (-1.2,-2.5) rectangle (9.8,2.7);

\node[] (DAG) at (3,2.5) {\tiny DAG};
\draw[draw=gray,dashed] (-1,-2.3) rectangle (7,2.3);

\path[draw,use Hobby shortcut,closed=true,fill=black!25]
(7,-1) .. (7.1,1) .. (8,2) .. (9.5,1.4) .. (8.8,-1) .. (7.3,-2);
\node[] (Qi) at (8.3,0) {\tiny $Q_i$};

\node[] (empty) at (-1,0) {};
\node[draw,circle] (a0) at (0,0) {\tiny $a_0^i$};
\draw[straight] (empty) to (a0);
\node[draw,circle] (a1) at (2,1 ) {\tiny $a_1^i$};
\node[draw,circle] (a2) at (2,0 ) {\tiny $a_2^i$};
\node[draw,circle] (a3) at (2,-1) {\tiny $a_3^i$};
\draw[straight] (a0) to[above] node {\tiny $x_0$} (a1);
\draw[straight] (a0) to[above] node {\tiny $x_1$} (a2);
\draw[straight] (a0) to[above] node {\tiny $x_2$} (a3);
\draw[straight] (a1) to[right] node {\tiny $x_3$} (a2);
\node[] (m0) at (4,2 ) {\tiny ...};
\node[] (m1) at (4,1 ) {\tiny ...};
\node[] (m2) at (4,0 ) {\tiny ...};
\node[] (m3) at (4,-1) {\tiny ...};
\node[] (m4) at (4,-2) {\tiny ...};
\draw[straight] (a3) to[above] node {\tiny $x_4$} (m2);
\draw[straight] (a1) to[above] node {\tiny $x_5$} (m0);
\draw[straight] (a1) to[above] node {\tiny $x_6$} (m1);
\draw[straight] (a2) to[above] node {\tiny $x_7$} (m2);
\draw[straight] (a3) to[above] node {\tiny $x_8$} (m3);
\draw[straight] (a3) to[above] node {\tiny $x_9$} (m4);
\draw[straight] (a0) to[out=south east,in=west,below,looseness=1] node {\tiny $x_{10}$} (m4);
\node[draw,circle,fill=white] (q0) at (7,0) {\tiny $q_0^i$};
\draw[straight] (m0) to[above] node {\tiny $x_k$} (q0);
\draw[straight] (m1) to[above] node {\tiny $x_{k+1}$} (q0);
\draw[straight] (m2) to[above] node {\tiny $x_{k+2}$} (q0);
\draw[straight] (m3) to[above] node {\tiny $x_{k+3}$} (q0);
\draw[straight] (m4) to[left] node {\tiny $x_{k+4}$} (q0);
\end{tikzpicture}
\end{adjustbox}
\caption{Suppose $\vec{A} = (A_i)_{i = 0}^m$ is attacker for $\textsc{AM} = (P, (A_i)_{i = 0}^m, \phi)$.  Further suppose $A_i$ has initial state $a_0^i$, and $Q_i$ has initial state $q_0^i$.  Then $A_i$ should consist of a DAG starting at $a_0^i$ and ending at $q_0^i$, plus all of $Q_i$, indicated by the shaded blob.  Note that if some $Q_i$ is non-deterministic, then there can be no attacker, because $Q_i$ is a subprocess of $A_i$, and all the $A_i$s must be deterministic in order for $\vec{A}$ to be an attacker.}
\Figl{acyclicAttacker}
\end{figure}
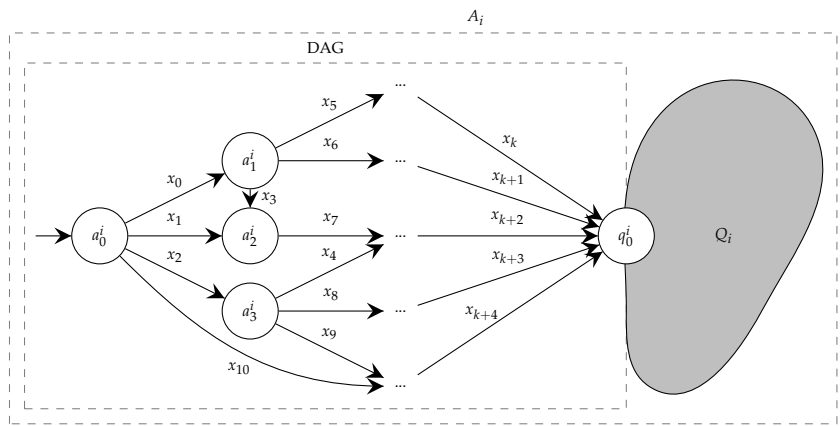

We can naturally characterize attackers depending on how powerful they are, that is to say, 
depending on whether or not they always succeed.

\begin{definition}[$\exists$-Attacker vs $\forall$-Attacker]
Let $\vec{A}$ be a $(P, (Q_i)_{i = 0}^m, \phi)$-attacker.  Then $\vec{A}$ is a \emph{$\forall$-attacker} if $P \parallel A_0 \parallel ... \parallel A_m \models \neg \phi$.  Otherwise, $\vec{A}$ is an \emph{$\exists$-attacker}.
\end{definition}

A $\forall$-attacker $\vec{A}$ always succeeds, because $P \parallel \vec{A}\models \neg \phi$ means that \emph{every} behavior of $P \parallel \vec{A}$ satisfies $\neg \phi$, that is, \emph{every} behavior of $P \parallel \vec{A}$ violates $\phi$.  Since $P \parallel \vec{A} \centernot{\models} \phi$, there must exist a computation $\sigma$ of $P \parallel \vec{A}$ such that $\sigma \models \neg \phi$, so, a $\forall$-attacker cannot succeed by blocking.  An $\exists$-attacker is any attacker that is not a $\forall$-attacker, and every attacker succeeds in at least one computation, so an $\exists$-attacker sometimes succeeds, and sometimes does not.  

\subsection{Automated Attacker Synthesis Problems}

Next, we define the two naturally arising automated attack synthesis problems.
The first problem is to find an attacker which, at least sometimes, induces a malfunction.
The intuition here is that the attacker's success may hinge on decisions in the system which,
in our model, are abstracted nondeterministically.
For example, the attacker might only succeed if the user (modeled nondeterministically) 
issues a specific command, opening the process $P$ up to attack.
Examples of attacks like this -- which sometimes succeed but sometimes do not --
    include Rowhammer~\cite{kim2014flipping},
    Meltdown~\cite{lipp2020meltdown}, or
    the attacks we previously reported against \textsc{GossipSub}~\cite{kumar2024formal}.
In all three cases, the attack succeeds or fails depending on conditions which are not necessarily
observable to the attacker.

\begin{problem}[$\exists$-Attacker Synthesis Problem ($\exists$ASP)]
Given an attacker model AM, find an AM-attacker, if one exists; otherwise state that none exists.
\label{EASP}
\end{problem}

The second problem we define is the dual of the first: it describes attacks that \emph{always} succeed,
no matter what nondeterministic choices the process $P$ makes.
One example is Spectre~\cite{kocher2020spectre}.
Generally speaking most real-world attacks do not satisfy this (very strong) reliability requirement,
so we consider this problem to be more of an academic than a practical one.

\begin{problem}[$\forall$-Attacker Synthesis Problem ($\forall$ASP)]
Given an attacker model AM, find a AM-$\forall$-attacker, if one exists; otherwise state that none exists.
\label{AASP}
\end{problem}

\section{Solution to the $\exists$-Attacker Synthesis Problem}\Secl{korg:solution}

Next, we present a solution to the $\exists$-problem for any number of attackers, and for both safety and liveness properties.  Our solution is sound and complete, and its runtime is 
polynomial in the product of the size of~$P$ and the sizes of the interfaces of the~$Q_i$s, 
and exponential in the size of the property~$\phi$~\cite{IEEECS86}.
The idea is to reduce the problem to model checking by replacing each vulnerable component~$Q_i$
with a process whose language is $(I_i \cup O_i)^* \mathcal{L}(Q_i)$.

We begin by defining \emph{lassos} and \emph{bad prefixes}.
A computation $\sigma$ is a \emph{lasso} if it equals a finite word $\alpha$, then infinite repetition of a finite word $\beta$, i.e., $\sigma = \alpha \cdot \beta^{\omega}$.  A prefix $\alpha$ of a computation $\sigma$ is called a \emph{bad prefix} for $P$ and $\phi$ if $P$ has $\geq 1$ runs inducing computations starting with $\alpha$, and every computation starting with $\alpha$ violates $\phi$.  We naturally elevate the terms \emph{lasso} and \emph{bad prefix} to runs and their prefixes.  We assume a \emph{model checker}: a procedure \textsc{MC}$(P, \phi)$ that takes as input a process $P$ and property $\phi$, and returns $\emptyset$ if $P \models \phi$, or one or more violating lasso runs or bad prefixes of runs for $P$ and $\phi$, otherwise \cite{BaierKatoenBook}.
In practice, the model checker we use is \spin, but in principle our approach should work for any LTL model checker.

Attackers cannot have atomic propositions.  So, the only way for $\vec{A}$ to \emph{attack} $\textsc{AM}$ is by sending and receiving messages, hence the space of attacks is within the space of labeled transition sequences.  The \emph{daisy} nondeterministically exhausts the space of input and output events of a vulnerable process.
We define the daisy as an \emph{abstract process} with two initial states.  We introduce that definition now because this is the only place where we use it; in all other cases we assume processes have just one initial state.

\begin{definition}[Abstract Process]
Let $P = \langle \text{AP}, I, O, S, S_0, T, L \rangle$ such that $S_0 \subseteq S$ is non-empty and, for each $s_0 \in S_0$, $\langle \text{AP}, I, O, S, s_0, T, L \rangle$ is a process.  Then we say $P$ is an \emph{abstract process}.  In other words, an abstract process is a process with more than one possible initial state.
\end{definition}

\begin{definition}[Daisy]
Given a process $Q_i = \langle \emptyset, I, O, S, s_0, T, L \rangle$, the \emph{daisy} of $Q_i$, denoted $\textsc{Daisy}(Q_i)$, is the abstract process $\textsc{Daisy}(Q_i) = \langle \text{AP}, I, O, S', S_0, T', L' \rangle$, with atomic propositions $\text{AP} = \{ \texttt{terminated}_i \}$, states $S' = S \cup \{ d_0 \}$, initial states $S_0 = \{ s_0, d_0 \}$, transitions \(T' = T \cup \{ (d_0, x, w_0) \mid x \in I \cup O, w_0 \in S_0 \}\), and labeling function $L' : S' \to 2^\text{AP}$ that takes $s_0$ to $\{ \texttt{terminated}_i \}$ and other states to $\emptyset$.  (We reserve the symbols $\texttt{terminated}_0, ...$ for use in daisies, so they cannot be sub-formulae of the property in any attacker model.)
\end{definition}

Let $\textsc{AM} = (P, (Q_i)_{i = 0}^m, \phi)$ be an attacker model.  
Our goal is to find an attacker for $\textsc{AM}$, if one exists, or state that none exists, otherwise.
First, we define a new property $\psi$ which says that if all the attacker components eventually terminate 
(in the sense of making it to $Q_i$),
then $\phi$ holds.

\begin{equation}
\psi = \big( \bigwedge_{0 \leq i \leq m} \F\, \texttt{terminated}_i \big) \implies \phi
\label{equationPsi}
\end{equation}

Next, we use the model checker to find runs of the system in which the vulnerable components are replaced with 
corresponding daisies, in which $\psi$ are violated.  Logically, these are traces where all the attacker components
terminate in the sense described above, yet, $\phi$ is violated.

\begin{equation}
R = \textsc{MC}(P \parallel \textsc{Daisy}(Q_0) \parallel ... \parallel \textsc{Daisy}(Q_m), \psi)
\label{equationRrecovery}
\end{equation}

If $R = \emptyset$, or if any $Q_i$ is nondeterministic, then report ``no attack exists''.
Else, choose $r \in R$ arbitrarily and continue as follows.
For each $0 \leq i \leq m$, proceed as follows.
Let $r_i$ be the shortest prefix of $r$ ending in a state $s$ for which $s \models \texttt{terminated}_i$.
Let $\ell_i$ be the sequence of labels on the transitions in $r_i$.
Let \(\ell_i|_{(I_i,O_i)} = l_0, l_1, \ldots, l_c\) be the subsequence of all the labels in $\ell_i$ which are elements of $I_i \cup O_i$.  Define the states $S_i^A = S_i \cup \{ a_0, a_1, \ldots, a_{c - 1} \}$, labeling function $L_i^A = \lambda s . \emptyset$, and transitions as follows.
\begin{equation}
\begin{aligned}
T_i^A & = \{ (a_i, l_i, a_{i+1} \mid i < c - 1) \} \\
      & \cup \{ (a_{c - 1}, l_c, q_0^i \} \\
      & \cup \{ (a_i, x_i, q_0^i) \mid i < c - 1 \land l_i \in I_i \land x_i \in (I_i \cup O_i) \setminus \{ l_i \} \} \\
      & \cup T_i
\end{aligned}
\end{equation}
Finally, define $A_i$ to be the process $\langle \text{AP}_i, I_i, O_i, S_i^A, a_0^i, T_i^A \rangle$.
Once this is done for each $i$, return $\vec{A} = (A_i)_{i=0}^m$.

Next we prove that our solution is sound and complete, provided that the same can be said for the model-checker.
Note that \spin satisfies these conditions in its exhaustive mode, but not when configured with certain state-compressing
optimizations.  Also, being complete does not mean it is \emph{fast} or \emph{efficient}; only that given sufficient time and memory, it will return a result.
\begin{theorem}[Soundness]
Let AM be an attacker model and suppose that given AM, our solution returns $\vec{A}$.  Then $\vec{A}$ is an AM-attacker.
\end{theorem}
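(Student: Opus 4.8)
The plan is to verify the two clauses in the definition of an \textsc{AM}-attacker for the tuple $\vec{A} = (A_i)_{i=0}^m$ returned by the procedure: first, that each $A_i$ has the required \emph{shape} -- a deterministic, atomic-proposition-free DAG from $a_0^i$ to $q_0^i$ with all of $Q_i$ glued on at $q_0^i$, and the same input/output interface as $Q_i$ -- and second, that $P \parallel A_0 \parallel \cdots \parallel A_m$ has a run $r'$ with $r' \centernot{\models} \phi$ along which every $A_i$ reaches $q_0^i$. Throughout I may assume the hypothesis of the theorem, namely that the procedure actually returns something, which means the model-checking call $R = \textsc{MC}(P \parallel \textsc{Daisy}(Q_0) \parallel \cdots \parallel \textsc{Daisy}(Q_m), \psi)$ produced a nonempty set and every $Q_i$ is deterministic; fix the run $r \in R$ that the procedure selected.

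For the first clause I would unfold the construction. The states of $A_i$ are the fresh chain $a_0^i, \ldots, a_{c-1}^i$ together with all of $S_i$, and every transition points ``forward'' (to a later $a^i$, or into $q_0^i$, or inside $S_i$ via $T_i$), so the fresh states together with their edges into $q_0^i$ form a DAG whose unique sink is $q_0^i$, while the portion inside $S_i$ is literally $Q_i$ because no new transitions leave any state of $S_i$. The labeling function is constant $\emptyset$, so $A_i$ has no atomic propositions; its inputs and outputs are $I_i, O_i$, hence $A_i \in \mathcal{C}(Q_i)$ and $P \parallel A_0 \parallel \cdots \parallel A_m$ is well-defined (same outputs and empty proposition sets as the $Q_i$, which composed legally by assumption). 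Determinism is the one spot that needs attention: along the chain, a state whose ``intended'' outgoing label is an output has exactly that single outgoing transition (a deterministic output state), while a state whose intended label is an input is made input-enabled by the escape transitions into $q_0^i$ and has no outgoing output transitions (a deterministic input state); the fresh states are pairwise distinct and disjoint from $S_i$, so no conflicts arise; and the $S_i$-part is deterministic since $Q_i$ is, which is exactly the guard the procedure checked. As \Figr{acyclicAttacker} indicates, this is precisely why the procedure refuses to output anything when some $Q_i$ is non-deterministic.

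For the second clause the key object is the daisy. From its extra initial state $d_0$, $\textsc{Daisy}(Q_i)$ can take \emph{any} label in $I_i \cup O_i$ and return to $d_0$ or move to $q_0^i$ (where $\texttt{terminated}_i$ becomes true), after which its transitions coincide with those of $Q_i$. Since $r \in R$ we have $r \centernot{\models} \psi$, i.e., $r$ satisfies the antecedent $\bigwedge_i \F\,\texttt{terminated}_i$ while $r \centernot{\models} \phi$; so, projecting $r$ onto daisy $i$, the daisy does reach $q_0^i$, and up to the first such moment it has only taken daisy-moves out of $d_0$, whose labels restricted to $I_i \cup O_i$ are exactly the sequence $\ell_i|_{(I_i,O_i)} = l_0,\ldots,l_c$ used to build $A_i$. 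By construction $A_i$ follows the same label sequence along its chain to reach $q_0^i$, and thereafter is just $Q_i$, so $A_i$ can replay, step for step, whatever daisy $i$ did in $r$. Replacing each daisy-state along $r$ by the corresponding $A_i$-state (leaving the $P$-component and the sequence of transition labels untouched) yields a run $r'$ of $P \parallel A_0 \parallel \cdots \parallel A_m$; because the propositional labels of $r'$ and $r$ agree on $P$ -- the only source of propositions mentioned by $\phi$ -- we get $r' \centernot{\models} \phi$; and each $A_i$ reaches $q_0^i$ at the instant daisy $i$ did. This establishes that $\vec{A}$ is an \textsc{AM}-attacker.

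The main obstacle I anticipate is the bookkeeping in the replay step: one has to check, against the composition semantics (rendezvous with multicasting, and stuttering for components uninvolved in an event), that every synchronization occurring among $P$ and the daisies along $r$ can be reproduced among $P$ and the $A_i$ along $r'$ -- concretely, that whenever $r$ fires an event $x$ that is an input or output of $A_i$, the $A_i$-state reached so far actually offers $x$ to the same target, which is where the ``$A_i$ mimics daisy $i$'' claim must be made fully precise (including the degenerate case $c = 0$, in which $A_i = Q_i$ and the matching $\texttt{terminated}_i$ already held at the initial state). The remaining pieces -- shape, determinism, and the transfer of $\neg\phi$ -- are routine. Finally, I would stress that soundness is not to be conflated with efficiency: completeness and the stated complexity bound are argued separately and inherit the corresponding guarantees of the underlying model checker.
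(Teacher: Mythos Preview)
Your proposal is correct and follows essentially the same approach as the paper: establish the shape and determinism of each $A_i$ from its construction (you are a bit more explicit about input-enabledness via the escape edges), then replay the daisy run $r$ with the $A_i$s in place of the daisies to obtain a trace-equivalent run $r'$ that violates $\phi$ and in which each $A_i$ reaches $q_0^i$. The paper organizes the replay step as an explicit three-case induction on transitions, whereas you describe it more globally, but the content is the same.
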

\begin{proof}
Determinism of $A_i$ follows from the determinism of $Q_i$ and the construction of $T_i^A$ to include
    \(\{ (a_i, x_i, q_0^i) \mid i < c - 1 \land l_i \in I_i \land x_i \in (I_i \cup O_i) \setminus \{ l_i \} \}\).
The DAG shape requirements and size of $\vec{A}$ both follow from its construction.
Now, consider the run $r$, which we know must exist as otherwise the procedure would have returned ``no attack exists''.
We claim that $P \parallel A_0 \parallel \cdots \parallel A_m$ has some run $r'$ which is trace-equivalent to $r$,
in which each $A_i$s eventually reaches $q_0^i$.
We proceed inductively.
Consider the first transition $\textbf{s} \xrightarrow[]{l} \textbf{s}'$ in $r$.
Let $0 \leq i \leq m$ arbitrarily.
If $l \notin I_i \cup O_i$ then $\textsc{Daisy}(Q_i)$ did not transition in this step,
and neither can $A_i$.
Otherwise, there are three cases.
\begin{enumerate}[(1)]
\item $\textbf{s}[i] = d_0^i = \textbf{s}'[i]$:
Then in the first step of $r'$, $A_i$ can take the matching step $a_0^i \xrightarrow[]{l} a_1^i$.
\item $\textbf{s}[i] = d_0^i$ and $\textbf{s}'[i] = q_0^i$:
Then in the first step of $r'$, $A_i$ can take the matching step $a_0^i \xrightarrow[]{l} q_0^i$.
\item $\textbf{s}[i] = q_0^i$:
Then in the first step of $r'$, $A_i$ can take the same transition that $\textsc{Daisy}(Q_i)$ took in the first step of $r$.
\end{enumerate}
The inductive step is essentially identical, except that $A_i$ might not begin in $a_0^i$, and in the third case, it is possible that $\textbf{s}[i] \in S_i$ but does not equal $q_0^i$, since $\textsc{Daisy}(Q_i)$ may have taken one or more transitions while in its $Q_i$ subprocess.
The last step of the proof is to observe that each $\textsc{Daisy}(Q_i)$ eventually reaches $q_0^i$ in $r$ because of the construction of $\psi$, which by steps 2 and 3 of the argument we just outlined, implies the same for the $A_i$s.
\end{proof}

\begin{theorem}[Completeness]
Let AM be an attacker model and suppose that some AM-attacker exists.
Then our solution does not return ``no attack exists''.
\end{theorem}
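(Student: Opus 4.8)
The plan is to show that neither of the two circumstances under which the solution reports ``no attack exists'' can occur once an AM-attacker is assumed to exist. Recall that the solution returns ``no attack exists'' exactly when (i) some $Q_i$ is nondeterministic, or (ii) the set $R$ of violating runs returned by the model checker in \eref{equationRrecovery} is empty. So fix $\textsc{AM} = (P, (Q_i)_{i=0}^m, \phi)$ and let $\vec{A} = (A_i)_{i=0}^m$ be an AM-attacker, with a witnessing run $r$ of $P \parallel A_0 \parallel \cdots \parallel A_m$ such that $r \centernot{\models} \phi$ and each $A_i$ eventually reaches $q_0^i$ along $r$. Ruling out (i) is immediate: by definition each $A_i$ is deterministic, and the sub-automaton of $A_i$ rooted at $q_0^i$ is exactly $Q_i$ (this is what ``followed by all of $Q_i$'' means), so the transition relation of $Q_i$ is a restriction of that of $A_i$ to a set of states on which $A_i$ is deterministic; hence $Q_i$ is deterministic too, as already observed in the text accompanying \Figr{acyclicAttacker}. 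The rest of the argument rules out (ii).

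To rule out (ii), since the model checker is complete, it suffices to exhibit a single run $r'$ of the composite daisy system $P \parallel \textsc{Daisy}(Q_0) \parallel \cdots \parallel \textsc{Daisy}(Q_m)$ with $r' \centernot{\models} \psi$, where $\psi$ is the formula of \eref{equationPsi}. I would build $r'$ by having each $\textsc{Daisy}(Q_i)$ ``shadow'' $A_i$ along $r$, while copying the $P$-component of $r$ verbatim. Concretely: start $\textsc{Daisy}(Q_i)$ in its daisy state $d_0$ (an admissible initial state of the abstract daisy), keep it self-looping at $d_0$ while $A_i$ traverses the DAG portion of its run --- legal because $(d_0, x, d_0) \in T'$ for every $x \in I_i \cup O_i$, and because $\textsc{Daisy}(Q_i)$ simply stands still on labels outside $I_i \cup O_i$, exactly as $A_i$ does; when $A_i$ first enters $q_0^i$ on some label $l$, take the transition $(d_0, l, s_0)$, landing $\textsc{Daisy}(Q_i)$ in $s_0 = q_0^i$; and from then on let $\textsc{Daisy}(Q_i)$ replay precisely the transitions $A_i$ takes, which is legal because from $q_0^i$ onward all of $A_i$'s transitions lie in $T_i \subseteq T'$. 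One edge case: if $A_i$'s DAG is trivial, i.e.\ $a_0^i = q_0^i$, one instead starts $\textsc{Daisy}(Q_i)$ at its other initial state $s_0$; both choices are admissible for an abstract process, and the model checker explores all initial-state combinations. A routine induction on the steps of $r$ confirms that $r'$, so defined, respects the composition transition relation at every step and in every component.

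It then remains to verify $r' \centernot{\models} \psi$. For the antecedent of $\psi$: each $A_i$ reaches $q_0^i$ at some step of $r$ (attacker definition), and at the corresponding step of $r'$ the shadowing daisy sits in $s_0$, whose label is $\{\texttt{terminated}_i\}$; hence $\F\,\texttt{terminated}_i$ holds along the trace of $r'$ for every $i$. For the consequent: the trace of $r'$ agrees with the trace of $r$ on every atomic proposition other than the reserved symbols $\texttt{terminated}_0, \ldots, \texttt{terminated}_m$ (which occur neither in $P$ nor in any $Q_i$), and $\phi$ --- being the specification of an attacker model --- mentions none of these reserved symbols; since $r \centernot{\models} \phi$, it follows that $r' \centernot{\models} \phi$. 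Thus $r'$ satisfies the antecedent of $\psi$ but violates its consequent, so $r' \centernot{\models} \psi$, so $R \neq \emptyset$, so the solution does not return ``no attack exists''. The main obstacle here is not conceptual but bookkeeping: pinning down the exact correspondence between states of $A_i$ (a DAG glued onto $Q_i$) and states of $\textsc{Daisy}(Q_i)$ (all of $Q_i$ plus one self-looping daisy state), and discharging the induction that the shadow run is legal --- in particular handling the ``stand still'' cases on labels a given daisy does not own, and the single switch-over step $d_0 \to s_0$.
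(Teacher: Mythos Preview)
Your proof is correct and follows essentially the same approach as the paper's: both argue that each $\textsc{Daisy}(Q_i)$ can simulate $A_i$ by looping at $d_0$ during the DAG phase and then replaying $Q_i$'s transitions verbatim, yielding a label-equivalent run of the daisy system that satisfies the termination antecedent of $\psi$ while violating $\phi$; and both dispatch the determinism check by observing that $Q_i$ is the sub-automaton of the deterministic $A_i$ rooted at $q_0^i$. Your version is in fact slightly more careful than the paper's in a couple of places---you explicitly handle the trivial-DAG edge case by choosing the other initial state $s_0$, and you spell out why $\phi$ cannot mention the reserved $\texttt{terminated}_i$ symbols---but the overall structure is the same.
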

\begin{proof}
Suppose $\vec{A}$ is an AM-attacker.
Then there exists some $r' \in \text{runs}(P \parallel A_0 \parallel \cdots \parallel A_m)$ such that 
$r' \centernot{\models} \psi$.
Choose $0 \leq i \leq m$ arbitrarily.
We claim that $\textsc{Daisy}(Q_i)$ can simulate the role of $A_i$ in $r'$.
If $A_i = Q_i$, then the result follows since $Q_i \subseteq \textsc{Daisy}(Q_i)$ and $q_0^i$ is an initial state of the generalized process $\textsc{Daisy}(Q_i)$.
On the other hand, suppose that $Q_i \subsetneq A_i$.
We know that $A_i$ cannot take an infinite number of transitions without entering $q_0^i$ since the part of $A_i$ which is not $Q_i$ is precisely a DAG ending in $q_0^i$.
If $A_i$ takes a finite number of transitions, then this can be emulated by looping on $d_0^i$ (with identical labels) until the last one, at which point $\textsc{Daisy}(Q_i)$ takes a matching-label transition to $q_0^i$.
Else, if $A_i$ takes an infinite number of transitions, then the finite prefix before it first reaches $q_0^i$ can be emulated in the way we just described, and the rest occurs in $Q_i$ and can therefore be repeated verbatim from $q_0^i$.
Since $q_0^i \models \texttt{terminated}_i$ in $\textsc{Daisy}(Q_i)$ it follows that the run which we just described (albeit one $i$ at a time) satisfies $\texttt{terminated}_0 \land \ldots \land \texttt{terminated}_m$.
Moreover, this run~$r$ has the same sequence of labels as $r'$,
meaning that $P$ can take the same sequence of transitions in it as it does in $r'$.
Since the $\texttt{terminated}_i$ propositions do not occur in $\text{AP}$ and the $\textsc{Daisy}(Q_i)$ processes
have no further propositions, and neither do the $A_i$s, it follows that the run $r$ is trace-equivalent to $r'$.
But $r' \centernot{\models} \phi$.  So $r \centernot{\models} \psi$.  Thus $R \neq \emptyset$.
Lastly, by definition the $A_i$s are deterministic; thus so are the $Q_i$s.
The result immediately follows.
\end{proof}

Next, we describe how we implemented our solution.

\section{Implementation in \korg}

We implemented our solution to the $\exists$-attacker synthesis problem in an open-source tool called 
\textsc{Korg}\footnote{Named after the Korg microKORG synthesizer, with its dedicated ``attack" control on Knob 3.  Code and models are freely and openly available at \url{https://github.com/maxvonhippel/AttackerSynthesis}.}.
In this section, we describe the design and features of \korg.
Then in the next three sections, we provide case studies in its use, against TCP, DCCP, and SCTP.

We say an attacker $\vec{A}$ for an \amodel $\textsc{AM} = (P, (Q_i)_{i = 0}^m, \phi)$ is a \emph{centralized attacker} if $m = 0$, or a \emph{distributed attacker}, otherwise.  
In other words, a centralized attacker has only one attacker component $\vec{A} = (A)$, whereas
a distributed attacker has many attacker components $\vec{A} = (A_i)_{i = 0}^m$.
\textsc{Korg} handles the $\exists$-attacker synthesis problem for liveness and safety properties for a centralized attacker.
It is implemented in about 700 lines of \textsc{Python 3} and uses \spin as its backend model checker.

\korg requires three inputs:
    (1) a \textsc{Promela} program $P$ representing the invulnerable part of the system;
    (2) a \textsc{Promela} program $Q$ representing the vulnerable part of the system, as well as its interface (inputs and outputs) in YAML format; and
    (3) a \textsc{Promela} LTL property $\phi$ representing what it means for the system to behave correctly.
Note, \korg can deduce the interface of the program $Q$ automatically by scanning its code.
However, if on paper $Q$ is defined to have an input or output which never appears in any of its transitions,
then said label will likewise not appear in its code, and so will be missed by the interface inference step.
For this reason, users are encouraged to make vulnerable component interfaces explicit.
Given these inputs, which define a centralized attacker model AM=$(P,(Q),\phi)$,
\textsc{Korg} synthesizes attackers using the procedure outlined in \Secr{korg:solution}.
The workflow is illustrated in \Figr{korg:organization}.

\begin{figure}[h]
\begin{adjustbox}{max totalsize={.8\textwidth}{.99\textheight},center}
\begin{tikzpicture}
\node[] (model) at (-1.3,2) {\small \textsc{Promela} program $P$};
\node[text width=200] (placement) at (-1.5,0.8) {\small \textsc{Promela} vulnerable program~$Q$};
\node[text width=200] (phi) at (-2,-0.3) {\small \textsc{Promela} LTL correctness property~$\phi$};

\node[draw,rectangle,fill=white,double] (korg) at (3,1) {\small \textsc{Korg}};

\draw[->] (1,2) -- (korg);
\draw[->] (1,0.8) -- (korg);
\draw[->] (1,-0.3) -- (korg);

\node[draw,rectangle,fill=white,double] (spin) at (7,1) {\small \textsc{Spin}};

\draw[->] (korg) to[above,out=north east,in=north west] node {\small ``$P \parallel \textsc{Daisy}(Q) \models \psi$?"} (spin);

\draw[->] (spin) to[below,out=south west,in=south east] node {\small Counterexamples} (korg);

\node[] (attacks) at (3,-1) {\small Synthesized Attackers};

\draw[->] (korg) -- (attacks);
\end{tikzpicture}
\end{adjustbox}
\caption{\textsc{Korg} workflow.  The property $\psi$ is automatically computed from $\phi$ to ensure the attacker eventually terminates, at which point the original code $Q$ is run.}
\Figl{korg:organization}
\end{figure}
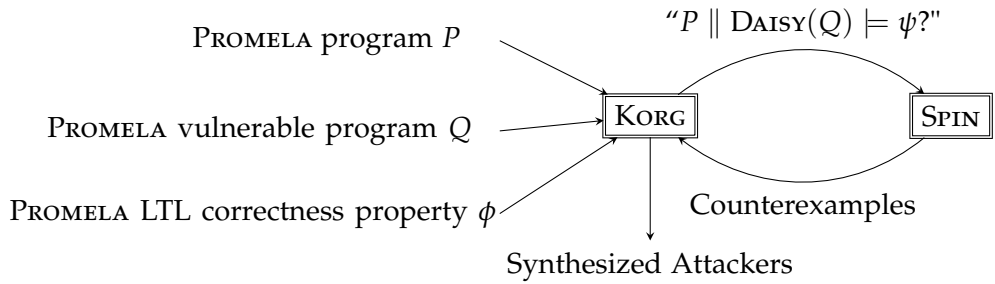

\korg also exposes some additional functionalities beyond those covered in this chapter, including:
\begin{itemize}
    \item partial handshake model extraction from RFC documents, which works in concert with the natural language processing pipeline described in~\cite{pacheco2022automated};
    \item synthesis of so-called ``replay'' attackers with bounded on-board memory (described in~\cite{sctp});
    \item installation via \textsc{pip} or \textsc{Docker}; and
    \item scripts to summarize and categorize attack traces (see \url{https://github.com/rfcnlp}).
\end{itemize}
It comes bundled with our TCP, DCCP, and SCTP models, attacker models, and properties, in addition to some toy models used for tutorials and unit testing.

Next, we describe the representative attacker models we use when applying \korg to TCP, DCCP, and SCTP.

\section{Representative Attacker Models and Experimental Setup}\Secl{amodels}

In this section we describe three representative attacker models which we use
for TCP, DCCP, and SCTP, and how we configure \korg with these \amodels.
These are general purpose and applicable to any transport protocol and correctness property,
and we contribute them to \korg.
We instantiate each \amodel in the context of each protocol model (TCP, DCCP, and SCTP)
and corresponding correctness property.

\paragraph{Off-Path \Amodel.}
In this model, an attacker communicates with one peer in order to disrupt the association
formed by the two peers that want to communicate.
We assume the Off-Path attacker knows the port and IP of the second peer, since otherwise, all its (spoofed) messages will be immediately discarded.
However, it cannot read the communication between the two peers, thus,
in the SCTP model, it cannot deduce the verification tag (vtag) of the association.
Note, since we do not model the sequence number in TCP or DCCP, the Off-Path attacker
can fully spoof the second peer in both of those models.
The Off-Path \amodel is illustrated in \Figr{off-path}.

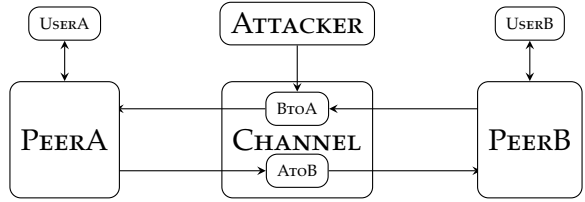
\begin{figure}[H]
\begin{adjustbox}{width=0.4\textwidth,center}
\begin{tikzpicture}
\node[draw, rectangle, rounded corners, minimum width=1.7cm, minimum height=1.5cm, fill=white] (channel) at (3,0) 
    {\small \textsc{Channel}};
\node[draw,rectangle, rounded corners] (AtoB) at (3,-0.4) {\tiny \textsc{AtoB}};
\node[draw,rectangle, rounded corners] (BtoA) at (3,0.4) {\tiny \textsc{BtoA}};
\draw[->] (0,-0.4) to (AtoB);
\draw[->] (AtoB) to (5.37,-0.4);
\draw[->] (6,0.4) to (BtoA);
\draw[->] (BtoA) to (0.65,0.4);
\node[draw, rectangle, rounded corners, minimum height=1.5cm,fill=white] (sender) at (0,0) 
    {\small \textsc{PeerA}};
\node[draw, rectangle, rounded corners, minimum height=1.5cm,fill=white] (receiver) at (6,0) 
    {\small \textsc{PeerB}};
\node[draw, rectangle, rounded corners, fill=white] (attacker) at (3,1.5)
    {\small \textsc{Attacker}};
\draw[->] (attacker) to (BtoA);
\node[draw,rectangle,rounded corners] (userA) at (0,1.5) {\tiny \textsc{UserA}};
\node[draw,rectangle,rounded corners] (userB) at (6,1.5) {\tiny \textsc{UserB}};
\draw[<->] (userA) to (sender);
\draw[<->] (userB) to (receiver);
\end{tikzpicture}
\end{adjustbox}
\caption{Off-Path \Amodel: $P = \textsc{UserA} \parallel \textsc{PeerA} \parallel \textsc{Channel} \parallel \textsc{PeerB} \parallel \textsc{UserB}$, and $Q$ is an empty process with the same inputs and outputs as $\textsc{PeerB}$ (but, in the case of SCTP, the wrong vtag).  The attacker can transmit messages into the BtoA buffer, but cannot receive messages, nor block messages in-transit.}
\Figl{off-path}
\end{figure}

\paragraph{Evil-Server \Amodel.}
In this \amodel, one of the peers behaves maliciously. For example,  the attacker takes the form of a finite sequence of malicious instructions inserted before the code of Peer~B, after which B behaves like normal.  See~\Figr{evil-server}.

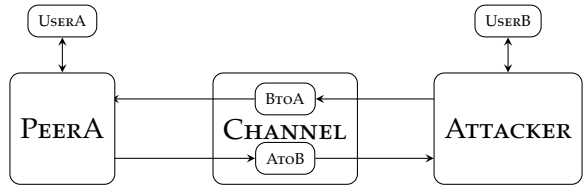
\begin{figure}[H]
\begin{adjustbox}{width=0.4\textwidth,center}
\begin{tikzpicture}
\node[draw, rectangle, rounded corners, minimum width=1.7cm, minimum height=1.5cm, fill=white] (channel) at (3,0) 
    {\small \textsc{Channel}};
\node[draw,rectangle, rounded corners] (AtoB) at (3,-0.4) {\tiny \textsc{AtoB}};
\node[draw,rectangle, rounded corners] (BtoA) at (3,0.4) {\tiny \textsc{BtoA}};
\draw[->] (0,-0.4) to (AtoB);
\draw[->] (AtoB) to (5,-0.4);
\draw[->] (6,0.4) to (BtoA);
\draw[->] (BtoA) to (0.65,0.4);
\node[draw, rectangle, rounded corners, minimum height=1.5cm,fill=white] (sender) at (0,0) 
    {\small \textsc{PeerA}};
\node[draw, rectangle, rounded corners, minimum height=1.5cm,fill=white] (receiver) at (6,0) 
    {\small \textsc{Attacker}};
\node[draw,rectangle,rounded corners] (userA) at (0,1.45) {\tiny \textsc{UserA}};
\node[draw,rectangle,rounded corners] (userB) at (6,1.45) {\tiny \textsc{UserB}};
\draw[<->] (userA) to (sender);
\draw[<->] (userB) to (receiver);
\end{tikzpicture}
\end{adjustbox}
\caption{Evil-Server \Amodel: $P =  \textsc{UserA} \parallel \textsc{PeerA} \parallel \textsc{Channel} \parallel \textsc{UserB}$, $Q = \textsc{PeerB}$.  The attacker can transmit messages into BtoA and receive messages from AtoB.  From the perspective of \textsc{PeerA}, the attacker is indistinguishable from a valid \textsc{PeerB} instance.}
\Figl{evil-server}
\end{figure}

\paragraph{On-Path \Amodel.}
In this \amodel, the attacker controls the channel connecting the two peers, and can drop or insert valid messages at-will.
Note that TCP, DCCP, and SCTP were not designed to withstand such an attacker, so we study it only to understand what could happen in a worst-case scenario.
The \amodel is illustrated in \Figr{on-path}.

\begin{figure}[H]
\begin{adjustbox}{width=0.4\textwidth,center}
\begin{tikzpicture}
\node[draw, rectangle, rounded corners, minimum width=1.7cm, minimum height=1.5cm, fill=white] (channel) at (3,0) 
    {\textsc{Attacker}};
\node[draw,rectangle, rounded corners] (AtoB) at (3,-0.4) {\tiny \textsc{AtoB}};
\node[draw,rectangle, rounded corners] (BtoA) at (3,0.4) {\tiny \textsc{BtoA}};
\draw[->] (0,-0.4) to (AtoB);
\draw[->] (AtoB) to (5.35,-0.4);
\draw[->] (6,0.4) to (BtoA);
\draw[->] (BtoA) to (0.65,0.4);
\node[draw, rectangle, rounded corners,minimum height=1.5cm,fill=white] (sender) at (0,0) 
    {\small \textsc{PeerA}};
\node[draw, rectangle, rounded corners, minimum height=1.5cm,fill=white] (receiver) at (6,0) 
    {\small \textsc{PeerB}};
\node[draw,rectangle,rounded corners] (userA) at (0,1.3) {\tiny \textsc{UserA}};
\node[draw,rectangle,rounded corners] (userB) at (6,1.3) {\tiny \textsc{UserB}};
\draw[<->] (userA) to (sender);
\draw[<->] (userB) to (receiver);
\end{tikzpicture}
\end{adjustbox}
\caption{On-Path \Amodel: $P =  \textsc{UserA} \parallel \textsc{PeerA} \parallel \textsc{PeerB} \parallel \textsc{UserB}$, $Q = \textsc{AtoB} \parallel \textsc{BtoA}$.  The attacker is allowed to perform a finite sequence of send/receive actions, in which it only sends valid messages (but can receive anything).  Once this sequence terminates, it behaves like an honest channel.}
\Figl{on-path}
\end{figure}
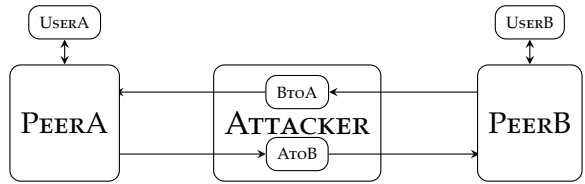

\paragraph{Common Experimental Setup.}
For each handshake model (TCP, DCCP, or SCTP), 
each property thereof,
and each representative attacker model, 
we run \korg using the following common experimental setup.
First, we ask \korg to synthesize $\leq 10$ attacks, because in our 
experience, after the first ten, subsequent attacks tend to be repetitive, differing only by actions that do not impact the attack outcome.  
Second, we configure \korg with a default search depth of 600,000, 
and a maximum depth of 2,400,000.
In our experience, these parameters balance fast-performance on smaller properties
with the ability to also attack more complex ones, without needing to run on a cluster.

\section{Synthesized Attacks Against the Transmission Control Protocol Handshake}

\begin{table}[h]
\centering
\begin{tabular}{l|p{3cm}p{3cm}p{3cm}p{3cm}}
            & $\phi_1$ & $\phi_2$ & $\phi_3$ & $\phi_4$ \\
            & No half-open
            & Passive/active succeeds
            & Peers don't get stuck
            & \SYNREC $\to$ \ESTABLISHED \\\hline
Off-Path    & 7 in 4s & 0 in 1s & 25 in 223m 28.4s & 4 in 2.3s \\
Evil-Server & 1 in 4s & 0 in 1s  & 12 in 72m 57.3s & 24 in 4.7s  \\
On-Path     & 1 in 4s & 9 in 3s & 36 in 218m 24.2s & 17 in 4.2s    
\end{tabular}
\caption{Synthesized attacks against the TCP handshake for each property, and the time required for \korg to compute them (or to determine that none exist) on a 16GB M1 Macbook Air, rounded to the nearest second.}
\end{table}

\korg does not find any attacks in the Off-Path or Evil-Server \amodels against $\phi_2$
because of the placement of the attacker in those \amodels.
In order to violate $\phi_2$, \korg would need to inject a \SYN or \ACK to Peer B,
but in both the Off-Path and Evil-Server \amodels the attacker can only inject packets to Peer A.
With all three \amodels, \korg computes results for $\phi_1$, $\phi_2$, and $\phi_4$ in seconds,
however, it takes a few hours to analyze $\phi_3$.
This is because $\phi_3$ is a considerably larger property than the other three, and \korg reduces to LTL model checking, the runtime of which is exponential in the size of the property~\cite{IEEECS86}.
Next, we describe some example attacks at a high level.

\paragraph{Example Off-Path Attack Against $\phi_1$.} 
Recall that $\phi_1$ forbids half-open connections.  In the first Off-Path attack generated with $\phi_1$, the attacker injects an \ACK and two {\FIN}s to Peer A, in that order.
The attack is illustrated below in \Figr{off-path-tcp-1}.
Note that the second \FIN is injected after the attack has already succeeded.

\begin{figure}[h]
\centering
\begin{adjustbox}{width=0.45\textwidth}
\begin{tikzpicture}
    \node[] (attacker) {\small Attacker};
    \node[right=2cm of attacker] (peerA) {\small \textsc{PeerA}};
    \node[right=2cm of peerA] (peerB) {\small \textsc{PeerB}};
    \node[below=0.18cm of attacker] (a0) {$a_0$};
    \node[below=0.1cm of peerA] (pa0) {\CLOSED};
    \node[below=0.1cm of peerB] (pb0) {\CLOSED};
    \draw[->] (pa0) to node[above] {\SYN} (pb0);
    \node[below=0.18cm of a0] (a1) {$a_1$};
    \node[below=0.1cm of pa0] (pa1) {\SYNSENT};
    \node[below=0.1cm of pb0] (pb1) {\CLOSED};
    \draw[->] (a1) to node[above] {\ACK} (pa1);
    \node[below=0.18cm of a1] (a2) {$a_2$};
    \node[below=0.1cm of pa1] (pa2) {\SYNSENT};
    \node[below=0.1cm of pb1] (pb2) {\CLOSED};
    \draw[->] (pb2) to node[above] {\SYN} (pa2);
    \node[below=0.18cm of a2] (a3) {$a_3$};
    \node[below=0.1cm of pa2] (pa3) {\SYNSENT};
    \node[below=0.15cm of pb2] (pb3) {\SYNSENT};
    \draw[->] (pa3) to node[above] {\ACK} (pb3);
    \node[below=0.18cm of a3] (a4) {$a_4$};
    \node[below=0.1cm of pa3] (pa4) {\ESTABLISHED};
    \node[below=0.1cm of pb3] (pb4) {\SYNREC};
    \draw[->] (a4) to node[above] {\FIN} (pa4);
    \node[below=0.18cm of a4] (a5) {$a_5$};
    \node[below=0.1cm of pa4] (pa5) {\ESTABLISHED};
    \node[below=0.1cm of pb4] (pb5) {\SYNREC};
    \draw[->] (pa5) to node[above] {\ACK} (pb5);
    \node[below=0.18cm of a5] (a6) {$a_6$};
    \node[below=0.1cm of pa5] (pa6) {\CLOSEWAIT};
    \node[below=0.09cm of pb5] (pb6) {\ESTABLISHED};
    \draw[->] (pa6) to node[above] {\FIN} (pb6);
    \node[below=0.18cm of a6] (a7) {$a_7$};
    \node[below=0.1cm of pa6] (pa7) {\LASTACK};
    \node[below=0.1cm of pb6] (pb7) {\ESTABLISHED};
    \draw[->] (pb7) to node[above] {\ACK} (pa7);
    \node[below=0.18cm of a7] (a8) {$a_8$};
    \node[below=0.1cm of pa7] (pa8) {\CLOSED};
    \node[below=0.1cm of pb7, text width=3cm] (pb8) {\ESTABLISHED, about to move to \CLOSEWAIT};
\end{tikzpicture}
\end{adjustbox}
\caption{Attack trace realized by the first Off-Path$(\phi_1)$ attacker synthesized by \korg, illustrated as a message sequence chart ending when the property is violated by a half-open connection.  Subsequent events in the trace are not illustrated since they are irrelevant to the property violation.}
\Figl{off-path-tcp-1}
\end{figure}
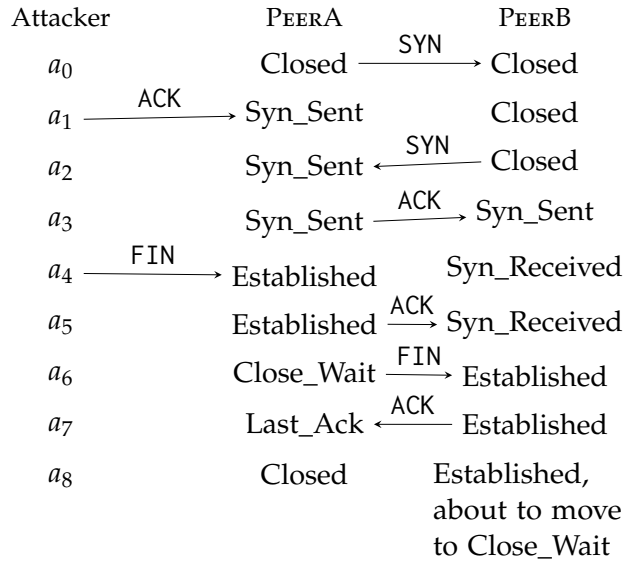

\paragraph{Example On-Path Attack Against $\phi_2$.} 
The attacker spoofs Peer A in order to guide B through a connection routine, resulting in a de-synchronization between A and B which disables them from ever successfully establishing a connection.
Interestingly, despite being On-Path, this particular attack never injects messages to A, nor drops messages from A; it only spoofs A in order to manipulate B.

\paragraph{Example Evil-Server Attack Against $\phi_3$.} 
The attacker communicates with Peer A at length in order to de-synchronize the peers such that, some time after the attack terminates, the peers end up in $(\FINWAITTWO, \CLOSEWAIT)$ with an \ACK in transit to Peer A (who expects a \FIN).  This is a deadlock.

\section{Synthesized Attacks Against the Datagram Congestion Control Protocol Handshake}

\begin{table}[h]
\centering
\begin{tabular}{l|p{3cm}p{3cm}p{3cm}p{3cm}}
            & $\theta_1$ & $\theta_2$ & $\theta_3$ & $\theta_4$ \\
            & Peers don't loop in a state 
            & No passive/passive teardown
            & First peer doesn't loop in a state
            & No active/active teardown \\\hline
Off-Path    & 0 in 5s  & 0 in 3s   & 0 in 5s  & 7 in 10s\\
Evil-Server & 0 in 2s  & 0 in 2s  & 0 in 2s & 0 in 2s  \\
On-Path     & 0 in 3s  & 13 in 12s & 0 in 3s  & 1 in 11s    
\end{tabular}
\caption{Synthesized attacks against the DCCP handshake for each property, and the time required for \korg to compute them (or to determine that none exist) on a 16GB M1 Macbook Air, rounded to the nearest second.}
\end{table}

The most interesting result is that no attacks are found with $\theta_1$ or $\theta_3$.
The type of looping behavior described by these properties is simply impossible in DCCP,
and thus, cannot be triggered by any attacker, regardless of its capabilities.
Next we overview some example attacks.

\paragraph{Example Off-Path Attack Against $\theta_4$.}
The attacker waits until Peer B has reached \CLOSEREQ.
It then injects a \texttt{DCCP\_RESET} to Peer A, guiding it back to \CLOSED
without alerting B.  From there it injects messages to A in order to guide A into \CLOSEREQ.
None of Peer A's response messages are of the type \texttt{DCCP\_CLOSE} and therefore they are all
treated as unexpected packets by Peer B, resulting in eventually both peers simultaneously 
being in \CLOSEREQ, violating $\theta_4$.

\paragraph{Example On-Path Attack Against $\theta_2$.}
The attacker spoofs each peer in order to guide Peer A through 55 establishment routines and Peer B through 40, before eventually leading each into \TIMEWAIT, violating $\theta_2$.  Note, \spin has an option to always return the shortest possible trace, which \korg can be configured to use, however it considerably increases the runtime of both tools.

\paragraph{Example On-Path Attack Against $\theta_4$.}
The attacker spoofs Peer B to guide A through 36 establishment routines and B through 23 before eventually leading each into \CLOSEREQ, violating $\theta_4$.  An attack trace snippet is shown in \Figr{on-path-1-dccp}.

\begin{figure}
\begin{lstlisting}
4570:   proc  2 (DCCP:1) debug.pml:72 (state 20)    [state[i] = 2]
4571:   proc  - (phi4:1) _spin_nvr.tmp:4 (state 4)  [(1)]
Stmnt [AtoN?DCCP_REQUEST] has escape(s): [(timeout)]
4572:   proc  1 (attacker:1) debug.pml:2436 (state 4501)    [AtoN?DCCP_REQUEST]
4573:   proc  - (phi4:1) _spin_nvr.tmp:4 (state 4)  [(1)]
Stmnt [NtoA!DCCP_RESPONSE] has escape(s): [(timeout)]
4574:   proc  1 (attacker:1) debug.pml:2439 Sent DCCP_RESPONSE  -> queue 2 (NtoA)
4574:   proc  1 (attacker:1) debug.pml:2439 (state 4507)    [NtoA!DCCP_RESPONSE]
4575:   proc  2 (DCCP:1) debug.pml:74 (state 21)    [rcv?DCCP_RESPONSE]
4576:   proc  - (phi4:1) _spin_nvr.tmp:4 (state 4)  [(1)]
4577:   proc  2 (DCCP:1) debug.pml:75 Send DCCP_ACK -> queue 1 (snd)
\end{lstlisting}
\caption{Example output from \spin for On-Path$(\theta_4)$, Attack 1.  4,738 trace lines omitted for brevity.  \korg comes with useful built-in tools for parsing verbose \spin output, which can be \texttt{pip}-imported by any \textsc{Python} package.}
\Figl{on-path-1-dccp}
\end{figure}

\section{Synthesized Attacks Against the Stream Control Transmission Protocol Handshake}

SCTP is implemented in Linux~\cite{linux} and FreeBSD~\cite{freebsd}.
These implementations were tested using  \textsc{PacketDrill}~\cite{cardwell2013packetdrill,packetDrillSCTP} and analyzed with \textsc{WireShark}~\cite{rungeler2012sctp}.
However, a recent vulnerability (CVE-2021-3772~\cite{cve}) shows the importance of conducting a much
more comprehensive 
formal analysis. 
Although a patch was proposed in RFC 9260~\cite{rfc9260}, and adapted by FreeBSD, the question remains whether other flaws might persist in the protocol design and whether the patch might have introduced additional vulnerabilities.
To the best of our knowledge, no prior works formally analyzed the 
entire SCTP connection establishment and teardown routines in a security context.
Motivated by this gap in the literature, we chose to conduct a detailed attacker synthesis-based study of SCTP
both with and without the FreeBSD patch.
We attempt to answer two questions.
(1) Does the FreeBSD patch resolve the vulnerability described in CVE-2021-3772?
And (2) do any other vulnerabilities persist in the code, or, were any new vulnerabilities introduced by the patch?

The rest of this section is organized as follows.
We describe the vulnerability disclosed in CVE-2021-3772 and the patches adopted by Linux and FreeBSD in \Secr{sctp-cve-and-patch}.
In \Secr{sctp-attack}, we apply \korg with the same settings we used for TCP and DCCP to the SCTP handshake model outlined in \Secr{handshakes:sctp}, but modified to disable the FreeBSD patch.
(The FreeBSD patch is the canonical patch strategy, in the sense that it is the one given in the latest SCTP RFC.)
Then we repeat the process in \Secr{sctp-patch-analysis} with the default version of our SCTP model, in which the FreeBSD patch is enabled.
Since the vulnerability described in the CVE was enabled by an ambiguity in the RFC, we conclude by manually analyzing the RFC for vulnerabilities, of which we find two.
We describe these ambiguities, and our analysis thereof, in \Secr{sctp-ambiguities}.
Based off our analysis, the IETF published an erratum to the SCTP RFC, which we authored~\cite{erratum7852}.

\subsection{CVE-2021-3772 Attack and Patch.}\Secl{sctp-cve-and-patch}
As reported in CVE-2021-3772~\cite{cve}, the prior version of SCTP specified in RFCs 2960~\cite{rfc2960} and 4960~\cite{rfc4960}
is vulnerable to a denial-of-service attack.
The reported vulnerability worked as follows.  Suppose SCTP peers A and B have established a connection and an off-channel attacker knows the IP addresses and ports of the two peers, but not the vtags of their existing connection.  The attacker spoofs B and sends a packet containing an \Init
to A.  The attacker uses a zero vtag as required for
packets containing an \Init.  The attacker must use
an illegal parameter in the \Init, e.g., a zero itag.  

Peer A, having already established a connection, treats the packet as
out-of-the-blue, per RFC 2960 $\mathsection$8.4 and 5.1, which specify that as an
association was established, A should respond to the \Init
containing illegal parameters with an \Abort and go
to \Closed. But in RFCs 2960 and 4960, it is unspecified which vtag should be
used in the \Abort.  Some implementations
used the expected vtag, which is
where a vulnerability arises.
Since the attacker spoofed the IP and port of Peer B,
Peer A sends the \Abort to Peer B, not the attacker.
When Peer B receives the \Abort, it sees the correct vtag, and 
tears down the connection.  Thus, by injecting a single packet with zero-valued tags, the attacker tears down the 
connection, pulling off a DoS.  
The attack is illustrated in \Figr{sctp-cve-msc}.

RFC 9260 patches CVE-2021-3772 using a strict defensive measure, wherein OOTB \Init packets with empty or zero itags are discarded, without response.  FreeBSD~\cite{freebsd} uses this patch.
Linux, on the other hand, adopts a different patch~\cite{linuxNotes}, wherein
the peer receiving the \Abort with the zero vtag simply ignores
it (rather than close the connection).
We consider the FreeBSD patch canonical because it is the one specified in the latest RFC,
and we enable it by default in our SCTP model (described in \Secr{handshakes:sctp}).

\begin{figure}
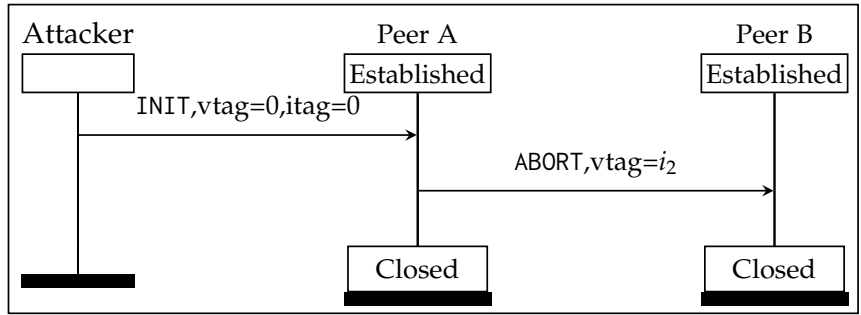

\begin{adjustbox}{width=0.6\textwidth,center}
\centering
\begin{msc}[head top distance=0.7cm,msc keyword=,left environment distance=1cm,right environment distance=1.2cm,foot distance=0.1cm, instance distance=3cm, action width=3cm]{}
    \declinst{attacker}{Attacker}{}
    \declinst{A}{\small Peer A}{\small \Established}
    \declinst{B}{\small Peer B}{\small \Established}

    \mess{\small \Init,vtag=0,itag=0}{attacker}{A}

    \nextlevel

    \mess{\small \Abort,vtag=$i_2$}{A}{B}

    \nextlevel

    \action[action width=2cm]{\small \Closed}{A}
    \action[action width=2cm]{\small \Closed}{B}
\end{msc}
\end{adjustbox}
\caption{Attack disclosed in CVE-2021-3772.  Peers A and B begin having  established an association with vtags~$i_1, i_2$ (resp.).  The Attacker transmits an invalid \Init chunk to  A, spoofing the port and IP of B.  Peer A responds by sending a valid \Abort to B, which closes the association.  By sending a single invalid \Init the Attacker performs a DoS.}
\Figl{sctp-cve-msc}
\end{figure}

\subsection{Synthesized Attacks with the CVE Patch Disabled.}\Secl{sctp-attack}
First, we run \korg with each SCTP \amodel with the CVE patch disabled.
We find at least one attack with each \amodel, all of which we describe below.
The time taken to compute each result, including to report that no attacks exist for combinations where we did not find any attacks, is reported in \Tabr{sctp-times}.

\begin{table}[h]
\centering
\newcolumntype{P}[1]{>{\RaggedRight\hspace{0pt}}p{#1}}
\centering
\small
\begin{tabular}{l|P{2.5cm}P{3cm}P{2.5cm}P{3cm}P{2cm}}
            & $\gamma_1$ & $\gamma_2$ & $\gamma_3$ & $\gamma_4$ & $\gamma_5$ \\
            & {\small Stay closed or establish} 
            & {\small Both closed, both established, \newline or changing state}
            & {\small Active teardown works} 
            & {\small Cookie timer ticks in $\CookieEchoed$} 
            & {\small No passive/ passive teardown} \\\hline
Off-Path    & 0 in 2m 20s & 0 in 8m 43s & 0 in 3m 20s & 0 in 1m 45s & 4 in 2m 57s \\
Evil-Server & 1 in 23s & 0 in 21s & 0 in 20s & 0 in 11s & 0 in 0m 10s \\
On-Path     & 0 in 15s & 0 in 26s & 0 in 25s & 0 in 14s & 0 in 12s \\\hline  
            & $\gamma_6$ & $\gamma_7$ & $\gamma_8$ & $\gamma_9$ & $\gamma_{10}$ \\
            & {\small Passive \newline teardown works}
            & {\small No $(\CookieEchoed,$ $\ShutdownReceived)$} 
            & {\small Correctness \newline of active/ passive teardown}
            & {\small Correctness \newline of passive/ active teardown}
            & {\small Teardown succeeds} \\\hline
Off-Path    & 0 in 3m 19s & 0 in 1m 43s & 0 in 2h 3m 42s & 1 in 1h 26m 10s & 0 in 4s \\
Evil-Server & 1 in 20s & 0 in 11s & 1 in 1m 6s & 1 in 14s & 0 in 12s \\
On-Path     & 0 in 25s & 0 in 13s & 2 in 1m 34s & 2 in 11s & 0 in 4s \\\hline 
\end{tabular}
\caption{Synthesized attacks against the SCTP handshake for each property, and the time required for \korg to compute them (or to determine that none exist) on a 16GB M1 Macbook Air, rounded to the nearest second.  Note, because our SCTP model is so complicated, the preliminary check to confirm $P \parallel Q \models \gamma$ built into \korg proved to be a considerable time-suck.  Hence, we performed this check manually for each property ahead of time, and then disabled it in \korg while running the attacker synthesis pipeline.  Therefore the times shown in this table should not be directly compared to those reported for TCP or DCCP.}
\Tabl{sctp-times}
\end{table}

\paragraph{Example Off-Path Attack Against $\gamma_9$.} A variant of the CVE attack.

\paragraph{Example Evil-Server Attack Against $\gamma_1$.}  The attacker guides A through passive establishment.  Then when A attempts active teardown, if its Shutdown Timer never fires, it deadlocks.  




\paragraph{Example On-Path Attack Against $\gamma_5$.}  The attacker spoofs each peer in order to manipulate the other into \ShutdownReceived.  (All four On-Path attacks against $\gamma_5$ use variations on this strategy.)



\subsection{Verification of the CVE Patch.}\Secl{sctp-patch-analysis}
Next, we re-run our analysis with the CVE patch enabled.
In the Off-Path \amodel, \korg terminates without finding any attacks.
This suffices to prove that the patch resolves the vulnerability.
In the other \amodels, we find the exact same attacks as those reported above, and nothing more, indicating that the patch does not decrease the security of SCTP with respect to the properties we defined
in any way which can be described in our model.

\subsection{Ambiguity Analysis.}\Secl{sctp-ambiguities}
We found two ambiguities in the latest SCTP RFC~\cite{rfc9260}.  First, in $\mathsection$5.2.1, during the description of how a peer should react upon receiving an unexpected \Init chunk:
\begin{quote}
\emph{Upon receipt of an \Init chunk in the \CookieEchoed state, an endpoint
   MUST respond with an \InitAck chunk using the same parameters it sent
   in its original \Init chunk (including its Initiate Tag, unchanged),
   provided that no new address has been added to the forming
   association.}
\end{quote}
Consider two peers (A and B) initially both in \Closed, in addition to some attacker who can spoof the port and IP  of B.  Suppose these machines engage in the sequence of events illustrated on the left-hand side of \Figr{sctp:ambiguity:attack}.  At the end of the sequence, what value should the vtag $V$ take? 

\begin{figure}
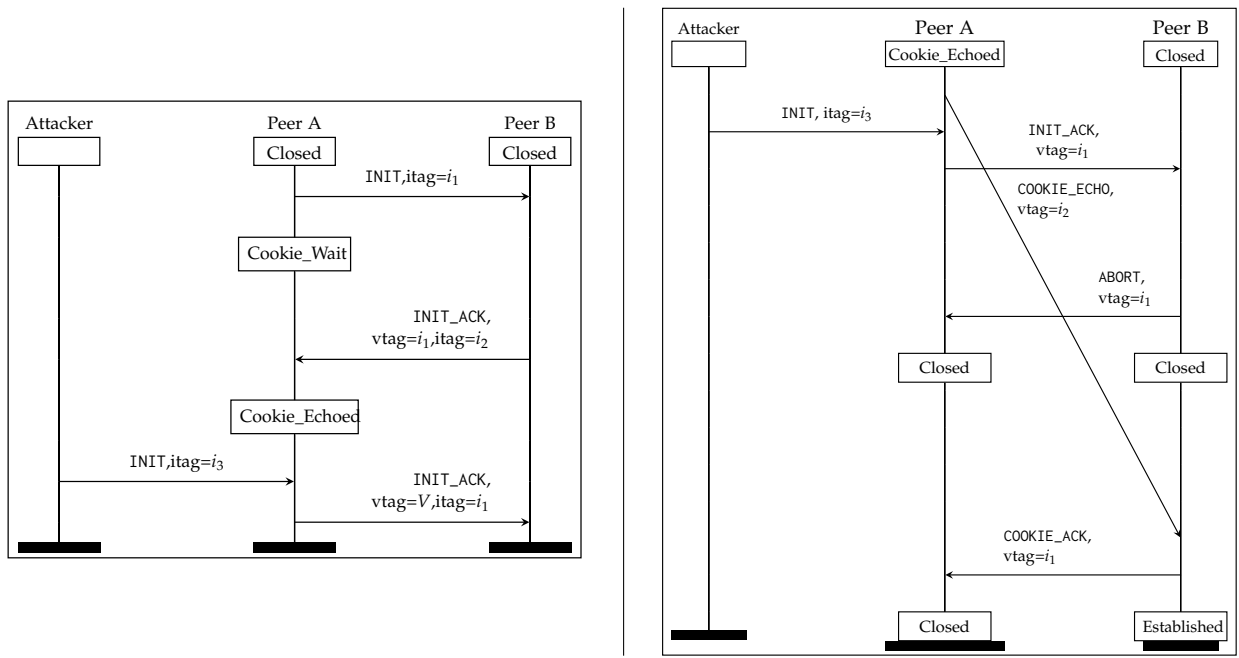

\centering
\begin{minipage}{0.45\textwidth}
\begin{adjustbox}{width=0.9\textwidth,center}
\centering
\begin{msc}[head top distance=0.7cm,msc keyword=,left environment distance=1cm,right environment distance=1cm,foot distance=0.1cm, instance distance=3cm, action width=3cm]{}
    \declinst{attacker}{\footnotesize Attacker}{}
    \declinst{A}{\footnotesize Peer A}{\footnotesize \Closed}
    \declinst{B}{\footnotesize Peer B}{\footnotesize \Closed}

    \mess{\footnotesize \Init,itag=$i_1$}{A}{B}

    \nextlevel

    \action[action width=2.2cm]{\footnotesize \CookieWait}{A}

    \nextlevel
    \nextlevel
    \nextlevel

    \mess[r]{{\parbox[b]{1\instdist}{\flushright \footnotesize \InitAck, vtag=$i_1$,itag=$i_2$}}}{B}{A}

    \nextlevel

    \action[action width=2.5cm]{\footnotesize \CookieEchoed}{A}

    \nextlevel
    \nextlevel

    \mess{\footnotesize \Init,itag=$i_3$}{attacker}{A}

    \nextlevel

    \mess{{\parbox[b]{1\instdist}{\flushright \footnotesize \InitAck, vtag=$V$,itag=$i_1$}}}{A}{B}
\end{msc}
\end{adjustbox}
\end{minipage}
\vline
\begin{minipage}{0.45\textwidth}
\begin{adjustbox}{width=0.9\textwidth,center}
\centering
\begin{msc}[head top distance=0.7cm,msc keyword=,left environment distance=1cm,right environment distance=1.2cm,foot distance=0.1cm, instance distance=3cm, action width=3cm]{}
    \declinst{attacker}{\footnotesize Attacker}{}
    \declinst{A}{Peer A}{\footnotesize \CookieEchoed}
    \declinst{B}{Peer B}{\footnotesize \Closed}

    \mess{\hspace{0.7cm}{\parbox[b]{.2\instdist}{\flushright \footnotesize \CookieEcho, vtag=$i_2$}}}{A}[0.3]{B}[12]

    \nextlevel


    \mess{\footnotesize \Init, itag=$i_3$}{attacker}{A}

    \nextlevel

    \mess{{\parbox[b]{.85\instdist}{\centering \footnotesize \InitAck, vtag=$i_1$}}}{A}{B}

    \nextlevel
    \nextlevel
    \nextlevel
    \nextlevel

    \mess{{\parbox[b]{.5\instdist}{\flushleft \footnotesize \Abort, vtag=$i_1$}}}{B}[0.2]{A}

    \nextlevel

    \action[action width=2cm]{\footnotesize \Closed}{A}
    \action[action width=2cm]{\footnotesize \Closed}{B}

    \nextlevel
    \nextlevel
    \nextlevel
    \nextlevel
    \nextlevel
    \nextlevel

    \mess{{\parbox[b]{.85\instdist}{\flushleft \footnotesize \CookieAck, vtag=$i_1$}}}{B}{A}

    \nextlevel

    \action[action width=2cm]{\footnotesize \Closed}{A}
    \action[action width=2cm]{\footnotesize \Established}{B}
\end{msc}
\end{adjustbox}
\end{minipage}
\caption{Left: ambiguous scenario.  What value should $V$ take?.  Right: Message sequence chart showing the DoS attack enabled by misinterpretation of the ambiguous RFC text.  Note the strict timing requirements necessary for a successful attack.}
\Figl{sctp:ambiguity:attack}
\end{figure}

The ambiguity arises from the use of the words \emph{it} and \emph{its}.
If the \emph{it} in question is interpreted to be the same entity as \emph{an endpoint}, i.e., the responding endpoint (A), and if ``the same parameters'' is interpreted to include the vtag,
then the resulting implementation will be vulnerable to a denial-of-service attack in the form of an induced half-open connection, which we illustrate on the right hand side of \Figr{sctp:ambiguity:attack}.
The fact that this is the wrong interpretation only becomes clear if you fully understand how itags and vtags are used in both directions.

Using a modified version of our SCTP model which implemented the incorrect interpretation of the ambiguous text,
we were able to automatically synthesize variants of the Off-Path 
attack described in \Figr{sctp:ambiguity:attack} using \korg.
We consulted with the lead SCTP RFC author who confirmed that the misinterpretation we describe could enable such attacks.  The attack is not possible if the text is interpreted correctly.
Out of concern that a real implementation might have misinterpreted the RFC document, we 
manually analyzed the source for both the Linux and FreeBSD implementations, 
and tested both implementations with \textsc{PacketDrill}, finding that neither made this mistake.
To make the text unambiguous, we suggest adding the following sentence, which disambiguates the meaning of \emph{it} and \emph{its} in the original quote.
\begin{quote}
\emph{The verification tag used in the packet containing the \InitAck chunk MUST
be the initiate tag of the newly received \Init chunk.}
\end{quote}
This suggestion has not yet resulted in an RFC erratum.

The second ambiguity we found was in $\mathsection$8.5:
\begin{quote}
\emph{When receiving an SCTP packet, the endpoint MUST ensure that the
value in the Verification Tag field of the received SCTP packet
matches its own tag.}
\end{quote}
The problem is that $\mathsection$8.5 does not say \emph{when} the vtag check should happen with respect
to other checks.  In particular, $\mathsection3.3.3$ says that an endpoint in \CookieWait who receives an \InitAck with an invalid itag should respond with an \Abort -- but it is unclear whether this still applies before or after the vtag check in $\mathsection$8.5.
Under the former interpretation, an endpoint in \CookieWait who receives an \InitAck with both an invalid itag and an invalid vtag would respond with an \Abort, whereas under the latter interpretation, the endpoint would silently discard the packet.
To clarify the ambiguity, we proposed the following erratum, which the SCTP RFC committee accepted in Erratum 7852 to RFC 9260~\cite{ourErratum}:
\begin{quote}
\emph{When receiving an SCTP packet, the endpoint MUST first ensure that the
value in the Verification Tag field of the received SCTP packet
matches its own tag before processing any chunks or changing its state.}
\end{quote}
Although it was not obvious to us that misinterpreting the ambiguous text could open the protocol to attack,
when we modeled the second ambiguity and analyzed it with \korg, we were able to find an attack in which an Off-Path
attacker could inject an \InitAck in order to disrupt an association attempt.
The attack is illustrated in \Figr{ambiguity-attack-2}.

\setlength{\instdist}{1.1cm}
\setlength{\levelheight}{0.4cm}
\begin{figure}[H]
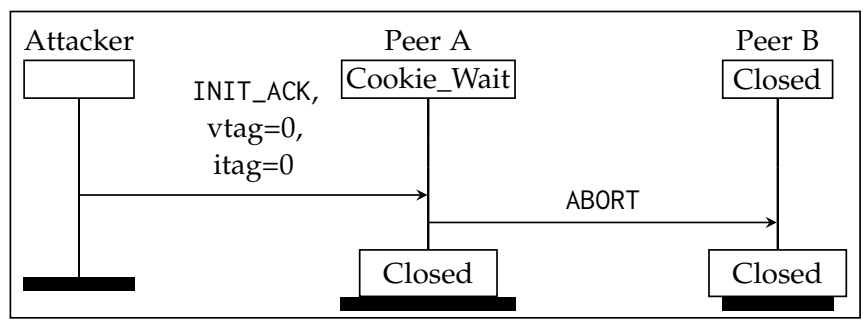

\centering
\begin{adjustbox}{width=0.6\textwidth,center}
\centering
\begin{msc}[head top distance=0.7cm,msc keyword=,left environment distance=1cm,right environment distance=1.2cm,foot distance=0.1cm, instance distance=3cm, action width=3cm]{}
    \declinst{attacker}{Attacker}{}
    \declinst{A}{Peer A}{\CookieWait}
    \declinst{B}{Peer B}{\Closed}

    \nextlevel
    \nextlevel

    \mess{{\parbox[b]{.85\instdist}{\centering \InitAck, vtag=0, itag=0}}}{attacker}{A};

    \nextlevel

    \mess{\Abort}{A}{B}

    \nextlevel

    \action[action width=2cm]{\Closed}{A}
    \action[action width=2cm]{\Closed}{B}
\end{msc}
\end{adjustbox}
\caption{Second ambiguity attack.}
\Figl{ambiguity-attack-2}
\end{figure}
\setlength{\levelheight}{0.6cm}
\setlength{\instdist}{2.5cm}

\section{Related Work}\Secl{korg:related}
Prior works formalized security problems using game theory (e.g., \textsc{FlipIt} \cite{FlipIt}, \cite{klavska2018automatic}), ``weird machines" \cite{USENIX11}, attack trees \cite{ACMCSUR19}, Markov models \cite{valizadeh2019toward}, and other methods.  
Prior notions of attacker quality include $\mathcal{O}$-complexity \cite{LangSec18}, expected information loss \cite{Citeseer13}, or success probability \cite{meira2019synthesis,vasilevskaya2014}, which is similar to our concept of $\forall$ versus $\exists$-attackers.  The formalism of \cite{vasilevskaya2014} also captures attack consequence (cost to a stakeholder).

Nondeterminism abstracts probability, e.g., a $\forall$-attacker 
    is an attacker with $P(\text{success}) = 1$, and, under fairness conditions, 
    an $\exists$-attacker is an attacker with $0 < P(\text{success}) < 1$.  
    Probabilistic approaches are advantageous when the existence of an
    event is less interesting than its likelihood.
    For example, a lucky adversary \emph{could} randomly guess my RSA modulus, 
    but this attack is too unlikely to be interesting.
    We chose to use nondeterminism over probabilities for two reasons: first, 
    because nondeterministic models do not require prior knowledge of event
    probabilities, but probabilistic models do; and second, because the 
    non-deterministic model-checking problem is cheaper than its probabilistic
    cousin~\cite{vardi1999probabilistic}.
    Nevertheless, we believe our approach could be extended to probabilistic
    models in future work.
    Katoen provides a nice survey of probabilistic model checking~\cite{katoen2016probabilistic}.

One work, which built on our own and studied TCP and ABP, suggested reactive controller synthesis (RCS) as an alternative to \korg's approach~\cite{matsui2022synthesis}.  \korg generates attacks that sometimes succeed ($\exists$-attackers), depending on choices made by the peers, whereas the RCS method only outputs attacks that always succeed ($\forall$-attackers); but such attacks do not always exist.
Another approach, which Fiterau-Brostean~et.~al.~\cite{fiterau2022automatabased}
successfully applied to various SSH~\cite{rfc4253} and DTLS~\cite{rfc_dtls} implementations,
describes incorrect behaviors using automata (rather than properties).
This specification style makes sense when generic bug patterns are known ahead of time.

Attacker synthesis work exists in cyber-physical systems~\cite{phan2017synthesis,bang2018online,huang2018algorithmic,lin2019synthesis,meira2019synthesis},
most of which define attacker success using bad states 
(e.g., reactor meltdown, vehicle collision, etc.) 
or information theory 
(e.g., information leakage metrics).
Problems include 
the \emph{actuator attacker synthesis problem} \cite{lin2019synthesisActuator};
the \emph{hardware-aware attacker synthesis problem} \cite{trippel2019security};
and the \emph{fault-attacker synthesis problem} \cite{barthe2014synthesis}.
However, to the best of our knowledge, we are the first to define and propose an approach to attacker synthesis for protocols.

There are many automated 
attack \emph{discovery} tools, which in contrast to attacker synthesis,
are in general sound but incomplete.
Each such tool is crafted to a particular variety of bug or mechanism of attack,
e.g., \textsc{SNAKE}~\cite{IEEE15} 
(which fuzzes network protocols), 
\textsc{TCPwn}~\cite{jero2018automated} 
(which finds throughput attacks against TCP congestion control implementations),
\textsc{MACE}~\cite{cho2011mace}
(which uses concolic execution to find vulnerabilities in protocol implementations),
\textsc{SemFuzz}~\cite{SemFuzz}
(which derives vulnerability proof-of-concept code from written disclosures),
\textsc{Tamarin}~\cite{meier2013tamarin} and \textsc{ProVerif}~\cite{blanchet2016modeling}
(which find attacks against secrecy in cryptographic protocols),
and so on~\cite{hoque2017analyzing,huang2012crax,kayacik2009generating}.
Some of these tools (such as our own, \korg) 
are general purpose, designed to attack any correctness property,
while others (e.g., \textsc{Tamarin} or \textsc{ProVerif}) 
are designed to target specific types of properties such as secrecy and trace-equivalence.
Note that of those, \textsc{SNAKE} was previously applied to TCP and DCCP, and \textsc{TCPwn} was applied to 
multiple TCP implementations.

Saini and Fehnker's work~\cite{saini2017evaluating} is the only one we are aware of that 
studied SCTP in the context of an attacker using formal methods.
But their attacker was only capable of sending \Init messages,
in contrast to our \amodels which are much more sophisticated,
and their attacker could not spoof the port and IP of a peer.
Hence, they could not model (and so did not find) the CVE attack.

The Internet Research Task Force (IRTF) is interested in incorporating formal methods more into the RFC drafting process.  To this end, they created a usable formal methods research group~\cite{fmietf}.  Examples of techniques the group is interested in incorporating include the NLP approach we proposed in a prior work~\cite{pacheco2022automated} (which uses \korg), as well as another such approach proposed by Yen~et.~al., which semi-automatically detects ambiguities in RFC documents~\cite{yen2021semi}.

\section{Conclusion}

In this chapter we showed how, given a protocol handshake model, some LTL specification
it satisfies, and an \amodel indicating the placement and capabilities of an attacker,
one can automatically synthesize a corresponding attack (or determine that none exists).
Although many prior works used formal methods to find attacks against systems or protocols,
and there exists a body of work proposing attacker synthesis techniques for cyber-physical systems 
(which we reviewed in \Secr{korg:related}),
to the best of our knowledge, we are the first to propose a generalizable framework
and approach to the synthesis of attacks against protocols.

Although we focus entirely on transport protocol handshakes, in principal, our approach should
work for other kinds of protocols such as small distributed systems (importantly, systems that are small enough to avoid state-space explosion in a model checking context), concurrent programs with shared resources, etc.
Another interesting line of inquiry is attacker synthesis (e.g.,~\cite{ardeshiricham2019verisketch}).
The tight integration of attack and defense synthesis in a CEGIS-style loop merits future research.

Finally, our SCTP case study highlights how in contrast to heuristic attack discovery tools,
an attacker synthesis approach has the advantage of being able to rule out attacks, which is useful
for confirming that a patch for a given vulnerability indeed accomplishes its stated goal.
However, a disadvantage of attacker synthesis is that the technique is only as good as the model it is fed,
and a very detailed model will lead to state-space explosion, causing \korg to give up without
producing an attack or determining that none exist.
For this reason, in order to gain full assurance that a protocol is totally robust against attacks,
one would need a full LTL specification of what it means for the protocol to be correct, broken up into many 
small (checkable) properties, in addition to some kind of refinement argument indicating that the simplified
model we feed to \korg is an accurate representation of the complete protocol with respect to the correctness
specification.  This could be done using a hybrid approach involving both theorem proving and model checking, as was done in~\cite{bisimulation2003linking}.

\chapter{Conclusion}\Chapl{conclusion}

In this dissertation we studied the functionality, performance, correctness,
and security of transport protocols using a diversity of formal methods,
each of which we explained in \Chapr{intro}.
First, in \Chapr{karn-rto}, we formally defined Karn's Algorithm and proved
what precisely it measures, using inductive invariants in Ivy.
Then we moved to \acls, where we formalized the RTO computation based on 
those RTT measurements.
We showed that when the RTT measurements are 
bounded then so are the internal variables of the RTO, yet nevertheless,
infinitely many timeouts could occur.
Then, in \Chapr{gbn}, we shifted our focus from the timeout mechanism to the
sliding window procedure of Go-Back-$N$.
We defined a realistic network model and formally proved that under ideal conditions
Go-Back-$N$ can achieve perfect efficiency.
Finally, we proved a novel lower bound on the efficiency of Go-Back-$N$ when the sender's
constant transmission rate out-paces the rate at which the bucket refills,
in the absence of reordering.

Between \Chapr{karn-rto} and \Chapr{gbn} we explored two different approaches to the 
analysis of protocol performance: one based on real analysis, and another based on inductive invariants.
However, neither approach involved actually concertizing the real time-line, as was done in~\cite{killian2010finding} or~\cite{sharma2023performal}.  This is especially important for understanding metrics like throughput, which take a duration of time as a denominator.  The natural next step for our research is therefore to extend our models to support a real time-line, so that we can derive concrete performance bounds using actual time durations, whether they be drawn from a distribution, sampled from a real implementation, represented symbolically, etc.  We think that a refinement framework may provide a natural way to connect models with time to more abstract models without, like what was done in~\cite{IEEEACMTON02}.

Next we turned our attention to the actual handshake mechanisms of transport protocols,
which are finite state and amenable to model checking.
In \Chapr{handshakes} we formally modeled the TCP, DCCP, and SCTP handshake procedures 
in \promela and defined LTL correctness properties for each based on a close reading of
the corresponding RFC documents.
We proved that each handshake model satisfied its correctness properties using the model checker \spin.
We did not, however, connect our finite state handshake models to our infinite state models of Karn's Algorithm, the RTO computation, or Go-Back-$N$.
Making this kind of connection and looking at the interplay between the various protocol components
is a natural next step.
This can be done without needing to choose one of either theorem proving or model checking,
by using a hybrid approach involving both~\cite{bisimulation2003linking}.

Having proven our handshake models correct in isolation, we then moved on to the question of whether they are also correct in the presence of an attacker.
In \Chapr{korg}, we proposed a general framework and problem statement for automated attacker synthesis,
and a solution based on LTL model checking, which we proved to be both sound and complete,
and implemented in the open-source tool \korg.
We applied \korg to our TCP, DCCP, and SCTP models using three representative \amodels 
(outlined in \Secr{amodels}), and explained our results.
In general we found that \korg found attacks or determined that none existed in a matter of seconds,
although we also encountered some model/property combinations which took considerably longer,
due to the state-space explosion problem.
Nevertheless, \korg never failed to either find an attack,
or exhaust the search space looking for one, in any of our applications of it
to TCP, DCCP, or SCTP.
Our SCTP analysis was the most detailed and centered on a recent CVE and subsequent patch.
We showed that the attack in question could be found automatically when the patch was disabled,
and moreover, that the patch closed the vulnerability.
The vulnerability in question was enabled by an ambiguity in the RFC text, and inspired by this problem,
we found two more ambiguities, and showed that both could enable a vulnerabilities
if misinterpreted, which the lead SCTP RFC author confirmed.
We proposed two errata to the SCTP RFC, of which so far, one was accepted.

Although the automated attacker synthesis problem we posed was quite general, our solution relies on model checking and therefore does not feasibly scale to large distributed systems, and in fact \korg does not yet support non-centralized \amodels where the attacker consists of multiple coordinated processes.  In order to synthesize attacks against large, distributed systems, we will need other synthesis techniques.  One example can be found in our recent work analyzing \textsc{GossipSub}, a peer-to-peer system used in \textsc{Ethereum} and \textsc{FileCoin}, where we built a custom event \emph{generator} which could steer a system toward a vulnerable state~\cite{kumar2024formal}.  Another interesting direction is the extension of our attacker synthesis approach to other logics beyond LTL, such as Computational Tree Logic, Dynamic Epistemic Logic, Signal Temporal Logic, etc., as well as other kinds of models beyond finite Kripke structures, such as what Oakley~et.~al. did for discrete-time Markov chains~\cite{oakley2022adversarial}.  Building on this work, we would like to investigate applications of probabilistic programming to automated attacker synthesis, where the goal is to steer a system toward low-probability, deleterious outcomes (such as a tied vote in \textsc{RAFT}).
Finally, we note that recent work on the synthesis of distributed protocols may shed light on the corollary problem
for attacks against them; see, e.g.,~\cite{egolf2024efficient}.

Finally, even finite state models such as those outlined in \Chapr{handshakes} are time-intensive to write and validate, making techniques such as model checking and attacker synthesis difficult for practitioners to use as part of their day-to-day engineering workflow.  
More generally, model and specification engineering presents a considerable barrier to the adoption of even
lightweight formal methods in practice~\cite{goldstein2022some,chong2016report}.
This problem can be ameliorated using automated model extraction techniques.  As an example, in a prior work, we used natural language processing to extract protocol handshake models from corresponding RFC documents, which we then attacked using \korg~\cite{pacheco2022automated}.  We found that even ``partial'' models with mistakes or omissions could be used to find real attacks, which succeeded against canonical, hand-written models.  In light of the recent advent of powerful large language models, this research direction deserves further attention.
More speculatively, the converse may also be true -- that is, large language models may benefit from the integration of formal methods.
Regardless, the Internet as a whole stands to benefit from more formal verification,
and the biggest barrier to widespread adoption of these techniques currently is that they are simply 
hard to use.

\bibliographystyle{unsrt}
\bibliography{main}

\begin{thebibliography}{100}

\bibitem{rfc9293_tcp_new}
Wesley Eddy.
\newblock {Transmission Control Protocol (TCP)}.
\newblock RFC 9293, August 2022.

\bibitem{rfc4340_dccp}
Sally Floyd, Mark~J. Handley, and Eddie Kohler.
\newblock {Datagram Congestion Control Protocol (DCCP)}.
\newblock RFC 4340, March 2006.

\bibitem{rfc9260}
R.~Stewart, M.~T{\"u}xen, and K.~Nielsen.
\newblock Stream control transmission protocol.
\newblock \url{https://datatracker.ietf.org/doc/html/rfc9260}, June 2022.
\newblock Accessed 15 June 2023.

\bibitem{manyika2011great}
James Manyika and Charles Roxburgh.
\newblock The great transformer: The impact of the internet on economic growth
  and prosperity.
\newblock {\em McKinsey Global Institute}, 1(0360-8581), 2011.

\bibitem{voip911}
{VoIP} and 911 service.
\newblock \url{https://www.fcc.gov/consumers/guides/voip-and-911-service},
  December 2019.
\newblock Accessed 8 April 2024.

\bibitem{powerGrid}
Robert~K. Knake.
\newblock A cyberattack on the {U.S.} power grid.
\newblock \url{https://www.cfr.org/report/cyberattack-us-power-grid}, April
  2017.
\newblock Accessed 8 April 2024.

\bibitem{jacobson1988congestion}
Van Jacobson.
\newblock Congestion avoidance and control.
\newblock {\em ACM SIGCOMM computer communication review}, 18(4):314--329,
  1988.

\bibitem{heidemann2012preliminary}
John Heidemann, Lin Quan, and Yuri Pradkin.
\newblock {\em A preliminary analysis of network outages during hurricane
  sandy}.
\newblock Citeseer, 2012.
\newblock USC/ISI Technical Report ISI-TR-685b.

\bibitem{antonakakis2017understanding}
Manos Antonakakis, Tim April, Michael Bailey, Matt Bernhard, Elie Bursztein,
  Jaime Cochran, Zakir Durumeric, J~Alex Halderman, Luca Invernizzi, Michalis
  Kallitsis, et~al.
\newblock Understanding the mirai botnet.
\newblock In {\em 26th USENIX security symposium (USENIX Security 17)}, pages
  1093--1110, 2017.

\bibitem{dutchTelekom}
\url{https://www.telekom.com/en/media/media-information/archive/information-on-current-problems-444862},
  November 2016.
\newblock Accessed 23 April 2024.

\bibitem{routingOutage}
Lisette Voytko.
\newblock Major outage brings down discord, reddit, amazon and more.
\newblock
  https://www.forbes.com/sites/lisettevoytko/2019/06/24/major-outage-brings-down-discord-reddit-amazon-and-more/?sh=6f4a6d5b30a4,
  2019.
\newblock Accessed 23 April 2024.

\bibitem{crowdstrikeOutage}
MARK THIESSEN, ANTHONY IZAGUIRRE, TOM MURPHY, JUAN LOZANO, JILL COLVIN, JULIE
  WATSON, and MICHAEL CASEY.
\newblock Highlights from the global tech outage: Airlines, businesses and
  border crossings hit by global tech disruption.
\newblock
  \url{https://apnews.com/live/internet-global-outage-crowdstrike-microsoft-downtime},
  7 2024.
\newblock Accessed 20 July 2024.

\bibitem{bogutz2019identifying}
Ryan Bogutz, Yuri Pradkin, and John Heidemann.
\newblock Identifying important internet outages.
\newblock In {\em 2019 IEEE International Conference on Big Data (Big Data)},
  pages 3002--3007. IEEE, 2019.

\bibitem{leiner2009brief}
Barry~M Leiner, Vinton~G Cerf, David~D Clark, Robert~E Kahn, Leonard Kleinrock,
  Daniel~C Lynch, Jon Postel, Larry~G Roberts, and Stephen Wolff.
\newblock A brief history of the internet.
\newblock {\em ACM SIGCOMM computer communication review}, 39(5):22--31, 2009.

\bibitem{saltzer1984end}
Jerome~H Saltzer, David~P Reed, and David~D Clark.
\newblock End-to-end arguments in system design.
\newblock {\em ACM Transactions on Computer Systems (TOCS)}, 2(4):277--288,
  1984.

\bibitem{ross2012computer}
Keith~W Ross and James~F Kurose.
\newblock Computer networking: a top-down approach, 2017.

\bibitem{lamport2019time}
Leslie Lamport.
\newblock Time, clocks, and the ordering of events in a distributed system.
\newblock In {\em Concurrency: the Works of Leslie Lamport}, pages 179--196.
  ACM, 2019.

\bibitem{rfc9000_quic}
Jana Iyengar and Martin Thomson.
\newblock {QUIC: A UDP-Based Multiplexed and Secure Transport}.
\newblock RFC 9000, May 2021.

\bibitem{quicDraft}
Jim Roskind.
\newblock Multiplexed stream transport over {UDP}.
\newblock
  \url{https://docs.google.com/document/d/1RNHkx_VvKWyWg6Lr8SZ-saqsQx7rFV-ev2jRFUoVD34/edit?usp=sharing},
  2013.
\newblock Accessed 8 April 2024.

\bibitem{karn1987}
Phil Karn and Craig Partridge.
\newblock Improving round-trip time estimates in reliable transport protocols.
\newblock {\em ACM SIGCOMM Computer Communication Review}, 17(5):2--7, 1987.

\bibitem{RFC6298}
V.~Paxson, M.~Allman, J.~Chu, and M.~Sargent.
\newblock Computing tcp's retransmission timer.
\newblock \url{https://datatracker.ietf.org/doc/html/rfc6298}, June 2011.
\newblock Accessed 15 June 2023.

\bibitem{brakmo1995tcp}
Lawrence~S. Brakmo and Larry~L. Peterson.
\newblock {TCP} vegas: End to end congestion avoidance on a global internet.
\newblock {\em {IEEE} Journal on selected Areas in communications},
  13(8):1465--1480, 1995.

\bibitem{rfc6582}
T.~Henderson, S.~Floyd, A.~Gurtov, and Y.~Nishida.
\newblock The {N}ew{R}eno modification to {TCP}'s fast recovery algorithm.
\newblock \url{https://www.rfc-editor.org/rfc/rfc6582}, April 2012.
\newblock Accessed 15 March 2023.

\bibitem{rfc4960}
R.~Stewart.
\newblock tream control transmission protocol.
\newblock \url{https://www.rfc-editor.org/rfc/rfc4960}, September 2007.
\newblock Accessed 23 February 2023.

\bibitem{cve}
Inc Red~Hat.
\newblock {CVE}-2021-3772 detail.
\newblock \url{https://nvd.nist.gov/vuln/detail/CVE-2021-3772}.
\newblock Accessed 15 March 2023.

\bibitem{linux}
{SCTP}.
\newblock \url{https://github.com/torvalds/linux/tree/master/net/sctp}.
\newblock Accessed 15 March 2023.

\bibitem{rfc2960}
R.~Stewart, Q.~Xie, K.~Morneault, C.~Sharp, H.~Schwarzbauer, T.~Taylor,
  I.~Rytina, M.~Kalla, L.~Zhang, and V.~Paxson.
\newblock Stream control transmission protocol.
\newblock \url{https://www.rfc-editor.org/rfc/rfc2960}, October 2000.
\newblock Accessed 15 March 2023.

\bibitem{rfc4460}
Armando~L. Caro, Kacheong Poon, Michael Tüxen, Randall~R. Stewart, and Ivan
  Arias-Rodriguez.
\newblock {Stream Control Transmission Protocol (SCTP) Specification Errata and
  Issues}.
\newblock RFC 4460, April 2006.

\bibitem{dieudonne1970work}
Jean~A Dieudonn{\'e}.
\newblock The work of nicholas bourbaki.
\newblock {\em The American Mathematical Monthly}, 77(2):134--145, 1970.

\bibitem{maric2015survey}
Filip Maric.
\newblock A survey of interactive theorem proving.
\newblock {\em Zbornik radova}, 18(26):173--223, 2015.

\bibitem{mackenzie1995automation}
Donald MacKenzie.
\newblock The automation of proof: A historical and sociological exploration.
\newblock {\em IEEE Annals of the History of Computing}, 17(3):7--29, 1995.

\bibitem{stdecomposition}
Stefan Banach and Alfred Tarski.
\newblock Sur la d{\'e}composition des ensembles de points en parties
  respectivement congruentes.
\newblock {\em Fundamenta Mathematicae}, page 244–277, 1924.

\bibitem{wilson2005continuous}
Trevor~M Wilson.
\newblock A continuous movement version of the banach--tarski paradox: A
  solution to de groot's problem.
\newblock {\em The Journal of Symbolic Logic}, 70(3):946--952, 2005.

\bibitem{bapanapally2022complete}
Jagadish Bapanapally and Ruben Gamboa.
\newblock A complete, mechanically-verified proof of the banach-tarski theorem
  in acl2 (r).
\newblock In {\em 13th International Conference on Interactive Theorem Proving
  (ITP 2022)}. Schloss-Dagstuhl-Leibniz Zentrum f{\"u}r Informatik, 2022.

\bibitem{program2013homotopy}
The Univalent~Foundations Program.
\newblock Homotopy type theory: Univalent foundations of mathematics.
\newblock {\em arXiv preprint arXiv:1308.0729}, 2013.

\bibitem{werner1996encoding}
B~Werner.
\newblock An encoding of zermelo-fraenkel set theory in coq.
\newblock Technical report, Retrieved 2016-02-06, from https://github.
  com/coq-contribs/zfc, 1996.

\bibitem{bauer2017hott}
Andrej Bauer, Jason Gross, Peter~LeFanu Lumsdaine, Michael Shulman, Matthieu
  Sozeau, and Bas Spitters.
\newblock The hott library: a formalization of homotopy type theory in coq.
\newblock In {\em Proceedings of the 6th ACM SIGPLAN Conference on Certified
  Programs and Proofs}, pages 164--172, 2017.

\bibitem{padon2016ivy}
Oded Padon, Kenneth~L McMillan, Aurojit Panda, Mooly Sagiv, and Sharon Shoham.
\newblock Ivy: safety verification by interactive generalization.
\newblock In {\em Proceedings of the 37th ACM SIGPLAN Conference on Programming
  Language Design and Implementation}, pages 614--630, 2016.

\bibitem{boyer1975proving}
Robert~S Boyer and J~Strother Moore.
\newblock Proving theorems about lisp functions.
\newblock {\em Journal of the ACM (JACM)}, 22(1):129--144, 1975.

\bibitem{boyer1979computational}
Robert~S Boyer and J~Strother Moore.
\newblock A computational logic.
\newblock 1979.

\bibitem{kaufmann1997industrial}
Matt Kaufmann and J.~Strother Moore.
\newblock An industrial strength theorem prover for a logic based on common
  lisp.
\newblock {\em IEEE Transactions on Software Engineering}, 23(4):203--213,
  1997.

\bibitem{kaufmann1997precise}
Matt Kaufmann and J~Strother Moore.
\newblock A precise description of the {ACL2} logic.
\newblock 1998.
\newblock Accessed 13 April 2024.

\bibitem{dillinger2007acl2s}
Peter~C Dillinger, Panagiotis Manolios, Daron Vroon, and J~Strother Moore.
\newblock {ACL2s}: ``the {ACL2} sedan''.
\newblock {\em Electronic Notes in Theoretical Computer Science}, 174(2):3--18,
  2007.

\bibitem{chamarthi2014data}
Harsh~Raju Chamarthi, Peter~C Dillinger, and Panagiotis Manolios.
\newblock Data definitions in the acl2 sedan.
\newblock {\em arXiv preprint arXiv:1406.1557}, 2014.

\bibitem{manolios2004integrating}
Panagiotis Manolios and Daron Vroon.
\newblock Integrating reasoning about ordinal arithmetic into acl2.
\newblock In {\em International Conference on Formal Methods in Computer-Aided
  Design}, pages 82--97. Springer, 2004.

\bibitem{manolios2006termination}
Panagiotis Manolios and Daron Vroon.
\newblock Termination analysis with calling context graphs.
\newblock In {\em International Conference on Computer Aided Verification},
  pages 401--414. Springer, 2006.

\bibitem{chamarthi2011integrating}
Harsh~Raju Chamarthi, Peter~C Dillinger, Matt Kaufmann, and Panagiotis
  Manolios.
\newblock Integrating testing and interactive theorem proving.
\newblock {\em arXiv preprint arXiv:1105.4394}, 2011.

\bibitem{gamboa1999mechanically}
Ruben Gamboa.
\newblock {\em Mechanically verifying real-valued algorithms in ACL2}.
\newblock PhD thesis, Citeseer, 1999.

\bibitem{gamboa2001nonstandard}
Ruben~A Gamboa and Matt Kaufmann.
\newblock Nonstandard analysis in acl2.
\newblock {\em Journal of automated reasoning}, 27:323--351, 2001.

\bibitem{holzmann1996nested}
Gerard~J Holzmann, Doron~A Peled, and Mihalis Yannakakis.
\newblock On nested depth first search.
\newblock {\em The Spin Verification System}, 32:81--89, 1996.

\bibitem{IEEECS86}
Moshe~Y Vardi and Pierre Wolper.
\newblock An automata-theoretic approach to automatic program verification.
\newblock In {\em Proceedings of the First Symposium on Logic in Computer
  Science}, pages 322--331. IEEE Computer Society, 1986.

\bibitem{holzmann1998analysis}
Gerard~J Holzmann.
\newblock An analysis of bitstate hashing.
\newblock {\em Formal methods in system design}, 13:289--307, 1998.

\bibitem{holzmann1995improvement}
Gerard~J Holzmann and Doron Peled.
\newblock An improvement in formal verification.
\newblock In {\em Formal Description Techniques VII: Proceedings of the 7th
  IFIP WG 6.1 international conference on formal description techniques}, pages
  197--211. Springer, 1995.

\bibitem{wolper1993reliable}
Pierre Wolper and Denis Leroy.
\newblock Reliable hashing without collision detection.
\newblock In {\em Computer Aided Verification: 5th International Conference,
  CAV'93 Elounda, Greece, June 28--July 1, 1993 Proceedings 5}, pages 59--70.
  Springer, 1993.

\bibitem{holzmann2007design}
Gerard~J Holzmann and Dragan Bosnacki.
\newblock The design of a multicore extension of the spin model checker.
\newblock {\em IEEE Transactions on Software Engineering}, 33(10):659--674,
  2007.

\bibitem{holzmann2008swarm}
Gerard~J Holzmann, Rajeev Joshi, and Alex Groce.
\newblock Swarm verification.
\newblock In {\em 2008 23rd IEEE/ACM International Conference on Automated
  Software Engineering}, pages 1--6. IEEE, 2008.

\bibitem{Holzmann03}
G.~Holzmann.
\newblock {\em {The Spin Model Checker}}.
\newblock Addison-Wesley, 2003.

\bibitem{gerth1995simple}
Rob Gerth, Doron Peled, Moshe~Y Vardi, and Pierre Wolper.
\newblock Simple on-the-fly automatic verification of linear temporal logic.
\newblock In {\em International Conference on Protocol Specification, Testing
  and Verification}, pages 3--18. Springer, 1995.

\bibitem{buchiNotes}
Matt Fredrikson and Andr{\'e} Platzer.
\newblock Lecture notes on {LTL} model checking \& b{\"u}chi automata.
\newblock \url{https://www.cs.cmu.edu/~15414/f17/lectures/17-buchi.pdf}, 2017.
\newblock Accessed 11 April 2024.

\bibitem{baier2008principles}
Christel Baier and Joost-Pieter Katoen.
\newblock {\em Principles of model checking}.
\newblock MIT press, 2008.

\bibitem{alonzo57}
Alonzo Church.
\newblock Application of recursive arithmetic to the problem of circuit
  synthesis, 7 1957.
\newblock As cited in doi:10.2307/2271310.

\bibitem{pnueli1989synthesis}
Amir Pnueli and Roni Rosner.
\newblock On the synthesis of a reactive module.
\newblock In {\em Proceedings of the 16th ACM SIGPLAN-SIGACT symposium on
  Principles of programming languages}, pages 179--190, 1989.

\bibitem{clarke1981design}
Edmund~M Clarke and E~Allen Emerson.
\newblock Design and synthesis of synchronization skeletons using branching
  time temporal logic.
\newblock In {\em Workshop on logic of programs}, pages 52--71. Springer, 1981.

\bibitem{gulwani2017program}
Sumit Gulwani, Oleksandr Polozov, Rishabh Singh, et~al.
\newblock Program synthesis.
\newblock {\em Foundations and Trends{\textregistered} in Programming
  Languages}, 4(1-2):1--119, 2017.

\bibitem{gantenbein2013flash}
Douglas Gantenbein.
\newblock Flash fill gives excel a smart charge.
\newblock 2013.
\newblock Accessed 23 April 2024.

\bibitem{gulwani2011automating}
Sumit Gulwani.
\newblock Automating string processing in spreadsheets using input-output
  examples.
\newblock {\em {ACM} Sigplan Notices}, 46(1):317--330, 2011.

\bibitem{chaudhuri2021neurosymbolic}
Swarat Chaudhuri, Kevin Ellis, Oleksandr Polozov, Rishabh Singh, Armando
  Solar-Lezama, Yisong Yue, et~al.
\newblock Neurosymbolic programming.
\newblock {\em Foundations and Trends{\textregistered} in Programming
  Languages}, 7(3):158--243, 2021.

\bibitem{abdou2015accurate}
AbdelRahman Abdou, Ashraf Matrawy, and Paul~C van Oorschot.
\newblock Accurate one-way delay estimation with reduced client
  trustworthiness.
\newblock {\em IEEE Communications Letters}, 19(5):735--738, 2015.

\bibitem{tang2005gocast}
Chunqiang Tang, Rong~N Chang, and Christopher Ward.
\newblock Gocast: Gossip-enhanced overlay multicast for fast and dependable
  group communication.
\newblock In {\em 2005 International Conference on Dependable Systems and
  Networks (DSN'05)}, pages 140--149. IEEE, 2005.

\bibitem{rfc8305}
D.~Schinazi and T.~Pauly.
\newblock Happy eyeballs version 2: Better connectivity using concurrency.
\newblock \url{https://www.rfc-editor.org/rfc/rfc8305}, December 2017.
\newblock Accessed 23 February 2023.

\bibitem{PM2006}
V.~Pothamsetty and P.~Mateti.
\newblock A case for exploit-robust and attack-aware protocol {RFC}s.
\newblock In {\em Proceedings 20th IEEE International Parallel and Distributed
  Processing Symposium}, 2006.

\bibitem{mathis1997macroscopic}
Matthew Mathis, Jeffrey Semke, Jamshid Mahdavi, and Teunis Ott.
\newblock The macroscopic behavior of the tcp congestion avoidance algorithm.
\newblock {\em ACM SIGCOMM Computer Communication Review}, 27(3):67--82, 1997.

\bibitem{ccac}
Venkat Arun, Mina~Tahmasbi Arashloo, Ahmed Saeed, Mohammad Alizadeh, and Hari
  Balakrishnan.
\newblock Toward formally verifying congestion control behavior.
\newblock In {\em SIGCOMM 2021}, 2021.

\bibitem{DMSS19}
Doron Zarchy, Radhika Mittal, Michael Schapira, and Scott Shenker.
\newblock Axiomatizing congestion control.
\newblock In {\em Proceedings of the ACM on Measurement and Analysis of
  Computing Systems}, 6 2019.

\bibitem{ivy}
Marcelo Taube, Giuliano Losa, Kenneth~L McMillan, Oded Padon, Mooly Sagiv,
  Sharon Shoham, James~R Wilcox, and Doug Woos.
\newblock Modularity for decidability of deductive verification with
  applications to distributed systems.
\newblock In {\em Proceedings of the 39th ACM SIGPLAN Conference on Programming
  Language Design and Implementation}, pages 662--677, 2018.

\bibitem{chamarthi2011acl2}
Harsh~Raju Chamarthi, Peter Dillinger, Panagiotis Manolios, and Daron Vroon.
\newblock The {ACL2} sedan theorem proving system.
\newblock In {\em International Conference on Tools and Algorithms for the
  Construction and Analysis of Systems}, pages 291--295. Springer, 2011.

\bibitem{rfc3124}
H.~Balakrishnan and S.~Seshan.
\newblock The congestion manager.
\newblock \url{https://www.rfc-editor.org/rfc/rfc3124}, June 2001.
\newblock Accessed 21 March 2023.

\bibitem{rfc3481}
H.~Inamura, G.~Montenegro, R.~Ludwig, A.~Gurtov, and F.~Khafizov.
\newblock {TCP} over second (2.5g) and third (3g) generation wireless networks.
\newblock \url{https://www.rfc-editor.org/rfc/rfc3481}, February 2007.
\newblock Accessed 21 March 2023.

\bibitem{rfc3539}
B.~Aboba and J.~Wood.
\newblock Authentication, authorization and accounting ({AAA}) transport
  profile.
\newblock \url{https://www.rfc-editor.org/rfc/rfc3539}, June 2003.
\newblock Accessed 21 March 2023.

\bibitem{rfc4015}
R.~Ludwig and A.~Gurtov.
\newblock The {E}ifel response algorithm for {TCP}.
\newblock \url{https://www.rfc-editor.org/rfc/rfc4015}, February 2005.
\newblock Accessed 21 March 2023.

\bibitem{rfc8085}
L.~Eggert, G.~Fairhurst, and G.~Shepherd.
\newblock {UDP} usage guidelines.
\newblock \url{https://www.rfc-editor.org/rfc/rfc8085}, March 2017.
\newblock Accessed 23 February 2023.

\bibitem{rfc7765}
Per Hurtig, Anna Brunstrom, Andreas Petlund, and Michael Welzl.
\newblock {TCP} and {S}tream {C}ontrol {T}ransmission {P}rotocol ({SCTP}) {RTO}
  restart.
\newblock \url{https://www.rfc-editor.org/rfc/rfc7765}, February 2016.
\newblock Accessed 23 February 2023.

\bibitem{rfc6538}
A.~Gurtov T.~Henderson.
\newblock The {H}ost {I}dentity {P}rotocol ({HIP}) {E}xperiment {R}eport.
\newblock \url{https://www.rfc-editor.org/rfc/rfc6538}, March 2012.
\newblock Accessed 23 February 2023.

\bibitem{rfc6817}
S.~Shalunov, G.~Hazel, J.~Iyengar, and M.~Kuehlewind.
\newblock {L}ow {E}xtra {D}elay {B}ackground {T}ransport ({LEDBAT}).
\newblock \url{https://www.rfc-editor.org/rfc/rfc6817}, December 2012.
\newblock Accessed 23 February 2023.

\bibitem{rfc8445}
A.~Keranen, C.~Holmberg, and J.~Rosenberg.
\newblock {I}nteractive {C}onnectivity {E}stablishment ({ICE}): A protocol for
  {N}etwork {A}ddress {T}ranslator ({NAT}) traversal.
\newblock \url{https://www.rfc-editor.org/rfc/rfc8445}, July 2018.
\newblock Accessed 23 February 2023.

\bibitem{rfc8489}
M.~Petit-Huguenin, G.~Salgueiro, J.~Rosenberg, D.~Wing, R.~Mahy, and
  P.~Matthews.
\newblock {S}ession {T}raversal {U}tilities for {NAT} ({STUN}).
\newblock \url{https://www.rfc-editor.org/rfc/rfc8489}, February 2020.
\newblock Accessed 23 February 2023.

\bibitem{rfc8855}
G.~Camarillo, K.~Drage, T.~Kristensen, J.~Ott, and C.~Eckel.
\newblock The {B}inary {F}loor {C}ontrol {P}rotocol ({BFCP}).
\newblock \url{https://www.rfc-editor.org/rfc/rfc8855}, January 2021.
\newblock Accessed 23 February 2023.

\bibitem{rfc8931}
P.~Thubert.
\newblock {IPv6} over {L}ow-{P}ower {W}ireless {P}ersonal {A}rea {N}etwork
  ({6LoWPAN}) selective fragment recovery.
\newblock \url{https://www.rfc-editor.org/rfc/rfc8931}, November 2020.
\newblock Accessed 23 February 2023.

\bibitem{kesselman2005optimizing}
Alex Kesselman and Yishay Mansour.
\newblock Optimizing {TCP} retransmission timeout.
\newblock In {\em Networking-{ICN} 2005: 4th International Conference on
  Networking, Reunion Island, France, April 17-21, 2005, Proceedings, Part II
  4}, pages 133--140. Springer, 2005.

\bibitem{balandina2013computing}
Ekaterina Balandina, Yevgeni Koucheryavy, and Andrei Gurtov.
\newblock Computing the retransmission timeout in coap.
\newblock In {\em Internet of Things, Smart Spaces, and Next Generation
  Networking: 13th International Conference, NEW2AN 2013 and 6th Conference,
  ruSMART 2013, St. Petersburg, Russia, August 28-30, 2013. Proceedings}, pages
  352--362. Springer, 2013.

\bibitem{rfc6940}
C.~Jennings, B.~Lowekamp, E.~Rescorla, S.~Baset, and H.~Schulzrinne.
\newblock {RE}source {LO}cation {A}nd {D}iscovery ({RELOAD}) {B}ase {P}rotocol.
\newblock \url{https://www.rfc-editor.org/rfc/rfc6940}, 2014.
\newblock Accessed 23 February 2023.

\bibitem{arun2022starvation}
Venkat Arun, Mohammad Alizadeh, and Hari Balakrishnan.
\newblock Starvation in end-to-end congestion control.
\newblock In {\em Proceedings of the ACM SIGCOMM 2022 Conference}, pages
  177--192, 2022.

\bibitem{rfc8985}
Y.~Cheng, N.~Cardwell, N.~Dukkipati, and P.~Jha.
\newblock The {RACK-TLP} loss detection algorithm for {TCP}.
\newblock \url{https://www.rfc-editor.org/rfc/rfc8985}, February 2021.
\newblock Accessed 15 March 2023.

\bibitem{gerla2001tcp}
Mario Gerla, Medy~Y Sanadidi, Ren Wang, Andrea Zanella, Claudio Casetti, and
  Saverio Mascolo.
\newblock {TCP} {W}estwood: Congestion window control using bandwidth
  estimation.
\newblock In {\em {GLOBECOM}'01. {IEEE} Global Telecommunications Conference
  (Cat. No. 01CH37270)}, volume~3, pages 1698--1702. {IEEE}, 2001.

\bibitem{rfc8257}
S.~Bensley, D.~Thaler, P.~Balasubramanian, L.~Eggert, and G.~Judd.
\newblock {D}ata {C}enter {TCP} ({DCTCP}): {TCP} congestion control for data
  centers.
\newblock \url{https://www.rfc-editor.org/rfc/rfc8257}, October 2017.
\newblock Accessed 15 March 2023.

\bibitem{mascolo2001tcp}
Saverio Mascolo, Claudio Casetti, Mario Gerla, Medy~Y Sanadidi, and Ren Wang.
\newblock Tcp westwood: Bandwidth estimation for enhanced transport over
  wireless links.
\newblock In {\em Proceedings of the 7th annual international conference on
  Mobile computing and networking}, pages 287--297, 2001.

\bibitem{ctcp}
M.~Sridharan, K.~Tan, D.~Bansal, and D.~Thaler.
\newblock Compound {TCP}: A new {TCP} congestion control for high-speed and
  long distance networks.
\newblock
  \url{https://datatracker.ietf.org/doc/html/draft-sridharan-tcpm-ctcp-02},
  November 2008.
\newblock Accessed 15 March 2023.

\bibitem{rfc3941}
B.~Adamson, C.~Bormann, M.~Handley, and J.~Macker.
\newblock Negative-acknowledgment ({NACK})-oriented reliable multicast ({NORM})
  building blocks.
\newblock \url{https://www.rfc-editor.org/rfc/rfc3941}, November 2004.
\newblock Accessed 17 March 2023.

\bibitem{rfc5681}
M.~Allman, V.~Paxson, and E.~Blanton.
\newblock {TCP} congestion control.
\newblock \url{https://www.rfc-editor.org/rfc/rfc5681}, September 2009.
\newblock Accessed 23 February 2023.

\bibitem{liu2006tcp}
Shao Liu, Tamer Ba{\c{s}}ar, and Ravi Srikant.
\newblock {TCP-I}llinois: A loss and delay-based congestion control algorithm
  for high-speed networks.
\newblock In {\em Proceedings of the 1st international conference on
  Performance evaluation methodolgies and tools}, pages 55--es, 2006.

\bibitem{rfc9002}
Jana Iyengar and Ian Swett.
\newblock {QUIC} loss detection and congestion control.
\newblock \url{https://www.rfc-editor.org/rfc/rfc9002}, may 2021.
\newblock Accessed 17 March 2023.

\bibitem{afek1994reliable}
Yehuda Afek, Hagit Attiya, Alan Fekete, Michael Fischer, Nancy Lynch, Yishay
  Mansour, Dai-Wei Wang, and Lenore Zuck.
\newblock Reliable communication over unreliable channels.
\newblock {\em Journal of the ACM (JACM)}, 41(6):1267--1297, 1994.

\bibitem{baccelli2000tcp}
Fran{\c{c}}ois Baccelli and Dohy Hong.
\newblock {TCP} is max-plus linear and what it tells us on its throughput.
\newblock In {\em Proceedings of the conference on Applications, Technologies,
  Architectures, and Protocols for Computer Communication}, pages 219--230,
  2000.

\bibitem{hu2013modeling}
Kai Hu, Cheng Liu, and Kai Liu.
\newblock Modeling and verification of custom {TCP} using {SDL}.
\newblock In {\em 2013 IEEE 4th International Conference on Software
  Engineering and Service Science}, pages 455--458. IEEE, 2013.

\bibitem{lockefeer2016formal}
Lars Lockefeer, David~M Williams, and Wan Fokkink.
\newblock Formal specification and verification of {TCP} extended with the
  window scale option.
\newblock {\em Science of Computer Programming}, 118:3--23, 2016.

\bibitem{von2020automated}
Max von Hippel, Cole Vick, Stavros Tripakis, and Cristina Nita-Rotaru.
\newblock Automated attacker synthesis for distributed protocols.
\newblock In {\em Computer Safety, Reliability, and Security: 39th
  International Conference, SAFECOMP 2020, Lisbon, Portugal, September 16--18,
  2020, Proceedings 39}, pages 133--149. Springer, 2020.

\bibitem{cluzel2021layered}
Guillaume Cluzel, Kyriakos Georgiou, Yannick Moy, and Cl{\'e}ment Zeller.
\newblock Layered formal verification of a {TCP} stack.
\newblock In {\em 2021 {IEEE} Secure Development Conference ({S}ec{D}ev)},
  pages 86--93. IEEE, 2021.

\bibitem{smith1997formal}
Mark Anthony~Shawn Smith.
\newblock {\em Formal verification of {TCP} and {T/TCP}}.
\newblock PhD thesis, Massachusetts Institute of Technology, 1997.

\bibitem{okumura2020formal}
Naomi Okumura, Kazuhiro Ogata, and Yoichi Shinoda.
\newblock Formal analysis of rfc 8120 authentication protocol for http under
  different assumptions.
\newblock {\em Journal of Information Security and Applications}, 53:102529,
  2020.

\bibitem{bishop2005rigorous}
Steve Bishop, Matthew Fairbairn, Michael Norrish, Peter Sewell, Michael Smith,
  and Keith Wansbrough.
\newblock Rigorous specification and conformance testing techniques for network
  protocols, as applied to {TCP}, {UDP}, and sockets.
\newblock In {\em Proceedings of the 2005 conference on Applications,
  technologies, architectures, and protocols for computer communications},
  pages 265--276, 2005.

\bibitem{mcmillan2019formal}
Kenneth~L McMillan and Lenore~D Zuck.
\newblock Formal specification and testing of {QUIC}.
\newblock In {\em Proceedings of the {ACM} Special Interest Group on Data
  Communication}, pages 227--240. ACM, 2019.

\bibitem{cardwell2013packetdrill}
Neal Cardwell, Yuchung Cheng, Lawrence Brakmo, Matt Mathis, Barath Raghavan,
  Nandita Dukkipati, Hsiao-keng~Jerry Chu, Andreas Terzis, and Tom Herbert.
\newblock packetdrill: Scriptable network stack testing, from sockets to
  packets.
\newblock In {\em 2013 {USENIX} Annual Technical Conference ({USENIX ATC} 13)},
  pages 213--218, 2013.

\bibitem{packetDrillRTO}
Pengcheng Yang.
\newblock tcp: fix {F-RTO} may not work correctly when receiving {DSACK}.
\newblock
  \url{https://lore.kernel.org/netdev/165116761177.10854.18409623100154256898.git-patchwork-notify@kernel.org/t/}.
\newblock Accessed 24 March 2023.

\bibitem{lomuscio2010model}
Alessio Lomuscio, Ben Strulo, Nigel~G Walker, and Peng Wu.
\newblock Model checking optimisation based congestion control algorithms.
\newblock {\em Fundamenta Informaticae}, 102(1):77--96, 2010.

\bibitem{hespanha2001hybrid}
Joao~P Hespanha, Stephan Bohacek, Katia Obraczka, and Junsoo Lee.
\newblock Hybrid modeling of {TCP} congestion control.
\newblock In {\em Hybrid Systems: Computation and Control: 4th International
  Workshop, {HSCC} 2001 Rome, Italy, March 28--30, 2001 Proceedings}, pages
  291--304. Springer, 2001.

\bibitem{konur2011formal}
Savas Konur and Michael Fisher.
\newblock Formal analysis of a {VANET} congestion control protocol through
  probabilistic verification.
\newblock In {\em 2011 {IEEE} 73rd Vehicular Technology Conference ({VTC}
  Spring)}, pages 1--5. {IEEE}, 2011.

\bibitem{malik2016formal}
Mazhar~H Malik, Mohsin Jamil, Muhammad~N Khan, and Mubasher~H Malik.
\newblock Formal modelling of tcp congestion control mechanisms ecn/red and
  sap-law in the presence of udp traffic.
\newblock {\em EURASIP Journal on Wireless Communications and Networking},
  2016:1--12, 2016.

\bibitem{srikant2004mathematics}
Rayadurgam Srikant and Tamer Ba{\c{s}}ar.
\newblock {\em The mathematics of Internet congestion control}.
\newblock Springer, 2004.

\bibitem{le2001network}
Jean-Yves Le~Boudec and Patrick Thiran.
\newblock {\em Network calculus: a theory of deterministic queuing systems for
  the internet}.
\newblock Springer, 2001.

\bibitem{kim2004network}
Hwangnam Kim and Jennifer~C Hou.
\newblock Network calculus based simulation for {TCP} congestion control:
  Theorems, implementation and evaluation.
\newblock In {\em IEEE INFOCOM 2004}, volume~4, pages 2844--2855. IEEE, 2004.

\bibitem{peterson2007computer}
Larry~L Peterson and Bruce~S Davie.
\newblock {\em Computer networks: a systems approach}.
\newblock Elsevier, 2007.

\bibitem{tanenbaum}
Andrew~S Tanenbaum and David~J Wetherall.
\newblock {\em Computer networks}.
\newblock Prentice Hall, 5 edition, 2011.

\bibitem{bertsekas2021data}
Dimitri Bertsekas and Robert Gallager.
\newblock {\em Data networks}.
\newblock Athena Scientific, 2021.

\bibitem{rfc7414}
M.~Duke, R.~Braden, W.~Eddy, E.~Blanton, and A.~Zimmermann.
\newblock A roadmap for transmission control protocol ({TCP}) specification
  documents.
\newblock \url{https://datatracker.ietf.org/doc/html/rfc7414}, February 2015.
\newblock Accessed 27 April 2024.

\bibitem{rfc6675}
E.~Blanton, M.~Allman, L.~Wang, I.~Jarvinen, M.~Kojo, and Y.~Nishida.
\newblock A conservative loss recovery algorithm based on selective
  acknowledgment ({SACK}) for {TCP}.
\newblock \url{https://www.rfc-editor.org/rfc/rfc6675}, august 2012.
\newblock Accessed 19 March 2023.

\bibitem{rfc5236}
A.~Jayasumana, N.~Piratla, T.~Banka, A.~Bare, and R.~Whitner.
\newblock Improved packet reordering metrics.
\newblock \url{https://www.rfc-editor.org/rfc/rfc5236.html}, June 2008.
\newblock Accessed 27 April 2024.

\bibitem{rfc4821}
M.~Mathis and J.~Heffner.
\newblock Packetization layer path {MTU} discovery.
\newblock \url{https://datatracker.ietf.org/doc/html/rfc4821}, March 2007.
\newblock Accessed 27 April 2024.

\bibitem{rfc3782}
S.~Floyd, T.~Henderson, and A.~Gurtov.
\newblock The {N}ew{R}eno modification to {TCP}'s fast recovery algorithm.
\newblock \url{https://datatracker.ietf.org/doc/html/rfc3782}, June 2004.
\newblock Accessed 27 April 2024.

\bibitem{rfc3522}
R.~Ludwig and M.~Meyer.
\newblock The eifel detection algorithm for {TCP}.
\newblock \url{https://www.rfc-editor.org/rfc/rfc3522.html}, April 2003.
\newblock Accessed 27 April 2024.

\bibitem{rfc3517}
E.~Blanton, M.~Allman, K.~Fall, and L.~Wang.
\newblock A conservative selective acknowledgment ({SACK})-based loss recovery
  algorithm for {TCP}.
\newblock \url{https://datatracker.ietf.org/doc/html/rfc3517}, April 2003.
\newblock Accessed 27 April 2024.

\bibitem{rfc3366}
Gorry Fairhurst and Lloyd Wood.
\newblock Advice to link designers on link {A}utomatic {R}epeat re{Q}uest
  ({ARQ}).
\newblock \url{https://datatracker.ietf.org/doc/rfc3366/}.
\newblock Accessed 20 January 2024.

\bibitem{rfc1458}
R.~Braudes and S.~Zabele.
\newblock Requirements for multicast protocols.
\newblock \url{https://www.rfc-editor.org/rfc/rfc1458}, May 1993.
\newblock Accessed 27 April 2024.

\bibitem{rfc1254}
A.~Mankin and K.~Ramakrishnan.
\newblock Gateway congestion control survey.
\newblock \url{https://www.rfc-editor.org/rfc/rfc1254}, August 1991.
\newblock Accessed 27 April 2024.

\bibitem{rfc1152}
C.~Partridge.
\newblock Workshop report -- internet research steering group workshop on
  very-high-speed networks.
\newblock \url{https://www.rfc-editor.org/rfc/rfc1152.html}, April 1990.
\newblock Accessed 27 April 2024.

\bibitem{de2005delay}
Koen De~Turck and Sabine Wittevrongel.
\newblock Delay analysis of the go-back-{N} {ARQ} protocol over a time-varying
  channel.
\newblock In {\em European Workshop on Performance Engineering}, pages
  124--138. Springer, 2005.

\bibitem{de2002queueing}
Michiel De~Munnynck, Sabine Wittevrongel, Annick Lootens, and Herwig Bruneel.
\newblock Queueing analysis of some continuous {ARQ} strategies with repeated
  transmissions.
\newblock {\em Electronics Letters}, 38(21):1, 2002.

\bibitem{hasan2008performance}
Osman Hasan and Sofiene Tahar.
\newblock Performance analysis of {ARQ} protocols using a theorem prover.
\newblock In {\em ISPASS 2008-IEEE International Symposium on Performance
  Analysis of Systems and software}, pages 85--94. IEEE, 2008.

\bibitem{bezem1994correctness}
Marcus~Aloysius Bezem and Jan~Friso Groote.
\newblock A correctness proof of a one-bit sliding window protocol in $\mu$crl.
\newblock {\em The Computer Journal}, 37(4):289--307, 1994.

\bibitem{kaivola1997using}
Roope Kaivola.
\newblock Using compositional preorders in the verification of sliding window
  protocol.
\newblock In {\em Computer Aided Verification: 9th International Conference,
  CAV'97 Haifa, Israel, June 22--25, 1997 Proceedings 9}, pages 48--59.
  Springer, 1997.

\bibitem{stahl1999divide}
Karsten Stahl, Kai Baukus, Yassine Lakhnech, and Martin Stefen.
\newblock Divide, abstract, and model-check.
\newblock In {\em Theoretical and Practical Aspects of {SPIN} Model Checking:
  5th and 6th International {SPIN} Workshops Trento, Italy, July 5, 1999
  Toulouse, France, September 21 and 24, 1999 Proceedings 5}, pages 57--76.
  Springer, 1999.

\bibitem{chkliaev2002formal}
Dmitri Chkliaev, JJM Hooman, and EP~de~Vink.
\newblock Formal verification of an improved sliding window protocol.
\newblock In {\em Proceedings 3rd {PROGRESS} Workshop on Embedded Systems
  (Utrecht, The Netherlands, October 24, 2002)}, pages 18--27. {STW} Technology
  Foundation, 2002.

\bibitem{chkliaev2003verification}
Dmitri Chkliaev, Jozef Hooman, and Erik De~Vink.
\newblock Verification and improvement of the sliding window protocol.
\newblock In {\em Tools and Algorithms for the Construction and Analysis of
  Systems: 9th International Conference, TACAS 2003 Held as Part of the Joint
  European Conferences on Theory and Practice of Software, ETAPS 2003 Warsaw,
  Poland, April 7--11, 2003 Proceedings 9}, pages 113--127. Springer, 2003.

\bibitem{el2015applying}
Sanae El~Mimouni and Mohamed Bouhdadi.
\newblock Applying event-b refinement to the sliding window protocol.
\newblock In {\em 2015 IEEE 18th International Conference on Computational
  Science and Engineering}, pages 58--65. IEEE, 2015.

\bibitem{agarwaltowards}
Anup Agarwal, Venkat Arun, Devdeep Ray, Ruben Martins, and Srinivasan Seshan.
\newblock Towards provably performant congestion control.
\newblock 2024.
\newblock Accessed 8 March 2024; to appear in NDSI 2024.

\bibitem{tang1999network}
Puqi~Perry Tang and T-YC Tai.
\newblock Network traffic characterization using token bucket model.
\newblock In {\em IEEE INFOCOM'99. Conference on Computer Communications.
  Proceedings. Eighteenth Annual Joint Conference of the IEEE Computer and
  Communications Societies. The Future is Now (Cat. No. 99CH36320)}, volume~1,
  pages 51--62. IEEE, 1999.

\bibitem{kumar2024formal}
Ankit Kumar, Max von Hippel, Pete Manolios, and Cristina Nita-Rotaru.
\newblock Formal model-driven analysis of resilience of {G}ossip{S}ub to
  attacks from misbehaving peers.
\newblock In {\em 2024 IEEE Symposium on Security and Privacy (SP)}, pages
  51--68. IEEE, 2024.

\bibitem{microsoftTCPfeatures}
Description of windows {TCP} features.
\newblock
  \url{https://learn.microsoft.com/en-us/troubleshoot/windows-server/networking/description-tcp-features},
  12 2023.
\newblock Accessed 21 May 2024.

\bibitem{pawsLinux}
Yoshifumi Nishida.
\newblock Re: [tcpm] usage for timestamp options in the wild (reply).
\newblock
  \url{https://mailarchive.ietf.org/arch/msg/tcpm/qthH7QjZrBSs5DRFaPKQboV5EZc/},
  2018.
\newblock Accessed 21 May 2024.

\bibitem{rfc1323_timestamp}
David~A. Borman, Robert~T. Braden, and Van Jacobson.
\newblock {TCP Extensions for High Performance}.
\newblock RFC 1323, May 1992.

\bibitem{wright1995tcp}
Gary~R Wright and W~Richard Stevens.
\newblock {\em {TCP/IP} illustrated, volume 2: The implementation}.
\newblock Addison-Wesley Professional, 1995.

\bibitem{pawsMicrosoft}
Praveen Balasubramanian.
\newblock Re: [tcpm] usage for timestamp options in the wild.
\newblock
  \url{https://www.ietf.org/mail-archive/web/tcpm/current/msg11522.html},
  September 2018.
\newblock Accessed 21 May 2024.

\bibitem{lynch1988introduction}
Nancy~A Lynch and Mark~R Tuttle.
\newblock {\em An introduction to input/output automata}.
\newblock Laboratory for Computer Science, Massachusetts Institute of
  Technology, 1988.

\bibitem{jain2015skipping}
Mitesh Jain and Panagiotis Manolios.
\newblock Skipping refinement.
\newblock In {\em International Conference on Computer Aided Verification},
  pages 103--119. Springer, 2015.

\bibitem{rayanchu2008diagnosing}
Shravan Rayanchu, Arunesh Mishra, Dheeraj Agrawal, Sharad Saha, and Suman
  Banerjee.
\newblock Diagnosing wireless packet losses in 802.11: Separating collision
  from weak signal.
\newblock In {\em IEEE INFOCOM 2008-The 27th Conference on Computer
  Communications}, pages 735--743. IEEE, 2008.

\bibitem{hoque2013adversarial}
Md~Endadul Hoque, Hyojeong Lee, Rahul Potharaju, Charles~E Killian, and
  Cristina Nita-Rotaru.
\newblock Adversarial testing of wireless routing implementations.
\newblock In {\em Proceedings of the sixth ACM conference on Security and
  privacy in wireless and mobile networks}, pages 143--148, 2013.

\bibitem{ahoddvr}
C.E. Perkins and E.M. Royer.
\newblock Ad-hoc on-demand distance vector routing.
\newblock In {\em Proceedings WMCSA'99}. {IEEE}, 1999.

\bibitem{mzrak2008detecting}
Alper~T Mzrak, Stefan Savage, and Keith Marzullo.
\newblock Detecting malicious packet losses.
\newblock {\em IEEE Transactions on Parallel and distributed systems},
  20(2):191--206, 2008.

\bibitem{borella1998internet}
Michael~S Borella, Debbie Swider, Suleyman Uludag, and Gregory~B Brewster.
\newblock Internet packet loss: Measurement and implications for end-to-end
  qos.
\newblock In {\em Proceedings of the 1998 ICPP Workshop on Architectural and OS
  Support for Multimedia Applications Flexible Communication Systems. Wireless
  Networks and Mobile Computing (Cat. No. 98EX206)}, pages 3--12. IEEE, 1998.

\bibitem{borella2000measurement}
Michael~S Borella.
\newblock Measurement and interpretation of internet packet loss.
\newblock {\em Journal of Communications and Networks}, 2(2):93--102, 2000.

\bibitem{yajnik1999measurement}
Maya Yajnik, Sue Moon, Jim Kurose, and Don Towsley.
\newblock Measurement and modelling of the temporal dependence in packet loss.
\newblock In {\em IEEE INFOCOM'99. Conference on Computer Communications.
  Proceedings. Eighteenth Annual Joint Conference of the IEEE Computer and
  Communications Societies. The Future is Now (Cat. No. 99CH36320)}, volume~1,
  pages 345--352. IEEE, 1999.

\bibitem{defdata}
Harsh~Raju Chamarthi, Peter~C Dillinger, and Panagiotis Manolios.
\newblock Data definitions in the acl2 sedan.
\newblock {\em arXiv preprint arXiv:1406.1557}, 2014.

\bibitem{fayolle1978analytic}
Guy Fayolle, Erol Gelenbe, and Guy Pujolle.
\newblock An analytic evaluation of the performance of the ``send and wait''
  protocol.
\newblock {\em IEEE Transactions on Communications}, 26(3):313--319, 1978.

\bibitem{easton1980batch}
Malcom Easton.
\newblock Batch throughput efficiency of adccp/hdlc/sdlc selective reject
  protocols.
\newblock {\em IEEE Transactions on Communications}, 28(2):187--195, 1980.

\bibitem{bruneel1994analytic}
Herwig Bruneel, Bart Steyaert, Emmanuel Desmet, and Guido~H Petit.
\newblock Analytic derivation of tail probabilities for queue lengths and
  waiting times in atm multiserver queues.
\newblock {\em European Journal of Operational Research}, 76(3):563--572, 1994.

\bibitem{IEEEACMTON02}
Mark~A Smith and KK~Ramakrishnan.
\newblock Formal specification and verification of safety and performance of
  {TCP} selective acknowledgment.
\newblock {\em IEEE/ACM Transactions on Networking}, 10(2):193--207, 2002.

\bibitem{zarchy2017axiomatic}
Doron Zarchy, Radhika Mittal, Michael Schapira, and Scott Shenker.
\newblock An axiomatic approach to congestion control.
\newblock In {\em Proceedings of the 16th ACM Workshop on Hot Topics in
  Networks}, pages 115--121, 2017.

\bibitem{zhang2023performal}
Tony~Nuda Zhang, Upamanyu Sharma, and Manos Kapritsos.
\newblock Performal: Formal verification of latency properties for distributed
  systems.
\newblock {\em Proceedings of the ACM on Programming Languages},
  7(PLDI):368--393, 2023.

\bibitem{arashloo2023formal}
Mina~Tahmasbi Arashloo, Ryan Beckett, and Rachit Agarwal.
\newblock Formal methods for network performance analysis.
\newblock In {\em 20th USENIX Symposium on Networked Systems Design and
  Implementation (NSDI 23)}, pages 645--661, 2023.

\bibitem{walter2022enumerative}
Andrew~T Walter, David Greve, and Panagiotis Manolios.
\newblock Enumerative data types with constraints.
\newblock In {\em 2022 Formal Methods in Computer-Aided Design ({FMCAD})},
  pages 189--198. IEEE, 2022.

\bibitem{vint1974protocol}
Cerf Vint and Robert Kahn.
\newblock A protocol for packet network interconnection.
\newblock {\em IEEE Transactions of Communications}, 22(5):637--48, 1974.

\bibitem{turing}
Janet Abbate.
\newblock Vinton (“vint”) gray cerf.
\newblock \url{https://amturing.acm.org/award_winners/cerf_1083211.cfm}, 2004.
\newblock Accessed 21 April 2024.

\bibitem{leiner1997past}
Barry~M Leiner, Vinton~G Cerf, David~D Clark, Robert~E Kahn, Leonard Kleinrock,
  Daniel~C Lynch, Jon Postel, Lawrence~G Roberts, and Stephen~S Wolff.
\newblock The past and future history of the internet.
\newblock {\em Communications of the ACM}, 40(2):102--108, 1997.

\bibitem{rfc768_udp}
{User Datagram Protocol}.
\newblock RFC 768, August 1980.

\bibitem{rfc4321}
J.~Klensin.
\newblock Simple mail transfer protocol.
\newblock \url{https://datatracker.ietf.org/doc/html/rfc5321}, 2008.
\newblock Accessed 21 April 2024.

\bibitem{rfc913}
Mark~K. Lottor.
\newblock Simple file transfer protocol.
\newblock \url{https://datatracker.ietf.org/doc/html/rfc913}, 1984.
\newblock Accessed 21 April 2024.

\bibitem{rfc4253}
T.~Ylonen and C.~Lonvick.
\newblock The secure shell ({SSH}) transport layer protocol.
\newblock \url{https://www.rfc-editor.org/rfc/rfc4253}, january 2006.

\bibitem{webrtcSCTP}
Data communication.
\newblock \url{https://webrtcforthecurious.com/docs/07-data-communication/},
  November 2022.
\newblock Accessed 31 July 2023.

\bibitem{fbWebRTC}
Ishan Khot.
\newblock A smaller, faster video calling library for our apps.
\newblock \url{https://engineering.fb.com/2020/12/21/video-engineering/rsys/},
  december 2020.
\newblock Accessed 31 July 2023.

\bibitem{teamsWebRTC}
Heidi Lohr and \@dknappettmsft.
\newblock What's new in the remote desktop {WebRTC} redirector service.
\newblock
  \url{https://learn.microsoft.com/en-us/azure/virtual-desktop/whats-new-webrtc},
  April 2023.
\newblock Accessed 31 July 2023.

\bibitem{discordWebRTC}
Jozsef Vass.
\newblock How {D}iscord handles two and half million concurrent voice users
  using {WebRTC}.
\newblock
  \url{https://discord.com/blog/how-discord-handles-two-and-half-million-concurrent-voice-users-using-webrtc},
  September 2018.
\newblock Accessed 31 July 2023.

\bibitem{freebsd}
Sctp.
\newblock
  \url{https://man.freebsd.org/cgi/man.cgi?query=sctp&sektion=4&manpath=FreeBSD+7.0-RELEASE},
  2006.
\newblock Accessed 1 May 2023.

\bibitem{IEEEASFCS77}
Amir Pnueli.
\newblock The temporal logic of programs.
\newblock In {\em 18th Annual Symposium on Foundations of Computer Science},
  pages 46--57. IEEE, 1977.

\bibitem{BaierKatoenBook}
Christel Baier and Joost-Pieter Katoen.
\newblock {\em Principles of model checking}.
\newblock MIT press, 2008.

\bibitem{alpern1985defining}
Bowen Alpern and Fred~B Schneider.
\newblock Defining liveness.
\newblock {\em Information processing letters}, 21(4):181--185, 1985.

\bibitem{maretic2014ltl}
Grgur~Petric Mareti{\'c}, Mohammad~Torabi Dashti, and David Basin.
\newblock Ltl is closed under topological closure.
\newblock {\em Information Processing Letters}, 114(8):408--413, 2014.

\bibitem{SIGACT17}
Rajeev Alur and Stavros Tripakis.
\newblock Automatic synthesis of distributed protocols.
\newblock {\em SIGACT News}, 48(1):55--90, 2017.

\bibitem{timeouts}
timeout - a predefined, global, read-only, boolean variable.
\newblock \url{https://spinroot.com/spin/Man/timeout.html}, 2004.
\newblock Accessed 27 Aprirl 2024.

\bibitem{saini2017evaluating}
Shruti Saini and Ansgar Fehnker.
\newblock Evaluating the stream control transmission protocol using {U}ppaal.
\newblock {\em arXiv preprint arXiv:1703.06568}, 2017.

\bibitem{pacheco2022automated}
Maria~Leonor Pacheco, Max von Hippel, Ben Weintraub, Dan Goldwasser, and
  Cristina Nita-Rotaru.
\newblock Automated attack synthesis by extracting finite state machines from
  protocol specification documents.
\newblock In {\em 2022 IEEE Symposium on Security and Privacy (SP)}, pages
  51--68. IEEE, 2022.

\bibitem{schwabe1981formal}
Daniel Schwabe.
\newblock Formal specification and verification of a connection establishment
  protocol.
\newblock {\em ACM SIGCOMM Computer Communication Review}, 11(4):11--26, 1981.

\bibitem{han2005termination}
Bing Han and Jonathan Billington.
\newblock Termination properties of {TCP}’s connection management procedures.
\newblock In {\em International Conference on Application and Theory of Petri
  Nets}, pages 228--249. Springer, 2005.

\bibitem{bishop2018engineering}
Steve Bishop, Matthew Fairbairn, Hannes Mehnert, Michael Norrish, Tom Ridge,
  Peter Sewell, Michael Smith, and Keith Wansbrough.
\newblock Engineering with logic: Rigorous test-oracle specification and
  validation for {TCP/IP} and the sockets {API}.
\newblock {\em Journal of the ACM}, 66(1):1--77, 2018.

\bibitem{Thesis2016}
Lars Lockefeer, David~M Williams, and Wan Fokkink.
\newblock Formal specification and verification of {TCP} extended with the
  window scale option.
\newblock {\em Science of Computer Programming}, 118:3--23, 2016.

\bibitem{kohler2006designing}
Eddie Kohler, Mark Handley, and Sally Floyd.
\newblock Designing {DCCP}: Congestion control without reliability.
\newblock {\em ACM SIGCOMM Computer Communication Review}, 36(4):27--38, 2006.

\bibitem{vanit2004initial}
Somsak Vanit-Anunchai and J~Billington.
\newblock Initial result of a formal analysis of dccp connection management.
\newblock {\em Proceedings of INC}, pages 63--70, 2004.

\bibitem{vanit2005discovering}
Somsak Vanit-Anunchai, Jonathan Billington, and Tul Kongprakaiwoot.
\newblock Discovering chatter and incompleteness in the datagram congestion
  control protocol.
\newblock In {\em Formal Techniques for Networked and Distributed Systems-FORTE
  2005: 25th IFIP WG 6.1 International Conference, Taipei, Taiwan, October 2-5,
  2005. Proceedings 25}, pages 143--158. Springer, 2005.

\bibitem{gallasch2005sweep}
Guy~Edward Gallasch, Bing Han, and Jonathan Billington.
\newblock Sweep-line analysis of tcp connection management.
\newblock In {\em Formal Methods and Software Engineering: 7th International
  Conference on Formal Engineering Methods, ICFEM 2005, Manchester, UK,
  November 1-4, 2005. Proceedings 7}, pages 156--172. Springer, 2005.

\bibitem{vanit2016validating}
Somsak Vanit-Anunchai.
\newblock Validating dccp simultaneous feature negotiation procedure.
\newblock {\em Transactions on Petri Nets and Other Models of Concurrency XI},
  pages 71--91, 2016.

\bibitem{packetDrillSCTP}
packetdrill.
\newblock \url{https://github.com/nplab/packetdrill/tree/master}.
\newblock Commit \texttt{7f3daabd7feed2b18b958e870f973fec92879d98}, accessed 31
  July 2023.

\bibitem{chukarin2006performance}
A~Chukarin and N~Pershakov.
\newblock Performance evaluation of the stream control transmission protocol.
\newblock In {\em MELECON 2006-2006 IEEE Mediterranean Electrotechnical
  Conference}, pages 781--784. IEEE, 2006.

\bibitem{fu2005performance}
Shaojian Fu and Mohammed Atiquzzaman.
\newblock Performance modeling of {SCTP} multihoming.
\newblock In {\em GLOBECOM'05. IEEE Global Telecommunications Conference,
  2005.}, volume~2, pages 6--pp. IEEE, 2005.

\bibitem{zou2006throughput}
Jianping Zou, M~{\"U}mit Uyar, Mariusz~A Fecko, and Sunil Samtani.
\newblock Throughput models for {SCTP} with parallel subflows.
\newblock {\em Computer Networks}, 50(13):2160--2182, 2006.

\bibitem{kim2014flipping}
Yoongu Kim, Ross Daly, Jeremie Kim, Chris Fallin, Ji~Hye Lee, Donghyuk Lee,
  Chris Wilkerson, Konrad Lai, and Onur Mutlu.
\newblock Flipping bits in memory without accessing them: An experimental study
  of dram disturbance errors.
\newblock {\em ACM SIGARCH Computer Architecture News}, 42(3):361--372, 2014.

\bibitem{lipp2020meltdown}
Moritz Lipp, Michael Schwarz, Daniel Gruss, Thomas Prescher, Werner Haas, Jann
  Horn, Stefan Mangard, Paul Kocher, Daniel Genkin, Yuval Yarom, et~al.
\newblock Meltdown: Reading kernel memory from user space.
\newblock {\em Communications of the ACM}, 63(6):46--56, 2020.

\bibitem{kocher2020spectre}
Paul Kocher, Jann Horn, Anders Fogh, Daniel Genkin, Daniel Gruss, Werner Haas,
  Mike Hamburg, Moritz Lipp, Stefan Mangard, Thomas Prescher, et~al.
\newblock Spectre attacks: Exploiting speculative execution.
\newblock {\em Communications of the ACM}, 63(7):93--101, 2020.

\bibitem{sctp}
Evan Defloor, Jacob Ginesin, Max von Hippel, Cristina Nita-Rotaru, and Michael
  T{\"u}xen.
\newblock A formal analysis of {SCTP}: Attack synthesis and patch verification.
\newblock In {\em In preperation.}, 2024.

\bibitem{rungeler2012sctp}
Irene R{\"u}ngeler and Michael T{\"u}xen.
\newblock Sctp support in the inet framework and its analysis in the wireshark
  packet analyzer.
\newblock In {\em Multihomed Communication with SCTP (Stream Control
  Transmission Protocol)}, pages 175--202. CRC Press, 2012.

\bibitem{erratum7852}
Michael T{\"u}xen.
\newblock \url{https://www.rfc-editor.org/errata/eid7852}, 3 2024.
\newblock Accessed 20 April 2024. Published with note: ``Thanks to Jake
  Ginesin, Max von Hippel, and Cristina Nita-Rotaru for reporting issue and
  discussing it with me.''.

\bibitem{linuxNotes}
Xin Long.
\newblock
  \url{https://github.com/torvalds/linux/commit/32f8807a48ae55be0e76880cfe8607a18b5bb0df},
  October 2021.

\bibitem{ourErratum}
Michael T{\"u}xen.
\newblock Sctp errata.
\newblock \url{https://www.rfc-editor.org/errata/rfc9260}.
\newblock Accessed 4 August 2024.

\bibitem{FlipIt}
Marten Van~Dijk, Ari Juels, Alina Oprea, and Ronald~L Rivest.
\newblock {F}lip{I}t: The game of “stealthy takeover”.
\newblock {\em Journal of Cryptology}, 26(4):655--713, 2013.

\bibitem{klavska2018automatic}
David Kla{\v{s}}ka, Anton{\'\i}n Ku{\v{c}}era, Tom{\'a}{\'s} Lamser, and
  Vojt{\v{e}}ch {\v{R}}eh{\'a}k.
\newblock Automatic synthesis of efficient regular strategies in adversarial
  patrolling games.
\newblock In {\em Proceedings of the 17th International Conference on
  Autonomous Agents and MultiAgent Systems}, pages 659--666. International
  Foundation for Autonomous Agents and Multiagent Systems, 2018.

\bibitem{USENIX11}
Sergey Bratus, Michael~E Locasto, Meredith~L Patterson, Len Sassaman, and Anna
  Shubina.
\newblock Exploit programming: From buffer overflows to weird machines and
  theory of computation.
\newblock {\em USENIX; login}, 36(6), 2011.

\bibitem{ACMCSUR19}
Wojciech Wide{\l}, Maxime Audinot, Barbara Fila, and Sophie Pinchinat.
\newblock Beyond 2014: Formal methods for attack tree--based security modeling.
\newblock {\em ACM Computing Surveys}, 52(4):1--36, 2019.

\bibitem{valizadeh2019toward}
Saeed Valizadeh and Marten van Dijk.
\newblock Toward a theory of cyber attacks.
\newblock {\em arXiv preprint arXiv:1901.01598}, 2019.

\bibitem{LangSec18}
Rodrigo Branco, Kekai Hu, Henrique Kawakami, and Ke~Sun.
\newblock A mathematical modeling of exploitations and mitigation techniques
  using set theory.
\newblock In {\em 2018 IEEE Security and Privacy Workshops (SPW)}, pages
  323--328. IEEE, 2018.

\bibitem{Citeseer13}
Harsh Srivastava, Kamlesh Dwivedi, Prabhat~Kumar Pankaj, and Vijaishri Tewari.
\newblock A formal attack centric framework highlighting expected losses of an
  information security breach.
\newblock {\em International Journal of Computer Applications}, 68(17), 2013.

\bibitem{meira2019synthesis}
R{\^o}mulo Meira-G{\'o}es, Raymond Kwong, and St{\'e}phane Lafortune.
\newblock Synthesis of sensor deception attacks for systems modeled as
  probabilistic automata.
\newblock In {\em 2019 American Control Conference}, pages 5620--5626. IEEE,
  2019.

\bibitem{vasilevskaya2014}
Maria Vasilevskaya and Simin Nadjm-Tehrani.
\newblock Quantifying risks to data assets using formal metrics in embedded
  system design.
\newblock In {\em International Conference on Computer Safety, Reliability, and
  Security}, pages 347--361. Springer, 2014.

\bibitem{vardi1999probabilistic}
Moshe~Y Vardi.
\newblock Probabilistic linear-time model checking: An overview of the
  automata-theoretic approach.
\newblock In {\em International AMAST Workshop on Aspects of Real-Time Systems
  and Concurrent and Distributed Software}, pages 265--276. Springer, 1999.

\bibitem{katoen2016probabilistic}
Joost-Pieter Katoen.
\newblock The probabilistic model checking landscape.
\newblock In {\em Proceedings of the 31st Annual ACM/IEEE Symposium on Logic in
  Computer Science}, pages 31--45, 2016.

\bibitem{matsui2022synthesis}
Shoma Matsui and St{\'e}phane Lafortune.
\newblock Synthesis of winning attacks on communication protocols using
  supervisory control theory: two case studies.
\newblock {\em Discrete Event Dynamic Systems}, pages 1--38, 2022.

\bibitem{fiterau2022automatabased}
Paul Fiterau-Brostean, Bengt Jonsson, Konstantinos Sagonas, and Fredrik
  T{\aa}quist.
\newblock Automata-based automated detection of state machine bugs in protocol
  implementations.
\newblock In {\em NDSS}, 2022.

\bibitem{rfc_dtls}
E.~Rescorla, H.~Tschofenig, and N.~Modadugu.
\newblock The datagram transport layer security ({DTLS}) protocol version 1.3.
\newblock \url{https://www.rfc-editor.org/rfc/rfc9147}, April 2022.

\bibitem{phan2017synthesis}
Quoc-Sang Phan, Lucas Bang, Corina~S Pasareanu, Pasquale Malacaria, and Tevfik
  Bultan.
\newblock Synthesis of adaptive side-channel attacks.
\newblock In {\em 2017 IEEE 30th Computer Security Foundations Symposium},
  pages 328--342. IEEE, 2017.

\bibitem{bang2018online}
Lucas Bang, Nicolas Rosner, and Tevfik Bultan.
\newblock Online synthesis of adaptive side-channel attacks based on noisy
  observations.
\newblock In {\em 2018 IEEE European Symposium on Security and Privacy}, pages
  307--322. IEEE, 2018.

\bibitem{huang2018algorithmic}
Zhenqi Huang, Sriharsha Etigowni, Luis Garcia, Sayan Mitra, and Saman Zonouz.
\newblock Algorithmic attack synthesis using hybrid dynamics of power grid
  critical infrastructures.
\newblock In {\em 2018 48th Annual IEEE/IFIP International Conference on
  Dependable Systems and Networks}, pages 151--162. IEEE, 2018.

\bibitem{lin2019synthesis}
Liyong Lin, Sander Thuijsman, Yuting Zhu, Simon Ware, Rong Su, and Michel
  Reniers.
\newblock Synthesis of supremal successful normal actuator attackers on normal
  supervisors.
\newblock In {\em 2019 American Control Conference}, pages 5614--5619. IEEE,
  2019.

\bibitem{lin2019synthesisActuator}
Liyong Lin, Yuting Zhu, and Rong Su.
\newblock Synthesis of actuator attackers for free.
\newblock {\em arXiv preprint arXiv:1904.10159}, 2019.

\bibitem{trippel2019security}
Caroline Trippel, Daniel Lustig, and Margaret Martonosi.
\newblock Security verification via automatic hardware-aware exploit synthesis:
  The {C}heck{M}ate approach.
\newblock {\em IEEE Micro}, 39(3):84--93, 2019.

\bibitem{barthe2014synthesis}
Gilles Barthe, Fran{\c{c}}ois Dupressoir, Pierre-Alain Fouque, Benjamin
  Gr{\'e}goire, and Jean-Christophe Zapalowicz.
\newblock Synthesis of fault attacks on cryptographic implementations.
\newblock In {\em Proceedings of the 2014 ACM SIGSAC Conference on Computer and
  Communications Security}, pages 1016--1027, 2014.

\bibitem{IEEE15}
Samuel Jero, Hyojeong Lee, and Cristina Nita-Rotaru.
\newblock Leveraging state information for automated attack discovery in
  transport protocol implementations.
\newblock In {\em 2015 45th Annual IEEE/IFIP International Conference on
  Dependable Systems and Networks}, pages 1--12. IEEE, 2015.

\bibitem{jero2018automated}
Samuel Jero, Md~Endadul Hoque, David~R Choffnes, Alan Mislove, and Cristina
  Nita-Rotaru.
\newblock Automated attack discovery in tcp congestion control using a
  model-guided approach.
\newblock In {\em NDSS}, 2018.

\bibitem{cho2011mace}
Chia~Yuan Cho, Domagoj Babic, Pongsin Poosankam, Kevin~Zhijie Chen,
  Edward~XueJun Wu, and Dawn Song.
\newblock {MACE}: Model-inference-assisted concolic exploration for protocol
  and vulnerability discovery.
\newblock In {\em USENIX Security Symposium}, volume 139, 2011.

\bibitem{SemFuzz}
Wei You, Peiyuan Zong, Kai Chen, XiaoFeng Wang, Xiaojing Liao, Pan Bian, and
  Bin Liang.
\newblock Semfuzz: Semantics-based automatic generation of proof-of-concept
  exploits.
\newblock In {\em Proceedings of the 2017 ACM SIGSAC Conference on Computer and
  Communications Security}, pages 2139--2154, 2017.

\bibitem{meier2013tamarin}
Simon Meier, Benedikt Schmidt, Cas Cremers, and David Basin.
\newblock The {TAMARIN} prover for the symbolic analysis of security protocols.
\newblock In {\em Computer Aided Verification: 25th International Conference,
  CAV 2013, Saint Petersburg, Russia, July 13-19, 2013. Proceedings 25}, pages
  696--701. Springer, 2013.

\bibitem{blanchet2016modeling}
Bruno Blanchet et~al.
\newblock Modeling and verifying security protocols with the applied pi
  calculus and {ProVerif}.
\newblock {\em Foundations and Trends{\textregistered} in Privacy and
  Security}, 1(1-2):1--135, 2016.

\bibitem{hoque2017analyzing}
Endadul Hoque, Omar Chowdhury, Sze~Yiu Chau, Cristina Nita-Rotaru, and Ninghui
  Li.
\newblock Analyzing operational behavior of stateful protocol implementations
  for detecting semantic bugs.
\newblock In {\em 2017 47th Annual IEEE/IFIP International Conference on
  Dependable Systems and Networks (DSN)}, pages 627--638. IEEE, 2017.

\bibitem{huang2012crax}
Shih-Kun Huang, Min-Hsiang Huang, Po-Yen Huang, Chung-Wei Lai, Han-Lin Lu, and
  Wai-Meng Leong.
\newblock Crax: Software crash analysis for automatic exploit generation by
  modeling attacks as symbolic continuations.
\newblock In {\em 2012 IEEE Sixth International Conference on Software Security
  and Reliability}, pages 78--87. IEEE, 2012.

\bibitem{kayacik2009generating}
H~Gunes Kayacik, A~Nur Zincir-Heywood, Malcolm~I Heywood, and Stefan Burschka.
\newblock Generating mimicry attacks using genetic programming: a benchmarking
  study.
\newblock In {\em 2009 IEEE Symposium on Computational Intelligence in Cyber
  Security}, pages 136--143. IEEE, 2009.

\bibitem{yen2021semi}
Jane Yen, Tam{\'a}s L{\'e}vai, Qinyuan Ye, Xiang Ren, Ramesh Govindan, and
  Barath Raghavan.
\newblock Semi-automated protocol disambiguation and code generation.
\newblock In {\em Proceedings of the 2021 ACM SIGCOMM 2021 Conference}, pages
  272--286, 2021.

\bibitem{ardeshiricham2019verisketch}
Armaiti Ardeshiricham, Yoshiki Takashima, Sicun Gao, and Ryan Kastner.
\newblock Veri{S}ketch: Synthesizing secure hardware designs with
  timing-sensitive information flow properties.
\newblock In {\em Proceedings of the 2019 {ACM} {SIGSAC} Conference on Computer
  and Communications Security}, pages 1623--1638, 2019.

\bibitem{bisimulation2003linking}
Robert~Sumners Panagiotis~Manolios, Kedar~Namjoshi.
\newblock Linking theorem proving and model-checking with well-founded
  bisimulation.
\newblock In {\em Computer Aided Verification: 11th International Conference,
  CAV'99, Trento, Italy, July 6-10, 1999, Proceedings}, page 369. Springer,
  2003.

\bibitem{killian2010finding}
Charles Killian, Karthik Nagaraj, Salman Pervez, Ryan Braud, James~W Anderson,
  and Ranjit Jhala.
\newblock Finding latent performance bugs in systems implementations.
\newblock In {\em Proceedings of the eighteenth ACM SIGSOFT international
  symposium on Foundations of software engineering}, pages 17--26, 2010.

\bibitem{sharma2023performal}
UPAMANYU SHARMA.
\newblock Performal: Formal verification of latency properties for distributed
  systems.
\newblock In {\em PLDI}, volume~7. ACM, 2023.

\bibitem{oakley2022adversarial}
Lisa Oakley, Alina Oprea, and Stavros Tripakis.
\newblock Adversarial robustness verification and attack synthesis in
  stochastic systems.
\newblock In {\em 2022 IEEE 35th Computer Security Foundations Symposium
  (CSF)}, pages 380--395. IEEE, 2022.

\bibitem{egolf2024efficient}
Derek Egolf, William Schultz, and Stavros Tripakis.
\newblock Efficient synthesis of symbolic distributed protocols by sketching.
\newblock {\em arXiv preprint arXiv:2405.07807}, 2024.

\bibitem{goldstein2022some}
Harrison Goldstein, Joseph~W Cutler, Adam Stein, Benjamin~C Pierce, and Andrew
  Head.
\newblock Some problems with properties.
\newblock In {\em Proc. Workshop on the Human Aspects of Types and Reasoning
  Assistants (HATRA)}, 2022.

\bibitem{chong2016report}
Stephen Chong, Joshua Guttman, Anupam Datta, Andrew Myers, Benjamin Pierce,
  Patrick Schaumont, Tim Sherwood, and Nickolai Zeldovich.
\newblock Report on the nsf workshop on formal methods for security.
\newblock {\em arXiv preprint arXiv:1608.00678}, 2016.

\bibitem{rfc1122}
R.~Braden.
\newblock Requirements for internet hosts -- communication layers.
\newblock \url{https://datatracker.ietf.org/doc/html/rfc1122}, October 1989.
\newblock Accessed 8 April 2024.

\bibitem{rfc4681}
S.~Santesson, A.~Medvinsky, and J.~Ball.
\newblock {TLS} user mapping extension.
\newblock \url{https://www.rfc-editor.org/rfc/rfc4681.html}, 2006.
\newblock Accessed 14 February 2024.

\bibitem{rfc4341}
S.~Floyd and E.~Kohler.
\newblock Profile for datagram congestion control protocol ({DCCP}) congestion
  control {ID} 2: {TCP}-like congestion control.
\newblock \url{https://www.rfc-editor.org/rfc/rfc4341.html}, 2006.
\newblock Accessed 14 February 2024.

\bibitem{rfc5690}
S.~Floyd, A.~Arcia, D.~Ros, and J.~Iyengar.
\newblock Adding acknowledgement congestion control to {TCP}.
\newblock \url{https://datatracker.ietf.org/doc/html/rfc5690}, 2010.
\newblock Accessed 14 February 2024.

\bibitem{rfc3449}
H.~Balakrishnan, V.~N. Padmanabhan, G.~Fairhurst, and M.~Sooriyabandara.
\newblock {TCP} performance implications of network path asymmetry.
\newblock \url{https://www.rfc-editor.org/rfc/rfc3449.txt}, 2002.
\newblock Accessed 14 February 2024.

\bibitem{ackPull}
C.~Gomez and J.~Crowcroft.
\newblock {TCP} {ACK} pull.
\newblock
  \url{https://datatracker.ietf.org/doc/html/draft-gomez-tcpm-ack-pull-01},
  2019.
\newblock Accessed 14 February 2024.

\bibitem{quicScaling}
G.~Fairhurst, A.~Custura, and T.~Jones.
\newblock Changing the default {QUIC} {ACK} policy.
\newblock
  \url{https://datatracker.ietf.org/doc/html/draft-fairhurst-quic-ack-scaling-03},
  2020.
\newblock Accessed 14 February 2024.

\bibitem{quicSatcom}
N.~Kuhn, G.~Fairhurst, J.~Border, and S.~Emile.
\newblock {QUIC} for {SATCOM}.
\newblock \url{https://datatracker.ietf.org/doc/draft-kuhn-quic-4-sat/06/},
  2020.
\newblock Accessed 14 February 2024.

\bibitem{chen2008tcp}
Jiwei Chen, Mario Gerla, Yeng~Zhong Lee, and MY~Sanadidi.
\newblock Tcp with delayed ack for wireless networks.
\newblock {\em Ad Hoc Networks}, 6(7):1098--1116, 2008.

\bibitem{altman2003novel}
Eitan Altman and Tania Jim{\'e}nez.
\newblock Novel delayed ack techniques for improving tcp performance in
  multihop wireless networks.
\newblock In {\em IFIP international conference on personal wireless
  communications}, pages 237--250. Springer, 2003.

\bibitem{armaghani2011performance}
Farzaneh~Razavi Armaghani, Sudhanshu~Shekhar Jamuar, Sabira Khatun, and Mohd
  Fadlee~A Rasid.
\newblock Performance analysis of tcp with delayed acknowledgments in multi-hop
  ad-hoc networks.
\newblock {\em Wireless Personal Communications}, 56:791--811, 2011.

\bibitem{rfcQuicACK}
K.~Zheng, R.A. Jadhav, and J.~Kang.
\newblock Optimizing {ACK} mechanism for {QUIC}.
\newblock
  \url{https://www.ietf.org/archive/id/draft-li-quic-optimizing-ack-in-wlan-05.html},
  November 2022.
\newblock Accessed 14 February 2024.

\bibitem{singh2004tcp}
Ajay~Kumar Singh and Kishore Kankipati.
\newblock Tcp-ada: Tcp with adaptive delayed acknowledgement for mobile ad hoc
  networks.
\newblock In {\em 2004 IEEE Wireless Communications and Networking Conference
  (IEEE Cat. No. 04TH8733)}, volume~3, pages 1685--1690. IEEE, 2004.

\end{thebibliography}

\chapter{Appendix}

\subsection{Receiver Strategies}\Secl{subsec:delay-ack}
There are numerous possible strategies for when the receiver should send an \ack,
some of which are referred to as \emph{delayed \ack} algorithms (because they involve
a timer).  We summarize a number of these in \Tabr{delACK}.

\begin{table}[h!]
\begin{tabular}{lp{0.7\textwidth}}
Source & Receiver Strategy\\\hline
RFC 1122~\cite{rfc1122} & At least every other packet or every half second.\\
RFC 4681~\cite{rfc4681} & At least every other packet or every second, and within half a second of each previously unACKed packet.\\
RFC 9000~\cite{rfc9000_quic} & Every other packet, every clearly reordered packet, or whenever a timer expires.\\
RFCs 4341~\cite{rfc4341}, 5690~\cite{rfc5690} & Twice per send window, i.e., every $N/2$ packets.\\
RFC 3449~\cite{rfc3449} & Dynamic scheme where ACK frequency scales with traffic.\\
Gomez \& Crowcroft~\cite{ackPull} & Whenever the sender requests one.\\
Fairhurst et.~al.~\cite{quicScaling} & At least once per ten packets.\\
Kuhn et. al.~\cite{quicSatcom} & At least four times per RTT.\\
Chen et. al.~\cite{chen2008tcp} & Adaptive delay based on path length and end-to-end delay.\\
Altman \& Jimen{\'e}z~\cite{altman2003novel} & Adaptive delay based on sequence number.\\
Armaghani et. al.~\cite{armaghani2011performance} & Adaptive delay based on MAC layer collision probability.\\
\end{tabular}
\caption{Receiver strategies.  Adapted and expanded from~\cite{rfcQuicACK}.}\Tabl{delACK}
\end{table}

Empirical evidence in wireless networks suggests that, for fixed-frequency receiver strategies, there generally exists an optimal frequency depending on the network~\cite{chen2008tcp} -- and in some cases, the optimal strategy is to send an ACK after every $N^{\text{th}}$ packet received~\cite{singh2004tcp}.
But in wired networks, where losses are generally caused by the queuing mechanism, 
it is not so simple, with a variety of strategies being adopted by different protocols and implementations.

\subsection{Example LTL Formulae}\Secl{ltl:examples}

Example LTL formulae include: 
\begin{itemize}
    \item Lunch will be ready in a moment: $\X \texttt{lunch-ready}$.
    \item I always eventually sleep: $\G \F \texttt{sleep}$.
    \item I am hungry until I eat: $\texttt{hungry} \U \texttt{eat}$.
    \item $A$ and $B$ are never simultaneously in their \texttt{crit} states: $\G \neg (\texttt{crit}_A \land \texttt{crit}_B)$.
\end{itemize}

\end{document}